\def\cramped                           
\DeclareMathOperator	{\arcosh}   {arccosh}
\DeclareMathOperator	{\arsinh}   {arcsinh}
\DeclareMathOperator	{\arcot}	{arccot}
\DeclareMathOperator	{\sys}   	{sys}
\DeclareMathOperator	{\real}   	{Re}
\DeclareMathOperator	{\DT}   	{DT}
\DeclareMathOperator	{\area}		{area}
\newcommand{\R}                 {\mathbb{R}\xspace}
\renewcommand{\H}               {\mathbb{H}\xspace}
\newcommand{\C}			        {\mathbb{C}\xspace}
\newcommand{\E}			        {\mathbb{E}\xspace}
\newcommand{\Z}					{\mathbb{Z}\xspace}
\newcommand{\mD}					{\mathbb{D}}
\newcommand{\mQ}				{\mathbb{Q}}
\newcommand{\br}		[1]		{\mathopen{}\left(#1\right)\mathclose{}\xspace}
\newcommand{\sqbr}		[1]		{\mathopen{}\left[#1\right]\mathclose{}\xspace}
\newcommand{\braces}	[1]		{\mathopen{}\left\{#1\right\}\mathclose{}\xspace}
\newcommand{\angular}	[1]		{\mathopen{}\left<#1\right>\mathclose{}\xspace}
\newcommand{\inv}		[1]		{#1^{-1}\xspace}
\newcommand{\nbt}    {{T^{\text{\tiny nbr}}}}
\newcommand{\ltr}    {{T_p^{\text{\tiny loc}}}}
\newcommand{\Do} 				{\widetilde{D}_g}
\newcommand{\Db} 				{\widetilde{D}_2}
\newcommand{\I}					{\mathcal{I}\xspace}
\renewcommand{\P}				{\mathcal{P}\xspace}
\renewcommand{\S}				{\mathcal{S}\xspace}
\newcommand{\diam} 		[1]		{\delta({#1})\xspace}
\newcommand{\dummy}				{\mathcal{Q}_g\xspace}
\newcommand{\dummyq}			{\mathcal{Q}\xspace}
\newcommand{\hlen}		[1]		{\operatorname{length}(#1)\xspace}
\newcommand{\seg}		[1]		{\sqbr{#1}\xspace}
\newcommand{\Gg}                {\Gamma_g}
\newcommand{\Gb}                {\Gamma_2}
\newcommand{\eg}                  {\mathbbm{1}\xspace}
\newcommand{\M}				    {\mathbb{M}\xspace}
\newcommand{\Mg}				{\M_g\xspace}
\newcommand{\proj} 				{\pi_g\xspace}
\newcommand{\Bolza}				{\M_2\xspace}
\newcommand{\sysg}              {\sys\br{\Mg}\xspace}
\newcommand{\N}                 {\mathcal{N}_g\xspace}
\newcommand{\Nnude}                 {\mathcal{N}\xspace}
\newcommand{\Nb}                 {\mathcal{N}_2\xspace}
\newcommand{\W}                 {\mathcal{W}_g\xspace}
\newcommand{\orientation} 		{\textsc{Orientation}\xspace}
\newcommand{\incircle} 			{\textsc{InCircle}\xspace}
\newcommand{\h}[1]{{\boldsymbol{#1}}} 
\newcommand{\can}[1]  	        {\h{{#1}^c}\xspace} 
\newcommand{\canindex}[2]   {\h{#1}^\h{c}_{#2}\xspace} 
\newcommand{\ph}                {\h{p}} 
\newcommand{\qh}                {\h{q}} 
\newcommand{\rh}                 {\h{r}} 
\newcommand{\mh}                 {\h{m}} 
\newcommand{\zh}				{\h{z}}
\newcommand{\wh}				{\h{w}}
\newcommand{\xh}				{\h{x}}
\renewcommand{\th}                 {\h{t}}
\newcommand{\vh}                 {\h{v}}
\newcommand{\uh}                 {\h{u}}
\newcommand{\eh}                 {\h{e}}
\newcommand{\sh}                 {\h{s}}
\newcommand{\lh}                 {\h{l}}
\newcommand{\gammah}			{\h{\gamma}}
\newcommand{\ch}				{\h{c}}
\newcommand{\Uh}				{\h{U}}
\newcommand{\Ch}				{\h{C}}
\newcommand{\first}[1]       {{#1}^{\star}\xspace}
\newcommand{\len}		[1]		{\ell\br{#1}\xspace}
\DeclareMathOperator	{\tr}   	{Tr}
\newcommand{\trSqr}[1]{\tr^2(#1)}
\newcommand{\abs}		[1]		{\mathopen{}\left|\,#1\,\right|\mathclose{}\xspace}
\newcommand{\conj}      [1]     {\bar{#1}\xspace}
\newcommand{\hp}                        {\mathbb{D}\xspace}
\newcommand{\dth}       [1]             {\DT_{\hp}\br{#1}\xspace}
\newcommand{\dts}	[1]		{\DT_{\mathbb{M}}\br{#1}\xspace}
\newcommand{\dtsg}	[1]		{\DT_{\Mg}\br{#1}\xspace}
\newcommand{\dte}	[1]		{\DT_{\E}\br{#1}\xspace}
\newcommand{\cgal} 				{\textsc{Cgal}\xspace}
\newcommand{\core} 				{\textsc{Core}\xspace}
\newcommand{\expr}              {\texttt{CORE::Expr}\xspace}
\newtheorem{lemma}{Lemma}
\newtheorem{proposition}[lemma]{Proposition}
\newtheorem{theorem}[lemma]{Theorem}
\theoremstyle{definition}
\newtheorem{definition}[lemma]{Definition}
\newtheorem{remark}[lemma]{Remark}
\title{Delaunay triangulations of generalized Bolza surfaces \thanks{This work was partially supported by the grant(s)
    ANR-17-CE40-0033 of the French National Research Agency ANR (project SoS)
    and
    INTER/ANR/16/11554412/SoS of the Luxembourg National Research fund FNR.}}
\author{
  Matthijs Ebbens\thanks{Bernoulli Institute for Mathematics, Computer
    Science and Artificial Intelligence, University of Groningen,
    Netherlands. \texttt{y.m.ebbens@rug.nl}}
  \and
  Iordan Iordanov\thanks{Université de Lorraine, CNRS, Inria, LORIA,
    F-54000 Nancy, France. \texttt{i.m.iordanov@gmail.com}}
  \and 
  Monique Teillaud\thanks{Université de Lorraine, CNRS, Inria, LORIA,
    F-54000 Nancy, France. \texttt{Monique.Teillaud@inria.fr}}
  \and
  Gert Vegter\thanks{Bernoulli Institute for Mathematics, Computer
    Science and Artificial Intelligence, University of Groningen,
    Netherlands. \texttt{g.vegter@rug.nl}}
}
\begin{document}
\maketitle
 
\begin{abstract}
  The Bolza surface can be seen as the quotient of the hyperbolic
  plane, represented by the Poincar\'e disk model, under the action of the group generated by the hyperbolic
  isometries identifying opposite sides of a regular octagon 
  centered at the origin. We consider
  \emph{generalized} Bolza surfaces~$\Mg$, where the octagon is
  replaced by a regular $4g$-gon, leading to a genus $g$ surface.
  We propose an extension of Bowyer's algorithm to these surfaces. In
  particular, we compute the value of the systole of~$\Mg$. We also
  propose algorithms computing small sets of points on~$\Mg$ that are used to initialize Bowyer's algorithm. 
\end{abstract}


\section{Introduction}

Lawson's well-known incremental
algorithm that computes Delaunay
triangulations using edge flips in the Euclidean
plane~\cite{l-scsi-77} has recently been proved to generalize on
hyperbolic surfaces~\cite{dst-fgths-20}. However, the experience
gained in the \cgal\ project for many years has shown that Bowyer's
algorithm~\cite{Bowyer81} leads to a cleaner code, much easier to
maintain; there is actually work in progress in \cgal\ to replace Lawson's
flip algorithm, in triangulation packages that are still using it, by Bowyer's
algorithm. In the context of quotient spaces 
Bowyer's algorithm was used already in the \cgal\ packages for
3D flat quotient spaces~\cite{cgal:ct-p3t3-perm} and for the
Bolza surface~\cite{cgal-it-p4ht2-perm}. To the best of our knowledge, the latter
package is the only available software for a hyperbolic
surface. The advantages of Bowyer's algorithm largely
compensate the constraint that it intrinsically requires that the
Delaunay triangulation be a simplicial complex.

In this paper, we study the
extension of this approach to what we call \emph{generalized Bolza
	surfaces}. A closed orientable hyperbolic surface $\M$ is isometric
to a quotient $\mD/\Gamma$, where $\Gamma$ is a discrete group of
orientation preserving
isometries acting on the hyperbolic plane, represented here as the
Poincar\'e disk $\mD$.
See Section~\ref{sec:mathematical-preliminaries} for some mathematical background on the hyperbolic plane and
hyperbolic surfaces.
The universal cover of such a surface is the hyperbolic
plane, with associated projection map
$\pi:\mD\rightarrow\mD/\Gamma$.
The \emph{generalized Bolza group} $\Gamma_g$, $g \geq 2$, is the
(discrete) group generated by the orientation preserving isometries that pair opposite sides of
the regular $4g$-gon $D_g$, centered at the origin and with angle
sum $2\pi$ (unique up to rotations).
The \emph{generalized Bolza surface} $\Mg$, of genus $g$, is defined as
the hyperbolic surface $\mD/\Gamma_g$, with projection map
$\proj:\mD\rightarrow\mD/\Gamma_g$. In particular, $\Bolza$ is the
usual Bolza surface.

We denote by $\sys\br{\M}$ the \emph{systole} of a closed hyperbolic surface $\M$, i.e., the
length of a shortest non-contractible curve on~$\M$, and, for a set of
points $\dummyq\subset \M$, by $\diam{\dummyq}$
the diameter of the largest disks in $\mD$ that do not contain any
point in $\pi^{-1}(\dummyq)$. 
In Section~\ref{sec:computingDT} we recall the following
\emph{validity condition}~\cite{ct-dtceo-16,btv-dtosl-16}:
If a finite set $\dummyq$ of points on the surface $\M$
satisfies the inequality
\begin{flushright}
	$\diam{\dummyq} < \frac 12 \sys\br{\M}$ \hspace*{2cm} (condition~\eqref{condition} in Proposition~\ref{thm:condition})
\end{flushright}
then Bowyer's algorithm can be extended to the computation of the
Delaunay triangulation of any finite set of points $\S$ on $\M$ containing
$\dummyq$. Before actually inserting the input points, the algorithm performs a preprocessing
step consisting of computing the Delaunay triangulation of an appropriate
(but small) set $\dummyq$ satisfying this validity condition; following the
terminology of previous papers~\cite{btv-dtosl-16,ct-dtceo-16}, we
refer to the points of $\dummyq$ as \emph{dummy points}. When
sufficiently many and well-distributed input points have been
inserted, the validity condition is satisfied without the dummy
points, which can then be removed. This approach was used in
the \cgal\ package
for the Bolza surface $\Bolza$~\cite{it-idtbs-17,cgal-it-p4ht2-perm}. 

Other practical approaches for (flat) quotient spaces start by
computing in a finite-sheeted covering space~\cite{ct-dtceo-16} or in
the universal covering space~\cite{ort-gcpdt-20}, thus requiring the
duplication of some input points.  In contrast to this approach, our
algorithm proceeds directly on the surface, thus circumventing the
need for duplicating any input points.  While the number of copies of
duplicated points in approaches using covering spaces is small, the
number of duplicated input points is always much larger than the
number of dummy points that could instead be added to the set of input
points in our approach.  Moreover, to the best of our knowledge, the
number of required copies in the case of hyperbolic surfaces is
largely unknown; first bounds have been obtained recently~\cite{ben-2020}.

\paragraph*{Results.}~\\
We describe the extension of Bowyer's algorithm to the case of the generalized Bolza
surface~$\Mg$ in Section~\ref{sec:computingDT}, and we 
derive bounds on the number of dummy points necessary to satisfy the
validity condition 
(Propositions~\ref{thm:upperbounddummypoints}
and~\ref{thm:minimumnumberofpoints} in Section~\ref{sec:bound-dummy}), yielding the following result:

\begin{theorem}
	The number of dummy points that must be added on $\Mg$ to satisfy the
	validity condition~\eqref{condition} grows as $\Theta(g)$. 
	\label{th:bounds_dummy}
\end{theorem}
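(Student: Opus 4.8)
The plan is to prove the $\Theta(g)$ bound by establishing matching upper and lower bounds, which are exactly the content of Propositions~\ref{thm:upperbounddummypoints} and~\ref{thm:minimumnumberofpoints}. The theorem then follows immediately by combining them. So the real work is to see how each bound is obtained, and the key geometric input on both sides is the value of the systole $\sysg$, which is computed earlier in the paper; the validity condition~\eqref{condition} only involves $\sysg$ and the quantity $\diam{\dummyq}$, so everything is reduced to controlling how small a set $\dummyq$ can be while still forcing every empty disk in $\mD$ lifted from $\mathbb{M}_g$ to have diameter below $\tfrac12\sysg$.

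For the upper bound $O(g)$, the natural approach is constructive: exhibit an explicit set $\dummyq$ of $O(g)$ points on $\Mg$ with $\diam{\dummyq} < \tfrac12\sysg$. A convenient choice is to place points along the sides of the fundamental $4g$-gon $D_g$ (and at its center, and possibly the midpoints of sides), so that the lifted point set $\pi^{-1}(\dummyq)$ is dense enough that no large empty disk fits. One then bounds $\diam{\dummyq}$ from above by a quantity that does \emph{not} shrink to zero as $g\to\infty$ — indeed, as $g$ grows the $4g$-gon has more but shorter sides, while $\sysg$ stays bounded away from $0$ (it actually grows like $\log g$, which only helps) — so a fixed ``resolution'' of a constant number of points per side suffices, giving $O(g)$ points total. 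The estimate of $\diam{\dummyq}$ for this configuration is a hyperbolic-trigonometry computation using the known side length and angles of $D_g$.

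For the lower bound $\Omega(g)$, the argument should be a packing/counting one: the condition $\diam{\dummyq} < \tfrac12\sysg$ forces the disks of radius roughly $\tfrac14\sysg$ centered at the points of $\dummyq$ to cover $\Mg$ (otherwise an uncovered point would be the center of an empty disk that is too large), so $|\dummyq|$ is at least $\area(\Mg)$ divided by the area of such a disk. Since $\area(\Mg) = 4\pi(g-1)$ by Gauss–Bonnet, this gives $|\dummyq| \geq \Omega\!\left(g / \area(\text{disk of radius }\tfrac14\sysg)\right)$, and because $\sysg$ grows only logarithmically the disk area grows only polynomially in $g$ — so a more careful version of this covering bound, perhaps combined with the fact that near any vertex of the triangulation the local structure constrains how points may be placed, is needed to get the clean $\Omega(g)$ rather than a weaker bound. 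The main obstacle is precisely this lower bound: a naive area/covering estimate is sensitive to the growth rate of $\sysg$, so the delicate point is to show that the covering radius cannot be taken as large as $\tfrac14\sysg \sim \tfrac14\log g$ uniformly — one must use additional structure of $\Mg$ (e.g. that the $4g$-gon has a single vertex orbit, forcing many short return paths near that point) to pin down that $\Theta(g)$, and not fewer, dummy points are genuinely required.
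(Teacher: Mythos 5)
Your high-level plan — combine an upper bound (Proposition~\ref{thm:upperbounddummypoints}) and a lower bound (Proposition~\ref{thm:minimumnumberofpoints}), then specialize to $\Mg$ using the known systole — is exactly what the paper does. However, the paragraph on the lower bound contains a concrete error that then causes you to manufacture a difficulty that does not exist. You assert that $\sysg$ ``grows like $\log g$,'' so that the disk area $2\pi(\cosh(\tfrac14\sysg)-1)$ grows polynomially in $g$ and the covering estimate only yields something weaker than $\Omega(g)$. This is false for the generalized Bolza surfaces: by Theorem~\ref{thm:systolevalue}, $\sysg = 2\arcosh(1+2\cos(\tfrac{\pi}{2g}))$, which is monotone increasing in $g$ and bounded above by $2\arcosh(3)$; the systoles are contained in a compact subset of $\R_{>0}$. (The $2\log(4g-2)$ bound of~\eqref{eq:systoleinequality} is a universal upper bound, not the growth rate of this particular family, and the families with $\sys\sim C\log g$ mentioned later in Section~\ref{sec:bound-dummy} are different surfaces.) Once $\sysg$ is recognized as bounded, your own covering argument — points of $\dummyq$ must be a $\tfrac14\sysg$-net, so $|\dummyq| \ge \area(\Mg)/\area(B(\tfrac14\sysg)) = 2(g-1)/(\cosh(\tfrac14\sysg)-1)$ — already gives $\Omega(g)$ with no additional delicacy, and the hand-waving about ``short return paths near the vertex'' is unnecessary. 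So the gap is not in the method but in the erroneous belief that $\sysg\to\infty$.

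As to the comparison with the paper's route: for the upper bound the paper does not construct an explicit grid along the sides of $D_g$; it takes a maximal $\tfrac14\sys$-separated set on $\M$ (a non-constructive net), argues that by maximality it is a $\tfrac14\sys$-covering (hence satisfies the validity condition), and then packs the disjoint $\tfrac18\sys$-balls to bound its cardinality by $2(g-1)/(\cosh(\tfrac18\sys)-1)$. Your proposed explicit construction can be made to work but would require the hyperbolic-trigonometry estimate you mention; the packing argument sidesteps it entirely. For the lower bound the paper does not use the direct covering estimate you sketch; it uses Euler's formula on the Delaunay triangulation of $\dummyq$ together with an upper bound on the area of any triangle with circumradius below $\tfrac14\sys$ (Lemma~\ref{lem:triangleareabound}), yielding $|\dummyq| > (\pi/(\pi-6\arcot(\sqrt3\cosh(\tfrac14\sys)))-1)\cdot 2(g-1)$. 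Both give $\Omega(g)$ for bounded systole, but the Euler-formula bound is tighter (it exploits that the points must triangulate the surface, not merely cover it), which is what makes the numerical constants in Theorem~\ref{thm:simplealgorithm} possible.
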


In Section~\ref{sec:symmetric-hyperbolic-surfaces}, we give an explicit value for the systole of
$\Mg$: 
\begin{theorem}\label{thm:systolevalue}
	The systole of the surface $\Mg$ is given by $\varsigma_g$, where $\varsigma_g$ is defined as 
	\[ \varsigma_g:=2\arcosh\left(1+2\cos(\tfrac{\pi}{2g})\right).\]
\end{theorem}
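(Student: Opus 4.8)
The plan is to exploit the identification of $\Mg = \mD/\Gamma_g$: a non-contractible closed geodesic on $\Mg$ lifts to the axis of a hyperbolic element $\gamma \in \Gamma_g \setminus \{\id\}$, and its length equals the translation length $\ell(\gamma)$ of that element. So computing $\sys(\Mg)$ amounts to minimizing $\ell(\gamma)$ over all $\gamma \in \Gamma_g \setminus \{\id\}$. I would work in $\mathrm{PSL}(2,\R)$ (or $\mathrm{SU}(1,1)$ in the disk model), using the standard relation $2\cosh(\ell(\gamma)/2) = |\tr(\gamma)|$, so that minimizing the length is the same as minimizing $|\tr(\gamma)|$ over the (non-parabolic, non-identity) group elements.

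First I would set up explicit matrices for the $2g$ side-pairing generators $a_1,\dots,a_{2g}$ of $\Gamma_g$. Because $D_g$ is a regular $4g$-gon centered at the origin with vertex angle $\pi/(2g)$ (angle sum $2\pi$), these generators are conjugate to one another by the rotation of order $4g$ fixing the origin; the geometry of the regular $4g$-gon pins down the translation length of each generator — indeed, a generator pairing opposite sides translates along a geodesic through the origin by twice the distance from the origin to a side, and a hyperbolic-trigonometry computation in the right-angled triangle formed by the center, a side-midpoint, and a vertex yields exactly $\ell(a_i) = \varsigma_g = 2\arcosh(1+2\cos(\tfrac{\pi}{2g}))$. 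So the candidate value is produced by the generators themselves; the content of the theorem is that nothing in $\Gamma_g$ does better.

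The main obstacle is the lower bound: showing $\ell(\gamma) \ge \varsigma_g$ for \emph{every} $\gamma \in \Gamma_g\setminus\{\id\}$, not merely the generators. I would attack this via a Dirichlet-domain / systolic-ratio argument rather than brute force over words. One clean route: the closed geodesic realizing the systole passes through some point of $\Mg$, so some lift of it meets the (closed) Dirichlet domain $\overline{D_g}$; hence $\sys(\Mg) = \min_{\gamma \neq \id} d(x, \gamma x)$ over $x \in \overline{D_g}$, and it suffices to bound below $d(x,\gamma x)$ for all $x\in \overline{D_g}$ and all $\gamma$ whose Dirichlet domain $\gamma \overline{D_g}$ meets $\overline{D_g}$ — a finite ``neighbor'' list given by the side-pairings and the cycle relations. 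For each such neighboring $\gamma$, $d(x,\gamma x)$ is minimized along the common wall or its extension, reducing to a one-parameter hyperbolic-distance estimate; checking that each of these is $\ge \varsigma_g$, with equality only for the side-pairing generators, completes the argument. Alternatively, one can invoke the symmetry group of $\Mg$ (the full $(2,4g,4g)$ or $(4,4g)$ triangle-group symmetry of the tiling) to cut the neighbor analysis down to one representative wall. I expect the generator computation and the ``axis through the origin'' normalization to be routine; the care goes into enumerating the neighbor transformations and verifying that the next-shortest translation lengths (coming from products of two adjacent generators, corresponding to geodesics that wind around a vertex cycle) strictly exceed $\varsigma_g$ for all $g \ge 2$, which is a monotone-in-$g$ inequality in $\arcosh$ and $\cos$ that I would reduce to a single-variable calculus check.
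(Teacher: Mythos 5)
Your proposed upper bound is incorrect, and this is not a cosmetic issue: you assert that the side-pairing generators themselves realize $\varsigma_g$, but a generator $f_j$ is represented by the matrix $A_j$ with trace $2\cot(\tfrac{\pi}{4g})$, so $\cosh\bigl(\tfrac12\ell(f_j)\bigr)=\cot(\tfrac{\pi}{4g})$, whereas $\cosh\bigl(\tfrac12\varsigma_g\bigr)=1+2\cos(\tfrac{\pi}{2g})$. These two quantities coincide at $g=2$ (both equal $1+\sqrt2$) but diverge for $g\ge 3$: e.g.\ for $g=3$, $\cot(\tfrac{\pi}{12})=2+\sqrt3$ while $1+2\cos(\tfrac{\pi}{6})=1+\sqrt3$. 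In fact the generators have \emph{strictly larger} translation length than $\varsigma_g$ for all $g\ge3$, and the systole is realized instead by products of two ``adjacent'' side-pairings such as $f_{2g+1}f_0=f_1^{-1}f_0$, whose axis passes through the midpoints of two consecutive sides; a short trace computation gives $\tfrac12\abs{\tr(A_{2g+1}A_0)}=1+2\cos(\tfrac{\pi}{2g})$. Your concluding remark that ``products of two adjacent generators...strictly exceed $\varsigma_g$'' is therefore exactly backwards and would leave you with an incorrect candidate and no valid upper bound for $g\ge3$. The coincidence at $g=2$ (Aurich--Steiner) is precisely what makes this easy to overlook.

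Your lower-bound sketch is also substantially different from the paper's and is left at a level where I cannot assess whether it closes. The paper does not minimize a displacement function $x\mapsto d(x,\gamma x)$ over $\overline{D_g}$ and a finite list of neighbor translations; rather, it represents any simple closed geodesic by the cyclic sequence of segments in which its lifts cross $D_g$, classifies each segment by the combinatorial distance $k$ between the two sides of $D_g$ it joins (``$k$-separated''), and proves elementary length lower bounds for single segments of separation $\ge2$ and $\ge4$, plus a statement that consecutive $1$-separated segments must alternate between $1$-segments and $(4g-1)$-segments with combined length at least $\tfrac12\varsigma_g$ (Lemma~\ref{lem:separationlowerbound}). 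A short four-case analysis then gives $\ge\varsigma_g$ for any closed geodesic. Your plan to ``minimize along the common wall or its extension'' is unjustified as stated (the minimizer of $d(x,\gamma x)$ lies on the axis of $\gamma$, which need not track any wall), and the reduction to a finite neighbor list would in any case have to cope with the fact that the shortest elements are \emph{not} the side-pairings. In short: the claimed generator identity is false, which invalidates the upper bound, and the lower bound is a sketch of a route that differs from the paper's and whose key reduction step is not established.
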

This generalizes a result of Aurich and Steiner~\cite{poshs-as-88}
who derived the identity $\cosh\tfrac{1}{2}\sys(\M_2)= 1 + \sqrt{2}$
for the systole of the Bolza
surface ($g=2$), with a method that is quite different from our proof.

Then, in Section~\ref{sec:combinatoricsofsmallDT}, we propose two algorithms to compute dummy points. 
The first algorithm is based on the well-known Delaunay
refinement algorithm for mesh generation~\cite{ruppert1995}. Using a
packing argument, we prove that it provides an asymptotically optimal
number of dummy points (Theorem~\ref{thm:simplealgorithm}). 
The second
algorithm modifies the refinement algorithm so as to yield a symmetric dummy point set, at the expense of a
slightly larger output size $\Theta(g\log g)$
(Theorem~\ref{thm:symmetricalgorithm}); this symmetry may be
interesting for some applications~\cite{cff-bhp-11}. The two algorithms
have been implemented and we quickly present results for small genera $g=2$ and $g=3$.

Finally, in Section~\ref{sec:representation-of-delaunay-triangulations}, we
describe the data structure that we are using to support the extension
of Bowyer's algorithm to generalized Bolza surfaces. We also discuss
the algebraic degree of the predicates used in the
computations and present experimental results. 


\section{Mathematical preliminaries}
\label{sec:mathematical-preliminaries}

In this section we define our notation and present a short introduction on hyperbolic geometry and hyperbolic surfaces~\cite{Beardon1983,r-fhm-06}. 

\subsection{The Poincar{\'e} disk}\label{sec:intropoincaredisk}
The model of the hyperbolic plane we use is the Poincar\'e disk, the open unit disk $\hp$ in the complex
plane equipped with a Riemannian metric of constant Gaussian curvature $K = -1$ \cite{Beardon1983}.
The Euclidean boundary $\hp_\infty$ of the unit disk consists of the points at
infinity or \textit{ideal points} of the hyperbolic plane (which do not belong to $\hp$). 
The geodesic segment $\sqbr{\zh,\wh}$ between points $\zh,\wh\in\hp$ is the (unique) shortest curve
connecting $\zh$ and $\wh$.
A hyperbolic line (i.e., a geodesic for the given metric) in this model is a curve which contains the
geodesic segment between any two of its points. These geodesics are diameters of $\hp$ or circle
arcs whose supporting lines or circles intersect $\hp_\infty$ orthogonally (see
the leftmost frame of Figure~\ref{fig:hyp-transformation}).
A circle in the hyperbolic plane is a Euclidean circle in the Poincar\'e disk, in general with a hyperbolic center and radius that are different from their Euclidean counterparts. 

\begin{figure}[ht]
	\centering
	\includegraphics[height=4.8cm]{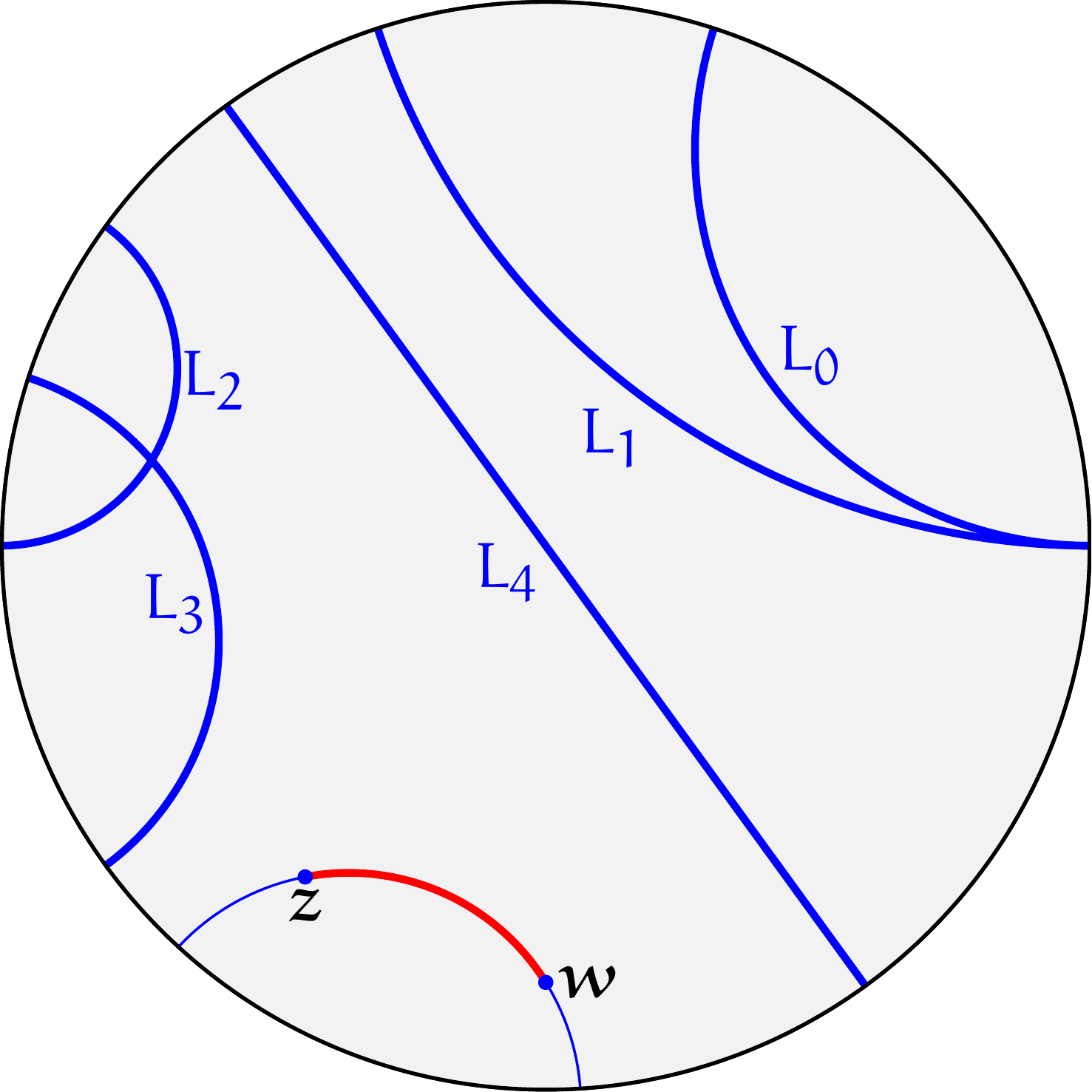}~\qquad~%
	\includegraphics[height=5cm]{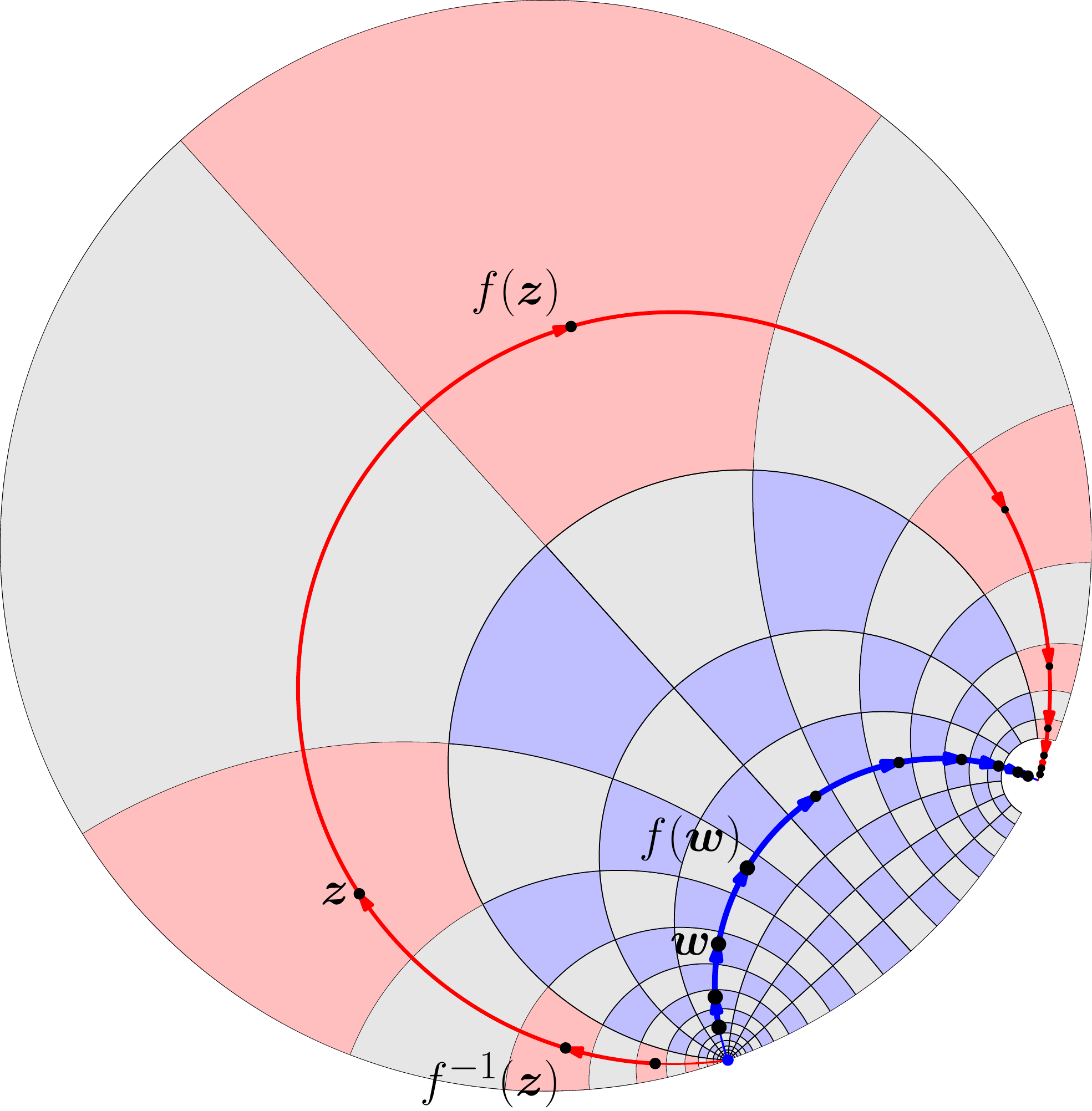}~\qquad\qquad
	
	\caption{Left: the Poincar{\'e} disk model $\mD$ of the hyperbolic plane, with some geodesics. 
		The boundary $\mD_\infty$ does not belong to $\mD$, but consists  of ideal points of $\mD$.
		Geodesics $L_0$ and $L_1$ are parallel (have an ideal point in common), $L_2$ and $L_3$ are
		intersecting and $L_4$ is disjoint from the other geodesics. The points $\boldsymbol{z}$
		and $\boldsymbol{w}$ are connected by a hyperbolic segment.
		\\
		Right: A hyperbolic translation $f$ has two fixed points on the boundary
		$\hp_\infty$ of the Poincar{\'e} disk $\hp$. The axis of $f$ is the (unique) geodesic
		connecting the fixed points of $f$.
		The orbit of point $\wh$ is contained in the axis of $f$.
		The orbit of point $\zh$, which does not lie on the axis of $f$, is contained in an equidistant
		of the axis (an arc of a Euclidean circle through the fixed points).
		The red region containing the point $\zh$ is mapped by $f$ to the red region containing $f(\zh)$.
	}
	\label{fig:hyp-transformation}
\end{figure}

We only consider orientation-preserving isometries of $\hp$, called \textit{isometries} from now on,
which are linear fractional transformations of the form 
\begin{equation}
	\label{eq:hypplaneisometry}
	f(\zh)=\dfrac{a\zh+b}{\conj{b}\zh+\conj{a}},	
\end{equation}
with $a,b\in\C$ such that $\abs{a}^2-\abs{b}^2=1$. 
By representing isometries of the form~\eqref{eq:hypplaneisometry} by either of the two matrices
\begin{equation}
	\label{eq:mat-f}
	\pm
	\begin{pmatrix}
		a & b \\
		\conj{b} & \conj{a}
	\end{pmatrix},
\end{equation}
with $\abs{a}^2-\abs{b}^2 = 1$, composition of isometries corresponds to multiplication of either of
their representing matrices.
The only non-identity isometries we consider are \textit{hyperbolic translations},
which are characterized by having two distinct fixed points on $\mD_\infty$.
An isometry of the form~\eqref{eq:hypplaneisometry} is a hyperbolic translation if and only if
$\trSqr{f} > 4$, where the trace-squared $\trSqr{f}$ of $f$ is the square of the trace $\pm 2
\real{a}$ of the matrices representing $f$.

The $f$-orbit $\{f^n(\zh) \mid n \in\Z\}$ of a point $\zh\in\hp$ is contained in a Euclidean circle
through the fixed points of the hyperbolic translation $f$. 
See Figure~\ref{fig:hyp-transformation}.
Let $d$ denote the distance in the hyperbolic plane.
The \textit{translation length} $\len{f}$ of a hyperbolic translation $f$ is the minimal value of the
displacement function $\zh \mapsto d(\zh,f(\zh))$, which is attained at all points $\zh$ on
the geodesic connecting the two fixed points of $f$. This geodesic is the \textit{axis of $f$}.
The translation length is given by
$  \cosh^2(\tfrac{1}{2}\len{f}) = \tfrac{1}{4}\trSqr{f}$, or, in terms of the matrices~\eqref{eq:mat-f} representing $f$:
\begin{equation}
	\cosh(\tfrac{1}{2}\len{f}) = \abs{\real{a}}.
\end{equation}

\subsection{Hyperbolic surfaces, closed geodesics and systoles}\label{sec:introhyperbolicsurfaces}
In our setting a \textit{hyperbolic surface} is a two-dimensional orientable manifold without boundary which
is locally isometric to the hyperbolic plane. In particular, it has constant Gaussian curvature -1. We will always assume our hyperbolic surfaces to be compact. 
By the uniformization theorem~\cite{Abikoff1981} a hyperbolic surface $\M$ has $\hp$ as its universal
covering space. The surface $\M$ is isometric to the quotient surface $\hp/\Gamma$ of the
hyperbolic plane $\hp$ under the action of a Fuchsian group, i.e., a discrete group $\Gamma$ of
orientation preserving isometries of $\hp$.
The covering projection $\pi : \hp \to \hp/\Gamma$ is a local
isometry.
The orbit $\Gamma \zh$ of a point $\zh \in \hp$ is a discrete subset of $\hp$. Note that
$\Gamma \zh = \pi^{-1}(z)$, with $z = \pi(\zh) \in \M$.
We emphasize that points in the hyperbolic plane $\hp$ are denoted by $\zh,\wh,\ph,\qh$ and so on,
whereas the corresponding points on the surface $\hp/\Gamma$ are denoted by $z,w,p,q$ and so on.
Since $\hp/\Gamma$ is a smooth hyperbolic surface all non-identity elements of $\Gamma$ are hyperbolic
translations.

The distance between points $p$ and $q$ on a hyperbolic surface is given by $\min\{d(\ph,\qh )\;|\;\ph\in\pi^{-1}(p),\qh\in\pi^{-1}(q) \}$ and, abusing notation, is denoted by $d(p,q)$. The projection $\pi$ maps
(oriented) geodesics of $\hp$ to (oriented) geodesics of $\M = \hp/\Gamma$, and it maps the axis of
a hyperbolic translation $f\in\Gamma$ to a \textit{closed} geodesic of $\M$.
Every (oriented) closed geodesic $\gamma$ on $\M$ arises in this way, i.e., there is a hyperbolic translation
$f\in\Gamma$ such that $\gamma$ \textit{lifts} to the axis of $f$.
The axes of two hyperbolic translations $f, f'\in\Gamma$ project to the same closed geodesic of $\M$
if and only if $f'$ is conjugate to $f$ in $\Gamma$
(i.e., iff there is an $h\in\Gamma$ such that $f' =
h^{-1}fh$).

A \textit{simple closed geodesic} of $\M$ is the $\pi$-image of a hyperbolic segment  $[\zh,f(\zh)]$
on the axis of a hyperbolic translation $f$ such that $\pi$ is injective on the open segment $(\zh,f(\zh))$.
The length of this geodesic is equal to the translation length $\len{f}$ of $f$.
For every $L > 0$ the number of simple closed geodesics of $\M$ with length less than $L$ 
is finite, so there is at least one with minimal length. This minimal length is the \textit{systole}
of $\M$, denoted by $\sys\br{\M}$. It is known that 
\begin{equation}\label{eq:systoleinequality}
	\sys(\M)\leq 2\log(4g-2)
\end{equation}
for every hyperbolic surface $\M$ of genus $g$ \cite[Theorem 5.2.1]{b-gscrs-92}. 

A \emph{triangle} $t$ on a hyperbolic surface is the $\pi$-image of a triangle $\th$ in $\hp$ such that $\pi$ is injective on $\th$. Clearly, the vertices of $t$ are the projections of the vertices of $\th$ and the edges of $t$ are geodesic segments.  
A \emph{circle} on a hyperbolic surface is the $\pi$-image of a circle in the hyperbolic plane. In this case, we do not require $\pi$ to be injective on the circle, so the image may have self-intersections.

\subsection{Fundamental domain for the action of a Fuchsian group}
The \textit{Dirichlet region} $D_\ph(\Gamma)$ of a point $\ph\in\hp$ with respect to the Fuchsian group
$\Gamma$ is the closure of the open cell of $p$ in the Voronoi diagram of the infinite discrete set of points $\Gamma \ph$ in $\hp$.
In other words, $D_\ph(\Gamma)=\{\xh\in\hp \mid d(\xh,\ph)\leq d(\xh,f(\ph))\text{ for all }f\in\Gamma
\}$.
Since $\hp/\Gamma$ is compact, every Dirichlet region is a compact convex hyperbolic polygon
with finitely many sides.
Each Dirichlet region $D_\ph(\Gamma)$ is a \emph{fundamental domain} for
the action of $\Gamma$ on $\hp$, i.e., \textit{(i)}~$D_\ph(\Gamma)$ contains
at least one point of the orbit $\Gamma\ph$, and
\textit{(ii)}~if $D_\ph(\Gamma)$ contains more than one point of $\Gamma\ph$
then all these points of $\Gamma\ph$ lie on its boundary.  

\begin{figure}[ht]
	\centering
	\includegraphics[height=5cm]{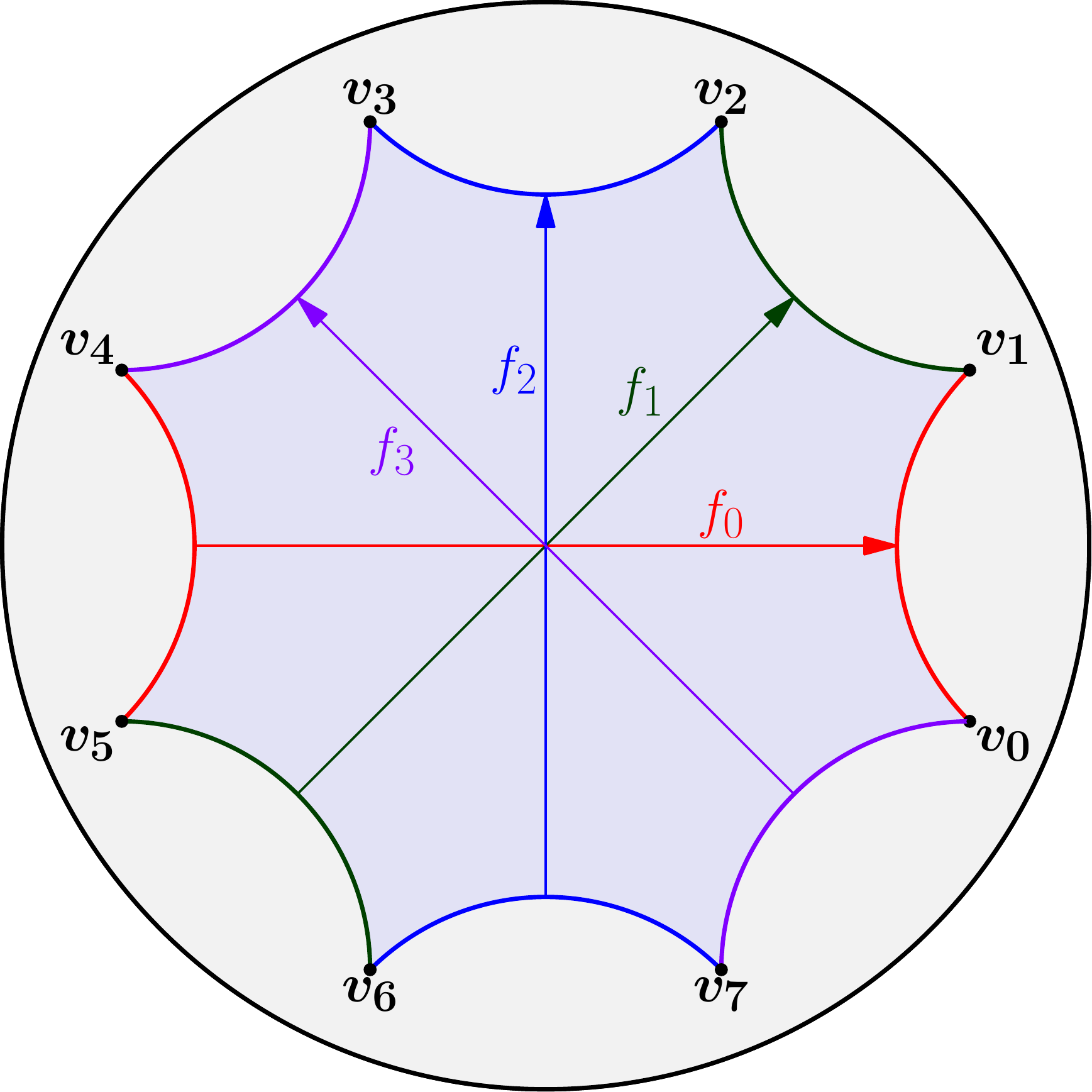}~\qquad~%
	\includegraphics[height=5cm]{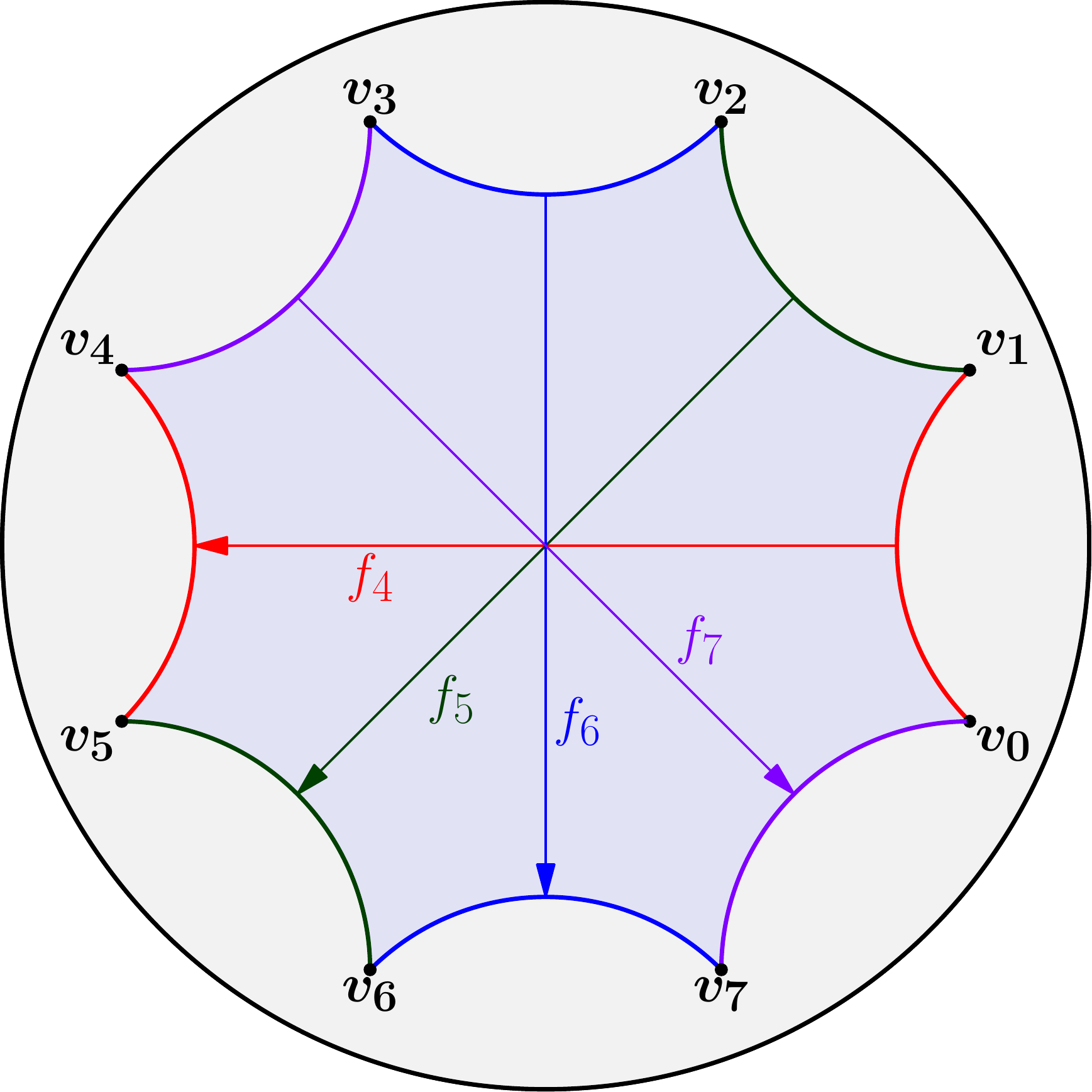}
	\caption{
		The side-pairings $f_0,\ldots,f_3$ of the Bolza surface (of genus 2) pair opposite
		edges of the fundamental octagon (a regular octagon in $\mD$ with angles $\tfrac{1}{4}\pi$).
		Their inverses $f_4,\ldots,f_7$ satisfy $f_{k+4} = f_k^{-1}$.
		The side-pairings generate the Fuchsian group $\Gamma_2$. 
		All vertices are in the same $\Gamma_2$-orbit.
		The composition $f_0 f_5 f_2 f_7 f_4 f_1 f_6f_3$ is the identity $\eg\in\Gamma_2$.
	}
	\label{fig:side-pairings}
\end{figure}

\subsection{Generalized Bolza surfaces\label{sec:generalized}}
\paragraph{The Fuchsian group $\Gamma_g$.}
The generalized Bolza group of genus $g$, $g \geq 2$, is the Fuchsian group $\Gamma_g$ defined in the
following way.
Consider the regular hyperbolic $4g$-gon $D_g$ with angle-sum $2\pi$.
The counterclockwise sequence of vertices is $\vh_0,\ldots,\vh_{4g-1}$, where the midpoint of edge
$[\vh_0,\vh_1]$ lies on the positive real axis.
See Figure~\ref{fig:side-pairings} for $g=2$.
The sides of $D_g$ are $\sh_j, j=0,\ldots,4g-1$, where $\sh_j$ is the side with vertices
$\vh_j$ and $\vh_{j+1}$ (counting indices modulo $4g$).
The orientation preserving isometries $f_0,\ldots,f_{4g-1}$
pair opposite sides of $D_g$.
More precisely, $f_j$ maps $\sh_{j+2g}$ to $\sh_j$, and $\sh_j = f_j(D_g) \cap D_g$.
According to~\eqref{eq:mat-f} the side-pairing $f_j$, $j=0,1,\ldots,4g-1$, is represented by
any of two matrices $\pm A_j$ with determinant $1$. Using some elementary hyperbolic geometry it
can be seen that $A_j$ is given by~\cite{poshs-as-88}
\begin{equation}
	\label{eq:generator-matrix}
	A_j=\begin{pmatrix}
		\cot(\tfrac{\pi}{4g}) & \exp(\tfrac{ij\pi}{2g})\sqrt{\cot^2(\tfrac{\pi}{4g})-1}\,\,
		\\[1.2ex]
		\exp(-\tfrac{ij\pi}{2g})\sqrt{\cot^2(\tfrac{\pi}{4g})-1} & \cot(\tfrac{\pi}{4g})
	\end{pmatrix}.
\end{equation}
By Poincar{\'e}'s Theorem (\cite[Chapter 9.8]{Beardon1983} and \cite[Chapter 11.2]{r-fhm-06})
these side-pairings generate a Fuchsian group, the generalized Bolza group
$\Gamma_g$, all non-identity elements of which are hyperbolic translations.
The polygon $D_g$ is a fundamental domain for the action of this group, and
it is even the Dirichlet region of the origin.

Since $\vh_j = f_jf_{j+1}^{-1}(\vh_{j+2})$, we see that the
element
$f_0f_1^{-1}f_2f_3^{-1}\cdots f_{4g-2}f_{4g-1}^{-1}$ of $\Gamma_g$ maps $\vh_{4g}$ to $\vh_0$.
In other words, $\vh_0$ is a fixed point of this element. Since all non-identity elements of
$\Gamma_g$ are hyperbolic translations, and, hence, without fixed points in $\mD$, this element is the
identity $\eg$ of $\Gamma_g$:
\begin{equation}
	\label{eq:relation-with-inverses}
	f_0f_1^{-1}f_2f_3^{-1}\cdots f_{4g-2}f_{4g-1}^{-1} = \eg.
\end{equation}
For even $j$ we have $f_j = f_{j(2g+1)}$, since we are counting indices modulo $4g$.
Similarly, $f_j^{-1} = f_{j+2g} = f_{j(2g+1)}$, for odd $j$.
Therefore, we can rewrite~\eqref{eq:relation-with-inverses} as
\begin{equation}
	\label{eq:relation}
	\prod_{j=0}^{4g-1}f_{j(2g+1)} =f_{0}f_{2g+1}f_{2(2g+1)}\cdots f_{(4g-1)(2g+1)} = \eg.
\end{equation}
The order of the factors in this product does matter since the group $\Gamma$ is not abelian. 
Equation~\eqref{eq:relation} is usually called the \emph{relation} of $\Gamma_g$.
In addition to~\eqref{eq:relation-with-inverses} and~\eqref{eq:relation}, there are many other ways
to write the relation. By rotational symmetry of $D_g$, conjugating
$\prod_{j=0}^{4g-1}f_{j(2g+1)}$ with the rotation by angle $k\pi/2g$ around the origin yields the
relation $\prod_{j=0}^{4g-1}f_{k+j(2g+1)}=\eg$. 
The latter expression can be rewritten as
\begin{equation}
	\label{eq:relation-k}
	f_k f_{k+1}^{-1} f_{k+2}f_{k+3}^{-1} \cdots f_{k+4g-2} f_{k+4g-1}^{-1} = \eg.
\end{equation}

\paragraph{Neighbors of vertices of the fundamental polygon.}
In the clockwise sequence of Dirichlet regions $h_1(D_g), h_2(D_g),\cdots,h_{4g}(D_g)$
around vertex $\vh_k$ the element $h_j \in \Gamma_g$ is the prefix of length $j$ in the left-hand
side of~\eqref{eq:relation-k}:
\begin{equation}
	\label{eq:neighbors-vk}
	h_j =
	\begin{cases}
		f_kf_{k+1}^{-1}\cdots f_{k+j-2}f_{k+j-1}^{-1}, & \text{ if $j$ is even}, \\[1.2ex]    
		f_kf_{k+1}^{-1}\cdots f_{k+j-2}^{-1}f_{k+j-1}, & \text{ if $j$ is odd}.
	\end{cases}
\end{equation}
Prefixes $h_j$ of length $j \geq 2g$ can be reduced to a word of length $4g-j$
in $f_k, f_{k+1}^{-1},\ldots,f_{k+2g-1}^{-1}$ using relation~\eqref{eq:relation-k} (where the empty
word -- of length zero -- corresponds to $\eg$) and the fact that $f_{j} = f_{j-2g}^{-1}$ for
$j \geq 2g$. More precisely, $h_{4g-j}$ is the prefix of length $j$ in
$f_{k+2g-1}^{-1}f_{k+2g-2}\cdots f_{k+1}^{-1}f_k$, for $j = 0,\ldots,2g$.
Figure~\ref{fig:neighbors-vk} depicts the neighbors of $\vh_k$ for the case $g=2$.
\begin{figure}[htb]
	\centering
	\includegraphics[height=5cm]{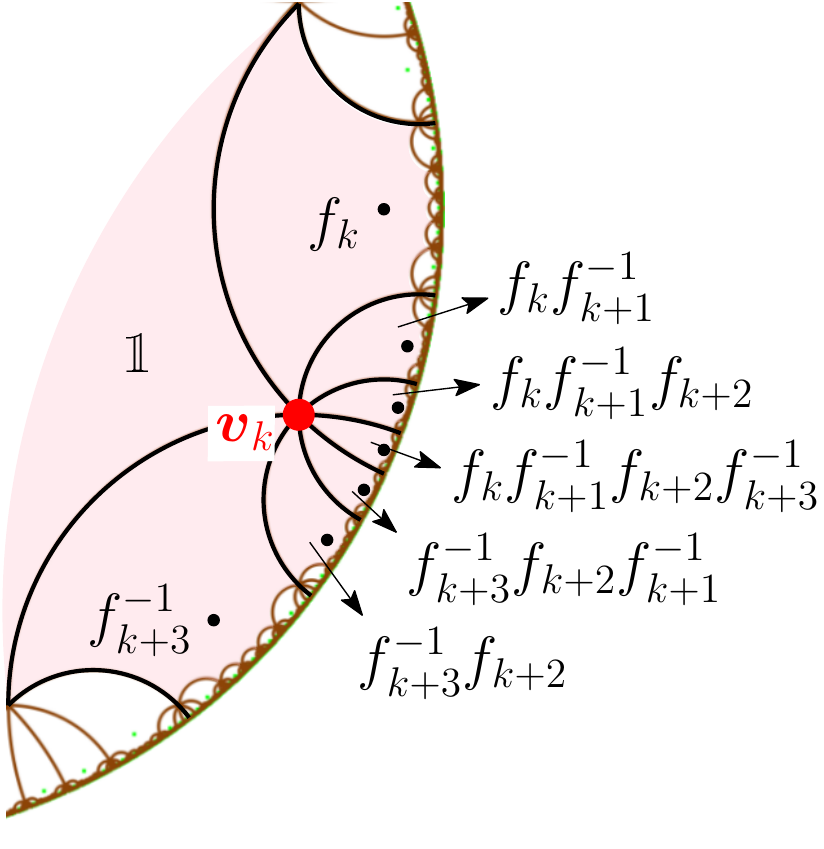}
	\caption{\label{fig:neighbors-vk}%
		Enumeration of the regions around a vertex
		$\vh_k$, $k=0,1,\ldots,7$, for the case $g=2$.
		Note that $f_{k+3}^{-1}f_{k+2}f_{k+1}^{-1}f_k=f_kf_{k+1}^{-1}f_{k+2}f_{k+3}^{-1}$
		by the group relation.} 
\end{figure}

The ordering of neighbors of the vertices of $D_g$ yields an ordering of all regions around $D_g$,
which will play an important role in the data structure for the representation of
Delaunay triangulations in Section~\ref{sec:representation-of-delaunay-triangulations}.
More precisely, we define the set $\N$ of \emph{neighboring translations} as:
\[
\N = \braces{ f \in \Gg \; \big| \; f (D_g) \cap D_g \neq \emptyset }.
\]
Each Dirichlet region sharing an edge or a vertex with the (closed)
domain $D_g$ is the image of $D_g$ under the action of a translation
in $\N$, which is used to label the region.
Also see Figure~\ref{fig:neighbors-vk}.
We denote the union of these \emph{neighboring regions} of 
$D_g$ by $D_{\N}$, so
\begin{equation*}
	D_{\N} = \bigcup_{ f \in \N } f (D_g).  
\end{equation*}
Note that we slightly abuse terminology in the sense that the identity is an element of $\N$, and,
therefore, a neighboring translation, even though it is not a hyperbolic translation.
Also note that $D_g$ is a neighboring region of itself.

\paragraph{The hyperbolic surface $\M_g$.}
The \textit{generalized Bolza surface} of genus $g$ is the hyperbolic surface $\hp/\Gamma_g$,
denoted by $\Mg$. The projection map is $\pi_g:\hp \to\hp/\Gamma_g$. The surface $\Bolza$ is
the classical Bolza surface~\cite{Bolza1887,potrho-abs-91}.

The \emph{original domain} $\Do$ is a subset of $D_g$ containing exactly
one representative of each point on the surface $\Mg$, i.e., of each
orbit under $\Gamma_g$.
The original domain $\Do$ is constructed from the fundamental domain $D_g$ as
follows (see Figure~\ref{fig:original-domain}): $\Do$ and $D_g$ have the same
interior; the only vertex of $D_g$ belonging to $\Do$ is the vertex
$\vh_0$; the $2g$ sides $\sh_{2g},\ldots,\sh_{4g-1}$ of $D_g$ preceding $\vh_0$ (in
counter-clockwise order) are in $\Do$, while the subsequent $2g$ sides
are not.
\begin{figure}[ht]
	\centering
	\includegraphics[]{}
	\caption{Original domain $\Do$ for $g = 2$. Only vertex
		$\vh_0$ and the solid sides are included in $\Do$.}
	\label{fig:original-domain}
\end{figure}
For a point $p$ on $\Mg$, the \emph{canonical} representative
of $p$ is the unique point of the orbit $\proj^{-1}(p)$ that lies in
$\Do$. 


\section{Computing Delaunay triangulations}\label{sec:computingDT} 

\subsection{Bowyer's algorithm in the Euclidean plane}
There exist various algorithms to compute Delaunay triangulations in
Euclidean spaces. Bowyer's algorithm~\cite{Bowyer81,Watson81} has proved its efficiency in \cgal~\cite{cgal:pt-t3}.

Let us focus here on the two-dimensional case. Let $\P$ be a set of points
points in the Euclidean plane $\E$ and $\dte{\P}$ its Delaunay triangulation. Let $\ph \not\in
\P$ be a point in the plane to be inserted in the triangulation. Bowyer's 
algorithm performs the insertion as follows. 
\begin{enumerate}\cramped
	\item Find the set of triangles of $\dte{\P}$ that are
	\textit{in conflict} with $\ph$, i.e., whose
	open circumcribing disks contain $p$;
	\item Delete each triangle in conflict with $\ph$; this creates
	a ``hole'' in the triangulation;
	\item Repair the triangulation by forming new triangles with
	$\ph$ and each edge of the hole boundary to obtain
	$\dte{\P\cup\{\ph\}}$. 
\end{enumerate}
Degeneracies can be resolved using a symbolic
perturbation technique, which actually works in any
dimension~\cite{dt-pdwdt-11}. 

An illustration is given in Figure~\ref{fig:bowyer}.
\begin{figure}[htbp]
	\centering
	\includegraphics[width=0.98\textwidth]{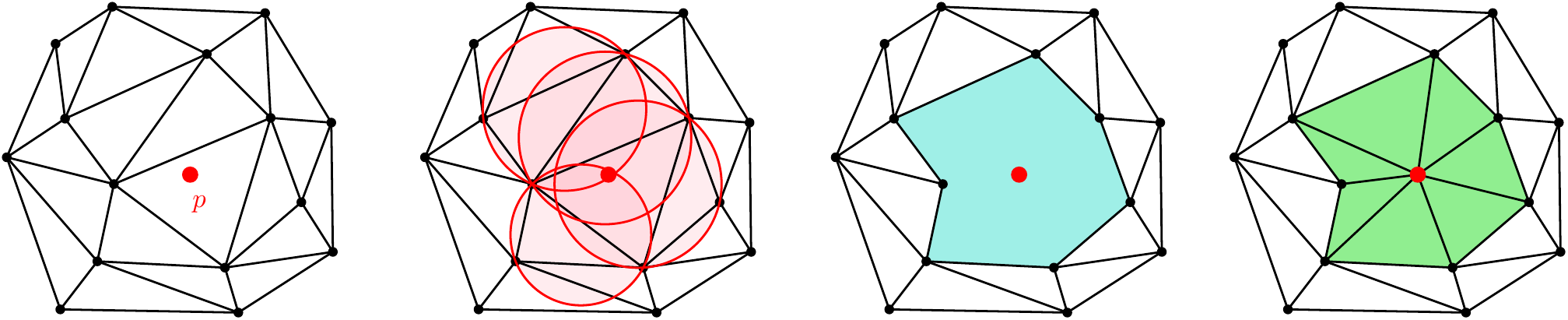}
	\caption[Insertion with Bowyer's incremental algorithm]{Insertion of a point in a Delaunay triangulation with Bowyer's incremental algorithm.}
	\label{fig:bowyer}
\end{figure}
The first step of the insertion of $\ph$ uses geometric
computations, whereas the next two are purely combinatorial. This is
another reason why this algorithm is favored in \cgal: it allows for a
clean separation between combinatorics and geometry, as opposed to an
insertion by flips, in which geometric computations and combinatorial updates
would alternate. 

Note that the combinatorial part heavily relies on the fact that \textit{the
	union of the triangles of $\dte{\P}$ in conflict with $\ph$ is a
	topological disk}. We will discuss this essential property in the
next section.

\subsection{Delaunay triangulations of points on hyperbolic surfaces}
\label{sec:bower-surf}

Let $\M = \mD/\Gamma$ be a hyperbolic surface, as introduced in Section~\ref{sec:introhyperbolicsurfaces}, with the associated projection map $\pi: \mD \rightarrow \M$, and
$F\subset\mD$ a fundamental domain.

Let us consider a triangle $t$ and a point $p$ on $\M$. The
triangle $t$ is said to be \emph{in conflict} with $p$ if the
circumscribing disk of one of the triangles in $\pi^{-1}(t)$ is in
conflict with a point of $\pi^{-1}(p)$ in the fundamental domain. As
noticed in the literature~\cite{bdt-hdcvd-14}, the notion of conflict
in $\mD$ is the same as in $\E$, since for the Poincar\'e disk model,
hyperbolic circles are Euclidean circles (see
Section~\ref{sec:intropoincaredisk}).

Let us now consider a finite set $\P$ of points on the surface $\M$ and a
partition of $\M$ into triangles with vertex set $\P$. 
Assume that the triangles of the partition have no conflict with any of the
vertices. Let $p\not\in\P$ be a point on
$\M$. The region $C_p$ formed by the union of the triangles of the
partition that are in conflict with $p$ might not be a topological
disk (see Figure~\ref{fig:bowyer-handle}). In such a case, the last
step of Bowyer's algorithm could not directly apply, as there are
multiple geodesics between $p$ and any given point on the boundary of~$C_p$.

\begin{figure}[htbp]
	\centering
	\includegraphics{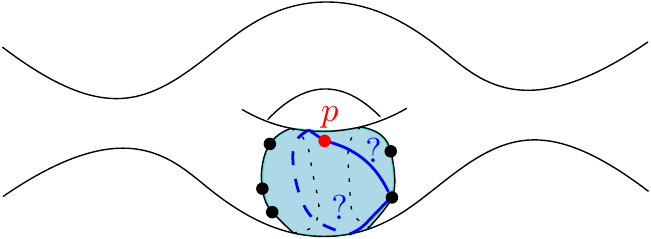}
	\caption[]{Bowyer's insertion is not well defined when
		the conflict region is not a topological disk.}
	\label{fig:bowyer-handle}
\end{figure}

In order to be able to use Bowyer's algorithm on $\M$, the triangles
on $\M$ without conflict with any vertex, together with their edges and their
vertices, should form a \emph{simplicial complex}.
Here, a collection
$\mathcal{K}$ of vertices, edges, and triangles (together called \emph{simplices}) is called a simplicial complex
if it satisfies the following two conditions
(cf~\cite[Chapter 6]{Armstrong2013} and \cite[Chapter 1]{Munkres1984}):

\begin{itemize}\cramped
	\item each face of a simplex of $\mathcal{K}$ is also an element of $\mathcal{K}$; \label{def:sc1}
	\item the intersection of two simplices of $\mathcal{K}$ is either empty or is a
	simplex of $\mathcal{K}$. \label{def:sc2}
\end{itemize}
In other words, the graph of edges of the triangles
should have no loops (1-cycles) or multiple edges
(2-cycles). Note
that, as the set $\P$ is finite, all triangulations considered in this
paper are locally finite, so, we can skip the local finiteness in
the above conditions (see also the discussion
in~\cite[Section~2.1]{ct-dtceo-16}). 

For a set of points $\dummyq\subset \M$ we denote by $\diam{\dummyq}$
the diameter of the largest disks in $\mD$ that do not contain any
point in $\pi^{-1}(\dummyq)$. We will reuse the following result. 

\begin{proposition}[Validity condition \cite{btv-dtosl-16}]
	\label{thm:condition}
	Let $\dummyq \subset \M$ be a set of points such that 
	\begin{equation}
		\diam{\dummyq} < \frac 12 \sys\br{\M}.
		\label{condition}
	\end{equation} 
	Then, for any set of points $\S\subset\M$ such that
	$\dummyq\subseteq\S$, the graph of edges of the projection
	$\pi\br{\dth{\pi^{-1}(\S)}}$ has no 1- or 2-cycles.
\end{proposition}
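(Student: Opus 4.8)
The plan is to reduce both combinatorial conditions to a single metric dichotomy: every edge of $\dth{\pi^{-1}(\S)}$ is \emph{short} (hyperbolic length $<\tfrac12\sys\br{\M}$), whereas any edge of $\dth{\pi^{-1}(\S)}$ that projects to a loop, or any two edges that project to a double edge, would have to be \emph{long} (length $\ge\sys\br{\M}$). First I would establish the short part. Since $\dummyq\subseteq\S$ we have $\pi^{-1}(\dummyq)\subseteq\pi^{-1}(\S)$, so any disk of $\mD$ whose interior avoids $\pi^{-1}(\S)$ also avoids $\pi^{-1}(\dummyq)$, hence has hyperbolic diameter at most $\diam{\dummyq}<\tfrac12\sys\br{\M}$ by definition of $\diam{\dummyq}$. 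In particular empty disks are uniformly bounded, so $\dth{\pi^{-1}(\S)}$ is a genuine, locally finite triangulation of $\mD$ with no unbounded face; fixing once and for all a $\Gamma$-equivariant symbolic perturbation to resolve cocircularities, uniqueness of the Delaunay triangulation makes $\dth{\pi^{-1}(\S)}$ invariant under $\Gamma$, so its projection to $\M$ is well defined. By the empty-disk characterization of Delaunay edges (recall that in the Poincar\'e disk hyperbolic disks are Euclidean disks), each edge $[\ph,\qh]$ of $\dth{\pi^{-1}(\S)}$ lies on the boundary of some disk with empty interior; that disk has diameter $<\tfrac12\sys\br{\M}$ and $[\ph,\qh]$ is one of its chords, so $d(\ph,\qh)<\tfrac12\sys\br{\M}$.

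For the long part I would use that every non-identity $f\in\Gamma$ satisfies $d(\zh,f(\zh))\ge\len{f}$ for all $\zh\in\mD$, the displacement function being minimized along $\axis{f}$ as recalled in Section~\ref{sec:intropoincaredisk}, together with $\len{f}\ge\sys\br{\M}$, since $\pi(\axis{f})$ is a closed geodesic of $\M$ of length $\len{f}$ (Section~\ref{sec:introhyperbolicsurfaces}) and $\sys\br{\M}$ is by definition the minimal length of a closed geodesic; hence $d(\zh,f(\zh))\ge\sys\br{\M}$ for all $\zh$ whenever $f\neq\id$. Now suppose the projected graph has a loop at a vertex $p$. It is the image of an edge of $\dth{\pi^{-1}(\S)}$ both of whose endpoints project to $p$, hence an edge $[\ph,f(\ph)]$ with $\pi(\ph)=p$ and $f\in\Gamma$; a genuine edge has distinct endpoints, so $f\neq\id$, and then $d(\ph,f(\ph))\ge\sys\br{\M}$ contradicts the short part. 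This excludes $1$-cycles.

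For $2$-cycles, suppose two distinct edges of the projected graph join the same pair of vertices $p\neq q$. Lift the first to an edge $[\ph,\qh]$ of $\dth{\pi^{-1}(\S)}$ with $\pi(\ph)=p$ and $\pi(\qh)=q$; lift the second to an edge with one endpoint in $\pi^{-1}(p)$ and one in $\pi^{-1}(q)$, and, using that $\Gamma$ acts transitively on $\pi^{-1}(p)$ and that $\dth{\pi^{-1}(\S)}$ is $\Gamma$-invariant, translate it by a suitable element of $\Gamma$ so that its $p$-endpoint becomes $\ph$; it then has the form $[\ph,f(\qh)]$ for some $f\in\Gamma$. If $f=\id$ the two surface edges would share the lift $[\ph,\qh]$ and hence coincide, so $f\neq\id$. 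By the triangle inequality and the short part, $d(\qh,f(\qh))\le d(\qh,\ph)+d(\ph,f(\qh))<\tfrac12\sys\br{\M}+\tfrac12\sys\br{\M}=\sys\br{\M}$, contradicting $d(\qh,f(\qh))\ge\sys\br{\M}$. Hence there are no $2$-cycles, and the proposition follows.

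The genuinely geometric content here is just the two facts $d(\zh,f(\zh))\ge\len{f}\ge\sys\br{\M}$ for $f\neq\id$ and that a disk empty of $\pi^{-1}(\dummyq)$ has diameter at most $\diam{\dummyq}$, both essentially immediate. I expect the point needing the most care to be the bookkeeping in the $2$-cycle case: that a pair of distinct surface edges between $p$ and $q$ really does lift to a pair of Delaunay edges sharing a common lift $\ph$ of $p$. This rests on the Delaunay triangulation of the periodic set $\pi^{-1}(\S)$ being unambiguously defined and $\Gamma$-equivariant, which is why the equivariant tie-breaking rule has to be fixed from the outset; it also uses implicitly that the short-edge bound applies to \emph{every} edge of $\dth{\pi^{-1}(\S)}$, i.e.\ that this triangulation has no unbounded face — exactly what the uniform bound on empty disks provides.
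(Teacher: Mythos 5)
Your proof is correct and follows essentially the same route as the paper's: the empty-circumdisk argument bounds every edge length by $\diam{\dummyq} < \tfrac12\sys(\M)$, and a systole lower bound rules out the offending cycles. The only differences are cosmetic — you work with $\S$ directly instead of proving it for $\dummyq$ and then noting monotonicity, you phrase the systole bound via the displacement inequality $d(\zh,f(\zh))\ge\len{f}\ge\sys(\M)$ rather than via the non-contractibility of the lifted loop, and you spell out the $1$-cycle case explicitly where the paper subsumes it — but the key ideas coincide.
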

This condition is stronger than just requiring that the Delaunay
triangulation of $\dummyq$ be a simplicial complex: 
if only the latter condition holds, inserting more
points could create cycles in the
triangulation~\cite[Figure~3]{ct-dtceo-16}; see also Remark~\ref{rem:condition}
below.

The proof is easy, we include it for completeness. 
\begin{proof}
	Assume that condition~(\ref{condition}) holds. For each edge $\eh$ of the
	(infinite) Delaunay triangulation $\dth{\pi^{-1}(\dummyq)}$ in $\mD$, there exists an empty ball
	having the endpoints of $\eh$ on its boundary, so, the length of $\eh$ is
	not larger than $\diam{\dummyq}$. Assume now that there is a 2-cycle
	formed by two edges $\pi(\eh_1)$ and $\pi(\eh_2)$ in
	$\pi\br{\dth{\pi^{-1}(\dummyq)}}$, then the length of the
	non-contractible loop that they form is the sum of the lengths of
	$\eh_1$ and $\eh_2$, which is at most $2 \diam{\dummyq}$ and smaller than
	$\sys(\M)$. This is impossible by definition of $\sys(\M)$, so, there is
	no 2-cycle in $\pi\br{\dth{\pi^{-1}(\dummyq)}}$. 
	
	As the diameter of the largest empty disks does not increase with the
	addition of new points, the same holds for any set $\S\supseteq\dummyq$.
\end{proof}

The most obvious example of a set that does not satisfy the validity
condition is a single point: each edge of the projection is a
1-cycle. The condition is satisfied when the set contains sufficiently
many and well-distributed points. 

\begin{definition} Let $\S\subset \M$ be a set of points satisfying the validity
	condition~\eqref{condition}. The \emph{Delaunay triangulation of $\M$ defined by $\S$}
	is then defined as $\pi\br{\dth{\pi^{-1}(\S)}}$ and denoted
	by~$\dts{\S}$.
\end{definition}

As for the Bolza surface~\cite{btv-dtosl-16},
Proposition~\ref{thm:condition} naturally suggests a way to adapt 
Bowyer's algorithm to compute $\dts{\P}$ for a given set $\P$ of
$n$ points on $\M$: 
\begin{itemize}
	\item Initialize the triangulation as the Delaunay triangulation
	$\dts{\dummyq}$ of $\M$
	defined by an artificial set of \emph{dummy points} $\dummyq$
	that satisfies condition~\eqref{condition}; 
	\item Compute incrementally the Delaunay triangulation
	$\dts{\dummyq\cup\P}$ by inserting the points
	$p_1,p_2,\ldots,p_k,\ldots,p_n$ of $\P$
	one by one, i.e., for each new point $p_k$:
	\begin{itemize}
		\item find all triangles of the Delaunay triangulation
		$\dts{\dummyq\cup\{p_1,\ldots,p_{k-1}\}}$ that are in conflict with $p_k$; let $C_{p_k}$ 
		denote their union; since $\dummyq$ satisfies the validity
		condition, $C_{p_k}$ is a topological disk;
		\item delete the triangles in $C_{p_k}$; 
		\item repair the triangulation by forming new triangles with $p_k$
		and each edge of the boundary of $C_{p_k}$;
	\end{itemize}
	\item Remove from the triangulation all points of $\dummyq$ whose
	removal does not violate the validity condition.
\end{itemize}
We ignore degeneracies
here; they can be resolved as in the case of flat orbit
spaces~\cite{ct-dtceo-16}. 
Depending on the size and distribution of the input set $\P$, the
final Delaunay triangulation of $\M$ might have some or all of the
dummy points as vertices. If $\P$ already satisfies the validity
condition then no dummy point is left.

\subsection{Bounds on the number of dummy points}\label{sec:bound-dummy}

In the following proposition we show that a dummy point set exists
and give an upper bound for its cardinality. The proof is
non-constructive, but we will construct dummy point sets for generalized Bolza surfaces in Section~\ref{sec:combinatoricsofsmallDT}. 

\begin{proposition}\label{thm:upperbounddummypoints}
	Let $\M$ be a hyperbolic surface of genus $g$ with
	systole $\sys(\M)$. Then there exists a point set
	$\dummyq\subset\M$ satisfying the validity condition~\eqref{condition} with cardinality
	$$ |\dummyq| \leq \dfrac{2(g-1)}{\cosh(\tfrac{1}{8}\sys(\M))-1}.$$
\end{proposition}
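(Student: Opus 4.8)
The plan is to use a classical area/packing argument based on the Gauss--Bonnet theorem. First I would observe that by Gauss--Bonnet, a closed hyperbolic surface $\M$ of genus $g$ has $\area(\M)=2\pi(2g-2)=4\pi(g-1)$. The idea is to place points on $\M$ greedily so that the resulting set is \emph{maximal} with respect to being $\rho$-separated for a suitable radius $\rho$ to be chosen at the end; maximality then forces the covering radius to be small, which is exactly what condition~\eqref{condition} asks for.

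Concretely, fix $\rho = \tfrac18 \sys(\M)$ and build a maximal set $\dummyq\subset\M$ of points that are pairwise at distance $\ge 2\rho$ on $\M$ (such a set exists and is finite since $\M$ is compact). The two things to check are:

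\begin{itemize}
\item \textbf{Upper bound on $|\dummyq|$ via packing.} Since the points of $\dummyq$ are pairwise at distance $\ge 2\rho$, the open metric balls of radius $\rho$ around them are pairwise disjoint on $\M$. Moreover $\rho = \tfrac18\sys(\M) < \tfrac12\sys(\M)$, so each such ball lifts isometrically to a genuine hyperbolic disk of radius $\rho$ in $\mD$ (a ball of radius less than half the systole is embedded), hence has area equal to that of a hyperbolic disk of radius $\rho$, namely $2\pi(\cosh\rho - 1)$. Disjointness and the area of $\M$ give
\[
|\dummyq|\cdot 2\pi\bigl(\cosh(\tfrac18\sys(\M))-1\bigr)\;\le\;\area(\M)\;=\;4\pi(g-1),
\]
which rearranges to exactly $|\dummyq|\le \dfrac{2(g-1)}{\cosh(\tfrac18\sys(\M))-1}$.

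\item \textbf{Validity of $\dummyq$.} By maximality, every point of $\M$ is within distance $2\rho$ of some point of $\dummyq$, i.e.\ the covering radius of $\dummyq$ on $\M$ is at most $2\rho = \tfrac14\sys(\M)$. I then need to translate this into a statement about the diameter of empty disks in $\mD$, i.e.\ $\diam{\dummyq}$. A disk in $\mD$ of radius $r$ that contains no point of $\pi^{-1}(\dummyq)$ projects to a region on $\M$ whose center is at distance $\ge r$ from every point of $\dummyq$ provided $r < \tfrac12\sys(\M)$ so that the disk embeds; hence $r$ is at most the covering radius, $2\rho$. Thus $\diam{\dummyq} = 2r \le 4\rho = \tfrac12\sys(\M)$. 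To get the strict inequality in~\eqref{condition}, I would take $\rho$ slightly smaller than $\tfrac18\sys(\M)$, or argue that the extremal empty disk cannot simultaneously be tangent to the separation bound and the covering bound; since $\cosh(\tfrac18\sys(\M))-1$ is continuous in $\rho$, passing to a slightly smaller $\rho$ changes the bound by an arbitrarily small amount and a careful statement of the proposition (or a limiting argument) recovers the clean inequality. Then $\dummyq$ satisfies~\eqref{condition} and Proposition~\ref{thm:condition} applies.
\end{itemize}

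The main obstacle I anticipate is the bookkeeping around \emph{embeddedness}: both the packing step and the covering-radius step rely on balls/disks of radius below a systolic threshold lifting isometrically from $\M$ to $\mD$, and one must check that $\tfrac18\sys(\M)$ (for packing, where radius $\rho$ appears) and $\tfrac14\sys(\M)$ (for covering, where radius $2\rho$ appears) are both comfortably below $\tfrac12\sys(\M)$ — which they are — and correctly relate ``distance on $\M$'' to ``distance in $\mD$'' when several lifts of a point are involved. The other mild subtlety is the strict-versus-non-strict inequality at the boundary case, handled as above. Everything else (Gauss--Bonnet, existence of a maximal separated set on a compact surface, the area of a hyperbolic disk) is standard.
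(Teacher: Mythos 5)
Your proposal takes essentially the same approach as the paper: greedily build a maximal $\tfrac14\sys(\M)$-separated set $\dummyq$, use disjoint embedded balls of radius $\tfrac18\sys(\M)$ and Gauss--Bonnet for the cardinality bound, and use maximality for the covering bound. The one place where your write-up diverges from the paper is the handling of the strict inequality in condition~\eqref{condition}, and this is where you create an unnecessary complication. You phrase the covering property as ``covering radius at most $2\rho$'' and then propose shrinking $\rho$ or a limiting argument to force strictness. That extra step is not needed, and as a limiting argument it does not cleanly reproduce the stated bound (shrinking $\rho$ degrades the cardinality bound, and you have no monotonicity/continuity of the maximal-set cardinality in $\rho$ to pass to the limit). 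The paper's observation is simpler and sharper: if some $x\in\M$ had $d(x,p)\ge\tfrac14\sys(\M)$ for \emph{every} $p\in\dummyq$, then $\dummyq\cup\{x\}$ would still be $\tfrac14\sys(\M)$-separated, contradicting maximality. Therefore for every $x$ there is some $p$ with $d(x,p)<\tfrac14\sys(\M)$ \emph{strictly}, and so every disk in $\mD$ avoiding $\pi^{-1}(\dummyq)$ has radius strictly less than $\tfrac14\sys(\M)$ and diameter strictly less than $\tfrac12\sys(\M)$. This gives the strict inequality directly, with the exact constant $\tfrac18\sys(\M)$, and closes the gap in your argument without any perturbation of $\rho$. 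The rest of your reasoning (embeddedness of small balls, disjointness, area comparison) matches the paper and is correct.
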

\begin{proof}
	Let $\dummyq$ be a maximal set of points such that for all
	distinct $p,q\in\dummyq$ we have
	$d(p,q)\geq\tfrac{1}{4}\sys(\M)$. By maximality, we know that
	for all $x\in\M$ there exists $p\in\dummyq$ such that $d(x,p)<
	\tfrac{1}{4}\sys(\M)$: if this is not the case, i.e., if there
	exists $x\in\M$ such that $d(x,p)\geq\tfrac{1}{4}\sys(\M)$ for
	all $p\in\dummyq$, then we can add $x$ to $\dummyq$, which contradicts maximality of $\dummyq$. 
	Hence, for any $x\in\M$ the largest disk centered at $x$ and not containing any points of $\dummyq$ has diameter less than $\tfrac{1}{2}\sys(\M)$, which implies $\diam{\dummyq}<\tfrac{1}{2}\sys(\M)$.\\ 
	To prove the statement on the cardinality of $\dummyq$, denote
	the open disk centered at $p\in\dummyq$ with radius $R$ by
	$B_p(R)$. The disk $B_p(\tfrac{1}{8}\sys(\M))$ for $p\in\dummyq$ is embedded in $\M$, because its radius is smaller than $\tfrac{1}{2}\sys(\M)$. Furthermore, by construction of $\dummyq$, 
	$$B_p(\tfrac{1}{8}\sys(\M))\cap B_q(\tfrac{1}{8}\sys(\M))=\emptyset$$
	for all distinct $p,q\in\dummyq$. Hence, the cardinality of $\dummyq$ is bounded from above by the number of disjoint embedded disks of radius $\tfrac{1}{8}\sys(\M)$ that we can fit in $\M$. We obtain
	$$ |\dummyq| \leq \dfrac{\area(\M)}{\area(B_p(\tfrac{1}{8}\sys(\M)))}=\dfrac{4\pi(g-1)}{2\pi(\cosh(\tfrac{1}{8}\sys(\M))-1)}=\dfrac{2(g-1)}{\cosh(\tfrac{1}{8}\sys(\M))-1}.$$
\end{proof}

Similarly, in the next proposition we state a lower bound for the cardinality of a dummy point set.

\begin{proposition}\label{thm:minimumnumberofpoints}
	Let $\M$ be a hyperbolic surface of genus $g\geq 2$. Let
	$\dummyq$ be a set of points in $\M$ such that the validity
	condition~\eqref{condition} holds. Then
	$$|\dummyq|>\left( \dfrac{\pi}{\pi-6\arcot(\sqrt{3}\cosh(\tfrac{1}{4}\sys(\M)))}-1\right)\cdot 2(g-1).$$
\end{proposition}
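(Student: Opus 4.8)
The plan is to establish the lower bound via an area/packing argument that is, in spirit, dual to the one used in Proposition~\ref{thm:upperbounddummypoints}. By the validity condition~\eqref{condition}, every empty disk in $\mD$ has diameter less than $\tfrac12\sys(\M)$, hence radius less than $\tfrac14\sys(\M)$. Projecting down, this says that the Delaunay disks of $\dts{\dummyq}$ all have radius $<\tfrac14\sys(\M)$; equivalently, the points of $\dummyq$ form a $\tfrac14\sys(\M)$-covering of $\M$, i.e. every point of $\M$ lies within distance $\tfrac14\sys(\M)$ of some point of $\dummyq$. So $\M$ is covered by the $|\dummyq|$ disks $B_p(\tfrac14\sys(\M))$, $p\in\dummyq$. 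A naive comparison of areas would give $|\dummyq|\ge \area(\M)/\area(B(\tfrac14\sys(\M)))$, but that bound is too weak and does not have the required form; the point of the proposition is to exploit that these covering disks \emph{overlap}, and to quantify the overlap using the angular defect that appears when disks of radius $\tfrac14\sys(\M)$ tile a neighborhood of a point.

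The key step I would carry out is the following local estimate. Take any point $x\in\M$ and consider the disks of the cover that contain $x$; say there are $k$ of them, centered at $p_1,\dots,p_k$. Since the $p_i$ are $\tfrac14\sys(\M)$-separated (we may take $\dummyq$ to be, or to contain, a maximal such separated set — or more carefully, one argues directly that a covering-radius-$\tfrac14\sys(\M)$ set cannot be too sparse near $x$), one gets a packing-type constraint: the $k$ points $p_i$ lie in $B_x(\tfrac14\sys(\M))$ and are pairwise at distance $\ge \tfrac14\sys(\M)$. In hyperbolic geometry, $k$ points pairwise at distance $\ge r$ inside a disk of radius $r$ force a bound on $k$ via the angles they subtend at $x$: the hyperbolic law of cosines for a triangle with two sides of length $\le r$ and included angle $\theta$ whose opposite side is $\ge r$ gives $\cosh r \ge \cosh^2 r - \sinh^2 r\cos\theta$, i.e. $\cos\theta \le \tfrac{\cosh^2 r-\cosh r}{\sinh^2 r} = \tfrac{\cosh r}{\cosh r+1}$, equivalently $\theta$ is bounded below by some explicit angle $\theta_0(r)$. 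Summing the $k$ angles around $x$ (they must total $\le 2\pi$, up to the usual care about whether $x$ is a center itself) yields $k \le 2\pi/\theta_0(r)$ with $r=\tfrac14\sys(\M)$, and a short computation should turn $\theta_0$ into the quantity $\pi - 6\arcot(\sqrt3\cosh(\tfrac14\sys(\M)))$ appearing in the statement (the $\sqrt3$ and the factor $6$ strongly suggest that the extremal configuration is six disk-centers arranged like a hyperbolic hexagon, and that $\theta_0 = \tfrac{\pi}{3} - \text{(defect)}$). Then a global double-counting — integrate the covering multiplicity $x\mapsto \#\{p: x\in B_p(r)\}$ over $\M$ — gives
\[
|\dummyq|\cdot \area(B(r)) \;=\; \int_{\M}\#\{p: x\in B_p(r)\}\,d\mu(x)\;\le\; k_{\max}\cdot\area(\M),
\]
wait — that inequality goes the wrong way; instead one uses that the multiplicity is at most $k_{\max}$ only where it helps, and that the cover condition forces multiplicity $\ge 1$ everywhere, so
\[
\area(\M)\;\le\;\int_{\M}\#\{p:x\in B_p(r)\}\,d\mu(x)\;=\;|\dummyq|\cdot\area(B(r)),
\]
which again only reproduces the weak bound. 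The correct route is the reverse: bound $\area(\M)$ \emph{from above} by $|\dummyq|$ times the area of the region each disk can contribute \emph{without overlap}, i.e. a Voronoi-cell argument — each Delaunay cell around $p\in\dummyq$ has area at most that of a hyperbolic disk of radius $r$, but the cells are disjoint and tile $\M$, giving $\area(\M)\le |\dummyq|\cdot\area(B(r))$... I therefore expect the actual argument to replace "area of $B(r)$" by the \emph{larger} area one gets by inscribing the separated points optimally, i.e. $\area(\M) \le |\dummyq|\cdot A_{\mathrm{cell}}$ where $A_{\mathrm{cell}}$ is the area of a regular hyperbolic $k_{\max}$-gon of circumradius $r$; combined with Gauss–Bonnet $\area(\M)=4\pi(g-1)$ this yields the stated bound after algebra.

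The main obstacle, then, is \textbf{getting the local combinatorics exactly right}: pinning down that the worst case is six neighbors and that the relevant angle is precisely $\pi/3$ minus the defect of a hyperbolic triangle with the right side lengths, so that the $6\arcot(\sqrt3\cosh(\tfrac14\sys(\M)))$ term comes out on the nose rather than merely up to constants. Everything else — the reduction from \eqref{condition} to "empty Delaunay disks of radius $<\tfrac14\sys(\M)$", the passage to a separated/covering set, Gauss–Bonnet for $\area(\M)=4\pi(g-1)$ — is routine. I would also double-check the direction of the covering-vs-packing inequality and the endpoint cases (strict vs.\ non-strict, $x$ coinciding with some $p_i$, and whether one needs $g\ge 2$ only to ensure $\area(\M)>0$ and to apply \eqref{eq:systoleinequality} for an a priori bound on $\sys(\M)$), which is presumably why the statement is phrased with a strict inequality "$>$".
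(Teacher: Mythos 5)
Your plan identifies the right inputs (the validity condition forces every empty Delaunay disk to have radius $<\tfrac14\sys(\M)$, Gauss--Bonnet gives $\area(\M)=4\pi(g-1)$, and the $\sqrt3$ together with the factor $6$ smells like six-fold hyperbolic trigonometry) but the skeleton you propose — a covering/packing argument with a local angle estimate around each point of $\M$ — is not the paper's argument, and as you yourself note you cannot close it. The gap you flag (``getting the local combinatorics exactly right'') is real and is exactly where your route stalls: bounding the covering multiplicity by counting how many $\tfrac14\sys$-separated centers fit near a point does not cleanly produce the constant in the statement, and your own area double-counting shows the naive direction only recovers the weak bound $|\dummyq|\ge \area(\M)/\area(B(r))$, which is not of the required form.

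The paper's actual proof sidesteps all of that by working directly with the Delaunay triangulation $\dts{\dummyq}$ as a cell complex. Write $k_0,k_1,k_2$ for the numbers of vertices, edges, and triangles. Since each edge bounds two triangles and each triangle has three edges, $3k_2=2k_1$, and Euler's formula $k_0-k_1+k_2=2-2g$ then gives $k_2=4g-4+2k_0$. The validity condition forces each Delaunay triangle to have circumradius $<\tfrac14\sys(\M)$, and the key lemma (Lemma~\ref{lem:triangleareabound}) states that a hyperbolic triangle inscribed in a disk of radius $R$ has area at most $\pi-6\arcot(\sqrt3\cosh R)$; this is where the $6$ and $\sqrt3$ come from, not from a packing of six disk-centers but from decomposing the extremal (equilateral) inscribed triangle into six right triangles. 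Summing the triangle areas to $\area(\M)=4\pi(g-1)$ yields a lower bound on $k_2$, and plugging into $k_0=\tfrac12(k_2-4g+4)$ produces exactly the stated inequality, with the ``$-1$'' appearing from the Euler relation. So the missing ideas in your proposal are: (i) count triangles of the Delaunay triangulation rather than covering disks; (ii) use Euler's formula to convert a lower bound on triangle count into a lower bound on vertex count; and (iii) recognize the expression $\pi-6\arcot(\sqrt3\cosh R)$ as the maximal area of a triangle of circumradius $R$, which you would still need to prove separately.
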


\begin{proof}
	Denote the number of vertices, edges and triangles in the (simplicial) Delaunay triangulation $\dts{\dummyq}$ of $\M$ by $k_0,k_1$ and $k_2$, respectively. We know that $3k_2=2k_1$, since every triangle consists of three edges and every edge belongs to two triangles. By Euler's formula, 
	$$k_0-k_1+k_2=2-2g,$$
	so 
	\begin{equation}\label{eq:eulertriangle}
		k_2=4g-4+2k_0.
	\end{equation}
	Consider an arbitrary triangle $t$ in $\dts{\dummyq}$. Because
	the validity condition holds, the circumradius of $t$ is
	smaller than $\tfrac{1}{2}\sys$. It can be shown
	that $\area(t)<\pi - 6\arcot(\sqrt{3}\cosh(\tfrac{1}{4}\sys))$; this is Lemma~\ref{lem:triangleareabound} in Appendix~\ref{sec:appendixtriangulations}.
	Because $\M$ has area $4\pi(g-1)$, it follows that 
	\begin{equation}\label{eq:lowerboundfaces}
		k_2>\dfrac{4\pi(g-1)}{\pi - 6\arcot(\sqrt{3}\cosh(\tfrac{1}{2}\sys))}.
	\end{equation}
	Combining~\eqref{eq:eulertriangle} and~\eqref{eq:lowerboundfaces} yields the result. 
\end{proof}

To show that our lower and upper bounds are meaningful,
we consider the asymptotics of these bounds for a
family of surfaces of which the systoles are 1.\ contained in a compact subset of $\R_{>0}$,
2.\ arbitrarily close to zero, or 3.\ arbitrarily large. 
\begin{enumerate}
	\item If the systoles of the family of surfaces are contained in a compact subset of $\R_{>0}$, which is the case for the generalized Bolza surfaces, then the upper bound is of order $O(g)$ and the lower bound of order $\Omega(g)$. Hence, a minimum dummy point set has cardinality $\Theta(g)$.
	\item If $\sys(\M)\rightarrow 0$, then $\cosh(\tfrac{1}{8}\sys(\M))-1\sim \tfrac{1}{2}(\tfrac{1}{8}\sys(\M))^2$, so
	our upper bound is of order $g\cdot O(\sys(\M)^{-2})$. In a similar way, it can be shown that 
	$$ \left( \dfrac{\pi}{\pi-6\arcot(\sqrt{3}\cosh(\tfrac{1}{4}\sys(M)))}-1\right)\sim \dfrac{64\pi}{3\sqrt{3}\sys(\M)^2},$$
	which means that our lower bound is of order $g\cdot\Omega(\sys(\M)^{-2})$. It follows that in this case a minimum dummy point set has cardinality $g\cdot\Theta(\sys(\M)^{-2})$.
	\item Finally, consider the case when
	$\sys(\M)\rightarrow\infty$ when $g\rightarrow
	\infty$. Since $\sys(\M)\leq 2\log(4g-2)$ for all hyperbolic
	surfaces of genus $g$ (see Equation~\eqref{eq:systoleinequality} in
	Section~\ref{sec:introhyperbolicsurfaces}), we only consider the case where
	$\sys(\M)\sim C\log g$ for some $C$ with $0<C\leq 2$. In fact, there exist families
	of surfaces for which $\sys(\M)>\tfrac{4}{3}\log g-c$ for some constant $c\in\R$ for infinitely many genera $g$.
	See \cite[page 45]{bs-pmrslg-94} and \cite{katz2007}.
	In this case, we can use $\cosh x \sim \tfrac{1}{2}e^x$ to deduce
	that our upper bound reduces to an expression of order
	$O(g^{1-\tfrac{1}{8}C})$. Similarly, by considering the Taylor
	expansion of the coefficient in the lower bound we see that
	the lower bound has cardinality $\Omega (g^{1-\tfrac{1}{4}C})$. Of our three cases, this is the only case in which there is a gap between the stated upper and lower bound.
\end{enumerate}

\begin{remark}\label{rem:condition}
	Note that the validity condition~\eqref{condition} is stronger
	than just requiring that the Delaunay triangulation of
	$\dummyq$ be a simplicial complex. This can also be seen in
	the following way. It has been shown that every hyperbolic
	surface of genus $g$ has a simplicial Delaunay triangulation
	with at most $151g$ vertices~\cite{ebbens2020}. In particular,
	this upper bound does not depend on $\sys(\M)$. Since the
	coefficient of $g-1$ in the lower bound given in
	Proposition~\ref{thm:minimumnumberofpoints} goes to infinity
	as $\sys(\M)$ goes to zero, the minimal number of vertices of
	a set $\dummyq$ satisfying the validity
	condition is strictly larger than the number of
	vertices needed for a simplicial Delaunay triangulation of a
	hyperbolic surface with sufficiently small systole. Moreover,
	in the same work it was shown that for infinitely many genera
	$g$ there exists a hyperbolic surface $\M$ of genus $g$ which
	has a simplicial Delaunay triangulation with
	$\Theta(\sqrt{g})$ vertices. Hence, the number of vertices
	needed for a simplicial Delaunay triangulation and a point set
	satisfying the validity
	condition differs asymptotically as well.  
\end{remark}


\section{Proof of Theorem~\ref{thm:systolevalue}: Systole of generalized Bolza surfaces}
\label{sec:symmetric-hyperbolic-surfaces}

In the previous section we have recalled the validity
condition~\eqref{condition}, allowing us to define the Delaunay
triangulation $\dts{\S}$. To be able to verify that this condition
holds, we must know the value of the systole for the given hyperbolic
surface.
This section is devoted to proving Theorem~\ref{thm:systolevalue} stated in the introduction, which
gives the value of the systole for the generalized Bolza surfaces $\Mg$ defined in
Section~\ref{sec:generalized}.

As a preparation for the proof we show in Section~\ref{sec:representationofclosedgeodesic}
how to represent a simple closed geodesic $\gamma$ on $\Mg$ by a sequence $\gammah$ of pairwise
disjoint hyperbolic line segments between sides of the fundamental domain~$D_g$. The length of
$\gamma$ is equal to the sum of the lengths of the line segments in $\gammah$.

The proof consists of two parts. In Section~\ref{sec:upperboundsystole} we show that
$\sysg\leq \varsigma_g$ by constructing a simple (non-contractible) closed geodesic of length
$\varsigma_g$.
In Section~\ref{sec:caseanalysis} we show that $\hlen{\gamma}\geq\varsigma_g$ for all closed
geodesics $\gamma$ by a case analysis based on the line segments contained in the sequence
$\gammah$ representing $\gamma$. This shows that $\sysg\geq \varsigma_g$.

\subsection{Representation of a simple closed geodesic by a sequence of segments}
\label{sec:representationofclosedgeodesic}
Consider a simple closed geodesic $\gamma$ on the generalized Bolza surface $\Mg$. Because $D_g$ is
compact, there is a finite number, say $m$, of pairwise disjoint hyperbolic lines intersecting $D_g$
in the preimage $\proj^{-1}(\gamma)$ of $\gamma$. See the leftmost panel in
Figure~\ref{fig:findingsequenceofsegments}.
These hyperbolic lines are the axes of conjugated elements of $\Gamma_g$.
Therefore, the intersection of $\proj^{-1}(\gamma)$ with $D_g$ consists of $m$ pairwise disjoint
hyperbolic line segments between the sides of $D_g$, the union of which we denote by $\gammah$.
See the rightmost panel in Figure~\ref{fig:findingsequenceofsegments}.
These line segments are oriented and their orientations are compatible with the orientation of
$\gamma$. In particular, every line segment has a starting point and an endpoint. Since $D_g$ is a fundamental domain for $\Gamma_g$, the $\proj$-images of these line segments form
a covering of the closed geodesic $\gamma$ by $m$ closed subsegments with pairwise disjoint
interiors. In other words, these projected segments lie side-by-side on $\gamma$, so they form a
(cyclically) ordered sequence.
This cyclic order lifts to an order $\gammah_1, \ldots, \gammah_m$ of the $m$ segments in $D_g$,
which together represent the simple closed geodesic $\gamma$. More precisely:
\begin{definition}
	\label{def:representation-closed-geodesic}
	An oriented simple closed geodesic $\gamma$ on $\Mg$ is \emph{represented by a sequence of
		oriented geodesic segments} $\gammah_1, \ldots, \gammah_m$ in $D_g$ if
	(i) the starting point and endpoint of each segment lie on different sides of $\partial D_g$,
	and (ii) the projections $\proj(\gammah_1), \ldots, \proj(\gammah_m)$ are oriented closed
	subsegments of $\gamma$ that cover $\gamma$, have pairwise disjoint interiors, and lie
	side-by-side on $\gamma$ in the indicated order.
\end{definition}
\begin{figure}[ht]
	\centering
	\includegraphics[width=.45\textwidth,valign=t]{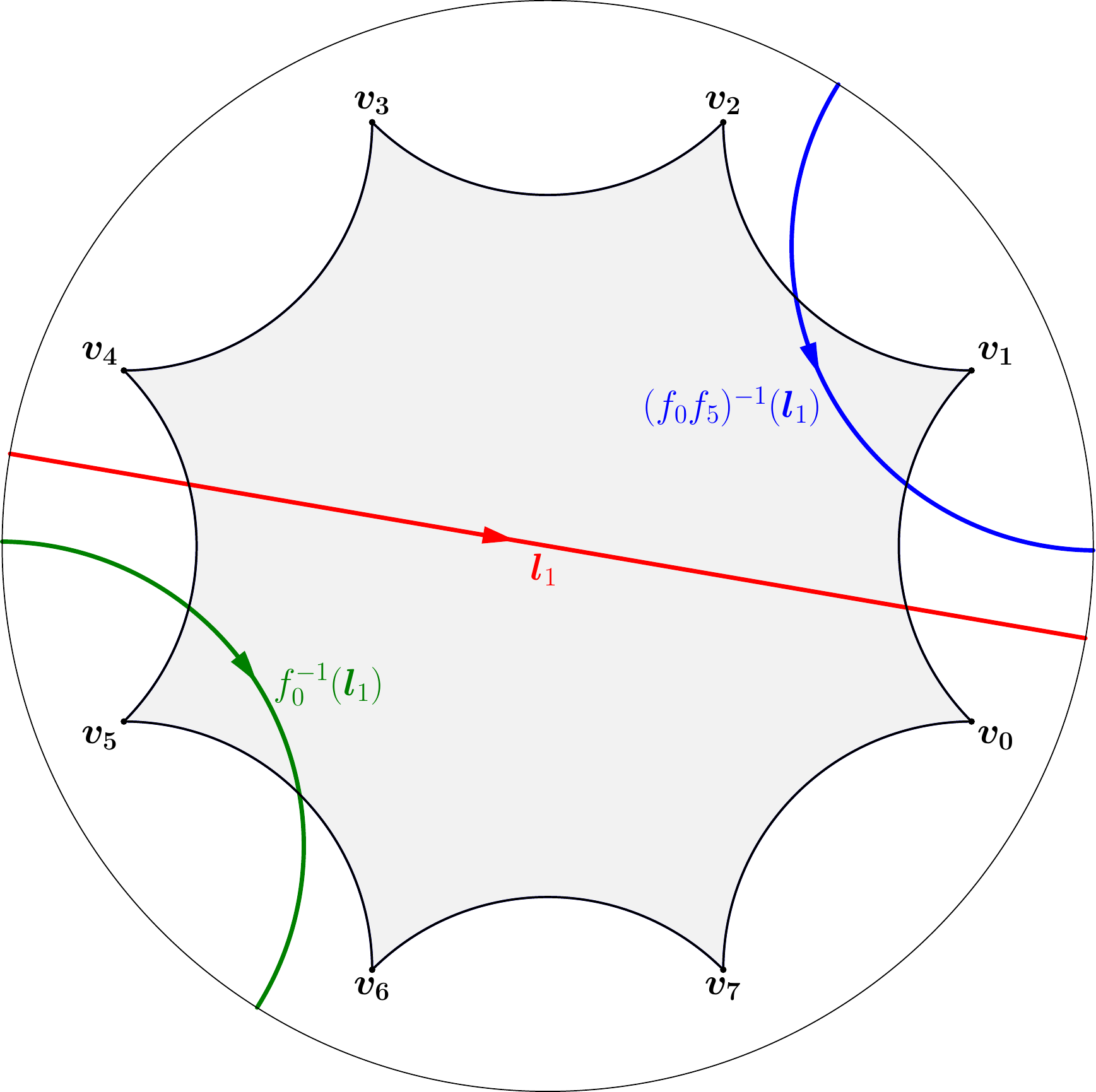}~~%
	\includegraphics[width=.45\textwidth,valign=t]{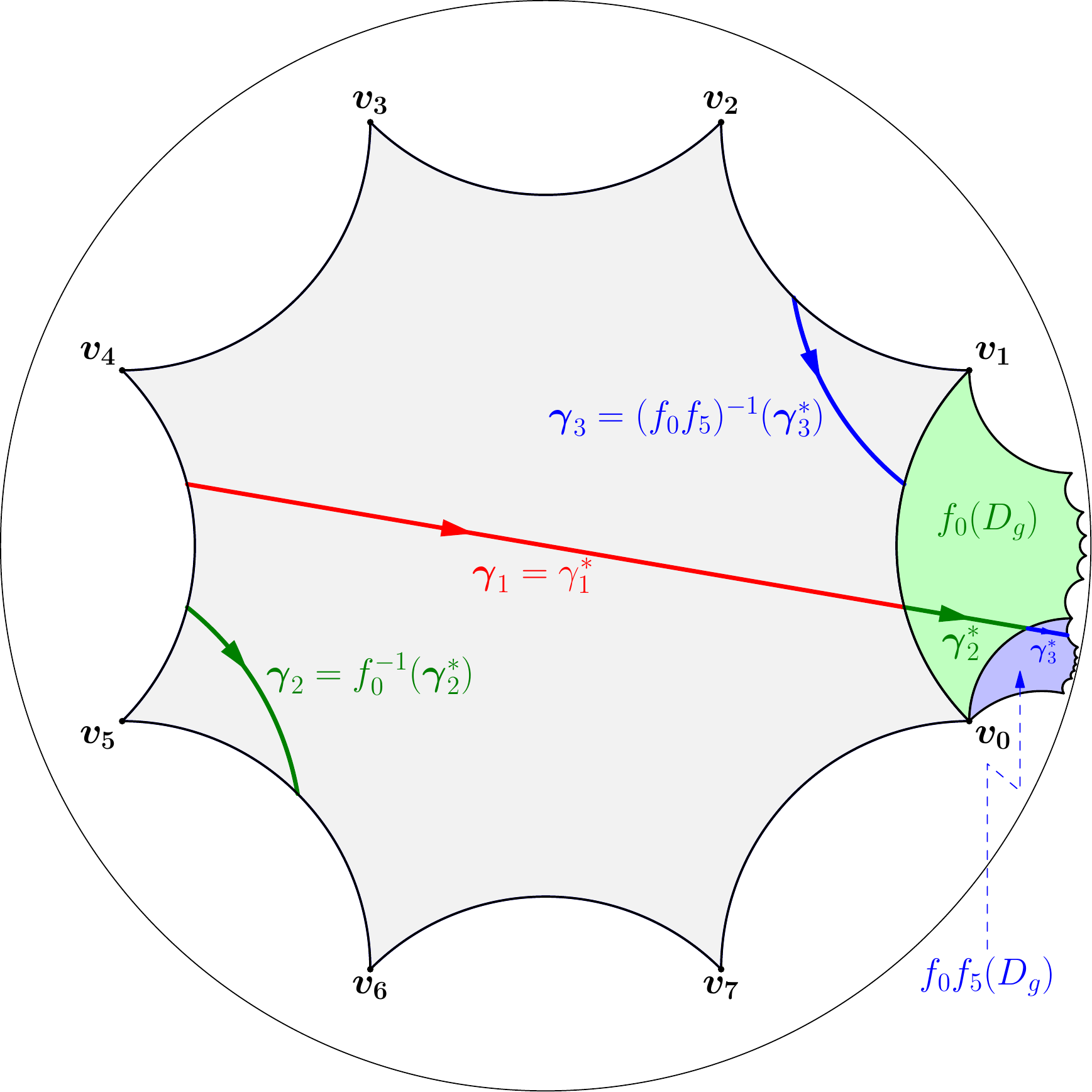}
	\caption{	\label{fig:findingsequenceofsegments}
		Left: Connected components of the preimage of an oriented simple closed geodesic
		$\gamma$ on the Bolza surface intersecting the fundamental octagon $D_2$.
		The geodesic is covered by $f = f_0f_5f_0$ (and all its conjugates in the Bolza group),
		which has axis $\lh_1$.
		\\
		Right: The cyclic sequence of geodesic segments $\gammah_1, \gammah_2, \gammah_3$
		in $D_g$ represents $\gamma$. The segments $\gammah_1^\ast$, $\gammah_2^\ast$ and
		$\gammah_3^\ast$ are the successive intersections of the axis of $f$ with $F_0(D_g)$,
		$F_1(D_g)$ and $F_2(D_g)$, where $F_0 = \eg$, $F_1 = f_0$ and $F_2 = f_0f_5$.
		The endpoint of $\gammah_3^\ast$ is $F_3(\ph_0^\ast)$, 
		where $F_3 = f$ and $\ph_0^\ast$ is the starting point of $\gammah^\ast_1$.
		The endpoint of $\gammah_k$ is paired with the starting point of $\gammah_{k+1}$ by the 
		side-pairing $F_k^{-1}F_{k-1}$.
		In this example $F_1^{-1}F_0 = f_0^{-1} = f_4$, $F_2^{-1}F_1 = f_5^{-1} = f_1$, and
		$F_3^{-1}F_2 = f_0^{-1} = f_4$. 
	}
\end{figure}
We now discuss in more detail how such a sequence is obtained from a hyperbolic isometry
the axis of which intersects $D_g$ and projects onto the simple closed geodesic.
Let $\lh_1$ be an arbitrary oriented geodesic in the set of $m$ connected components of
$\proj^{-1}(\gamma)$ that intersect $D_g$. The oriented segment $\gammah_1$ is the intersection
$\lh_1 \cap D_g$.
Let $f \in \Gamma_g$ be the hyperbolic isometry that covers $\gamma$ and has axis $\lh_1$.
More precisely, if $\ph_0^\ast$ is the starting point of $\gammah_1$, then the segment
$[\ph_0^\ast,f(\ph_0^\ast)]$ projects onto $\gamma$, and $\proj$ is injective on the interior of this
segment.

Let $F_0(D_g), F_1(D_g),\ldots,F_{m-1}(D_g)$, be the \textit{sequence} of
successive Dirichlet domains intersected by the segment $[\ph_0^\ast,f(\ph_0^\ast)]$.
Here $F_0 = \eg$ and $F_0,F_1,\ldots,F_{m-1}$ are distinct elements of $\Gamma_g$.
This sequence consists of $m$ regions, since
$F_0^{-1}(\lh_1), F_1^{-1}(\lh_1),\ldots,F_{m-1}^{-1}(\lh_1)$ are the (pairwise disjoint) geodesics
in $\proj^{-1}(\gamma)$ that intersect $D_g$.
This implies that the segment $[\ph_0^\ast,f(\ph_0^\ast)]$ is covered by the \textit{sequence} of
closed segments $\gammah_1^\ast, \ldots, \gammah_m^\ast$ in which $\lh_1$ intersects these $m$
regions,
i.e., $\gammah_k^\ast = F_{k-1}(D_g) \cap \lh_1$, for $k = 1,\ldots, m$.
(Note that $\gammah_1^\ast = \gammah_1$.)
The segments $\gammah_k = F_{k-1}^{-1}(\gammah_k^\ast)$, $k = 1,\ldots,m$, lie in $D_g$ and project
onto the same subsegment of $\gamma$ as $\gammah_k^\ast$.
In other words, $\proj(\gammah_1),\ldots,\proj(\gammah_m)$ lie side-by-side on the closed
geodesic $\gamma$ and cover $\gamma$. Therefore, the simple closed geodesic $\gamma$ is
represented by the sequence $\gammah_1,\ldots,\gammah_m$.

It is convenient to consider $f(\ph_0^\ast)$ as the starting point of the segment
$\lh_1 \cap f(D_g)$, which we denote by $\gammah_{m+1}^\ast$.
Taking $F_m = f$, we see that $\gammah_{m+1}^\ast = \lh_1 \cap F_m(D_g)$.
Extending our earlier definition $\gammah_{k} = F_{k-1}^{-1}(\gammah_{k}^\ast)$ to $k = m+1$,
we see that $\gammah_{m+1}$ is the subsegment of $\proj^{-1}(\gamma) \cap D_g$
with starting point $F_m^{-1}(f(\ph_0^\ast)) = \ph_0^\ast$, so $\gammah_{m+1} = \gammah_1$.

Finally, we show that the endpoint of $\gammah_{k}$ is mapped to the starting point of
$\gammah_{k+1}$ by a side-pairing transformation of $D_g$, for $k = 1,\ldots m$.
Since $F_{k-1}(D_g)\cap F_k(D_g)$ is a side of $F_k(D_g)$, for $k = 1,\ldots,m$,
the intersection $F_k^{-1}F_{k-1}(D_g) \cap D_g$ is a side of $D_g$, say $\sh_{j_k}$.
Then $F_k^{-1}F_{k-1} = f_{j_k}$, since $\sh_{j_k} = f_{j_k}(D_g)\cap D_g$.
Let $\gammah_k^\ast = [\ph_{k-1}^\ast,\ph_k^\ast]$, then
$\gammah_k = [F_{k-1}^{-1}(\ph_{k-1}^\ast),F_{k-1}^{-1}(\ph_k^\ast)]$.
Therefore, $f_{j_k}$ maps the endpoint $F_{k-1}^{-1}(\ph_k^\ast)$ of $\gammah_{k}$
to the starting point $F_{k}^{-1}(\ph_k^\ast)$ of $\gammah_{k+1}$, since
$f_{j_k}F_{k-1}^{-1} = F_{k}^{-1}$.
See the rightmost panel in Figure~\ref{fig:findingsequenceofsegments}.

\subsection{Upper bound for the systole}\label{sec:upperboundsystole}

To show that $\sysg\leq \varsigma_g$ it is sufficient to prove the following lemma.

\begin{lemma}\label{lem:systole}
	There is a simple closed geodesic on $\M_g$ of length $\varsigma_g$.
\end{lemma}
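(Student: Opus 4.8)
The goal is to exhibit one simple closed geodesic on $\Mg$ whose length equals $\varsigma_g = 2\arcosh(1+2\cos(\tfrac{\pi}{2g}))$. The natural candidate is the geodesic corresponding to a ``short'' word in the side-pairings. Guided by the $g=2$ picture in Figure~\ref{fig:findingsequenceofsegments} (where the covering isometry is $f=f_0f_5f_0$, i.e.\ $f_0 f_0^{-1}\ldots$ no --- rather a length-$3$ word involving $f_0$ and $f_5=f_1^{-1}$), I expect the right element of $\Gamma_g$ to be something like $f=f_0 f_{2g+1} f_0 = f_0 f_1^{-1} f_0$, or a close relative obtained by the symmetry considerations in Section~\ref{sec:generalized}. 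First I would write down the matrix $M = A_0 A_{2g+1} A_0$ (or whichever concrete product realizes the candidate) using the explicit generator matrices~\eqref{eq:generator-matrix}, with $A_j$ expressed via $c:=\cot(\tfrac{\pi}{4g})$ and the phase $\exp(\tfrac{ij\pi}{2g})$. Then I would compute $\tfrac12|\tr M|$ and invoke the formula $\cosh(\tfrac12\len{f}) = |\real a|$ from Section~\ref{sec:intropoincaredisk} to read off the translation length, checking that it simplifies to $\varsigma_g$, i.e.\ that $|\real a| = 1 + 2\cos(\tfrac{\pi}{2g})$. This is a finite trigonometric computation using $c^2-1 = \cot^2(\tfrac{\pi}{4g})-1$ and double-angle identities; the identity $\cot^2\theta - 1 = \frac{\cos 2\theta}{\sin^2\theta}$ together with $2\sin^2\theta = 1-\cos 2\theta$ should collapse everything to the stated closed form.

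Second, I would verify that the axis of $f$ actually projects to a \emph{simple} closed geodesic, not merely a closed one. For this I would use the representation developed in Section~\ref{sec:representationofclosedgeodesic}: I exhibit the finite sequence of segments $\gammah_1,\ldots,\gammah_m$ (here $m$ small, governed by the word length of $f$) in which the axis of $f$ crosses the successive Dirichlet domains $F_0(D_g), F_1(D_g), \ldots$, with $F_k$ the length-$k$ prefix of the word, and I check that these segments have pairwise disjoint interiors inside $D_g$ and that the side-pairings matching their endpoints are exactly the letters of the word. Disjointness plus the side-by-side covering property established in that section then gives, via Definition~\ref{def:representation-closed-geodesic}, that $\proj$ is injective on the open segment of the axis, which is precisely the definition of a simple closed geodesic given in Section~\ref{sec:introhyperbolicsurfaces}. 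Concretely for the chosen $f$ the axis should pass through the origin's neighborhood in a controlled way, and the relevant $F_k(D_g)$ are among the neighboring regions in $\N$, so one only has to rule out a bounded number of potential self-crossings; symmetry of $D_g$ under rotation by $\tfrac{\pi}{2g}$ reduces the work further.

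Third, since the length of a simple closed geodesic represented by $\gammah_1,\ldots,\gammah_m$ equals $\sum_k \hlen{\gammah_k}$ and this sum is the translation length $\len f$ (as noted in Sections~\ref{sec:introhyperbolicsurfaces} and~\ref{sec:representationofclosedgeodesic}), the length computation from the first step completes the proof: the geodesic has length exactly $\varsigma_g$.

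\textbf{Main obstacle.} The genuine difficulty is \emph{simplicity}, not the length: computing $\tr M$ is mechanical, but proving $\proj$ is injective on the open axis segment requires actually locating the axis relative to the tiling of $\mD$ by copies of $D_g$ and certifying that the finitely many segments $\gammah_k$ do not overlap. I expect to handle this by identifying the explicit endpoints of the $\gammah_k^\ast$ on the sides of $D_g$ (using that the side-pairing sending the endpoint of $\gammah_k$ to the start of $\gammah_{k+1}$ is $F_k^{-1}F_{k-1}$, which must equal a single generator $f_{j_k}$), and then using the convexity of $D_g$ and the disjointness of the preimage lines --- which is automatic because they are axes of \emph{conjugate} elements of the discrete group $\Gamma_g$ --- to conclude. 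A secondary, lesser concern is pinning down the correct candidate word $f$ in general $g$; the $g=2$ case and the rotational/relation symmetries of Section~\ref{sec:generalized} strongly suggest the pattern, but one should double-check the word length $m$ and the exact letters so that the trace comes out to $1+2\cos(\tfrac{\pi}{2g})$ rather than a near-miss.
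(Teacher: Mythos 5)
Your overall structure matches the paper's proof: exhibit an explicit element of $\Gamma_g$, compute its trace via the generator matrices~\eqref{eq:generator-matrix} to get the translation length, and then argue the axis projects to a simple closed geodesic. The simplicity step you flag as the ``main obstacle'' is in fact handled very lightly in the paper (it is essentially asserted in the proof and only sketched in the remark that follows, via the two-segment representation in Figure~\ref{fig:lower-bound-2}), so your plan for that part is if anything more careful than the paper's.

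However, there is a genuine gap in your candidate word. You read Figure~\ref{fig:findingsequenceofsegments} as depicting the systolic geodesic and extracted the covering isometry $f_0 f_5 f_0$ from its caption, generalizing to $f_0 f_{2g+1} f_0$. But that figure is a generic illustration of the representation-by-segments machinery, not of the systole; its geodesic is covered by a length-three word and crosses $D_g$ in three segments, whereas the systole is realized by a length-two word and crosses $D_g$ in only two segments (this is the content of Figure~\ref{fig:lower-bound-2} and the remark after the lemma). The correct element is $f_{2g+1} f_0$, and then
\[
\tfrac12\,\bigl|\tr(A_{2g+1}A_0)\bigr|
  = \cot^2\!\bigl(\tfrac{\pi}{4g}\bigr)
    - \bigl(\cot^2\!\bigl(\tfrac{\pi}{4g}\bigr)-1\bigr)\cos\!\bigl(\tfrac{\pi}{2g}\bigr)
  = \frac{\sin(3\pi/4g)}{\sin(\pi/4g)}
  = 1 + 2\cos\!\bigl(\tfrac{\pi}{2g}\bigr),
\]
which is exactly $\cosh(\tfrac12\varsigma_g)$. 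If you instead carry out the trace computation on $f_0 f_{2g+1} f_0 = A_0 A_{2g+1} A_0$ you get
$\tfrac12|\tr| = c\bigl(2c^2-1-2(c^2-1)\cos(\tfrac{\pi}{2g})\bigr)$ with $c=\cot(\tfrac{\pi}{4g})$, which for $g=2$ evaluates to $5+3\sqrt2\approx 9.24$, nowhere near $1+\sqrt2$. So your primary candidate produces a much longer geodesic and the proof would stall at the trace check. You did hedge (``or a close relative'' and ``one should double-check the word length''), and that caution is warranted: the fix is to drop the leading $f_0$ and use the length-two word. (Incidentally the paper's display in this proof writes $1+\cos(\tfrac{\pi}{2g})$; this is a typo for $1+2\cos(\tfrac{\pi}{2g})$, as the comparison with $\cosh\tfrac12\varsigma_g$ makes clear, and your stated target value of the trace is the correct one.)
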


\begin{proof}
	The axis of the hyperbolic translation $f_{2g+1}f_0$ projects onto a simple closed geodesic
	$\gamma$ on $\M_g$ with length equal to the translation length $\len{f_{2g+1}f_0}$.
	See Section~\ref{sec:introhyperbolicsurfaces}.
	Since $f_{2g+1}f_0$ is represented by the matrix $A_{2g+1}A_0$, with $A_j$ given
	by~\eqref{eq:generator-matrix}, we see that
	\begin{equation*}
		\cosh\tfrac{1}{2}\len{f_{2g+1}f_0} = \tfrac{1}{2}\,\abs{\tr(A_{2g+1}A_0)} = 1 + \cos(\tfrac{\pi}{2g}).
	\end{equation*}
	Since $1 + \cos(\tfrac{\pi}{2g}) = \cosh\tfrac{1}{2}\varsigma_g$ we conclude that $\gamma$ has length
	$\varsigma_g$.
\end{proof}

\begin{remark}
	Two connected components of the pre-image $\proj^{-1}(\gamma)$ of
	the simple closed geodesic $\gamma$, appearing in the proof of Lemma~\ref{lem:systole},
	intersect the fundamental polygon $D_g$:
	the axis $\lh_1$ of $f = f_{2g+1}f_0$, and the geodesic $\lh_2 = f_{2g+1}^{-1}(\lh_1)$, which is
	the axis of $f_0f_{2g+1}$.
	The geodesic $\gamma$ is represented by the segments $\gammah_1 = \lh_1
	\cap D_g$ and $\gammah_2 = \lh_2 \cap D_g$.
	The first segment connects the midpoint $\mh_{2g}$ of $\sh_{2g}$
	and the midpoint $\mh_{2g+1}$ of $\sh_{2g+1}$, whereas the second segment connects the midpoints
	of $\sh_0$ and $\sh_1$.
	See Figure~\ref{fig:lower-bound-2}.
	
	This can be seen as follows. 
	Since $f = f_{2g+1}f_{2g}^{-1}$ and the axes of $f_{2g}$ and $f_{2g+1}$ intersect at the origin
	$O$, (the proof of) Theorem 7.38.6 of~\cite{Beardon1983} implies
	that the axis of $f$ passes through the midpoint of the segment $[O,f_{2g}(O)]$ and the
	midpoint of $[O,f_{2g+1}(O)]$.
	But these midpoints coincide with $\mh_{2g}$ and $\mh_{2g+1}$, respectively, so
	$[\mh_{2g},\mh_{2g+1}] = \gammah_1$.
	This theorem also implies that the length of the latter segment 
	is half the translation length of $f$, i.e., $\tfrac{1}{2}\varsigma_g$.
	A similar argument shows that the length of $\gammah_2$ is $\tfrac{1}{2}\varsigma_g$.
\end{remark}

\begin{figure}[ht]
	\centering
	\includegraphics[width=.45\textwidth,valign=t]{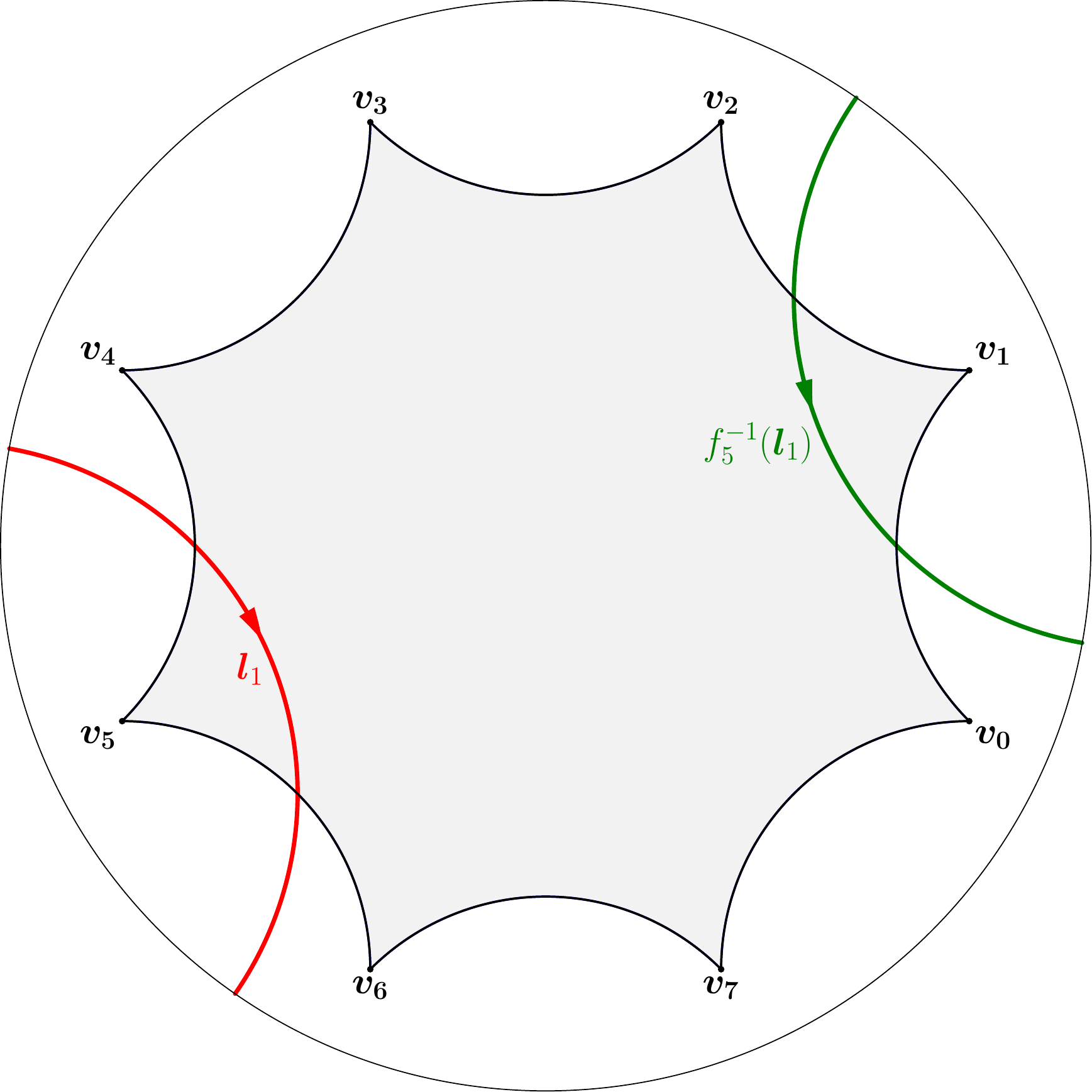}~~%
	\includegraphics[width=.45\textwidth,valign=t]{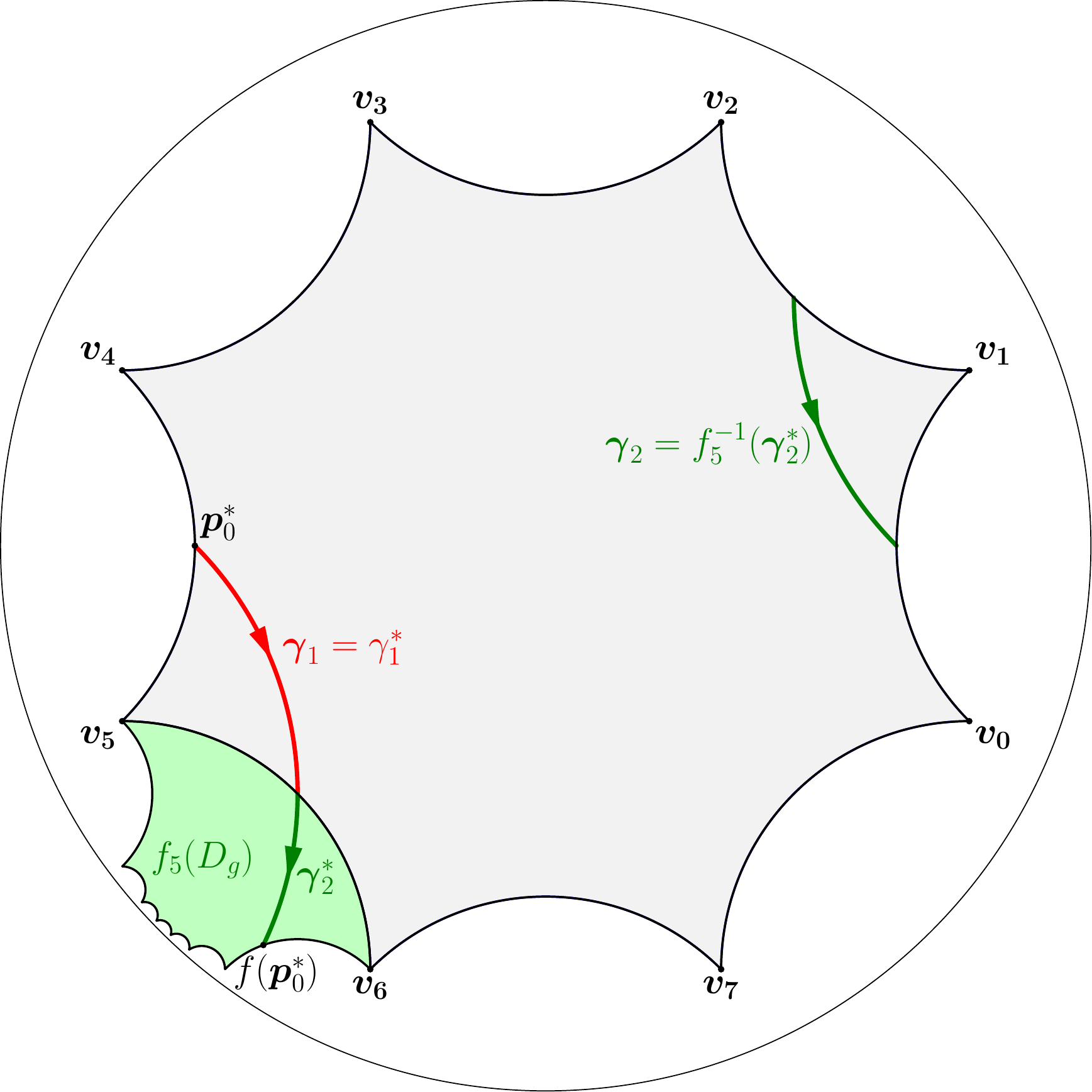}
	\caption{	\label{fig:lower-bound-2}
		Two geodesics in the pre-image of $\gamma$ intersect the fundamental polygon $D_g$ (left)
		The intersections are the segments $\gammah_1$ and $\gammah_2$, which represent $\gamma$
		(right). The figure illustrates the situation for the Bolza surface ($g=2$).}
\end{figure}

\subsection{Lower bound for the systole}\label{sec:caseanalysis}
We now prove that the length of every simple closed geodesic of $\Mg$ is at least $\varsigma_g$,
or, equivalently, that the total length of the segments representing such a geodesic is at least
$\varsigma_g$.
To this end we consider different types of closed geodesics based on which ``kind'' of segments are
contained in the sequence. We say that an oriented hyperbolic line segment between two sides of
$D_g$ is a \emph{$k$-segment}, $1 \leq k \leq 4g-1$, if its starting point and endpoint are
contained in $\sh_j$ and $\sh_{j+k}$, respectively, for some $j$ with $0\leq j\leq 4g-1$,
where indices are counted modulo $4g$. Furthermore, we
say that the segment is \emph{$k$-separated} or \emph{has separation $k$}, $1\leq k\leq 2g$,
if either the segment itself or the segment with the opposite orientation is a $k$-segment.
Equivalently, a $k$-separated segment is either a $k$-segment or a $(4g-k)$-segment.
For example, both segments in Figure~\ref{fig:lower-bound-2} - Right are
1-separated, but $\gammah_1$ is a $1$-segment while $\gammah_2$ is a $7$-segment.

In the derivation of the lower bound for the systole we will use the following lemma. This lemma
will be used in the proof of Proposition~\ref{prop:inclusionproperty} in
Section~\ref{sec:representation-of-delaunay-triangulations} as well.

\begin{lemma}\label{lem:separationlowerbound}
	For geodesic segments between the sides of $D_g$ the following properties hold:
	\begin{enumerate}\cramped
		\item The length of a segment that has separation at least 4 is at least $\varsigma_g$.\label{item:sep4}
		\item The length of a segment that has separation at least 2 is at least $\tfrac{1}{2}\varsigma_g$.\label{item:sep2}
		\item Every pair of consecutive 1-separated segments consists of exactly one 1-segment and one $(4g-1)$-segment.\label{item:sep1alternate} 
		\item The length of two consecutive 1-separated segments is at least $\tfrac{1}{2}\varsigma_g$.\label{item:sep11}
		\item A sequence of segments consisting of precisely two 1-separated segments has length $\varsigma_g$.\label{item:2sep1} 
	\end{enumerate}
\end{lemma}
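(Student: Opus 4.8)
The plan is to reduce everything to lengths of geodesic segments between the sides of the regular $4g$-gon $D_g$, and to exploit the rotational symmetry of $D_g$ together with explicit trigonometry in the hyperbolic plane. First I would fix coordinates: by rotational symmetry, a $k$-segment can be assumed to start on a fixed side, say $\sh_0$, and end on $\sh_k$. The shortest segment between two given sides of a convex hyperbolic polygon is the common perpendicular segment between the supporting geodesics of those sides (or the distance between their closures, if they meet), so I would compute, for each separation $k$, the distance $d_k$ between the lines supporting $\sh_0$ and $\sh_k$. This distance depends only on $k$ and on the geometry of $D_g$, which is pinned down by the condition that the interior angle is $\tfrac{\pi}{2g}$ and the polygon is regular; the vertices lie on a circle of hyperbolic radius $\rho$ with $\cosh\rho$ expressible via $\cot(\tfrac{\pi}{4g})$, matching the generator matrices~\eqref{eq:generator-matrix}. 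For item~\ref{item:sep4} and item~\ref{item:sep2}, the claim is then the purely computational inequalities $d_k \geq \varsigma_g$ for $k \geq 4$ (and its mirror $4g-k \geq 4$) and $d_k \geq \tfrac{1}{2}\varsigma_g$ for $k \geq 2$; here I would observe that $d_k$ is increasing in $k$ on the range $2\le k\le 2g$, so it suffices to check $k=4$ and $k=2$ respectively. The value $d_2$ should come out to be exactly $\tfrac12\varsigma_g$ — indeed the segment $\gammah_1$ of the remark in Section~\ref{sec:upperboundsystole} joining $\mh_{2g}$ to $\mh_{2g+1}$ realizes separation $1$, but the common-perpendicular of two sides at ``distance $2$'' is the natural candidate for the minimizing configuration — so the inequality in item~\ref{item:sep2} will be tight, which is consistent with item~\ref{item:2sep1}.

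For the three statements about $1$-separated segments, the combinatorial input is the local picture at a vertex of $D_g$. A $1$-segment goes from $\sh_j$ to $\sh_{j+1}$, i.e.\ it ``cuts the corner'' at vertex $\vh_{j+1}$; a $(4g-1)$-segment from $\sh_j$ to $\sh_{j-1}$ cuts the corner at $\vh_j$. The key point for item~\ref{item:sep1alternate} is that when the endpoint of one segment is matched by a side-pairing $f_{j_k}$ to the starting point of the next segment (as established at the end of Section~\ref{sec:representationofclosedgeodesic}), a $1$-segment ending on $\sh_{j+1}$ is carried by the side-pairing to a point on the opposite side $\sh_{j+1-2g}$, and consecutiveness on the geodesic $\gamma$ together with disjointness of the lifted segments in $D_g$ forces the next segment to be a $(4g-1)$-segment, and vice versa; so $1$-separated segments must alternate in type. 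I expect this is the step requiring the most care: one has to argue that a $1$-segment cannot be followed by another $1$-segment, and the cleanest way is to track which corner is being cut and to use that the axis of the covering isometry $f$ cannot re-enter a Dirichlet region it has already left (the $F_i$ are distinct), ruling out the orientations that would allow two same-type $1$-segments in a row.

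Given item~\ref{item:sep1alternate}, item~\ref{item:sep11} follows because a $1$-segment together with the adjacent $(4g-1)$-segment, glued by the side-pairing, forms a single geodesic arc passing through (a lift of) the corner region around two adjacent vertices; its length is at least the distance between the two ``far'' sides it connects after the identification, which is the same distance $d_2$ computed above, giving the bound $\tfrac12\varsigma_g$. Finally item~\ref{item:2sep1} is the equality case: a closed geodesic represented by exactly two $1$-separated segments is, by item~\ref{item:sep1alternate}, represented by one $1$-segment and one $(4g-1)$-segment; concatenating them via the side-pairings closes up the geodesic, and by the remark following Lemma~\ref{lem:systole} this is precisely the configuration realizing the systole $\varsigma_g$ — the two segments are $\gammah_1 = [\mh_{2g},\mh_{2g+1}]$ and $\gammah_2$, each of length $\tfrac12\varsigma_g$. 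So rather than recompute, I would invoke that remark: the covering isometry must be conjugate to $f_{2g+1}f_0$, whose translation length is exactly $\varsigma_g$. The main obstacle, as noted, is the combinatorial alternation argument in item~\ref{item:sep1alternate}; the rest is bookkeeping plus the monotone trigonometric estimate for $d_k$.
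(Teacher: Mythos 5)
Your high-level plan matches the paper for items~\ref{item:sep4} and~\ref{item:sep2}: reduce to a common-perpendicular length $d_k$ between $\sh_0$ and $\sh_k$, and use monotonicity in $k$ to check only the extreme case. One local error: $d_2$ is \emph{not} equal to $\tfrac12\varsigma_g$. The pentagon computation gives $\cosh(\tfrac12 d_2) = \cot(\tfrac{\pi}{4g})\sin(\tfrac{\pi}{2g}) = 1+\cos(\tfrac{\pi}{2g})$, whereas $\cosh(\tfrac14\varsigma_g) = \sqrt{1+\cos(\tfrac{\pi}{2g})}$; these agree only in the degenerate limit, so item~\ref{item:sep2} is a strict inequality, and your claim that it is ``tight and consistent with item~\ref{item:2sep1}'' does not hold (item~\ref{item:2sep1}'s equality case lives among $1$-separated pairs, not single $2$-separated segments). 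This does not invalidate the inequalities, but it does invalidate the consistency check you were leaning on.

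The genuine gaps are in items~\ref{item:sep1alternate} and~\ref{item:sep11}. For item~\ref{item:sep1alternate} you correctly flag the difficulty, but the argument you gesture at (``disjointness of the lifted segments and distinctness of the $F_i$'') is not enough: two consecutive $1$-segments would still produce disjoint lifts and distinct prefixes $F_i$, so that alone rules out nothing. The missing idea is the one the paper uses: $\ph$ and $\ph' = f_{2g+1}(\ph)$ lie on an $f_{2g+1}$-invariant equidistant curve $L$ of the axis of $f_{2g+1}$; one then shows the angle between $\gammah_1$ and $\sh_1$ is acute so that the full geodesic line crosses $L$ transversally at $\ph$, putting $\gammah_1$ and $f_{2g+1}^{-1}(\gammah_2)$ — and hence $\gammah_1$ and $\gammah_2$ — on opposite sides of $L$, which forces the endpoint of $\gammah_2$ onto $\sh_{2g}$. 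For item~\ref{item:sep11} your proposed reduction (``the glued arc connects two sides at distance $d_2$'') does not work: gluing $\gammah_1\subset D_g$ to $f_{2g+1}^{-1}(\gammah_2)\subset f_1(D_g)$ produces a geodesic arc from $\sh_0$ to $f_1(\sh_{2g})$, and these are \emph{adjacent} sides of $D_g\cup f_1(D_g)$ (sharing $\vh_2$), not sides at combinatorial distance two, so no clean $d_2$ lower bound falls out. The paper instead parametrizes the crossing point on $\sh_1$ (respectively $\sh_{2g+1}$), applies the hyperbolic sine rule in the two corner triangles, and uses strict convexity of $x\mapsto\arsinh(\sinh x\sin\tfrac{\pi}{2g})$ to locate the minimizer at the side-midpoints; this is a different and essential computation. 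For item~\ref{item:2sep1} your idea of identifying the covering isometry with (a conjugate of) $f_{2g+1}f_0$ is a viable alternative route, but it needs the explicit computation $F_2 = f_1 f_0^{-1}$ and an argument that this is conjugate to $f_{2g+1}f_0$, which you don't supply; the paper instead deduces it from the equality case of the item~\ref{item:sep11} minimization.
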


The proof can be found in Appendix~\ref{sec:appendixsymmetrichyperbolicsurfaces}.
The lower bound for the systole follows from the following result.

\begin{lemma}\label{lem:lowerboundsystole}
	Every closed geodesic on $\Mg$ has length at least $\varsigma_g$.
\end{lemma}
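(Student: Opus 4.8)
The plan is to prove Lemma~\ref{lem:lowerboundsystole} by a case analysis on the separations of the segments $\gammah_1, \ldots, \gammah_m$ representing a given simple closed geodesic $\gamma$ on $\Mg$, using the five properties in Lemma~\ref{lem:separationlowerbound} as the workhorses. First I would dispose of the ``easy'' cases: if at least one segment in the sequence has separation at least $4$, then by part~\ref{item:sep4} of Lemma~\ref{lem:separationlowerbound} that single segment already has length $\geq \varsigma_g$, and since all other segment lengths are positive we are done. Similarly, if the sequence contains at least two segments of separation $\geq 2$ (i.e. separation $2$ or $3$), then by part~\ref{item:sep2} each contributes at least $\tfrac12\varsigma_g$, so the total is again $\geq \varsigma_g$. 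This reduces us to the situation in which the sequence has at most one segment of separation $2$ or $3$ and no segment of separation $\geq 4$; every remaining segment is $1$-separated.

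Next I would handle the mixed case: exactly one segment of separation $2$ or $3$, and all the rest $1$-separated. Here I need to argue that there is at least one $1$-separated segment (so $m \geq 2$), which follows because a closed geodesic cannot be represented by a single segment whose two endpoints lie on sides that are identified by a single side-pairing in a way that would make it contractible or too short — more concretely, $m=1$ with a single $2$- or $3$-separated segment does not close up to a geodesic on $\Mg$, so some $1$-separated segment is present; pairing the separation-$\geq 2$ segment ($\geq \tfrac12\varsigma_g$ by part~\ref{item:sep2}) with a consecutive pair of $1$-separated segments, or simply with the fact that two consecutive $1$-separated segments have length $\geq \tfrac12\varsigma_g$ by part~\ref{item:sep11}, yields total length $\geq \varsigma_g$. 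I would need to be a little careful about parity and cyclic adjacency here, but the bookkeeping is routine.

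The remaining, and genuinely central, case is when \emph{all} segments in the cyclic sequence are $1$-separated. By part~\ref{item:sep1alternate}, consecutive $1$-separated segments must alternate strictly between $1$-segments and $(4g-1)$-segments; since the sequence is cyclic, this forces $m$ to be even, $m = 2\ell$. Grouping the $2\ell$ segments into $\ell$ consecutive pairs and applying part~\ref{item:sep11} to each pair gives total length $\geq \ell \cdot \tfrac12 \varsigma_g$. For $\ell \geq 2$ this is already $\geq \varsigma_g$, so the only case left is $\ell = 1$, i.e. a sequence of exactly two $1$-separated segments — and for that case part~\ref{item:2sep1} of Lemma~\ref{lem:separationlowerbound} says directly that the length is exactly $\varsigma_g$. (This last case is precisely the systole-realizing geodesic of Lemma~\ref{lem:systole}, as the remark following it makes explicit.) Combining all cases, $\hlen{\gamma} \geq \varsigma_g$ for every simple closed geodesic, and since the systole is realized by a \emph{simple} closed geodesic, $\sysg \geq \varsigma_g$; together with Lemma~\ref{lem:systole} this proves $\sysg = \varsigma_g$.

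The main obstacle I anticipate is not any single inequality — those are all delegated to Lemma~\ref{lem:separationlowerbound} — but rather making the cyclic combinatorial argument airtight: showing that one cannot have a ``stray'' single $1$-separated segment that fails to be part of a consecutive pair, handling the alternation/parity constraint from part~\ref{item:sep1alternate} correctly around the cyclic sequence, and ruling out degenerate short sequences ($m=1$) that do not correspond to honest closed geodesics on $\Mg$. In particular I would want to state clearly why a closed geodesic must cross at least one side of $D_g$ in a way producing a genuine multi-segment or a non-trivial single $1$-separated loop, so that the pairing argument via parts~\ref{item:sep11} and~\ref{item:2sep1} always applies. Once that combinatorial scaffolding is in place, the proof is a short finite case check.
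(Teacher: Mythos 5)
Your case analysis is exactly the one the paper uses: dispose of any segment of separation $\geq 4$ via part~\ref{item:sep4}, of two or more segments of separation $2$ or $3$ via part~\ref{item:sep2}, then handle the mixed case (one $2$- or $3$-separated segment plus a block of $1$-separated ones) by combining parts~\ref{item:sep2} and~\ref{item:sep11}, and finally handle the all-$1$-separated case via parts~\ref{item:sep1alternate}, \ref{item:sep11}, and~\ref{item:2sep1}. The one point you flag as ``routine bookkeeping'' --- ruling out a cyclic sequence consisting of one $2$- or $3$-separated segment and just a \emph{single} $1$-separated segment --- is indeed the only spot the paper also has to address, and it does so by the same brief assertion that such a configuration cannot close up to a geodesic, so your proposal matches the paper's proof in both structure and level of detail.
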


\begin{proof}
	It is sufficient to show that every sequence of segments representing a closed geodesic on $\Mg$ has length at least $\varsigma_g$.	Let $\gammah$ be a sequence of segments. We distinguish between the following four types:
	
	\begin{enumerate}\cramped
		\item $\gammah$ contains at least one segment that has separation at least 4,
		\item $\gammah$ contains at least two segments that have separation 2 or 3 and all other segments are $1$-separated,
		\item $\gammah$ contains exactly one segment that has separation 2 or 3 and all other segments are $1$-separated,
		\item all segments of $\gammah$ are 1-separated.
	\end{enumerate}
	
	It is straightforward to check that every sequence of segments is of precisely one type.
	
	First, suppose that $\gammah$ is of Type 1 or 2. Then, it follows directly from Part~\ref{item:sep4} and~\ref{item:sep2} of Lemma~\ref{lem:separationlowerbound} that $\hlen{\gammah}\geq\varsigma_g$.
	
	Second, suppose that $\gammah$ is of Type 3. It is not possible to form a closed geodesic with a segment of separation 2 or 3 and just one segment of separation 1, so we can assume that there are at least two 1-separated segments. In the cyclic ordering of the segments, these 1-separated segments are consecutive, so it follows from Part~\ref{item:sep11} of Lemma~\ref{lem:separationlowerbound} that their combined length is at least $\tfrac{1}{2}\varsigma_g$. By Part~\ref{item:sep2} of Lemma~\ref{lem:separationlowerbound} the length of the segment of separation 2 or 3 is at least $\tfrac{1}{2}\varsigma_g$ as well, so we conclude that $\hlen{\gammah}\geq\varsigma_g$.
	
	Finally, suppose that $\gammah$ is of Type 4. By Part~\ref{item:sep1alternate} of Lemma~\ref{lem:separationlowerbound} every 1-segment is followed by a $(4g-1)$-segment and reversely, so in particular the number of 1-segments and $(4g-1)$-segments is identical. Therefore, the number of 1-separated segments in $\gammah$ is even (and at least two). If the number of 1-separated segments is exactly two, then $\hlen{\gammah}=\varsigma_g$ by Part~\ref{item:2sep1} of Lemma~\ref{lem:separationlowerbound}. If the number of 1-separated segments is at least four, then $\hlen{\gammah}\geq\varsigma_g$, since every pair of consecutive 1-separated segments has combined length at least $\tfrac{1}{2}\varsigma_g$ by Part~\ref{item:sep11} of Lemma~\ref{lem:separationlowerbound}.
	
	This finishes the proof. 
\end{proof}


\section{Computation of dummy points}
\label{sec:combinatoricsofsmallDT}

In this section we present two algorithms for constructing a dummy point set ${\dummy}$ satisfying the validity condition~\eqref{condition} for $\Mg$ and give the growth rate of the cardinality of ${\dummy}$ as a function of $g$.

Both algorithms use the set $\W$ of the so-called Weierstrass points of $\Mg$. In the fundamental domain $D_g$, the Weierstrass points are represented by the origin, the vertices and the midpoints of the sides. In the original domain $\Do$, where there is only one point of each orbit under the action of $\Gg$, this reduces to $2g+2$ points: the origin, the midpoint of each of the $2g$ closed sides, and the vertex $\vh_0$. Some special properties of Weierstrass points are known in Riemann surface theory \cite{farkas1992}, however we will not use them in this paper.

Each of the algorithms has its own advantages and drawbacks. The \emph{refinement algorithm} (Section~\ref{sec:simplealgorithm}) yields a point set with optimal asymptotic cardinality $\Theta(g)$ (Proposition~\ref{thm:minimumnumberofpoints}). The idea is borrowed from the well-known Delaunay refinement algorithm for mesh generation~\cite{ruppert1995}. 
The \emph{symmetric algorithm} (Section~\ref{sec:symmetricalgo}) uses the Delaunay refinement algorithm as well. However, instead of inserting one point in each iteration, we insert its images by all rotations around the origin by angle $k\pi/2g$ for $k=1,\ldots,4g$. In this way, we obtain a dummy point set that preserves the symmetries of $D_g$, at the cost of increasing the asymptotic cardinality to $\Theta(g\log g)$. 

\medskip

Let us now elaborate on the refinement algorithm. The set $\dummy$ is initialized as $\W$ and the triangulation as $\dtsg{\W}$. Then, all non-admissible triangles in $\dth{\proj^{-1}(\dummy)}$ are removed by inserting the projection onto $\Mg$ of their circumcenter, while updating the set $\dummy$ of vertices of the triangulation. The following proposition shows that $\dth{\proj^{-1}(\dummy)\cap D_{\N}}$ contains at least one representative of each face of $\dth{\proj^{-1}(\dummy)}$, thus providing the refinement algorithm with a finite input.

\begin{proposition}
	For any finite set of points $\dummy$ on $\Mg$ containing $\W$, each face in $\dth{\proj^{-1}(\dummy)}$ with at least one vertex in $\Do$ is contained in $D_{\N}$.
	\label{prop:dummy-points-inclusion}
\end{proposition}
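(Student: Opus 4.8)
Here is a plan for proving Proposition~\ref{prop:dummy-points-inclusion}.

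The plan is to argue by contradiction, using only the Delaunay (empty disk) property together with the presence of the Weierstrass points, so that the argument does not depend on the particular set $\dummy$. Recall that $D_{\N}$ is the union of $D_g$ with the finitely many tiles of $\{f(D_g):f\in\Gamma_g\}$ that meet it, so it is equivalent to show that $t$ meets no tile $f(D_g)$ with $f\notin\N$. Suppose it does, at a point $\h{y}$, and let $\vh$ be a vertex of $t$ lying in $\Do$. Let $\overline{B}$ be the closed circumscribing disk of $t$; then $\h{y}\in t\subset\overline{B}$ and $\vh\in\partial B$, and since $t$ is a Delaunay face the open disk $B$ misses $\proj^{-1}(\dummy)$, hence --- because $\W\subseteq\dummy$ --- misses $\proj^{-1}(\W)$. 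It therefore suffices to rule out an open disk $B$ whose interior misses $\proj^{-1}(\W)$, whose boundary meets $\Do$, and whose closure meets some tile $f(D_g)$ with $f\notin\N$.

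The first thing I would set up is the description of $\proj^{-1}(\W)$ relative to the tiling. Since $D_g$ is the Dirichlet region of the origin, $\{f(D_g):f\in\Gamma_g\}$ is the Voronoi diagram of $\orb{\Gamma_g}{O}$, so $\proj^{-1}(\W)$ is exactly the set of centres of all tiles, together with all tile vertices and all edge-midpoints of all tiles. In particular every side of a neighbouring tile has its midpoint and its two endpoints in $\proj^{-1}(\W)$, so the Weierstrass points lying on $\partial D_{\N}$ cut it into half-sides; moreover each tile carries its centre at distance the apothem $a_g$ of $D_g$ (with $\cosh a_g=\cot\tfrac{\pi}{4g}$, from~\eqref{eq:generator-matrix}) from each of its own sides, while consecutive edge-midpoints of one tile lie at distance exactly $\tfrac12\varsigma_g$. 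I would also use the combinatorial observation, in the spirit of Section~\ref{sec:caseanalysis}, that a geodesic running from $D_g$ into a tile $f(D_g)$ with $f\notin\N$ must contain a subsegment that crosses a neighbouring tile between two of its non-adjacent sides; transported back into $D_g$ this is a segment of separation at least $2$, hence of length at least $\tfrac12\varsigma_g$ by part~\ref{item:sep2} of Lemma~\ref{lem:separationlowerbound}. In particular every tile $f(D_g)$ with $f\notin\N$ is at distance at least $\tfrac12\varsigma_g$ from $D_g$.

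The core step is to show that the disk $B$ described above cannot exist. The geodesic $[\vh,\h{y}]\subset\overline{B}$ from the $\Do$-point $\vh$ to the point $\h{y}$ in a non-$\N$ tile crosses the relative interior of some outer side $\sh'$ of a neighbouring tile at a point $\xh_0$; thus $B$ contains $\xh_0\in\sh'$ but, being empty of $\proj^{-1}(\W)$, contains neither endpoint of $\sh'$ nor its midpoint, so $B\cap\sh'$ is confined to one of the two half-sides into which the midpoint splits $\sh'$. On the other hand $B$ must bulge past $\sh'$ into the non-$\N$ tile $T_2$ on its far side, far enough to contain a point of $T_2$, while avoiding the centre of $T_2$ and the midpoints of the two sides of $T_2$ adjacent to $\sh'$ --- whose distances from the part of $\sh'$ met by $B$ are bounded below in terms of $a_g$ and $\varsigma_g$, again via part~\ref{item:sep2} of Lemma~\ref{lem:separationlowerbound}. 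I would show these two requirements are incompatible, the computation reducing to a comparison of $a_g$, the side length of $D_g$, and $\varsigma_g$, all of which are explicit in $\cot\tfrac{\pi}{4g}$ and $\cos\tfrac{\pi}{2g}$. I expect this final estimate to be the main obstacle: the covering radius of $\proj^{-1}(\W)$ grows unboundedly, like $\log g$, so empty disks can be large compared with $\varsigma_g$ and a crude packing or diameter bound is hopeless here --- one is forced to exploit the exact positions of the Weierstrass points guarding $\partial D_{\N}$, not merely their density. Once the core claim is established it rules out the assumed point $\h{y}$, and the proposition follows; the only remaining work is routine bookkeeping for the degenerate cases in which $[\vh,\h{y}]$ leaves $D_{\N}$ through a tiling vertex instead of through the interior of a side.
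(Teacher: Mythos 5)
Your plan takes a genuinely different route from the paper's proof, and it leaves the decisive step open. The paper reduces the proposition to a statement about Delaunay edges (any edge with an endpoint in $\Do$ has its other endpoint in $D_{\N}$) and then applies a purely Euclidean auxiliary lemma, Lemma~\ref{thm:lemma-gert}: for two points outside the circle centered at $O$ through $\qh$, lying on opposite sides of the line $O\qh$ and on one side of another line through $O$, the disk through $O$ and these two points contains $\qh$. Applied to the two points where the (transported) offending edge crosses two non-adjacent sides of $\Do$, this shows that every disk through the endpoints of such an edge contains either $O$ or a Weierstrass midpoint $\mh_\ell$ — and the argument is insensitive to how large the disk is. That insensitivity is precisely what your plan lacks.

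Your approach instead transplants the separation argument of Lemma~\ref{lem:separationlowerbound} used in Proposition~\ref{prop:inclusionproperty}. But that result has an extra hypothesis doing all the work: admissibility, i.e.\ circumdiameter below $\tfrac{1}{2}\sysg$, so the triangle simply cannot reach from $\Do$ past the $\tfrac12\varsigma_g$-wide buffer. For Proposition~\ref{prop:dummy-points-inclusion} there is no a priori bound on the empty disk — as you yourself note, its radius can grow like $\log g$ while $\varsigma_g$ stays bounded. Your proposed compensation — blocking the disk with the centre of $T_2$ and the two midpoints flanking $\sh'$ — is left as a sketch ("I would show these two requirements are incompatible"), and it is not evident that this computation closes: those Weierstrass points are sparse (the side length of $D_g$ is itself $\sim 2\log g$), so a disk that slips through one half-side of $\sh'$ has plenty of room in $T_2$ before meeting one of them, and nothing in your estimate couples that freedom to the constraint that $\partial B$ must also touch $\Do$. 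The mechanism that saves the paper's argument — forced inclusion of either $O$ or a nearby midpoint via the circle geometry of Lemma~\ref{thm:lemma-gert}, a global statement about circles through points beyond a circle about $O$ — has no analogue in your plan, and without it or a completed estimate this is a genuine gap rather than a proof.
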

The proof is given in Appendix~\ref{sec:appendixstructuredalgorithm}. 

The set $\proj^{-1}(\dummy)\cap D_{\N}$is obtained as follows: we first consider the set of canonical representatives (as defined in Section~\ref{sec:generalized}) of the points of $\dummy$, which is $\proj^{-1}(\dummy)\cap\Do$. Then, we obtain $\proj^{-1}(\dummy)\cap D_{\N}$ by computing the images of $\proj^{-1}(\dummy)\cap\Do$ under the elements in $\N$. In other words, $\proj^{-1}(\dummy)\cap D_{\N}$ can be computed as $\h{\dummyq}_{\N}=\{ f(\proj^{-1}(\dummy)\cap \Do), f\in\N \}$.

Apart from the two algorithms, detailed below, we have also looked at the \emph{structured algorithm}~\cite{iordanov-tel-02072155}, which can be found in Appendix~\ref{sec:appendixstructuredalgorithmdescription}. Its approach is fundamentally different from the refinement and symmetric algorithms: the dummy point set and the corresponding Delaunay triangulation are exactly described. As in the symmetric algorithm, the resulting dummy point set preserves the symmetries of $D_g$ and is of order $\Theta(g\log g)$. 

\subsection{Refinement algorithm}\label{sec:simplealgorithm}
Following the refinement strategy introduced above and using Proposition~\ref{prop:dummy-points-inclusion}, we insert the circumcenter of each triangle in $\dth{\h{\dummyq}_{\N}}$ having a non-empty intersection with the domain $\Do$ and whose circumradius is at least $\tfrac{1}{2}\sysg$ (see Algorithm~\ref{simplealgo}).
Figure~\ref{fig:simplealgorithm} illustrates the computation of $\dth{\dummyq_{\Nnude_3}}$.

\IncMargin{1em}	
\begin{algorithm}[H]
	\SetKwInOut{Input}{Input}
	\SetKwInOut{Output}{Output}
	\DontPrintSemicolon
	\Input{hyperbolic surface $\Mg$}
	\Output{finite point set ${\dummy}\subset\Mg$ such that $\diam{\dummy}<\tfrac{1}{2}\sysg$}
	\BlankLine
	Initialize: let ${\dummy}$ be the set $\W$ of Weierstrass points of $\Mg$.\;
	Compute $\dth{\h{\dummyq}_{\N}}$.\; 
	\While{there exists a triangle $\Delta$ in $\dth{\h{\dummyq}_{\N}}$ with circumdiameter at least $\tfrac{1}{2}\sysg$ and $\Delta\cap D_g\neq\emptyset$}{
		Add to ${\dummy}$ the projection onto $\Mg$ of the circumcenter of $\Delta$\;		Update $\dth{\h{\dummyq}_{\N}}$		
	}
	\caption{Refinement algorithm}\label{simplealgo}
\end{algorithm}

\begin{figure}
	\centering
	\begin{tabular}{cc}
		\includegraphics[width=.45\textwidth]{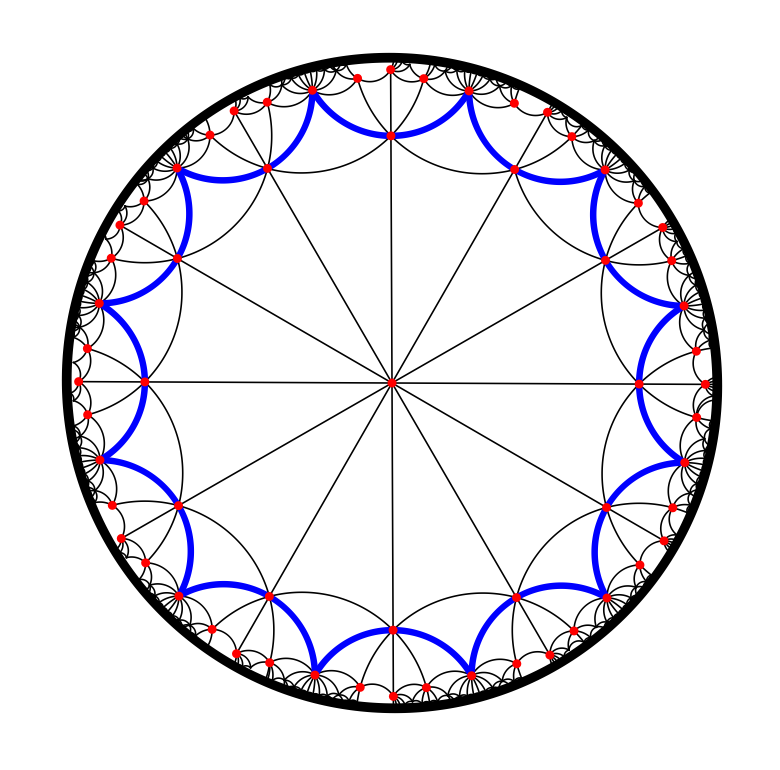} &
		\includegraphics[width=.45\textwidth]{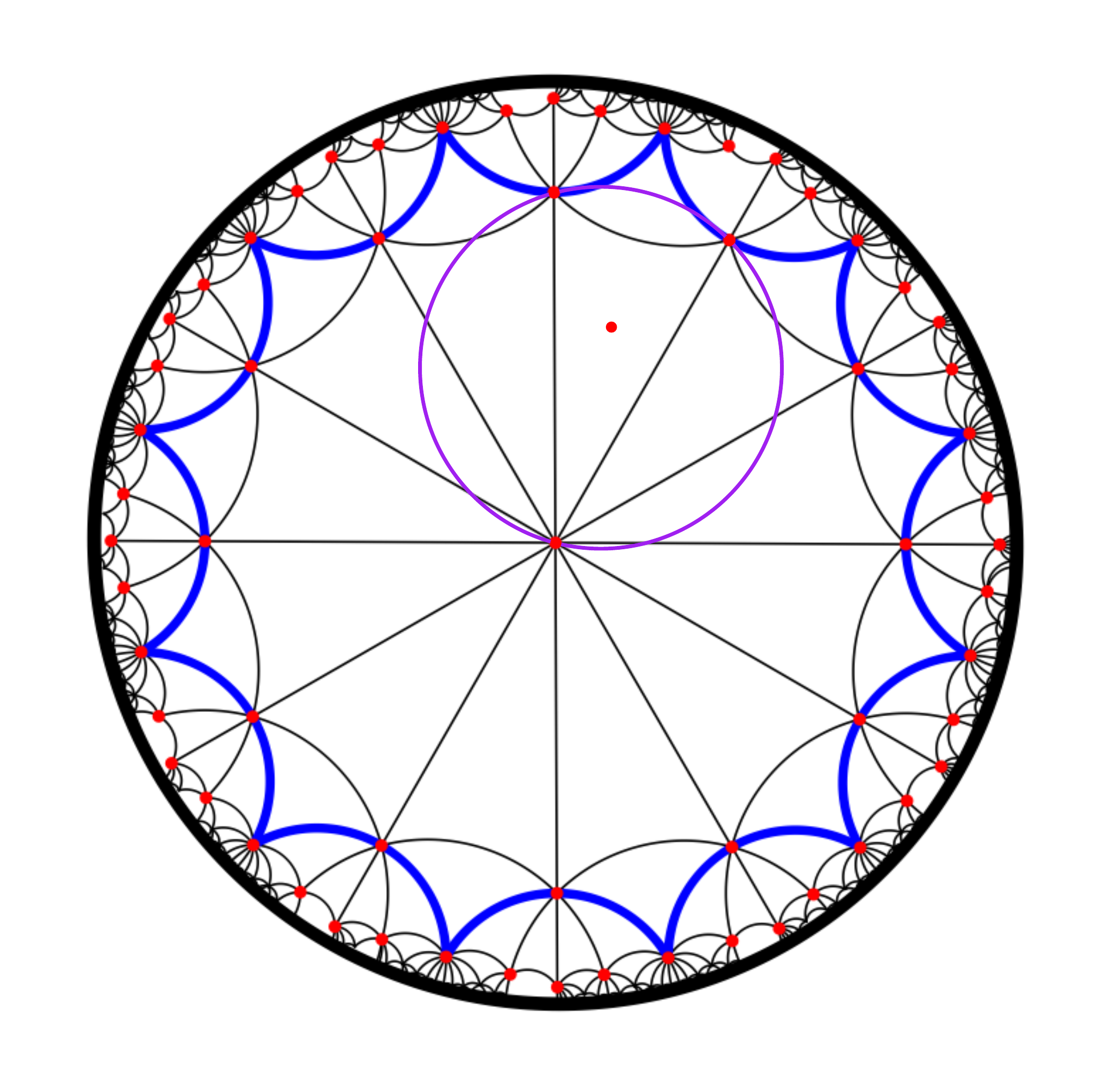} \\
		After initialization & First insertion \\
		\includegraphics[width=.45\textwidth]{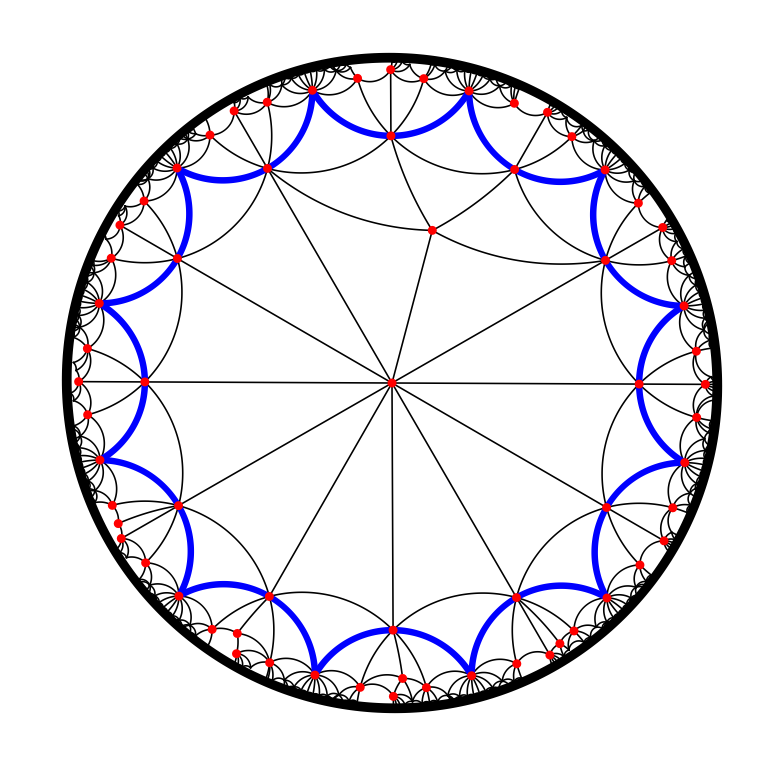} &
		\includegraphics[width=.45\textwidth]{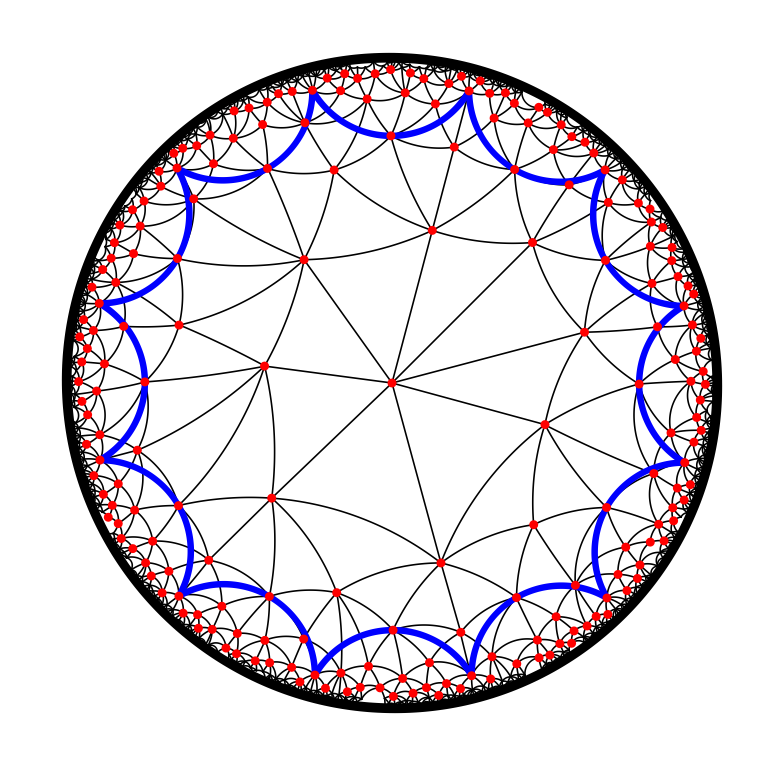} \\
		After first insertion & After last insertion
	\end{tabular}
	\caption{Several steps in the refinement algorithm (genus~3)}
	\label{fig:simplealgorithm}
\end{figure}

We can now show that the cardinality of the resulting dummy point set is linear in the genus~$g$.

\begin{theorem}\label{thm:simplealgorithm}
	The refinement algorithm terminates and the resulting dummy point set ${\dummy}$ satisfies the validity condition~\eqref{condition}. The cardinality $|{\dummy}|$ is bounded as follows
	$$ 5.699(g-1)<|\dummy|<27.061(g-1).$$
\end{theorem}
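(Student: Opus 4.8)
The plan is to prove the four assertions---termination, the validity condition, the lower bound, and the upper bound---in turn, the upper bound on $|\dummy|$ being the crux. I would first dispatch termination and validity together. The key observation is that whenever the algorithm inserts the circumcenter $c$ of a triangle $\Delta$, that triangle is, by Proposition~\ref{prop:dummy-points-inclusion} and the discussion following it, a genuine Delaunay triangle of the current lifted vertex set $\proj^{-1}(\dummy)$, so its open circumdisk in $\mD$ is empty; since the while-loop forces the circumdiameter of $\Delta$ to be at least $\tfrac12\sysg$, i.e.\ its circumradius at least $\tfrac14\sysg$, the point $c$ lies at distance $\geq\tfrac14\sysg$ from every vertex present at that moment, and symmetrically any later-inserted point lies at distance $\geq\tfrac14\sysg$ from $c$ by emptiness of \emph{its} own circumdisk. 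Hence the set of inserted circumcenters is $\tfrac14\sysg$-separated on the compact surface $\Mg$, and a volume count as in Proposition~\ref{thm:upperbounddummypoints} bounds its cardinality; since $\W$ is finite, the loop terminates. At termination no Delaunay triangle of $\proj^{-1}(\dummy)$ with a vertex in $\Do$ has circumdiameter $\geq\tfrac12\sysg$ (by Proposition~\ref{prop:dummy-points-inclusion} every such triangle is examined by the algorithm), and every Delaunay triangle is a $\Gamma_g$-translate of one with a vertex in $\Do$, so all Delaunay triangles of $\proj^{-1}(\dummy)$ have circumradius $<\tfrac14\sysg$; as $\proj^{-1}(\dummy)$ is $\Gamma_g$-invariant its Delaunay triangulation tiles $\mD$, so the largest empty disk is the circumdisk of some Delaunay triangle and thus $\diam{\dummy}<\tfrac12\sysg$, i.e.\ condition~\eqref{condition} holds.

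For the lower bound I would simply invoke Proposition~\ref{thm:minimumnumberofpoints}, which applies to $\dummy$ since it satisfies condition~\eqref{condition}: writing $C(s)=\pi/\bigl(\pi-6\arcot(\sqrt3\cosh(\tfrac14 s))\bigr)-1$, it gives $|\dummy|>C(\sysg)\cdot 2(g-1)$. The function $C$ is strictly decreasing, and by Theorem~\ref{thm:systolevalue} the systole $\sysg=\varsigma_g=2\arcosh(1+2\cos(\tfrac{\pi}{2g}))$ increases with $g$ and stays strictly below $2\arcosh 3$. Using the half-angle identity $\cosh(\tfrac12\arcosh 3)=\sqrt2$ one evaluates $C(2\arcosh 3)=\pi/(\pi-6\arcot\sqrt6)-1$; a numerical check gives $2\,C(2\arcosh 3)>5.699$, whence $|\dummy|>C(\sysg)\cdot 2(g-1)>5.699(g-1)$.

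For the upper bound the goal is to show that the \emph{entire} output $\dummy$, not just the inserted circumcenters, is $\tfrac14\sysg$-separated; then the packing argument of Proposition~\ref{thm:upperbounddummypoints}---disjoint embedded disks of radius $\tfrac18\sysg$, each of area $2\pi(\cosh(\tfrac18\sysg)-1)$, inside a surface of area $4\pi(g-1)$---yields $|\dummy|\leq 2(g-1)/(\cosh(\tfrac18\sysg)-1)$, and since $\cosh(\tfrac18\sysg)-1$ grows with $g$ its minimum is attained at $g=2$, giving $|\dummy|\leq 2(g-1)/\bigl(\cosh(\tfrac14\arcosh(1+\sqrt2))-1\bigr)<27.061(g-1)$. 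By the first paragraph each inserted circumcenter is at distance $\geq\tfrac14\sysg$ from every other point of $\dummy$, so only the pairs of Weierstrass points remain to be checked. By the $4g$-fold rotational symmetry of $D_g$ these reduce to a short list of distances, computed from~\eqref{eq:generator-matrix} and elementary hyperbolic trigonometry: the origin-to-midpoint distance is the apothem $a$ of $D_g$, which for this polygon of angle sum $2\pi$ equals half the side length and satisfies $\cosh a=\cot(\tfrac{\pi}{4g})$, while the midpoint-to-adjacent-midpoint distance equals $\tfrac12\sysg$ (the remark after Lemma~\ref{lem:systole}), and the remaining pairs are at least as far apart. Comparing with $\cosh(\tfrac14\sysg)=\sqrt2\cos(\tfrac{\pi}{4g})$ (the half-angle identity applied to $\varsigma_g$) reduces $a\geq\tfrac14\sysg$ to $\sin(\tfrac{\pi}{4g})\leq\tfrac1{\sqrt2}$, which holds for all $g\geq 2$; hence $\dummy$ is $\tfrac14\sysg$-separated and the stated bound follows.

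The main obstacle is the last step of the upper bound: certifying that the Weierstrass points are pairwise at distance at least $\tfrac14\sysg$ \emph{as points of $\Mg$}, which means ruling out shortcuts realized through the side identifications rather than only measuring inside $D_g$. One must also keep careful track of the numerical constants, since at $g=2$ the margins in both bounds ($5.699$ and $27.061$) are quite tight.
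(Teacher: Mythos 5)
Your proposal matches the paper's proof in essence: both establish that the inserted circumcenters and the Weierstrass points together form a $\tfrac14\sysg$-separated set, use this for termination and the upper bound via a disk-packing argument with radius $\tfrac18\sysg$, invoke Proposition~\ref{thm:minimumnumberofpoints} for the lower bound, and obtain the numerical constants by evaluating the coefficients at $g=2$ (upper bound) and $g\to\infty$ (lower bound). The paper defers the Weierstrass-point separation to Lemma~\ref{lem:distanceweierstrasspoints} in the appendix, whose proof likewise computes $\cosh d(O,\mh_0)=\cot(\tfrac{\pi}{4g})$ and $\cosh(\tfrac12 d(\mh_0,\mh_1))=\sqrt2\cos(\tfrac{\pi}{4g})$ and compares to the systole; you do the same computation inline. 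Your explicit argument that at termination the whole Delaunay triangulation has circumradii below $\tfrac14\sysg$ (via $\Gamma_g$-invariance and Proposition~\ref{prop:dummy-points-inclusion}) fills in a step the paper leaves implicit, and your caveat about ruling out shortcuts through side identifications is a legitimate point that the paper's Lemma~\ref{lem:distanceweierstrasspoints} also glosses over, so it is not a defect specific to your proposal.
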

\begin{proof}
	We first prove that the hyperbolic distance between two distinct points of $\dummy$ is greater than $\tfrac{1}{4}\sysg$. The distance between any pair of Weierstrass points is larger than $\tfrac{1}{4}\sysg$ (see Lemma~\ref{lem:distanceweierstrasspoints} in Appendix~\ref{sec:appendixstructuredalgorithm}). 
	
	Furthermore, every point added after the initialization is the projection of the circumcenter of an empty disk in $\mD$ of radius at least $\tfrac{1}{4}\sysg$, so the distance from the added point to any other point in ${\dummy}$ is at least $\tfrac{1}{4}\sysg$. For arbitrary $p\in{\dummy}$, consider the disk $D_p$ in $\Mg$ of radius $\tfrac{1}{8}\sysg$ centered at $p$, i.e., the set of points in $\Mg$ at distance at most $\tfrac{1}{8}\sysg$ from $p$. Every disk of radius at most $\tfrac{1}{2}\sysg$ is embedded in $\Mg$, so in particular $D_p$ is an embedded disk. Because the distance between any pair of points of ${\dummy}$ is at least $\tfrac{1}{4}\sysg$, the disks $D_p$ and $D_q$ of radius $\tfrac{1}{8}\sysg$ centered at $p$ and $q$, respectively, are disjoint for every distinct $p,q\in{\dummy}$. For fixed $g$, the area of such disks is fixed, as is the area of $\Mg$, so only a finite number of points can be added. Hence, the algorithm terminates.
	
	Observe that the algorithm terminates if and only if the while loop ends, i.e.\ ${\dummy}$ satisfies the validity condition. 
	
	Finally, we bound for the cardinality of ${\dummy}$. From the above argument we know that the cardinality of ${\dummy}$ is bounded above by the number of disjoint disks $D$ of radius $\tfrac{1}{8}\sysg$ that fit inside $\Mg$. Hence,
	$$\abs{{\dummy}}\leq \dfrac{\area(\Mg)}{\area(D)}= \dfrac{4\pi(g-1)}{2\pi\left(\cosh\left(\tfrac{1}{8}\sysg\right)-1\right)}=\dfrac{2(g-1)}{\cosh\left(\tfrac{1}{8}\sysg\right)-1}.$$
	Proposition~\ref{thm:minimumnumberofpoints} gives a lower bound.
	The coefficients of $g-1$ in these upper and lower bounds decrease as a function of $g$, so the announced bounds can be obtained by plugging in the value of $\sysg$ (see Theorem~\ref{thm:systolevalue}) for $g\rightarrow\infty$ and $g=2$ respectively. This finishes the proof.
\end{proof}

\subsection{Symmetric algorithm}\label{sec:symmetricalgo}
This algorithm is similar to the refinement algorithm. However, instead of adding one point at every step in the while loop, it uses the $4g$-fold symmetry of the fundamental polygon $D_g$ to add $4g$ points at every step (see Algorithm~\ref{symmetricalgo}). Figure~\ref{fig:symmetricalgorithm} illustrates the computation of $\dth{\dummyq_{\Nnude_3}}$. 

\IncMargin{1em} 
\begin{algorithm}[H]
	\SetKwInOut{Input}{Input}
	\SetKwInOut{Output}{Output}
	\DontPrintSemicolon
	\Input{hyperbolic surface $\Mg$}
	\Output{finite point set ${\dummy}\subset\Mg$ such that $\diam{\dummy}<\tfrac{1}{2}\sysg$}
	\BlankLine
	Initialize: let ${\dummy}$ be the set $\W$ of Weierstrass points of $\Mg$.\;
	Compute $\dth{\h{\dummyq}_{\N}}$.\; 
	\While{there exists a triangle $\Delta$ in $\dth{\h{\dummyq}_{\N}}$ with circumdiameter at least $\tfrac{1}{2}\sysg$}{
		\For{$k=0,\ldots,4g-1$}{
			Let $\ph_k$ be the circumcenter of $\Delta$ rotated around the origin by angle $\tfrac{k\pi}{2g}$.\;
			Add $\proj(\ph_k)$ to ${\dummy}$.}
		Update $\dth{\h{\dummyq}_{\N}}$.		
	}
	\caption{Symmetric algorithm}\label{symmetricalgo}
\end{algorithm}\medskip

\begin{figure}
	\centering
	\begin{tabular}{cc}
		\includegraphics[width=.45\textwidth]{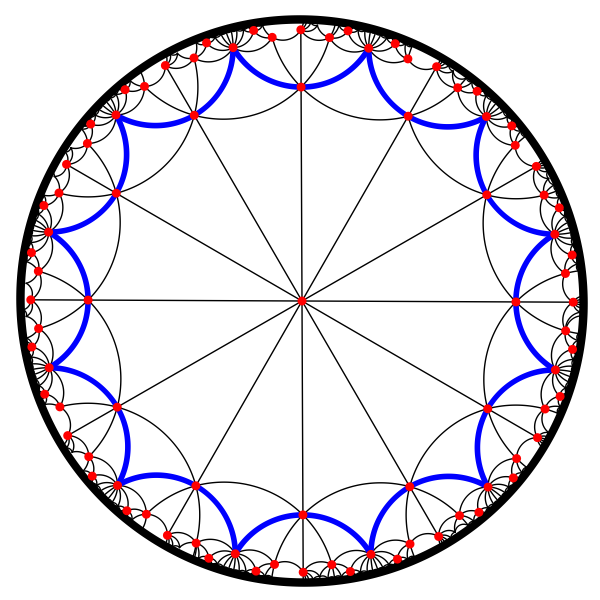} &
		\includegraphics[width=.45\textwidth]{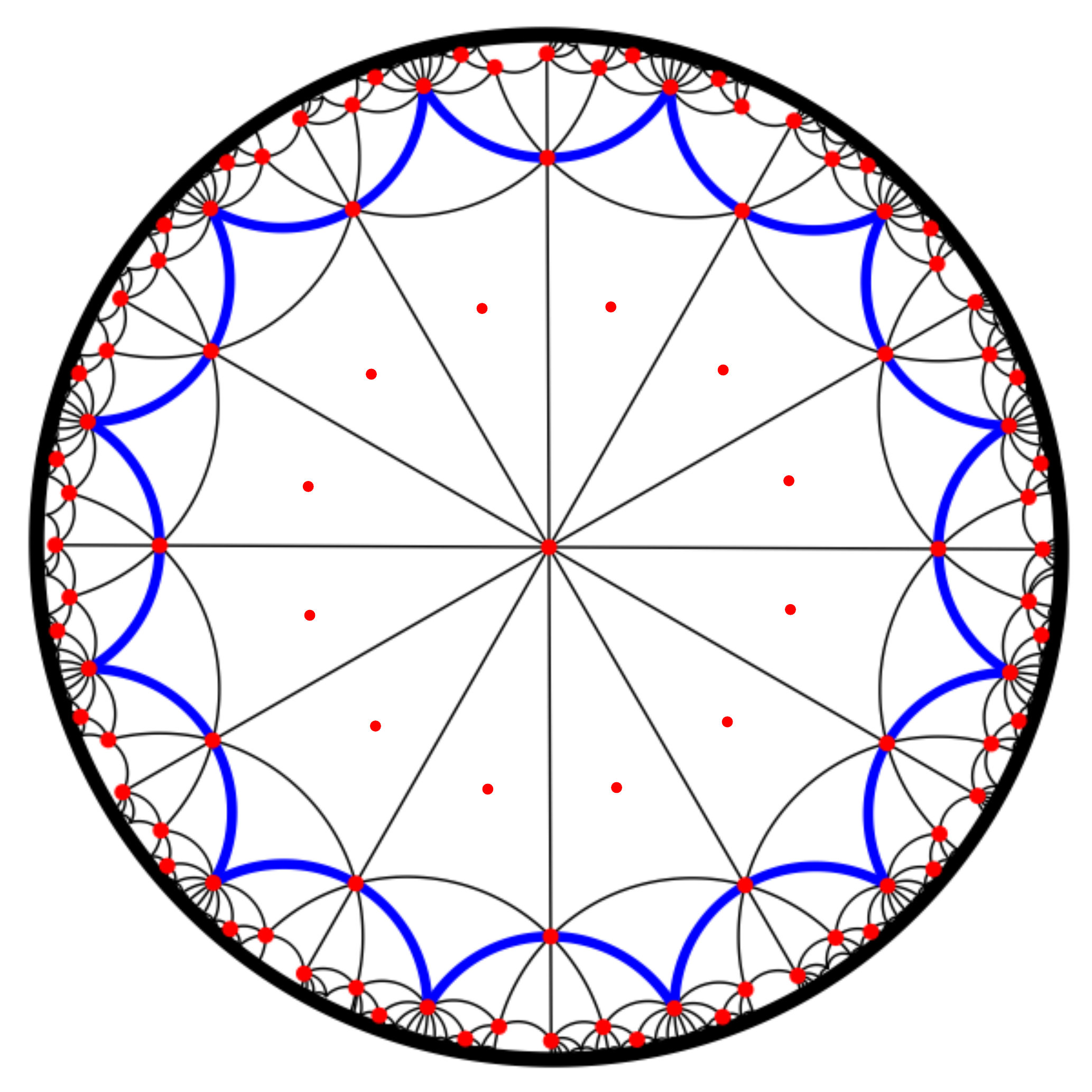} \\
		After initialization & First insertion \\
		\includegraphics[width=.45\textwidth]{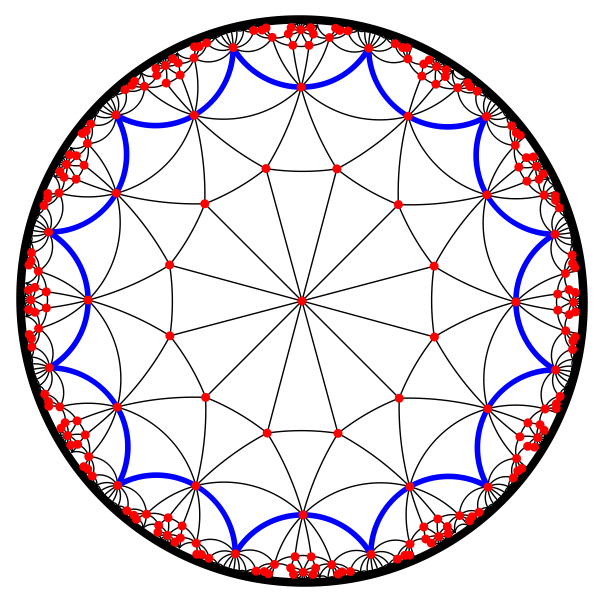} &
		\includegraphics[width=.45\textwidth]{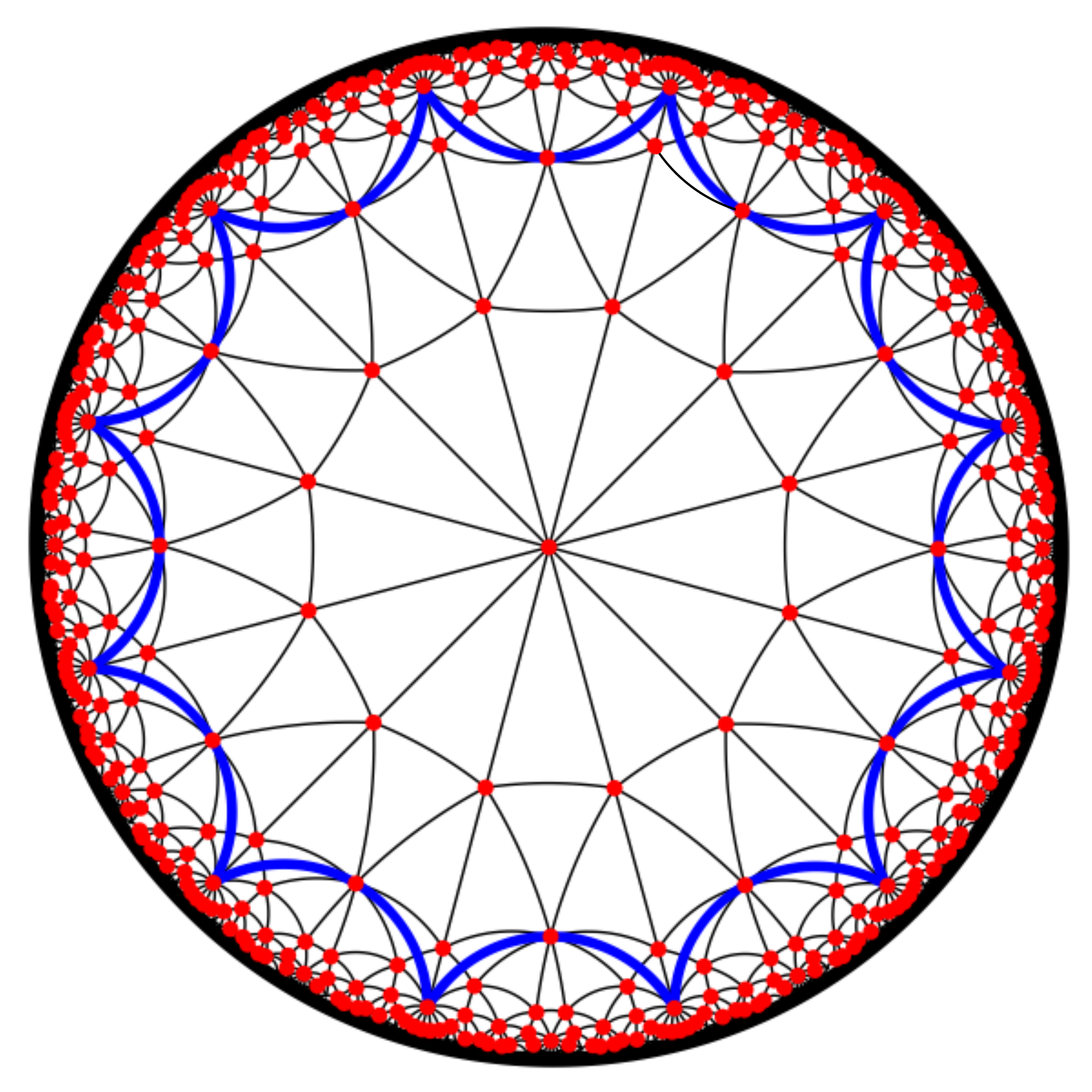} \\
		After first insertion & After second (also last) insertion
	\end{tabular}
	\caption{Several steps in the symmetric algorithm (genus~3)}
	\label{fig:symmetricalgorithm}
\end{figure}

By using the symmetry of the regular $4g$-gon we obtain a more symmetric dummy point set, which may be interesting for some applications~\cite{cff-bhp-11}. However, asymptotically the resulting point set is larger than the point set obtained from the refinement algorithm.

\begin{theorem}\label{thm:symmetricalgorithm}
	The symmetric algorithm terminates and the resulting dummy point set satisfies the validity condition~\eqref{condition}. Its cardinality is of order $\Theta(g\log g)$.
\end{theorem}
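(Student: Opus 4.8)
The plan is to establish, in turn, termination, validity of the output, and the two matching bounds giving $|\dummy|=\Theta(g\log g)$.

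\emph{Termination and validity.} I would argue as in the proof of Theorem~\ref{thm:simplealgorithm}, with one extra structural observation. The set $\proj^{-1}(\W)$ is invariant under the group $R$ of rotations about the origin by integer multiples of $\pi/2g$, and every pass of the \textbf{while}-loop adds to $\proj^{-1}(\dummy)$ a full $R$-orbit (the $R$-orbit of a circumcenter); hence $\proj^{-1}(\dummy)$ stays $R$-invariant throughout, and so does its Delaunay triangulation. If a pass selects a triangle $\Delta$ with circumcenter $c$ and circumradius $\varrho\ge\tfrac14\sysg$, then its circumdisk is empty of $\proj^{-1}(\dummy)$, so $d(\proj(c),q)\ge\tfrac14\sysg$ for every $q$ in the current $\dummy$; since the origin lies in $\proj^{-1}(\W)\subseteq\proj^{-1}(\dummy)$, we also get $d(c,O)\ge\tfrac14\sysg$, and by $R$-invariance every one of the $4g$ rotated copies of $\proj(c)$ is also at distance $\ge\tfrac14\sysg$ from the old $\dummy$. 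Consequently the representatives $\proj(c)$ selected in distinct passes form a $\tfrac14\sysg$-separated subset of the compact surface $\Mg$, so only finitely many passes occur and the algorithm halts. At that point no Delaunay triangle has circumdiameter $\ge\tfrac12\sysg$ (by Proposition~\ref{prop:dummy-points-inclusion} every Delaunay triangle has a $\Gamma_g$-translate meeting $D_{\N}$, and circumdiameter is $\Gamma_g$-invariant), which is precisely $\diam{\dummy}<\tfrac12\sysg$, i.e.\ condition~\eqref{condition}.

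\emph{Lower bound $\Omega(g\log g)$.} Here the $4g$-fold symmetry costs the logarithmic factor. Write $M:=\tfrac14\sysg$; by Theorem~\ref{thm:systolevalue} this lies between two positive constants for all $g\ge 2$ (indeed $M\to\tfrac12\arcosh(3)$). Validity forces every Delaunay triangle to have circumradius $<M$, hence hyperbolic diameter $<2M$. Let $r_g$ be the inradius of $D_g$; elementary hyperbolic trigonometry gives $\cosh r_g=\cot(\tfrac{\pi}{4g})$, so $r_g=\Theta(\log g)$, and the open disk $B(O,r_g)$ lies in $\interior(D_g)$, where $\proj$ is injective. For $k\ge1$ consider the $R$-invariant annulus $A_k=\{x\in\mD: 5Mk\le d(O,x)<5M(k+1)\}$. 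If $A_k$ contained no vertex of $\dth{\proj^{-1}(\dummy)}$, then (for $5M(k+1)\le r_g$) a triangle covering a point $x$ with $d(O,x)=5M(k+\tfrac12)$ would have all three vertices outside $A_k$ and yet, having diameter $<2M$, would be contained in $\{y:5Mk+\tfrac12 M\le d(O,y)\le 5Mk+\tfrac92 M\}\subseteq A_k$, a contradiction. So $A_k$ contains a vertex $v\ne O$, and as a nontrivial element of $R$ fixes only $O$, the $4g$ rotates of $v$ are distinct vertices, all in $A_k$. Letting $k$ run over $1,\ldots,\lfloor r_g/(5M)\rfloor-1=\Theta(\log g)$, the resulting vertices lie in pairwise disjoint annuli inside $B(O,r_g)\subseteq\interior(D_g)$, hence project to $\Theta(g\log g)$ distinct points of $\dummy$.

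\emph{Upper bound $O(g\log g)$.} This is the step I expect to be the main obstacle. The clean way to see it is to pass to the quotient orbifold $\mathcal O_g=\Mg/\langle\bar\rho\rangle$, where $\bar\rho$ is the order-$4g$ isometry of $\Mg$ induced by rotation by $\pi/2g$: a Delaunay triangulation as maintained by the algorithm is $\bar\rho$-invariant and descends to $\mathcal O_g$, and refining a bad triangle together with its $R$-orbit amounts to refining the corresponding bad triangle of $\mathcal O_g$, so the symmetric algorithm is (Ruppert-style) refinement on $\mathcal O_g$ with maximum element size $M$. The orbifold $\mathcal O_g$ is a hyperbolic sphere with three cone points, of orders $4g$, $4g$ and $2$ (the order-$4g$ points being the images of $O$ and of $\vh_0$, the order-$2$ point the image of the side midpoints), and has bounded area $\tfrac{4\pi(g-1)}{4g}<\pi$. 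A disjoint-disk packing argument, as in Theorem~\ref{thm:simplealgorithm}, bounds the number of output points away from the cone points by $O(\area(\mathcal O_g))=O(1)$; around each order-$4g$ cone point the refinement fills a hyperbolic disk of radius $\Theta(\log g)$ with elements of size $M$ and so produces $O(\log g)$ nested circles of vertices; the order-$2$ cone point contributes $O(1)$. Thus $\mathcal O_g$ receives $O(\log g)$ points, so $\Mg$ receives $O(g\log g)$. Making this rigorous is the technical heart: one needs a working notion of Delaunay triangulation at orbifold cone points together with a check that the algorithm's output descends to a legitimate one, and the quantitative claim that refinement of a bounded-area orbifold with an order-$N$ cone point halts after $O(\log N)$ insertions --- the cone-adapted analogue of the classical local-feature-size/insertion-radius analysis, the subtlety being that the $4g$ copies inserted in a single pass may be only $\Theta(1/g)$ apart near a symmetry point, so the insertion radius must be estimated against a feature-size function that already encodes the cone structure rather than the naive one on $\Mg$. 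Together with the lower bound, this yields $|\dummy|=\Theta(g\log g)$.
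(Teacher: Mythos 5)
Your termination-and-validity argument tracks the paper's, and your lower bound takes a genuinely different and arguably cleaner route: instead of analyzing a shortest path from $O$ to a vertex of $D_g$ in the Delaunay graph and showing (via a rotation trick) that it visits at most one vertex per iteration, you directly pack $\Theta(\log g)$ pairwise disjoint $R$-invariant annuli inside the inscribed disk of $D_g$, observe via the bounded-circumdiameter property that each must contain a vertex, and invoke $R$-invariance to promote that vertex to a full orbit of $4g$ distinct vertices that project injectively to $\dummy$. This buys you a shorter proof and avoids the paper's auxiliary path-shortening argument; it does depend on the output set being $R$-invariant, so it would not transfer to the refinement algorithm (as it should not, since that one achieves $\Theta(g)$).

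Your upper bound, however, contains a genuine gap --- and you flag it yourself. The orbifold picture ($\mathcal{O}_g = \Mg/\langle\bar\rho\rangle$, a sphere with cone points of orders $4g,4g,2$, on which the symmetric algorithm descends to ordinary Ruppert-style refinement) is the right conceptual frame, but the load-bearing quantitative claim --- that bounded-element-size refinement on a bounded-area orbifold with an order-$N$ cone point terminates after $O(\log N)$ insertions --- is asserted rather than proved, and it is exactly what requires work: near a cone point of order $4g$ a disk of fixed orbifold radius has area $\Theta(1/g)$, so naive $\tfrac18\sysg$-disk packing gives no bound better than $O(g)$ orbifold points, i.e.\ $O(g^2)$ surface points, and an insertion-radius analysis must be rewritten against a cone-adapted feature-size, as you note. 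The paper avoids the orbifold formalism altogether: it classifies iterations by the area $a_j$ of $U_j = \proj\bigl(\bigcup_k D(\ph^j_k,\tfrac18\sysg)\bigr)$, shows by surface-area packing that only $O(1)$ iterations have $a_j \ge 2ga$, and shows that iterations with $a_j < 2ga$ correspond to topological annuli around $O$ or $\vh_0$ of a uniform positive minimum width $\varepsilon$ crossing the segment $[O,\vh_0]$ of length $\Theta(\log g)$, hence number $O(\log g)$. The existence of a $g$-independent $\varepsilon$ is the crux, obtained by relating the annulus width to the overlap fraction of the $4g$ disks; to finish your route you would need an analogue of this step, or to supply the cone-adapted insertion-radius lemma you allude to.
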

\begin{proof}
	The first two statements follow directly from the proof of Theorem~\ref{thm:simplealgorithm}, so we only have to prove the claim on the cardinality of ${\dummy}$.\\ 
	First, we prove that $|{\dummy}|$ is of order $O(g\log g)$. Again, the distance between the Weierstrass points is more than $\tfrac{1}{4}\sysg$. We claim that the distance between points that are added in different iterations of the while loop is at least $\tfrac{1}{4}\sysg$. Namely, by the same reasoning as in the proof of Theorem~\ref{thm:simplealgorithm}, the distance between the circumcenter of an empty disk of radius at least $\tfrac{1}{4}\sysg$ and any other point in $\dummy$ is at least $\tfrac{1}{4}\sysg$. Because $\dummy$ is invariant under symmetries of $D_g$, it follows that the distance between an image of the circumcenter under a rotation around the origin and any other point in $\dummy$ is at least $\tfrac{1}{4}\sysg$ as well. 
	
	However, the distance between points in ${\dummy}$ can be smaller than $\tfrac{1}{4}\sysg$ if they are added simultaneously in some iteration of the while loop. Denote the points added to ${\dummy}$ in iteration $j$ by $\proj(\ph^j_k)$ where $k=0,\ldots,4g-1$. In particular, $\ph^j_k$ is the circumcenter of a triangle in $\dth{\h{\dummyq}_{\N}}$, i.e, in the hyperbolic plane. 
	
	Let $D(p,r)$ be the hyperbolic disk with center $p$ and radius $r$, where $p$ is either a point in $\H^2$ or in $\Mg$. For each iteration $j$, define 
	$$\Uh_j=\bigcup_{k=0}^{4g-1}D\left(\ph^j_k,\tfrac{1}{8}\sysg\right)$$
	and let $U_j=\proj(\Uh_j)$. Let $a_j=\area(U_j)$. Denote the area of a hyperbolic circle of radius $\tfrac{1}{8}\sysg$ by $a$, i.e.
	$$ a:=2\pi\left(\cosh\left(\tfrac{1}{8}\sysg\right)-1\right).$$
	Observe that $a\leq a_j\leq 4ga$, where the lower bound is in the limiting case where all disks are equal and the upper bound in the case where all disks are disjoint. 
	
	Define
	$$ \I=\{j\;|\;a_j< 2ga \}$$
	and denote its complement by $\I^c$. We give upper bounds for $|\I|$ and $|\I^c|$. To see for which $j$ the inequality $a_j<2ga$ holds, we first look at the area of $\Uh_j$ (see Figure~\ref{fig:symmetric-algo-case1}). The amount of overlap between $D(\ph^j_k)$ and $D(\ph^j_{k+1})$ can be written as a strictly decreasing function of $d(\ph^j_k,\ph^j_{k+1})$, which can be written as a strictly increasing function of $d(O,\ph^j_k)$. Therefore, there exists a constant $d_g>0$ such that $\area(\Uh_j)<2ga$ if and only if $d(O,\ph^j_k)<d_g$ for all $k=0,\ldots,4g-1$. 
	
	We claim that $j\in\I$ if and only if there exists $k\in\{0,\ldots,4g-1\}$ such that either $d(O,\ph^j_k)< d_g$ or $d(\vh_0,\ph^j_k)< d_g$. First, assume that such a $k$ exists. If $d(O,\ph^j_k)< d_g$ (Figure~\ref{fig:symmetric-algo-case1}), then $\area(\Uh_j)<2ga$ by definition of $d_g$, so $j\in\I$. Now, assume that $d(\vh_0,\ph^j_k)< d_g$ (Figure~\ref{fig:symmetric-algo-case2}). By symmetry $d(\vh_\ell,\ph^j_{k+\ell})=d(\vh_0,\ph^j_k)$ for all $\ell=0,\ldots,4g-1$ (counting modulo $4g$). Recall that $f_0$ is the side-pairing transformation that maps $\sh_{2g}$ to $\sh_0$. Then
	\begin{align*}
		d(f_0(\ph^j_{k+2g+1}),\vh_0)&=d(f_0^{-1}(f_0(\ph^j_{k+2g+1})),f_0^{-1}(\vh_0)),\\
		&=d(\ph^j_{k+2g+1},\vh_{2g+1}),\\
		&=d(\ph^j_k,\vh_0).
	\end{align*}
	Therefore, the circle $\Ch_j$ centered at $\vh_0$ and passing through $\ph^j_k$ passes through $f_0(\ph^j_{k+2g+1})$ as well. By induction, for every pair of adjacent fundamental regions $f(D_g)$ and $f'(D_g)$ that contain $\vh_0$ there exists an $\ell\in\{0,\ldots,4g-1\}$ such that $f(\ph^j_\ell)$ and $f'(\ph^j_{\ell+2g+1})$ are equidistant from $\vh_0$. There are $4g$ fundamental regions that have $\vh_0$ as one of their vertices. Because $2g+1$ and $4g$ are co-prime, it follows that $\Ch_j$ contains exactly one translate of $\ph^j_\ell$ for every $\ell=0,\ldots,4g-1$. Hence, if we translate the union of disks of radius $\tfrac{1}{8}\sysg$ centered at the translates of $\ph^j_\ell, \ell=0,\ldots,4g-1$ on $\Ch_j$ by the hyperbolic translation that maps $\vh_0$ to the origin, we obtain a union of disks of radius $\tfrac{1}{8}\sysg$ at distance $d(\vh_0,\ph^j_k)<d_g$ from the origin. By definition of $d_g$, it follows that $a_j<2ga$.
	
	\begin{figure}[htbp]
		\centering
		\begin{subfigure}{0.5\textwidth}
			\centering
			\includegraphics[width=0.9\textwidth]{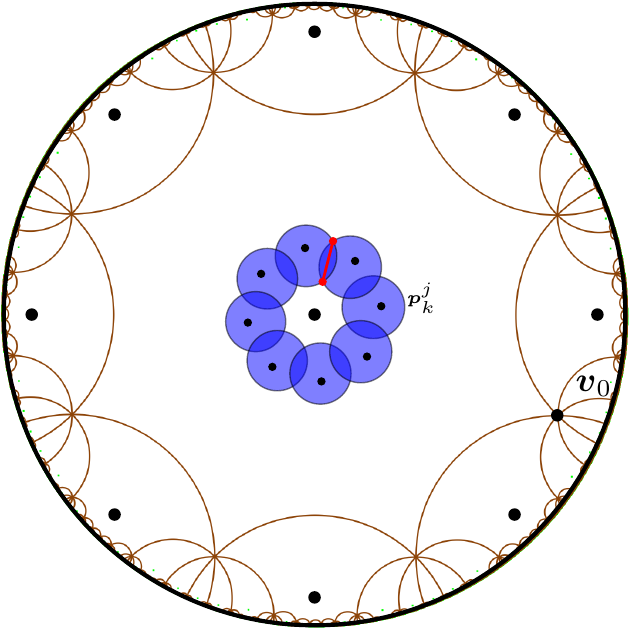}
			\caption{$d(O,\ph^j_k)< d_g$ for all $k=0,\ldots,4g-1$.}
			\label{fig:symmetric-algo-case1}
		\end{subfigure}%
		\begin{subfigure}{0.5\textwidth}
			\centering
			\includegraphics[width=0.9\textwidth]{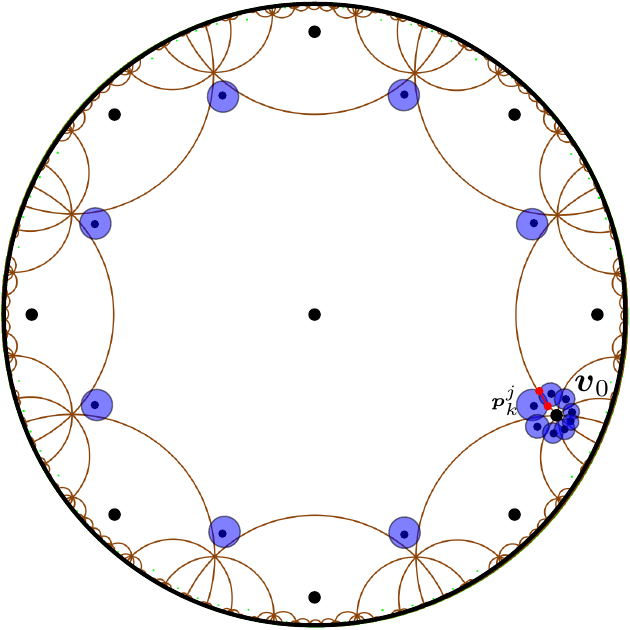}
			\caption{$d(\vh_0,\ph^j_k)< d_g$ for some $k\in\{0,\ldots,4g-1\}$.}
			\label{fig:symmetric-algo-case2}
		\end{subfigure}
		\\
		\begin{subfigure}{0.5\textwidth}
			\centering
			\includegraphics[width=0.9\textwidth]{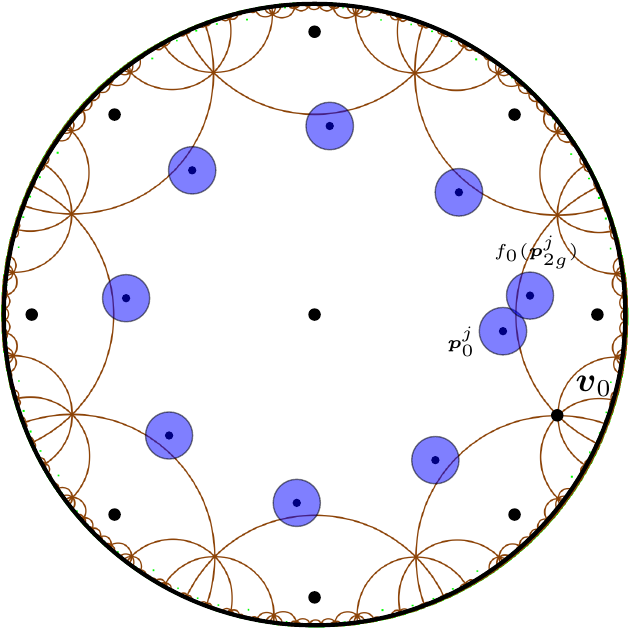}
			\caption{$d(O,\ph^j_k)\geq d_g$ and $d(\vh_0,\ph^j_k)\geq d_g$ for all $k\in\{0,\ldots,4g-1\}$ and $d(\ph^j_0,\partial D_g)<\tfrac{1}{8}\sysg$.}
			\label{fig:symmetric-algo-case3}
		\end{subfigure}%
		\caption{Schematic drawings of different cases. In the first two drawings, the minimum width of the corresponding annulus is marked in red. In the second drawing, only the disks with center in $D_g$ or sufficiently close to $\vh_0$ are drawn. In the third drawing, only the disks with center in $D_g$ together with the unique disk that overlaps $D(\ph^j_0,\tfrac{1}{8}\sysg)$ are drawn.}
		\label{fig:symmetric-algo-cases}
	\end{figure}
	
	Second, assume that $d(O,\ph^j_k)\geq d_g$ and $d(\vh_0,\ph^j_k)\geq d_g$ for all $k\in\{0,\ldots,4g-1\}$. If $d(\ph^j_0,\partial D_g)\geq\tfrac{1}{8}\sysg$, then $\Uh_j$ is completely contained in $D_g$. Because $d(O,\ph^j_k)\geq d_g$, it follows that $a_j\geq 2ga$ by definition of $d_g$, so $j\in\I^c$. Now, assume that $d(\ph^j_0,\partial D_g)<\tfrac{1}{8}\sysg$. If $\ph^j_0$ is close to the midpoint of a side of $D_g$, then $D(\ph^j_0,\tfrac{1}{8}\sysg)$ can only overlap with a translate of $D(\ph^j_{2g},\tfrac{1}{8}\sysg)$ (Figure~\ref{fig:symmetric-algo-case3}). Then, $U_j$ contains at least $2g$ pairwise disjoint disks, so $a_j\geq 2ga$. Therefore, $j\in\I^c$. Hence, the only way that $D(\ph^j_0,\tfrac{1}{8}\sysg)$ can overlap with multiple other disks is when $\ph^j_0$ is sufficiently close to a vertex of $D_g$. Consider again the circle $\Ch_j$ centered at $\vh_0$ and passing through a translate of $\vh^j_\ell$ for all $\ell\in\{0,\ldots,4g-1\}$. Because now $d(\vh_0,\ph^j_k)\geq d_g$, it follows that $a_j\geq 2ga$ by definition of $d_g$.  
	
	We conclude that $j\in\I$ if and only if there exists $k\in\{0,\ldots,4g-1\}$ such that either $d(O,\ph^j_k)< d_g$ or $d(\vh_0,\ph^j_k)< d_g$. We have also shown that if $d(O,\ph^j_k)< d_g$, then $\Uh_j$ is a topological annulus around the origin. If $d(\vh_0,\ph^j_k)< d_g$, then $\proj^{-1}(U_j)$ contains a topological annulus around $\vh_0$. In either case, the boundary of such an annulus consists of two connected components. Let the minimum width of an annulus be given by the distance between these connected components. Suppose, for a contradiction, that the minimum width of an annulus corresponding to $j\in\I$ can be arbitrarily close to 0. Then the disks in $U_j$ have arbitrarily small overlap, so $a_j$ is arbitrarily close to $4ga$. However, this is not possible, since $a_j<2ga$ for all $j\in\I$. Therefore, there exists $\varepsilon>0$ (independent of the output of the algorithm) such that the minimum width of an annulus corresponding to $j\in\I$ is at least $\varepsilon$. 
	
	To find an upper bound for $|\I|$, consider the line segment $[O,\vh_0]$ between the origin and $\vh_0$. By the above discussion, $[O,\vh_0]$ crosses the annulus corresponding to any $j\in\I$ exactly once. Because the annuli are pairwise disjoint and each annulus has minimum width $\varepsilon$, there are at most $\hlen{[O,\vh_0]}/\varepsilon$ annuli, where
	$$\hlen{[O,\vh]}=\arcosh\left(\cot^2\left(\tfrac{\pi}{4g}\right)\right).$$
	Therefore,
	$$ |\I|\leq \dfrac{\arcosh\left(\cot^2\left(\tfrac{\pi}{4g}\right)\right)}{\varepsilon}.$$
	Because $\cot^2(\tfrac{\pi}{4g})\sim \tfrac{16}{\pi^2}g^2$ for $g\rightarrow\infty$, it follows that $|\I|$ is of order $O(\log g)$.
	
	Now, consider $\I^c$. Because the disks of radius $\tfrac{1}{8}\sysg$ centered at points of $\dummy$ that correspond to different iterations of the while loop are disjoint, we see that
	\[ \area(\Mg)\geq\area\left(\cup_{j\in\I^c}U_j \right)=\sum_{j\in\I^c}\area(U_j)=\sum_{j\in\I^c}a_j\geq |\I^c|\cdot 2ga.
	\]
	Since $\area(\Mg)=4\pi(g-1)$ and $a$ is constant, $|\I^c|$ is of order $O(1)$. 	
	
	Because the number of iterations is given by $|\I|+|\I^c|$, the number of iterations is of order $O(\log g)$. Each iteration adds $4g$ points, so the resulting dummy point set has cardinality of order $O(g\log g)$.
	
	Secondly, we show that $|{\dummy}|$ is of order $\Omega(g\log g)$. As before, the points added to ${\dummy}$ in iteration $j$ of the while loop are denoted by $\proj(\ph^j_k)$ where $k=0,\ldots,4g-1$. Fix an arbitrary vertex $\vh$ of $D_g$. Let $P=\angular{O,\ph^{j_1}_{k_1},\ph^{j_2}_{k_2},\ldots,\ph^{j_n}_{k_n},\vh}$ be a shortest path from the origin to $\vh$ in the Delaunay graph of $\proj^{-1}(\dummy)$. We claim that all indices $j_h$ are distinct, i.e. $P$ contains at most one element of each of the sets $\{ \ph^j_k\;|\;k=0,\ldots,4g-1\}$ (see Figure~\ref{fig:structuredalgorithmlowerboundpoints}). 
	
	\begin{figure}[htbp]
		\centering
		\includegraphics[width=.9\textwidth]{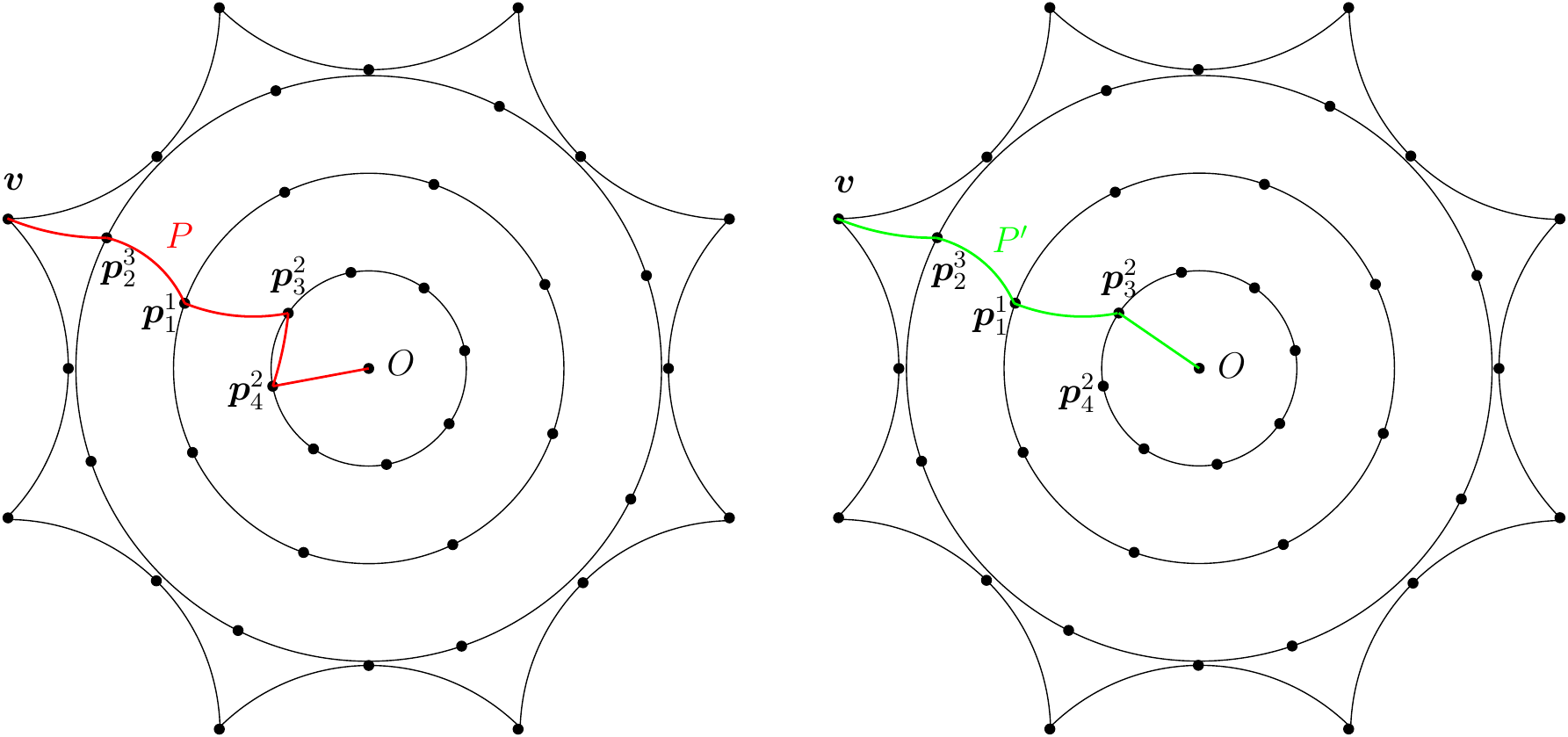}
		\caption{The left figure shows a path $P$ from the origin to $\vh$ that visits two vertices from the same iteration, namely $\ph^2_3$ and $\ph^2_4$. The right figure shows a shorter path from the origin to $\vh$. In this case, $j_1=j_2=2$ and $k_1=4$ and $k_2=3$. The subdivision of $P$ into three parts is given by $P_1=\angular{O,\ph^2_4},P_2=\angular{\ph^2_4,\ph^2_3}$ and $P_3=\angular{\ph^2_3,\ph_1^1,\ph^3_2,\vh}$. The path $P'$ is defined as $P'_1\cup P_3$, where $P'_1$ is obtained by rotating $P_1$ around the origin by angle $-\tfrac{\pi}{2g}$, i.e., $P'_1=\angular{O,\ph^2_3}$.}
		\label{fig:structuredalgorithmlowerboundpoints}
	\end{figure}
	
	Suppose, for a contradiction, that there exist $l$ and $m$ with $l<m$, such that $j_l=j_m$. We will construct a path $P'$ from $O$ to $\vh$ that is shorter than $P$. We know that $\ph^{j_l}_{k_l}\neq\ph^{j_m}_{k_m}$, because otherwise the shortest path would contain a cycle, so in particular $k_l\neq k_m$. Subdivide $P$ into three paths: the path $P_1$ from $O$ to $\ph^{j_l}_{k_l}$, the path $P_2$ from $\ph^{j_l}_{k_l}$ to $\ph^{j_m}_{k_m}$, and the path $P_3$ from $\ph^{j_m}_{k_m}$ to~$\vh$. Now, let $P'_1$ be the image of $P_1$ after rotation around $O$ by angle $(k_m-k_l)\cdot\tfrac{\pi}{2g}$. It is clear that $P'_1$ is a path from $O$ to $\ph^{j_m}_{k_m}$ of the same length of $P_1$. It follows that $P':=P'_1\cup P_3$ is a path from $O$ to $\vh$ that is shorter than $P$. This is a contradiction, so all indices $j_h$ are distinct. Therefore, the number of vertices of the graph that $P$ visits is smaller than the number of iterations of the while loop. Each edge in the path $P$ is the side of a triangle with circumdiameter smaller than $\tfrac{1}{2}\sysg$, so in particular the length of each edge is smaller than $\tfrac{1}{2}\sysg$. The length of $P$ is at least 
	$$\hlen{[O,\vh]}=\arcosh\left(\cot^2\left(\tfrac{\pi}{4g}\right)\right)\sim 2\log g.$$
	As $\tfrac{1}{2}\sysg$ is bounded as a function of $g$ (Theorem~\ref{thm:systolevalue}), the number of edges in $P$ is of order $\Omega(\log g)$. Then, the number of iterations of the while loop is of order $\Omega(\log g)$, so $|{\dummy}|$ has cardinality of order $\Omega(g \log g)$. The result follows by combining the lower and upper bounds.
\end{proof}

\subsection{Experimental results for small genus\label{sec:dummy-exp}}
The refinement algorithm and the symmetric algorithm have been implemented. The implementation uses the \expr number type \cite{core:library} to represent coordinates of points, which are algebraic numbers. 

For the Bolza surface (genus 2), both algorithms compute a set of 22 dummy
points. In Figure~\ref{fig:dummy-bolza} we have shown the dummy point set computed by the symmetric algorithm.
However, a smaller set, consisting of 14 dummy points, was proposed earlier~\cite{btv-dtosl-16}: in addition to the six Weierstrass points, it contains the eight midpoints of the segments $[O, \vh_k], \;k = 0, 1, \ldots 7$  (see
Figure~\ref{fig:dummy-bolza}). 

\begin{figure}[ht]
	\centering
	\includegraphics[width=.45\textwidth,valign=t]{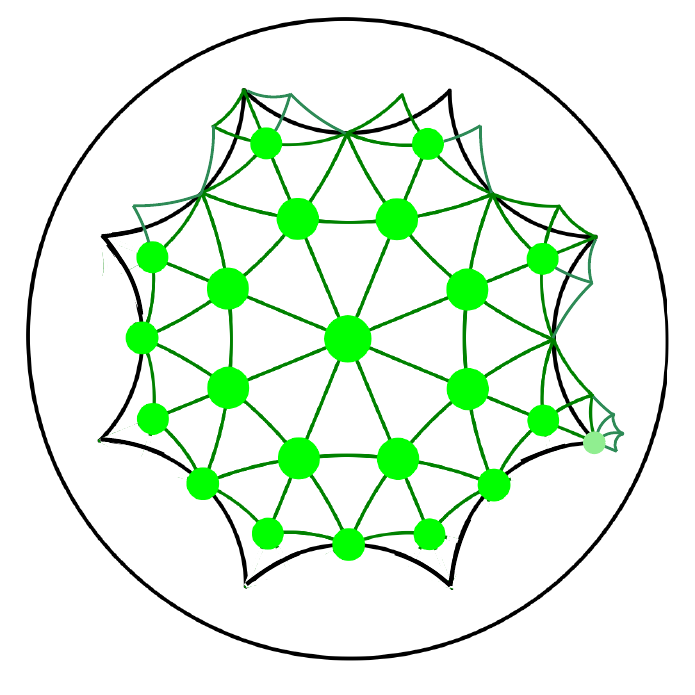}
	\includegraphics[width=.42\textwidth,valign=t]{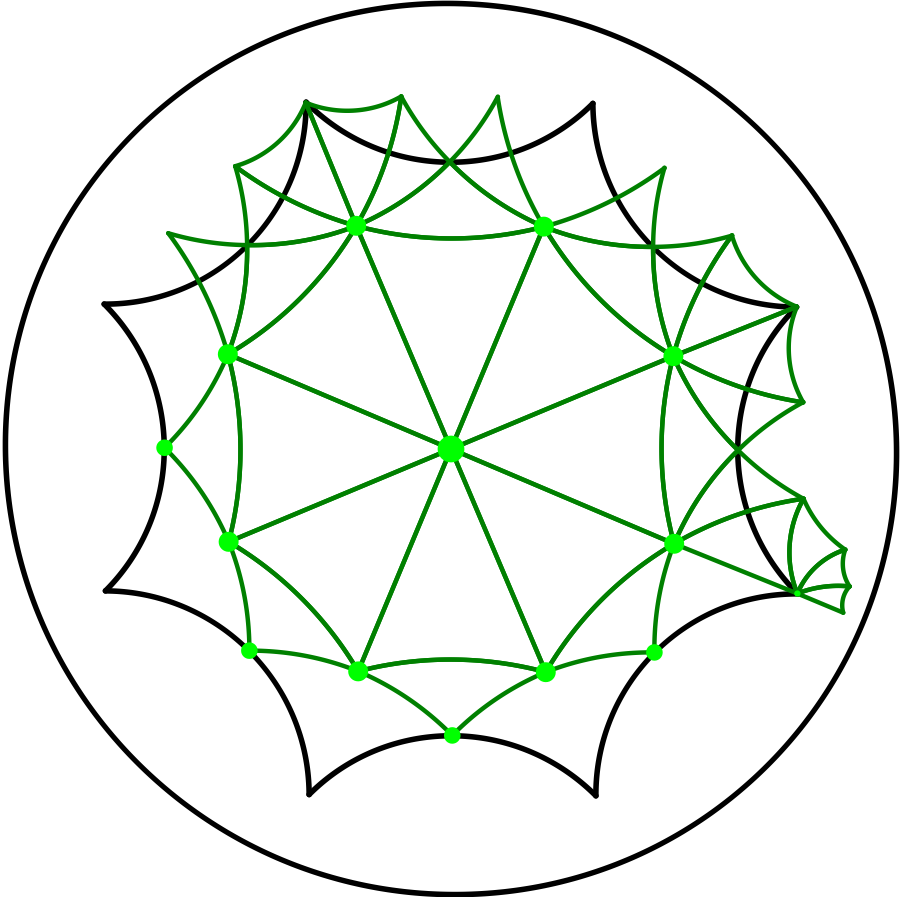}
	\caption{Set of 22 dummy points for the Bolza surface computed by the
		symmetric algorithm (left) and set of 14 dummy points
		constructed by hand~\cite{btv-dtosl-16} (right).}
	\label{fig:dummy-bolza}
\end{figure}

The computation does not terminate for higher genus after seven hours of computations when performing the computations exactly. To be able to obtain a result, we impose a finite precision to \expr. 

For genus~3, we obtain sets of dummy points with both strategies with precision $512\times g$ bits (chosen empirically). The refinement algorithm yields a set of 28 dummy points (Figure~\ref{fig:simplealgorithm}), while the symmetric algorithm leads to 32 dummy points (Figure~\ref{fig:symmetricalgorithm}). 
Computing dummy point sets for Bolza surfaces of higher genus poses a challenge regarding the evaluation of algebraic expressions. Our experiments show that we have to design a new strategy for arithmetic computations, which goes beyond the scope of this paper. 	


\section{Data structure, predicates, and implementation}
\label{sec:representation-of-delaunay-triangulations}

In this section, we detail two major aspects of Bowyer's algorithm for
generalized Bolza surfaces. On the one hand, the combinatorial aspect,
i.e., the data structure and the way it supports the algorithm, is
studied in Section~\ref{sec:data-structure}.  On the other hand, the
algebraic degree of the predicates based on which the decisions are
made by the algorithm is analyzed in Section~\ref{sec:predicates}.
Finally, we report on our implementation and experimental results in
Section~\ref{sec:implementation}.

Let us first define a unique canonical representative for each
triangle of a triangulation, which is a major ingredient for the data
structure.

\subsection{Canonical representatives}
\label{sec:canonical}

We have defined in Section~\ref{sec:generalized} the canonical
representative of a point on the surface $\Mg$. Let us now determine
a unique canonical representative for each orbit of a triangle in
$\mD$ under the action of $\Gg$.

We consider all the
neighboring regions, i.e., the images of $D_g$ by a translation in
$\N\setminus\{\eg\}$ (see Section~\ref{sec:generalized}, to be ordered counterclockwise around $0$,
starting with the Dirichlet region
\[ \prod_{j=0}^{2g-1}f_{j(2g+1)} (D_g) =f_0f_{2g+1}f_{2(2g+1)}\ldots f_{(2g+1)^2} (D_g) \]
(where indices are taken modulo $4g$)
incident to $\vh_0$, which gives an ordering of $\N\setminus\{\eg\}$. An illustration
for genus 2 is shown in
Figure~\ref{fig:neighboring-region}.

\begin{figure}[htb]
	\centering
	\includegraphics[height=5cm]{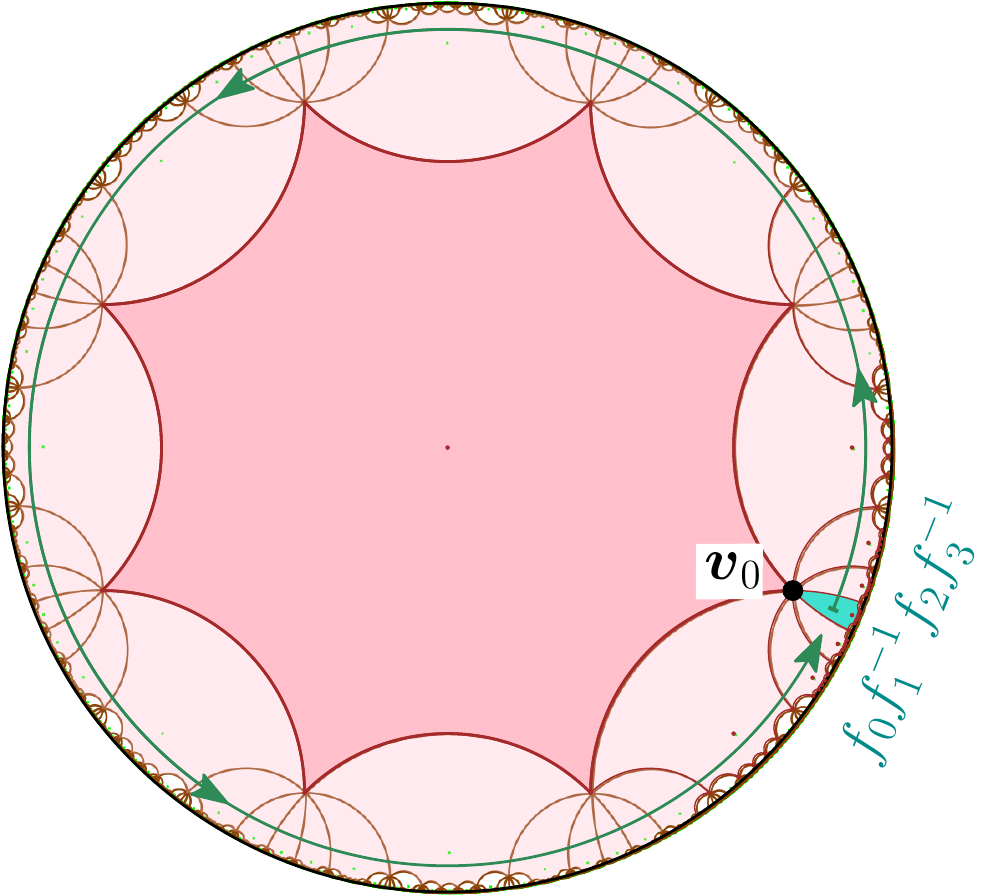}
	\caption{The ordering of $\Nb$ starts with
		$f_0f_{1}^{-1}f_2f_{3}^{-1}=f_0f_5f_2f_7$.}
	\label{fig:neighboring-region}
\end{figure}

We say that a triangle in $\mD$ is \emph{admissible} if its
circumdiameter is less than half the systole of $\Mg$. We can prove
the following property:

\begin{proposition}[Inclusion property]\label{prop:inclusionproperty}
	If at least one vertex of an admissible triangle is contained in $\Do$, then the whole triangle is contained in $D_{\N}$.
	\label{inclusion-property}
\end{proposition}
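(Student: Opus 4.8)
The plan is to bound the diameter of an admissible triangle $t$ in $\mD$ with a vertex in $\Do$ and then argue that any region $f(D_g)$ with $f \in \Gg \setminus \N$ is too far from $\Do$ to meet such a small triangle. Recall that admissibility means the circumdiameter of $t$ is strictly less than $\tfrac12\sysg$, so in particular the (Euclidean/hyperbolic) diameter of $t$ is less than $\tfrac12\sysg$; hence if $t$ has a vertex in $\Do \subset D_g$, then $t$ is contained in the closed hyperbolic neighborhood of $D_g$ of radius $\tfrac12\sysg$. So it suffices to show that this neighborhood is covered by $D_\N$, i.e. that every region $f(D_g)$ that comes within hyperbolic distance $\tfrac12\sysg$ of $D_g$ satisfies $f \in \N$, equivalently $f(D_g) \cap D_g \neq \emptyset$.

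The key geometric fact I would isolate is a lower bound on the distance $d(D_g, f(D_g))$ for $f \notin \N$: I claim this distance is at least $\tfrac12\sysg$ (and in fact I'd try to show it is at least as large as the length of any segment crossing $D_g$ with separation $\ge 2$, which by Part~\ref{item:sep2} of Lemma~\ref{lem:separationlowerbound} is $\ge \tfrac12\varsigma_g = \tfrac12\sysg$). To connect the two, I would use the combinatorial structure already set up: if $f(D_g)$ is disjoint from $D_g$ but within distance $\tfrac12\sysg$ of it, take the shortest geodesic segment $\sigma$ realizing $d(D_g, f(D_g))$; this segment crosses a chain of Dirichlet regions $D_g = F_0(D_g), F_1(D_g), \ldots, F_r(D_g) = f(D_g)$, and since $f \notin \N$ the chain has length $r \ge 2$, so $\sigma$ crosses $D_g$ — or rather its translate $F_1(D_g)$ — fully, entering through one side and leaving through another. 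Pulling back by $F_1^{-1}$ gives a geodesic segment inside $D_g$ between two of its sides; I would argue these two sides cannot be adjacent (separation $1$) because then $F_1$ and $f$ would both be in $\N$ by the neighbor enumeration of Section~\ref{sec:generalized} (the $4g$ regions around a vertex are exactly the $\N$-translates), forcing separation $\ge 2$ and hence length $\ge \tfrac12\sysg$ by Lemma~\ref{lem:separationlowerbound}\eqref{item:sep2}. This contradicts $d(D_g,f(D_g)) < \tfrac12\sysg$.

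Concretely the steps are: (1) from admissibility and the vertex-in-$\Do$ hypothesis, conclude $t \subseteq \{x : d(x,D_g) < \tfrac12\sysg\}$; (2) show $\{x : d(x,D_g) < \tfrac12\sysg\} \subseteq \bigcup_{f \in \N} f(D_g) = D_\N$ by proving $d(D_g, f(D_g)) \ge \tfrac12\sysg$ for all $f \in \Gg\setminus\N$; (3) prove that last inequality by the chain-of-regions argument, reducing to a geodesic segment inside $D_g$ of separation $\ge 2$ and invoking Lemma~\ref{lem:separationlowerbound}; (4) combine to get $t \subseteq D_\N$.

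The main obstacle I anticipate is Step (3): making rigorous the claim that a shortest segment between two non-neighboring regions must, after translating one intermediate region back into $D_g$, produce a segment of separation at least $2$ — one has to handle the possibility that the minimizing segment passes through vertices of the tessellation (so the "chain" is not a generic transversal), and one must check carefully that the side-pairing bookkeeping (which sides of $D_g$ get identified along the chain, as in the $F_k^{-1}F_{k-1} = f_{j_k}$ analysis of Section~\ref{sec:representationofclosedgeodesic}) indeed forces $f \in \N$ whenever only separation-$1$ crossings occur. A cleaner alternative, which I would try first, is to avoid chains entirely: directly show that if $f(D_g)$ and $D_g$ are disjoint, any path between them of length $< \tfrac12\sysg$ would project to a non-contractible loop on $\Mg$ shorter than the systole (since $f \neq \eg$ is a nontrivial deck transformation and concatenating the segment with an $f$-translate of a path inside $D_g$ back to the start yields a loop representing $f$), contradicting the definition of $\sysg$ — modulo checking that the loop is genuinely non-contractible and short enough, this may be the quickest route and sidesteps the separation case analysis.
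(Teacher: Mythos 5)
Your overall reduction is the right one and matches the paper's: it suffices to show that every point within hyperbolic distance $\tfrac12\sysg$ of $D_g$ lies in $D_\N$, equivalently that $d(\partial D_g,\partial D_\N)\geq\tfrac12\sysg$, and the engine for this is Lemma~\ref{lem:separationlowerbound}. The paper proves exactly that distance bound, by taking $\ph\in\partial D_g$ (WLOG $\ph\in\sh_0$), $\qh\in\partial D_\N$, and analysing the segment $[\ph,\qh]$. Where you and the paper diverge is in the handling of the separation-$1$ case, and your two proposed resolutions for it both fail.

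First, your main chain-of-regions argument asserts that if the intermediate region $F_1(D_g)$ is crossed with separation $1$ then necessarily $f\in\N$. That is only true when the chain has length exactly two (there a separation-$1$ crossing would force $D_g$ and $f(D_g)$ to share the common vertex of the entry and exit sides, contradicting disjointness). For longer chains the claim is simply false: a geodesic can enter $f_0(D_g)$ through $\sh_0$, exit through the adjacent side at $\vh_1$ into a petal around $\vh_1$, and exit that petal through the side at the \emph{other} endpoint of the shared edge — two consecutive separation-$1$ crossings that carry the geodesic out of $D_\N$ without ever making a separation-$\geq 2$ crossing. The fix, which the paper uses, is not to force $f\in\N$ but to invoke Part~\ref{item:sep11} of Lemma~\ref{lem:separationlowerbound}: two \emph{consecutive} $1$-separated segments already have combined length at least $\tfrac12\sysg$. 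You never invoke Part~\ref{item:sep11}, and without it the separation-$1$ chains are genuinely unaccounted for.

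Second, your ``cleaner alternative'' does not give a contradiction with the systole. The loop you construct is $\proj(\sigma)\cdot\proj(\tau)$, where $\sigma$ runs from $\ph\in D_g$ to $\qh\in f(D_g)$ and $\tau$ runs inside $D_g$ from $f^{-1}(\qh)$ back to $\ph$. It is indeed non-contractible (it represents $f\neq\eg$), but its length is $\hlen{\sigma}+\hlen{\tau}$, and nothing bounds $\hlen{\tau}$ by anything related to $\sysg$: the diameter of $D_g$ grows like $\log g$ and vastly exceeds $\sysg$ for all $g$. So the constructed loop is not short, and the definition of the systole is not violated. Unless $\ph$ and $f^{-1}(\qh)$ happen to coincide — which there is no reason for — this route is a dead end. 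The paper's case analysis on $[\ph,\qh]$, using Part~\ref{item:sep2} for a single separation-$\geq 2$ crossing and Part~\ref{item:sep11} for a pair of separation-$1$ crossings, is what actually closes the argument.
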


\begin{proof}
	It is sufficient to show that the distance between the boundary $\partial D_g$ of $D_g$ and the boundary $\partial D_{\N}$ of $D_{\N}$ is at least $\tfrac{1}{2}\sysg$. Consider points $\ph\in \partial D_g$ and $\qh\in\partial D_{\N}$. We will show that $d(\ph,\qh)\geq\tfrac{1}{2}\sysg$. By symmetry of $D_g$, we can assume without loss of generality that $\ph\in \sh_0$. In Section~\ref{sec:caseanalysis}, we gave a definition for a $k$-segment and a $k$-separated segment, where the segment is a hyperbolic line segment between sides of $D_g$. This definition extends naturally to line segments between sides of a translate of $D_g$. 
	
	\begin{figure}[ht]
		\centering
		\begin{subfigure}{0.3\textwidth}
			\centering
			\includegraphics[width=0.6\textwidth]{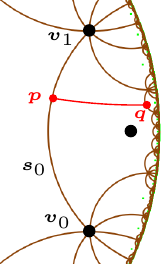}
			\caption{Case where $\qh\in f_0(D_g)$.}
			\label{fig:inclpropcase1}
		\end{subfigure}%
		\begin{subfigure}{0.3\textwidth}
			\centering
			\includegraphics[width=0.6\textwidth]{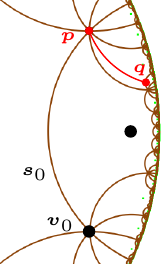}
			\caption{Case where $\ph=\vh_1$.}
			\label{fig:inclpropcase2}
		\end{subfigure}%
		\begin{subfigure}{0.3\textwidth}
			\centering
			\includegraphics[width=0.6\textwidth]{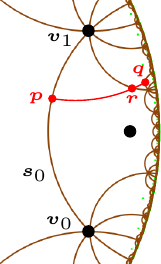}
			\caption{Case where $[\ph,\qh]$ intersects $f_0(D_g)\setminus\partial D_{\N}$.}
			\label{fig:inclpropcase3}
		\end{subfigure}%
		\caption{Cases in the proof of Proposition~\ref{prop:inclusionproperty}.}
		\label{fig:inclusion-property-proof}
	\end{figure}
	
	Recall that $f_0$ is the side-pairing transformation that maps $\sh_{2g}$ to $\sh_0$. First, assume that $\qh\in f_0(D_g)$. Because $\qh\in\partial D_{\N}$, $[\ph,\qh]$ is a segment of separation at least 2 (see Figure~\ref{fig:inclpropcase1}). By Part~\ref{item:sep2} of Lemma~\ref{lem:separationlowerbound}, $d(\ph,\qh)\geq\tfrac{1}{2}\sysg$. 
	
	Second, assume that $\qh\not\in f_0(D_g)$. Without loss of generality, we may assume that $\qh$ is contained in a translate of $D_g$ that contains either $\vh_0$ or $\vh_1$ as a vertex. If $\ph$ is either $\vh_0$ or $\vh_1$, then again $[\ph,\qh]$ is a segment of separation at least 2 (see Figure~\ref{fig:inclpropcase2}), so $d(\ph,\qh)\geq\tfrac{1}{2}\sysg$ by Part~\ref{item:sep2} of Lemma~\ref{lem:separationlowerbound}. If $\ph$ is not a vertex of $D_g$, then $[\ph,\qh]$ intersects one of the two sides in $f_0(D_g)\setminus\partial D_{\N}$, say in a point $\rh$ (see Figure~\ref{fig:inclpropcase3}). In particular, $[\ph,\rh]$ is a 1-separated segment. If $[\rh,\qh]$ is a segment of separation at least 2, then $d(\rh,\qh)\geq\tfrac{1}{2}\sysg$ by Part~\ref{item:sep2} of Lemma~\ref{lem:separationlowerbound}, so $d(\ph,\qh)\geq\tfrac{1}{2}\sysg$ as well. If $[\rh,\qh]$ is a 1-separated segment, then $d(\ph,\rh)+d(\rh,\qh)\geq\tfrac{1}{2}\sysg$ by Part~\ref{item:sep11} of Lemma~\ref{lem:separationlowerbound}.
	
	We have shown that in all cases $d(\ph,\qh)\geq\tfrac{1}{2}\sysg$, which finishes the proof.	
\end{proof}

Let now $\S\subset\Mg$ be a set of points satisfying the validity condition~\eqref{condition}. 
By definition, all triangles in the Delaunay triangulation $\dth{\proj^{-1}(\S)}$ are
admissible and thus satisfy the inclusion property. 
Let $t$ be a face in the Delaunay triangulation $\dtsg{\S}$.

By definition of $\Do$, each
vertex of $t$ has a unique preimage by $\proj$ in $\Do$, so,
the set 
\begin{equation}
	{\Sigma} = \left\{ \th\in \proj^{-1}(t) \; \mid \; \th
	\mbox{ has at least one vertex in } \Do \right\} 
	\label{equation:set-tau}
\end{equation}
contains at most three faces. See Figure~\ref{figure:canonical}. When
$\Sigma$ contains only one face, then this face is completely
included in $\Do$, and we naturally choose it to be the canonical
representative $\can{t}$ of $t$. Let us now assume that
$\Sigma$ contains two or three faces. From
Proposition~\ref{inclusion-property}, each face $\th\in \Sigma$ is
contained in $D_{\N}$. So, for each vertex $\uh$ of $\th$, there is a
unique translation $T(\uh,\th)$ in $\N$ such that $\uh$ lies in
$T(\uh,\th)(\Do)$. This translation is such that 
$$T(\uh,\th)(\can{u})=\uh.$$
Considering the triangles in $\mD$ to be oriented
counterclockwise, for $\th\in\Sigma$, we denote as $\first{\uh}_\th$
the first vertex of $\th$ that is not lying in
$\Do$. Using 
the ordering on $\N$ defined above, we can now choose $\can{t}$ as
the face of $\Sigma$ for which $T(\first{\uh}_{\can{t}},\can{t})$ 
is closest to $f_0f_{2g+1}f_{2(2g+1)}\ldots
f_{(2g+1)^2}$ 
for the counterclockwise order on $\N$.
\begin{figure}[htb]
	\centering
	\includegraphics[width=0.32\textwidth]{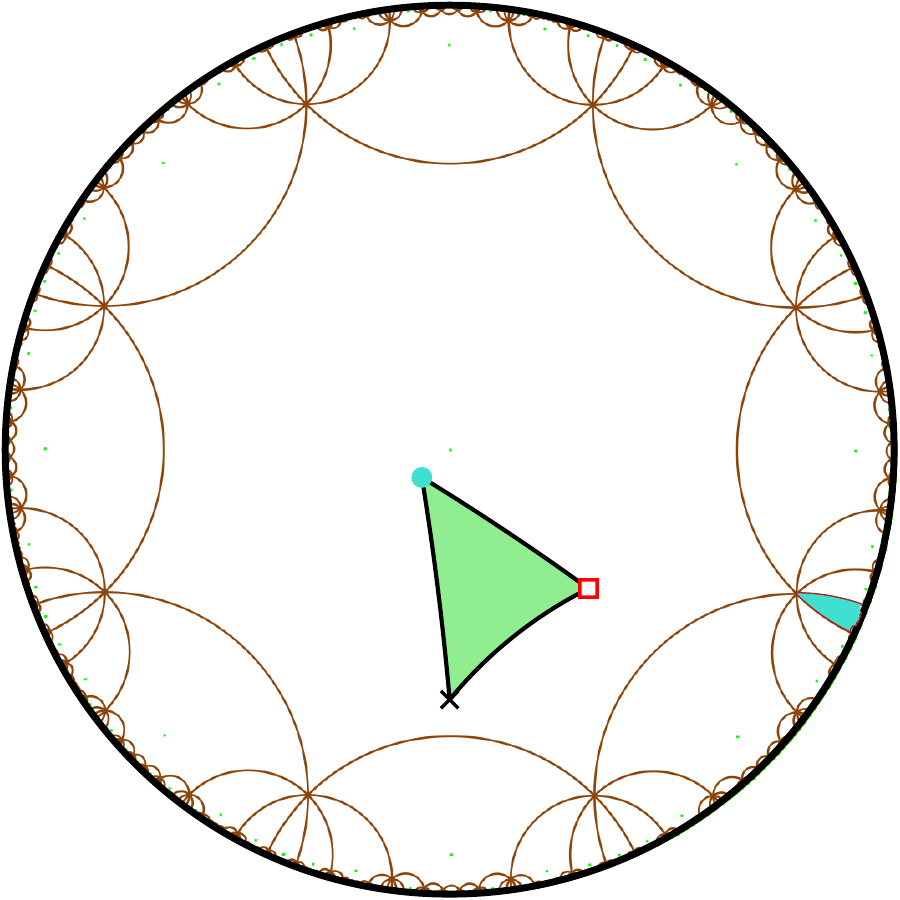}
	\includegraphics[width=0.32\textwidth]{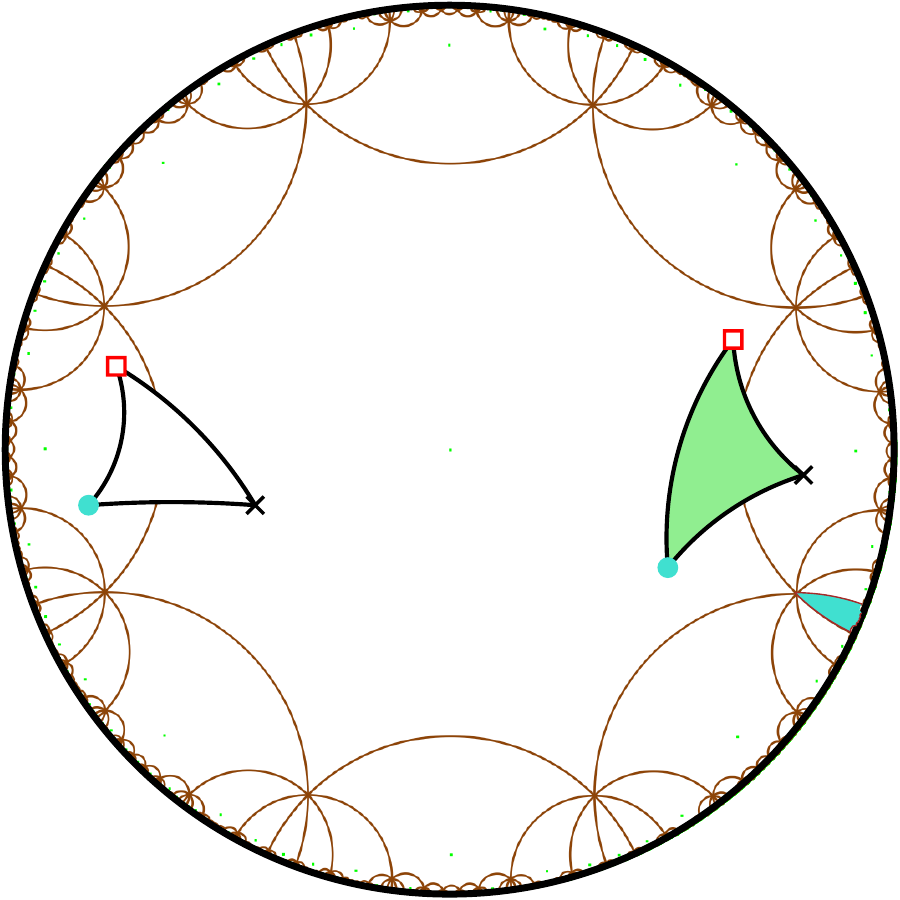}
	\includegraphics[width=0.32\textwidth]{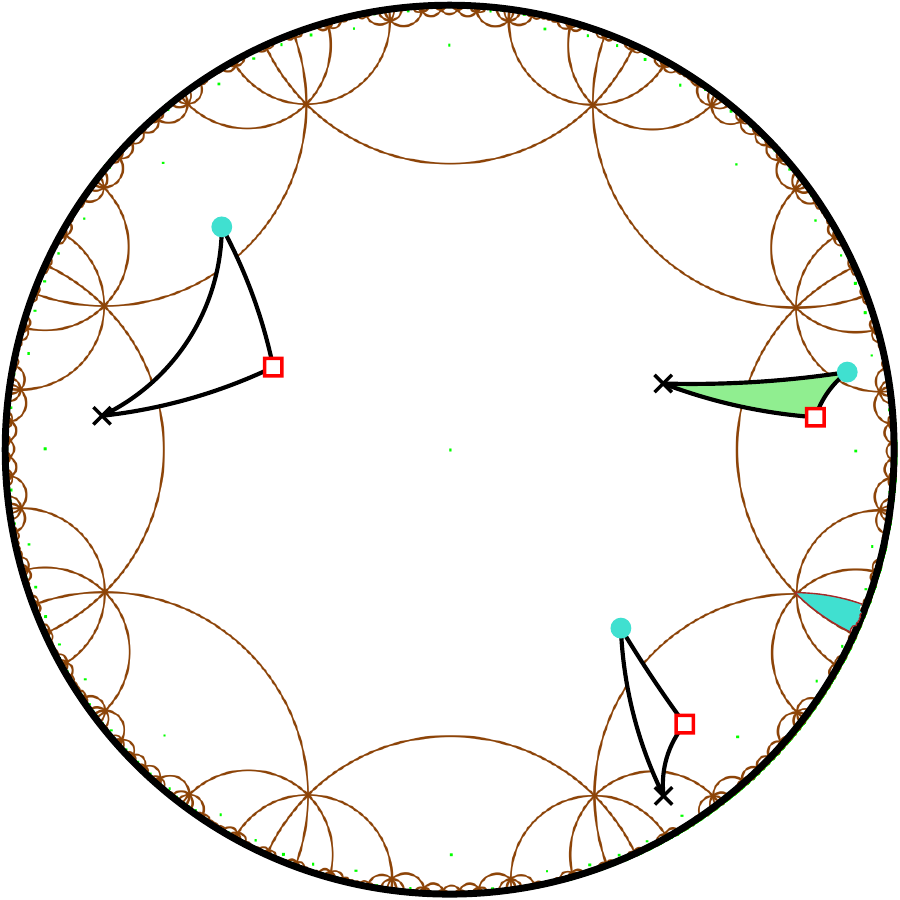}
	\caption[Illustration of candidate faces and canonical
	representatives]{Examples (for $g=2$) of faces of $\dth{\proj^{-1}(\S)}$
		with one (left), two (middle) and three (right) vertices in
		$\Do$ that project to the same face on $\Mg$. Their respective
		vertices drawn as a dot project to the same vertex on $\Mg$
		(same for cross and square). The canonical representative is the
		shaded face.}
	\label{figure:canonical}
\end{figure}

To summarize, we have shown that:

\begin{proposition}
	Let $\S \subset \Mg$ be a set of points satisfying the validity
	condition~\eqref{condition}. For any face $t$ in $\dtsg{\S}$,
	there exists a unique canonical representative
	$\can{t}\subset D_{\N}$ in $\dth{\proj^{-1}(\S)}$.
	\label{prop:canonical-representative-existence}
\end{proposition}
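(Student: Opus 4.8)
The statement collects what was established in the construction preceding it, so the plan is to confirm, in order, that the construction produces (i) finitely many candidates, (ii) all of them inside $D_{\N}$, and (iii) exactly one distinguished candidate, independently of any auxiliary choice. First I would fix a lift $\th_0\in\proj^{-1}(t)$, so that $\proj^{-1}(t)=\Gg\th_0$, and look at the set $\Sigma$ defined in~\eqref{equation:set-tau}. By definition of $\Do$, each vertex $v$ of $t$ has a unique canonical representative $\can{v}\in\Do$. Since $\S$ satisfies the validity condition~\eqref{condition}, Proposition~\ref{thm:condition} ensures that $\dtsg{\S}$ is a simplicial complex, so $t$ has three distinct vertices and, because $\Gg$ acts freely on $\mD$, there is exactly one face of $\proj^{-1}(t)$ incident to $\can{v}$ for each of them. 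Letting $v$ range over the vertices of $t$ gives $1\le|\Sigma|\le 3$, so there are only finitely many candidate representatives.

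Next, every $\th\in\Sigma$ is a face of the Delaunay triangulation $\dth{\proj^{-1}(\S)}$, hence admissible: its circumdiameter is less than $\tfrac12\sysg$ because~\eqref{condition} holds. As $\th$ also has at least one vertex in $\Do$, the inclusion property (Proposition~\ref{prop:inclusionproperty}) yields $\th\subset D_{\N}$. Consequently any face chosen from $\Sigma$ is automatically contained in $D_{\N}$, and the only remaining task is to single one of them out canonically.

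For the selection I would follow the recipe set up above. If $|\Sigma|=1$, its unique element is $\can{t}$. If $|\Sigma|\in\{2,3\}$, each $\th\in\Sigma$ has one or two vertices in $\Do$; since $\th\subset D_{\N}$ and the translates $\{f(\Do)\mid f\in\Gg\}$ tile $\mD$, each vertex $\uh$ of $\th$ lies in $T(\uh,\th)(\Do)$ for a well-defined $T(\uh,\th)\in\N$. Traversing the vertices of $\th$ counterclockwise, starting right after a vertex lying in $\Do$, let $\first{\uh}_\th$ be the first vertex not in $\Do$, and declare $\can{t}$ to be the element $\th$ of $\Sigma$ for which $T(\first{\uh}_\th,\th)$ comes earliest in the fixed counterclockwise ordering of $\N$ (the one starting with $f_0f_{2g+1}\cdots f_{(2g+1)^2}$). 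The set $\Sigma$, and hence $\can{t}$, depends only on the orbit $\proj^{-1}(t)$ and not on the auxiliary lift $\th_0$, so the chosen face is attached to $t$ itself and lies in $D_{\N}$ by the second step.

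The point I expect to require the most care — the real content beyond bookkeeping — is that this last rule is unambiguous: distinct faces of $\Sigma$ must carry distinct translations $T(\first{\uh}_\th,\th)$, so that ``earliest in $\N$'' designates a single face. This should again rest on freeness of the $\Gg$-action, i.e.\ on $\Gg$ being torsion-free: if two lifts $\th,\th'$ of $t$ had the same position relative to $\Do$ — the same tuple of vertex-translations — the element of $\Gg$ carrying $\th$ to $\th'$ would fix a vertex and hence equal $\eg$, forcing $\th=\th'$; specializing this to the distinguished vertex $\first{\uh}$, and using that a nontrivial element of $\Gg$ has nontrivial powers, excludes coincidences among the $T(\first{\uh}_\th,\th)$. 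Assembling (i)--(iii) then gives both the existence and the uniqueness of a canonical representative $\can{t}\subset D_{\N}$.
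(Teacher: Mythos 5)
Your proposal follows the same construction as the paper: form the finite set $\Sigma$ from equation~\eqref{equation:set-tau}, observe $1\le|\Sigma|\le 3$, invoke Proposition~\ref{prop:inclusionproperty} to place every $\th\in\Sigma$ inside $D_{\N}$, and then select among the members of $\Sigma$ by comparing the translations $T(\first{\uh}_\th,\th)$ in the fixed counterclockwise ordering of $\N$. Steps (i) and (ii) are correct and match the paper.

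The step you yourself flag as ``the real content'' --- that distinct $\th,\th'\in\Sigma$ must carry distinct translations $T(\first{\uh}_\th,\th)$ --- is where the argument does not go through as written. You argue: two lifts with the same \emph{tuple} of vertex-translations must coincide (correct, by freeness), and then you ``specialize to $\first{\uh}$.'' But this specialization loses too much: when $|\Sigma|=3$, the three distinguished vertices $\first{\uh}_{\th_0},\first{\uh}_{\th_1},\first{\uh}_{\th_2}$ project to \emph{three different} vertices of $t$ (each $\th_i$ has its own vertex $\can{u}_i\in\Do$, and $\first{\uh}_{\th_i}$ is the next vertex counterclockwise, which is a lift of a different $u_j$). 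Consequently, the equality $T(\first{\uh}_\th,\th)=T(\first{\uh}_{\th'},\th')=h$ does not say that $\th$ and $\th'$ share a vertex, and the full-tuple/freeness argument does not apply. Working it out, that equality yields only $\th'=h^{-1}\th$ together with a constraint like $T(\uh_2,\th)=h^2$ (both $h$ and $h^2$ in $\N$); torsion-freeness or ``nontrivial powers'' does not rule this out, and a genuine argument would have to be geometric (about which products can simultaneously lie in $\N$ while bounding an admissible triangle). It is worth noting that the paper's own text also treats this selection as obviously unambiguous and offers no explicit justification; but since you singled it out as the crux of the proof, the burden is on the argument you give, and as stated it does not close the gap.
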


Using a slight abuse of vocabulary, for a
triangle $\th$ in $\mD$, we will sometimes refer to the canonical
representative $\can{t}$ of its projection $t=\proj(\th)$ as the canonical
representative of $\th$. 

\subsection{Data structure}\label{sec:data-structure}
Proposition~\ref{prop:canonical-representative-existence} allows us to
propose a data structure to represent Delaunay triangulations of
generalized Bolza surfaces. 

A triangulation of a point set $\S\subset\Mg$ is represented via its vertices and triangular
faces. Each vertex $u$ stores its canonical representative $\can{u}$
in $\Do$ and gives access to one of its incident triangles. Each
triangle $t$ is actually storing information to construct its canonical representative
$\can{t}$: it gives access to its three incident vertices $u_0,u_1,$
and $u_2$ and its three adjacent faces; it also stores the three
translations $T(u_j,t) := T\br{\uh_j,\can{t}}, j=0,1,2$ in $\N$ as
defined in Section~\ref{sec:canonical}, so that applying
each translation to the corresponding canonical point yields
the canonical representative $\can{t}$ of $t$, i.e.,
\[ \can{t} = \left( \;
T\br{u_0,t}(\uh_0^\h{c}),\;
T\br{u_1,t}(\uh_1^\h{c}),\;
T\br{u_2,t}(\uh_2^\h{c})\;
\right). \]

In the rest of this section, we show how this data structure 
supports the algorithm that was briefly sketched in
Section~\ref{sec:bower-surf}. 

\paragraph{Finding conflicts.}
The notion of conflict defined in section~\ref{sec:bower-surf} can now be made more explicit: a
triangle $t\in\dtsg{\S}$ is in conflict with a point $p\in\Mg$ if the
circumscribing disk of one of the (at most three) triangles in $\Sigma$ is in
conflict with $\can{p}$, where $\Sigma$ is the set defined by
relation~(\ref{equation:set-tau}).

By the correspondence between Euclidean circles and hyperbolic circles in
the Poincar\'e disk model, the
triangle in the Delaunay triangulation in $\mD$ whose
associated Euclidean triangle contains the point $\can{p}$ is in
conflict with this point; 
these Euclidean and hyperbolic triangles will both be denoted as $\th_p$, which should not
introduce any confusion. To find this triangle, we adapt the so-called
\emph{visibility walk}~\cite{dsm-wt-02}: the walk starts from an
arbitrary face, then, for each visited face, it visits one of its
neighbors, until the face whose associated Euclidean triangle contains
$\can{p}$ is found. This walk will be detailed below. 

We first need some notation. Let $t, t'$ be two adjacent faces in
$\dtsg{\S}$. 
We define the \emph{neighbor translation} $\nbt(\can{\th'},\can{\th})$ from
$\can{\th'}$ to $\can{\th}$ as the translation of $\Gg$ such that 
$\nbt(\can{\th'},\can{\th})(\can{\th'})$ is adjacent to $\can{\th}$
in $\dth{\pi^{-1}(\S)}$. See Figure~\ref{figure:neighbor-offset}. 
Let $u$ be a vertex common to $t$ and $t'$, and let
$\uh_j$ and $\uh_{j'}$ be the vertices of $\can{\th}$ and $\can{\th'}$ that
project on $u$ by $\proj$. We can compute the neighbor translation from $\can{\th'}$ to $\can{\th}$ as
$\nbt(\can{\th'},\can{\th})=T(u_{j}, t) (T(u_{j'}, t'))^{-1}.$
It can be easily seen that 
$\nbt(\can{\th'},\can{\th}) = T(u_j,t) (T(u_{j'},t'))^{-1} = 
\left( T(u_{j'},t')(T(u_j,t))^{-1} \right)^{-1} = (\nbt(\can{\th},\can{\th'}))^{-1}$.

\begin{figure}[htb]
	\centering
	\includegraphics{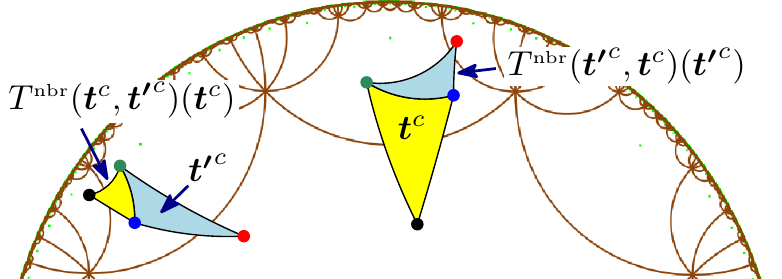}
	\caption{Translating $\can{t'}$ by $\nbt(\can{t'}, \can{t})$ gives a face adjacent to $\can{t}$.}
	\label{figure:neighbor-offset}
\end{figure}

Finally, we define the \emph{location translation} $\ltr$ as the translation that moves the canonical
face $\canindex{t}{p}$ to $\th_p$. This translation is computed during the walk.
The walk starts from a face containing the
origin. As this face is necessarily canonical, $\ltr$ is initialized
to $\eg$. Then, for each visited face $\th$ of $\dth{\pi^{-1}(\S)}$, we consider
the Euclidean edge $\eh$ defined by two of the vertices of $\th$. We
check whether the Euclidean line supporting $\eh$ separates $\can{p}$
from the vertex of $\th$ opposite to $\eh$. If this is the case, the
next visited face is the neighbor $\th'$ of $\th$ through
$\eh$; the location translation is updated:
$\ltr:=\ltr\nbt(\can{\th'},\can{\th})$. The walk stops when it finds
the Euclidean triangle $\th_p$ containing $\can{p}$. Then the
canonical face $\canindex{t}{p}$ in conflict with $\can{p}$ is
$(\ltr)^{-1}(\th_p)$.
See Figure~\ref{fig:bolza-conflict-screenshot} for an
example. Here the walk first visits canonical faces and reaches
the face $\th_{D}\subset D_g$; up to that stage, $\ltr$ is unchanged. Then the walk
visits the non-canonical neighbor $\th'$ of $\th_{D}$, and $\ltr$ is updated to
$\nbt(\can{\th'},\th_D)$. The next face visited by the walk is
$\th_p$, which contains $\can{\ph}$; as $\can{\th_p}$ and $\can{\th'}$ are
adjacent, $\ltr$ is left unchanged.

Let us now present the computation of the set $\h{C}_p$ of faces of
$\dth{\pi^{-1}(\S)}$ in conflict with~$\can{p}$.  Starting from
$\th_p$, for each face of $\dth{\pi^{-1}(\S)}$ in conflict with
$\can{p}$ we recursively examine each neighbor (obtained with a
neighbor translation) that has not yet been visited, checking it for
conflict with $\can{p}$. When a face is found to be in conflict, we
temporarily store directly in each of its vertices the translation
that moves its corresponding canonical point to it (we cannot store
such translations in the face itself, since this face will be deleted
by the insertion). Since the union of the faces of $\h{C}_p$ is a
topological disk by definition, the resulting translation for a given
vertex is the same for all faces of $\h{C}_p$ incident to it, so this
translation is well defined for each vertex. The temporary
translations will be used during the insertion stage described
below. We store the set $\canindex{C}{p}$ of canonical faces
corresponding to faces of $\h{C}_p$. Note that $\canindex{C}{p}$ is
not necessarily a connected region in $\mD$, as illustrated in
Figure~\ref{fig:bolza-conflict-screenshot}(Right). 

\begin{figure}[htb]
	\centering
	\includegraphics[height=6cm]{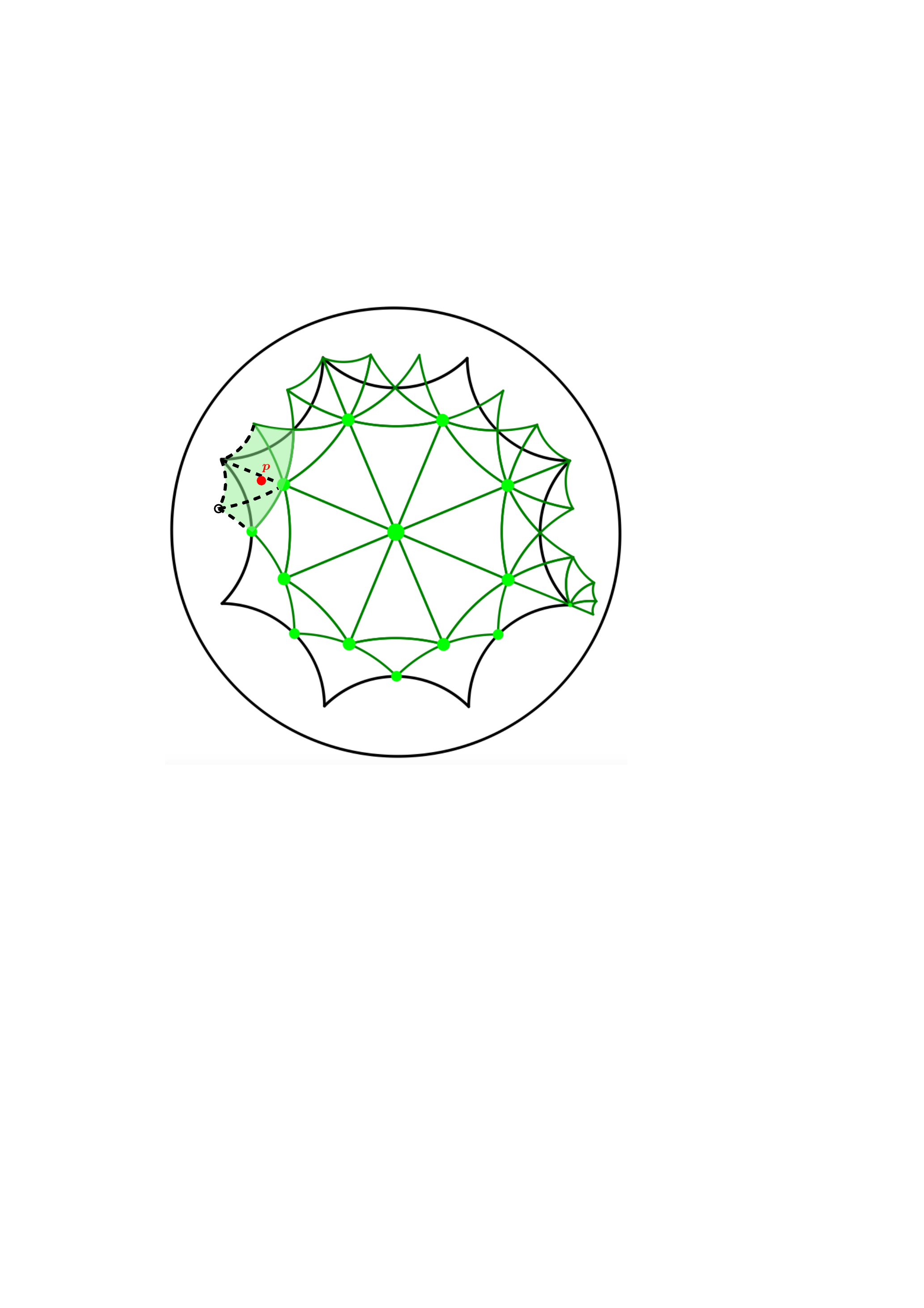}
	\hspace*{.5cm}
	\includegraphics[height=6cm]{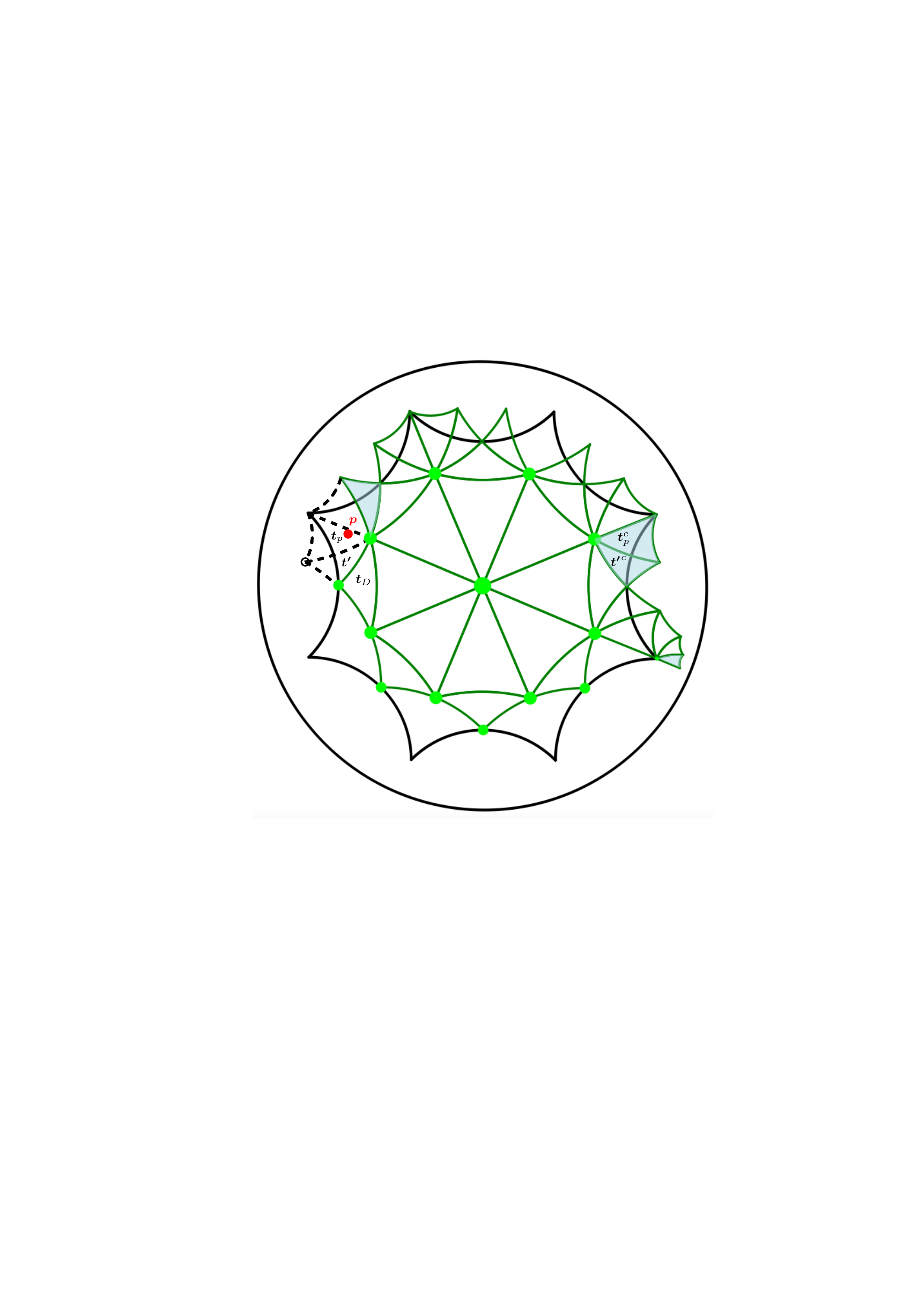} 
	\caption{Left: The shaded faces are the (not necessarily
		canonical) faces in $\h{C}_p$, i.e., faces in
		conflict with the red point $\can{p}$. Their union is a
		topological disk. Right: The region $\canindex{C}{p}$ of the
		(shaded) corresponding canonical
		faces is not connected in~$\mD$.} 
	\label{fig:bolza-conflict-screenshot}
\end{figure}

\paragraph{Inserting a point.} To actually insert the new point $p$ on $\Mg$,
we first create a new vertex storing $\can{p}$.  We store $\eg$
as the temporary translation in the new vertex.

For each edge $\eh$ on the boundary of $\h{C}_p$, we create a new face
$t_\eh$ on $\Mg$ corresponding to the triangle $\th_\eh$ in $\mD$
formed by the new vertex and the edge $\eh$. The neighbor of $t_\eh$
through $\eh$ is the neighbor through $\eh$ of the face in
$\canindex{C}{p}$ that is incident to $\eh$. Two new faces consecutive
along the boundary of $\h{C}_p$ are adjacent. We now delete all faces
in $\h{C}_p$. The triangle $\th_\eh$ is not necessarily the canonical
representative of $t_\eh$; we must now compute the three translations
to be stored in $t_\eh$ to get $\canindex{t}{\eh}$.  To this aim, we
first retrieve the translations temporarily stored in its vertices
$u_j, j=0,1,2$ and we respectively initialize the translations
$T(u_j,t_\eh)$ in $t_\eh$ to them.  If all translations are equal
to~$\eg$, then the face is already canonical and there is nothing more
to do.  Otherwise, the translations stored in the face are updated
following Section~\ref{sec:canonical}: $T(u_j,t_\eh) :=
(T(u_k,t_\eh))^{-1} T(u_j,t_\eh), j=0,1,2$, where $k$ is the index in
$\{0,1,2\}$ for which $\uh_k=\first{\uh}_{\canindex{t}{e}}$.

Once this is done for all new faces, temporary translations stored in
vertices can be removed. 

\subsection{Degree of predicates \label{sec:predicates}}

Following the celebrated exact geometric computation
paradigm~\cite{yd-ecp-95}, the correctness of the combinatorial
structure of the Delaunay triangulation relies on the exact evaluation
of predicates. The main two predicates are 
\begin{itemize}
	\item \orientation, which checks whether an input point $\ph$ in $\Do$
	lies on the right side, the left side, or on an oriented Euclidean
	segment. 
	\item \incircle, which checks whether an input point $\ph$ in $\Do$ lies
	inside, outside, or on the boundary of the disk circumscribing an
	oriented triangle.
\end{itemize} 
Input points, which lie in $\Do$, are tested against canonical
triangles of the triangulation, whose vertices are images of input
points by translations in $\N$. If points are assumed to have
rational coordinates,
then evaluating the predicates boils down to determining the sign of
polynomial expressions whose coefficients are lying in some extension
field of the rationals. Proposition~\ref{prop:degreepredicates-highergenus} gives an upper
bound on the degree of these polynomial expressions. For the special case of the
Bolza surface ($g=2$), it improves the previously known upper bound from
72~\cite[Proposition~1]{it-idtbs-17}, which was proved using symbolic computations in $\textsc{Maple}$, to~48. 

\begin{proposition}\label{prop:degreepredicates-highergenus}
	For the generalized Bolza surface of genus $g\geq 2$, the predicates can be evaluated by determining the sign of rational polynomial expressions of total degree at most $12\varphi(4g)\leq 24g$ in the coordinates of the input points, where $\varphi$ is the Euler totient function. 
\end{proposition}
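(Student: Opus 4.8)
The plan is to trace through the arithmetic that the predicates \orientation\ and \incircle\ actually require, bounding the algebraic degree of the intermediate quantities, and then optimizing the choice of field generator. First I would observe that the coordinates of the vertices of a canonical triangle are obtained from the rational input coordinates by applying a translation $f\in\N\subset\Gg$, which is a product of side-pairings; hence they are rational functions of the entries of the matrices $A_j$ in~\eqref{eq:generator-matrix}. Those entries involve $\cot(\tfrac{\pi}{4g})$ and $\sqrt{\cot^2(\tfrac{\pi}{4g})-1}$, together with the roots of unity $\exp(\tfrac{ij\pi}{2g})$. So the first key step is to identify a single algebraic number $\alpha$ that generates the field containing all these quantities: the natural candidate is a primitive $8g$-th root of unity $\zeta_{8g}$ (or, after clearing, a primitive $4g$-th root with an adjoined square root), since $\cos(\tfrac{\pi}{4g})$ and $\sin(\tfrac{\pi}{4g})$ both live in $\mQ(\zeta_{8g})$, and $\sqrt{\cot^2(\tfrac{\pi}{4g})-1}$ can be rewritten in terms of trigonometric values at multiples of $\tfrac{\pi}{2g}$. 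The degree of this cyclotomic field over $\mQ$ is $\varphi(8g)$, but because we only need the real subfield together with one extra square root, the relevant degree collapses to something controlled by $\varphi(4g)$; I would make this precise by writing each matrix entry explicitly as an element of $\mQ(\zeta_{4g})$ of small degree (degree $1$ or $2$) and invoking $\varphi(4g)\le 2g$ for $g\ge 2$ (with equality patterns depending on the factorization of $g$).

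Next I would count degrees multiplicatively through the predicate. \orientation\ on three points is a $2\times 2$ determinant of differences of coordinates, so it is a polynomial of degree (as a polynomial in the entries of the involved matrices) bounded by a small constant times the number of matrix factors; \incircle\ is the classical $3\times3$ (or $4\times4$ lifted) determinant, which is degree $2$ in each point's coordinates and hence degree bounded by a slightly larger constant. Since each canonical vertex is the image of an input point under one element of $\N$, and elements of $\N$ are words of bounded length in the generators (the neighboring translations), each coordinate is a rational expression whose numerator and denominator have degree $O(1)$ in the entries of at most a fixed number of $A_j$'s. Collecting the three points of a triangle and clearing denominators, the \incircle\ polynomial has total degree at most $12$ in the generator $\alpha$ — this is where the constant $12$ in $12\varphi(4g)$ comes from — and therefore the sign evaluation reduces to determining the sign of an element of $\mQ(\alpha)$, equivalently the sign of a rational polynomial of degree at most $12\cdot[\mQ(\alpha):\mQ] = 12\varphi(4g)$ in the rational input coordinates. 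I would then finish with $\varphi(4g)\le 2g$ to obtain the clean bound $24g$, and separately instantiate $g=2$ to get $12\varphi(8)=12\cdot4=48$, recovering and improving the known bound.

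I expect the main obstacle to be the bookkeeping that pins down the constant $12$ rather than something larger: one must argue carefully that, after choosing the \emph{right} generator $\alpha$, the square root $\sqrt{\cot^2(\tfrac{\pi}{4g})-1}$ does not force an additional quadratic extension beyond what $\zeta_{4g}$ already provides, i.e.\ that it lies in $\mQ(\zeta_{4g})$ (or in its real subfield after suitable normalization). Concretely, $\cot^2(\tfrac{\pi}{4g})-1 = \tfrac{\cos(\tfrac{\pi}{2g})}{?}\cdots$ should be manipulated into a ratio of values of $\cos$ and $\sin$ at angles commensurable with $\tfrac{\pi}{2g}$ whose square root is again such a value up to rational factors — this is the classical half-angle identity $\cot^2\theta - 1 = (\cos 2\theta)/(\sin^2\theta)\cdot(\text{const})$ handled with care about signs since everything here is positive for $g\ge2$. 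Once that reduction is in hand, the degree count is routine linear algebra: write the \incircle\ determinant, substitute the translated coordinates, and read off that the total degree in $\alpha$ is at most $12$. A secondary, more cosmetic obstacle is justifying that we may assume input points have rational coordinates (the paper states this as a hypothesis) and that translations in $\N$ have uniformly bounded word length — the latter follows from $D_{\N}$ consisting of $D_g$ together with the $4g$ vertex-stars, each described by the prefixes $h_j$ of~\eqref{eq:neighbors-vk}, which are words of length at most $2g$ but, crucially, each \emph{individual generator entry} still contributes only bounded degree because distinct generators share the same $\cot(\tfrac{\pi}{4g})$ and differ only by roots of unity already in $\mQ(\zeta_{4g})$.
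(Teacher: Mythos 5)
Your overall strategy --- bound the degree of the number field containing the matrix entries, multiply by the polynomial degree of the \incircle\ determinant, and finish with $\varphi(4g)\le 2g$ --- is the paper's strategy, and both your constant $12$ and the final bound $12\varphi(4g)$ agree with the paper. But the step you yourself identify as ``the main obstacle'' is resolved incorrectly, and that resolution is exactly where the paper does something you are missing.

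You propose to show that the square root $\sqrt{\cot^2(\tfrac{\pi}{4g})-1}$ ``does not force an additional quadratic extension beyond what $\zeta_{4g}$ already provides, i.e.\ that it lies in $\mQ(\zeta_{4g})$ (or in its real subfield after suitable normalization),'' via half-angle identities. This is false. Take $g=2$: then $\cot^2(\tfrac{\pi}{8})-1 = 2+2\sqrt{2}$, and $\alpha := \sqrt{2+2\sqrt{2}}$ satisfies the irreducible quartic $\alpha^4-4\alpha^2-4=0$, so $[\mQ(\alpha):\mQ]=4$; but $\alpha$ is real and the maximal real subfield of $\mQ(\zeta_8)=\mQ(i,\sqrt{2})$ is $\mQ(\sqrt 2)$, of degree $2$. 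Hence $\alpha\notin\mQ(\zeta_8)$. The half-angle identity only yields $\cot^2\theta-1 = \cos(2\theta)/\sin^2\theta$, whose square root involves $\sqrt{\cos(2\theta)}$, and that is not a trigonometric value at a commensurable angle --- the collapse you hope for simply does not happen. Executed honestly, your plan lands you in $L := \mQ\bigl[\zeta_{4g},\,\sqrt{\cot^2(\tfrac{\pi}{4g})-1}\,\bigr]$, which is a genuine degree-$2$ extension of $\mQ(\zeta_{4g})$, so $[L:\mQ]=2\varphi(4g)$; multiplied by your constant $12$ this gives $24\varphi(4g)$, twice the claimed bound.

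The idea you are missing is the paper's use of the real subfield. After substituting the translated coordinates $x_j+iy_j = \bigl(\alpha_j(x_j^c+iy_j^c)+\beta_j\bigr)/\bigl(\bar\beta_j(x_j^c+iy_j^c)+\bar\alpha_j\bigr)$ and rationalizing the denominator, one writes $x_j = P_j/R_j$ and $y_j = Q_j/R_j$ with $P_j,Q_j,R_j$ polynomials of degree $\le 2$ in the rational canonical coordinates whose \emph{coefficients lie in $L\cap\R$}. Since $i\in L$, complex conjugation is a nontrivial automorphism of $L$ with fixed field $L\cap\R$, so $[L:L\cap\R]=2$ and therefore $[L\cap\R:\mQ]=\varphi(4g)$. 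The degree-$2$ cost of the square root is thus absorbed by restricting to the real subfield, not by the square root being a cyclotomic element. That single observation is what turns $24\varphi(4g)$ into $12\varphi(4g)$, and it cannot be replaced by the cyclotomic collapse you had in mind.
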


Recall that the Euler totient function $\varphi(n)$ counts the number
of integers up to a given integer $n$ that are relatively prime to
$n$~\cite{long1972}.

\begin{proof}
	We will only consider the \incircle predicate; the strategy
	for determining the maximum degree for the \orientation
	predicate is similar and the resulting maximum degree is
	lower. The \incircle predicate is given by the sign of
	\begin{align*}
		\incircle(\ph_1,\ph_2,\ph_3,\ph_4) &= \begin{vmatrix}
			x_1 & y_1 & x_1^2+y_1^2 & 1 \\
			x_2 & y_2 & x_2^2+y_2^2 & 1 \\
			x_3 & y_3 & x_3^2+y_3^2 & 1 \\
			x_4 & y_4 & x_4^2+y_4^2 & 1 
		\end{vmatrix}\\
		&= x_3y_4(x_1^2+y_1^2)-x_3y_4(x_2^2+y_2^2)+x_2y_4(x_3^2+y_3^2)-x_2y_3(x_4^2+y_4^2)\ldots,
	\end{align*} 
	for points $\ph_j=x_j+y_ji, j=1,\ldots,4,$ in $\mD$. In the second equality,
	we have just written 4 of the 24 terms to illustrate what the
	terms look like. This will be used later in the proof to
	determine their maximum degree.	Since at least one of the four
	points is contained in $\Do$, we will assume without loss of
	generality that $x_1,y_1\in\mQ$. The other points are images
	of points with rational coordinates under some elements of
	$\Gamma_g$. We know that $\Gamma_g$ is generated by $f_k$ for
	$k=0,\ldots,4g-1$. The translation $f_k$ can be represented by
	the matrix $A_k$ (see Equation~\eqref{eq:generator-matrix} in Section~\ref{sec:generalized}). 
	The entries of $A_k$ are contained in the extension field 
	\[
	L=\mQ\left[\zeta_{4g},\sqrt{\cot^2(\tfrac{\pi}{4g})-1} \;\right],
	\]  
	where $\zeta_{4g}=\exp(\tfrac{\pi i}{2g})$ is a primitive $4g$-th root of unity. The field $L$ is an extension field of degree 2 of the cyclotomic field $\mQ[\zeta_{4g}]$, which is an extension field of degree $\varphi(4g)$ of $\mQ$ so the total degree of $L$ as an extension field of $\mQ$ is $2\varphi(4g)$. Later in the proof, we will actually look at the degree of the field $L\cap \R$ over $\mQ$. Because $L\cap\R$ is the fixed field of $L$ under complex conjugation, $L$ is a quadratic extension of $L\cap \R$. Therefore, the degree of $L\cap\R$ as an extension field of $\mQ$ is $\varphi(4g)$. Since each translation in $\Gamma$ can be represented by a product of matrices $A_k$, it follows that for $j=2,3,4$ we can write
	\[ x_j+y_ji=\dfrac{\alpha_j(x_j^c+y_j^ci)+\beta}{\bar{\beta}(x_j^c+y_j^c)+\bar{\alpha}},  \]
	where $x_j^c$ and $y_j^c$ are the (rational) coordinates of the canonical representative of $x_j+y_ji$ and where $\alpha_j$ and $\beta_j$ are elements of $L$. As usual, we can get rid of the $i$ in the denominator by multiplying numerator and denominator by the complex conjugate of the denominator. Because both the numerator and denominator are linear as function of $x_j^c$ and $y_j^c$, we obtain
	\[ x_j+y_ji=\dfrac{P_j(x_j^c,y_j^c)+Q_j(x_j^c,y_j^c)i}{R_j(x_j^c,y_j^c)},  \]
	where $P_j,Q_j$ and $R_j$ are polynomials in $x_j^c$ and
	$y_j^c$ of total degree at most 2 with coefficients in
	$L\cap\R$. Note that we indeed know that the coefficients are
	real numbers, since by construction we have already split the
	real and imaginary parts. Hence, suppressing the dependencies
	on $x_j^c$ and $y_j^c$, we see that
	\begin{eqnarray*}
		\lefteqn{\incircle(\ph_1,\ph_2,\ph_3,\ph_4)}\\
		&=& \dfrac{P_3Q_4(x_1^2+y_1^2)}{R_3R_4}-\dfrac{P_3Q_4(P_2^2+Q_2^2)}{R_2^2R_3R_4}+\dfrac{P_2Q_4(P_3^2+Q_3^2)}{R_2R_3^2R_4}-\dfrac{P_2Q_3(P_4^2+Q_4^2)}{R_2R_3R_4^2}+\ldots,\\
		&=& \dfrac{R_2^2P_3R_3Q_4R_4(x_1^2+y_1^2)-P_3R_3Q_4R_4(P_2^2+Q_2^2)+\ldots}{R_2^2R_3^2R_4^2}
	\end{eqnarray*}
	Now, testing whether $\incircle(\ph_1,\ph_2,\ph_3,\ph_4)>0$ amounts to testing whether \[R_2^2P_3R_3Q_4R_4(x_1^2+y_1^2)-P_3R_3Q_4R_4(P_2^2+Q_2^2)+\ldots\;>\;R_2^2R_3^2R_4^2.\]
	Since all $P_j,Q_j$ and $R_j$ are polynomials in $x_j^c$ and $y_j^c$ of degree at most 2, this reduces to evaluating a polynomial of total degree at most 12 in the coordinates of the input points, with coefficients in $L\cap\R$. Because $L\cap\R$ is an extension field of $\mQ$ of degree $\varphi(4g)$, we conclude that evaluating $\incircle(\ph_1,\ph_2,\ph_3,\ph_4)$ amounts to determining the sign of a polynomial of total degree at most $12\varphi(4g)$ with rational coefficients. To prove $12\varphi(4g)\leq 24g$, we write $g=2^k g'$ where $g'$ is odd. Then
	\[ \varphi(4g)=\varphi(2^{k+2}g')=\varphi(2^{k+2})\varphi(g')=(2^{k+2}-2^{k+1})\varphi(g')=2^{k+1}\varphi(g').\]
	If $g'=1$, then $\varphi(g')=1$, so $\varphi(4g)=2^{k+1}=2g$. If $g'> 1$, then $\varphi(g')\leq g'-1$, so $\varphi(4g)\leq 2^{k+1}(g'-1)\leq 2(g-1)$. Hence, in both cases $\varphi(4g)\leq 2g$. This finishes the proof.  
\end{proof}

\subsection{Implementation and experimental results \label{sec:implementation}}

The algorithm presented in Section~\ref{sec:computingDT} was
implemented in C++, with the data structure described in Section~\ref{sec:data-structure}.  The preprocessing step consists in computing dummy points
that serve for the initialization of the data structure, following the
two options presented in Section~\ref{sec:combinatoricsofsmallDT}. The
implementation also uses the value of the systole given by
Theorem~\ref{thm:systolevalue}.

Let us continue the discussion on
predicates. 
In practice, the implementation relies on the \expr\ number type
\cite{core:library}, which provides us with exact and filtered
computations. As for the computation of dummy points
(Section~\ref{sec:dummy-exp}), the evaluation exceeds the
capabilities of \core\ for genus bigger than~2, due to the barriers
raised by their very high algebraic degree, so, only a non-robust
implementation of the algorithm can be obtained.

The rest of this section is devoted to the
implementation for the Bolza surface, for which a fully robust
implementation has been
integrated in \cgal~\cite{cgal-it-p4ht2-perm}.  All details can be found in Iordanov's PhD
thesis~\cite{iordanov-tel-02072155}. We only mention a few key points
here. 

To avoid
increasing further the algebraic degree of predicates, the coordinates
of dummy points are rounded to rationals (see
Table~\ref{table:dummy-points}). We have checked that the validity
condition~\eqref{condition} still holds for the rounded points, and
that the combinatorics of the Delaunay triangulations of exact and
rounded points are identical.
\begin{table}[htb]
	\begin{center}
		\caption{Exact and rational expressions for the dummy
			points for the Bolza surface. The midpoint of side~$\h{s}_j$ of the
			fundamental domain is denoted as~$\h{m}_j$. The midpoint of
			segment~$[0,\vh_j]$ is denoted as~$\ph_j$.}
		\begin{tabular}{l|l|l}
			\textbf{Point}  & \textbf{Expression} & \textbf{Rational approximation} \\ \hline
			$\vh_0$           & $\br{ \frac{{2}^{3/4}\sqrt {2+\sqrt {2}}}{4}, -\frac{{2}^{3/4}\sqrt {2-\sqrt {2}}}{4} } $ & 
			$\br{ 97/125, -26/81 }$  \\ 
			$\h{m}_4$           & $\br{-\sqrt{\sqrt{2}-1}, 0 }$ &
			$\br{-9/14, 0}$  \\
			$\h{m}_5$           & $\br{-\frac{\sqrt{2}\sqrt{\sqrt{2}-1}}{2}, -\frac{\sqrt{2}\sqrt{\sqrt{2}-1}}{2} }$ &
			$\br{-5/11, -5/11}$  \\
			$\h{m}_6$           & $\br{ 0, -\sqrt{\sqrt{2}-1} }$ &
			$\br{ 0, -9/14 }$  \\
			$\h{m}_7$           & $\br{ \frac{\sqrt{2}\sqrt{\sqrt{2}-1}}{2}, -\frac{\sqrt{2}\sqrt{\sqrt{2}-1}}{2} }$ &
			$\br{ 5/11, -5/11}$  \\
			$\ph_0$           & $\br{ \frac{2^{1/4} \sqrt{2+\sqrt{2}}}{2\sqrt{2} + 2\sqrt{2-\sqrt{2}}}, -\frac{2^{1/4} \sqrt{2-\sqrt{2}}}{2\sqrt{2} + 2\sqrt{2-\sqrt{2}}} }$ &
			$\br{ 1/2, -4/19 }$ \\
			$\ph_1$           & $\br{ \frac{2^{3/4} \br{\sqrt{2+\sqrt{2}} + \sqrt{2-\sqrt{2}}}}{4\sqrt{2} + 4\sqrt{2-\sqrt{2}}}, \frac{2^{3/4} \br{\sqrt{2+\sqrt{2}} - \sqrt{2-\sqrt{2}}}}{4\sqrt{2} + 4\sqrt{2-\sqrt{2}}} }$ &
			$\br{ 1/2, 4/19 }$  \\
			$\ph_2$           & $\br{ \frac{2^{1/4} \sqrt{2-\sqrt{2}}}{2\sqrt{2} + 2\sqrt{2-\sqrt{2}}},  \frac{2^{1/4} \sqrt{2+\sqrt{2}}}{2\sqrt{2} + 2\sqrt{2-\sqrt{2}}} }$ &
			$\br{ 4/19, 1/2 }$ \\
			$\ph_3$           & $\br{ \frac{2^{3/4} \br{\sqrt{2-\sqrt{2}} - \sqrt{2+\sqrt{2}}}}{4\sqrt{2} + 4\sqrt{2-\sqrt{2}}}, \frac{2^{3/4} \br{\sqrt{2+\sqrt{2}} + \sqrt{2-\sqrt{2}}}}{4\sqrt{2} + 4\sqrt{2-\sqrt{2}}} }$ &
			$\br{-4/19, 1/2 }$ \\
			$\ph_4$           & $\br{-\frac{2^{1/4} \sqrt{2+\sqrt{2}}}{2\sqrt{2} + 2\sqrt{2-\sqrt{2}}},  \frac{2^{1/4} \sqrt{2-\sqrt{2}}}{2\sqrt{2} + 2\sqrt{2-\sqrt{2}}} }$ &
			$\br{-1/2, 4/19 }$ \\
			$\ph_5$           & $\br{-\frac{2^{3/4} \br{\sqrt{2+\sqrt{2}} + \sqrt{2-\sqrt{2}}}}{4\sqrt{2} + 4\sqrt{2-\sqrt{2}}}, \frac{2^{3/4} \br{\sqrt{2-\sqrt{2}} - \sqrt{2+\sqrt{2}}}}{4\sqrt{2} + 4\sqrt{2-\sqrt{2}}} }$ &
			$\br{-1/2, -4/19 }$ \\
			$\ph_6$           & $\br{-\frac{2^{1/4} \sqrt{2-\sqrt{2}}}{2\sqrt{2} + 2\sqrt{2-\sqrt{2}}},  -\frac{2^{1/4} \sqrt{2+\sqrt{2}}}{2\sqrt{2} + 2\sqrt{2-\sqrt{2}}} }$ &
			$\br{-4/19, -1/2 }$ \\
			$\ph_7$           & $\br{ \frac{2^{3/4} \br{\sqrt{2+\sqrt{2}} - \sqrt{2-\sqrt{2}}}}{4\sqrt{2} + 4\sqrt{2-\sqrt{2}}},-\frac{2^{3/4} \br{\sqrt{2-\sqrt{2}} + \sqrt{2+\sqrt{2}}}}{4\sqrt{2} + 4\sqrt{2-\sqrt{2}}} }$ &
			$\br{ 4/19, -1/2 }$
		\end{tabular}
		\label{table:dummy-points}
	\end{center}
\end{table}

Attention has also been paid to the manipulation of translations.  As
seen in Section~\ref{sec:data-structure}, translations are composed
during the execution of the algorithm. To avoid performing the same
multiplications of matrices several times, we actually represent a
translation as a word on the elements of $\Z_8$, where $\Z_8$ is
considered as an alphabet and each element corresponds to a generator
of $\Gb$. The composition of two translations corresponds to the
concatenation of their two corresponding words.
Section~\ref{sec:data-structure} showed
that only the finitely many translations in $\Nb$ must be stored in
the data structure. Moreover, words that appear during the various
steps of the algorithm can be reduced by Dehn's
algorithm~\cite{Dehn1912,g-dawp-60}, yielding a finite number of words
to be stored, so, a map can be used to associate a matrix to each word.
Dehn's algorithm terminates in a finite number of steps and its time
complexity is polynomial in the length of the input word. From
Sections~\ref{sec:canonical} and~\ref{sec:data-structure}, words to be reduced
are formed by the concatenation of two or three words corresponding to
elements of $\Nb$, whose length is not more than four, so, the longest
words to be reduced have length~12.

Running times have been measured on a MacBook
Pro (2015) with processor Intel Core i5, 2.9 GHz, 16 GB and 1867 MHz
RAM, running MacOS X (10.10.5). The code was compiled with
clang-700.1.81. 
We generate 1~million points in the half-open octagon $\Db$ and construct four triangulations: 
\begin{itemize}\cramped
	\item a \cgal\ Euclidean Delaunay triangulation with \texttt{double} as number type. 
	\item a \cgal\ Euclidean Delaunay triangulation with \expr as number type, 
	\item our Delaunay triangulation of the Bolza with \texttt{double} as number type,
	\item our Delaunay triangulation of the Bolza surface with \expr as number type,
\end{itemize}
Note that the implementations using \texttt{double} are not robust and
are only considered for the purpose of this experimentation. The
insertion times are averaged over~10 executions. The results are
reported in Table~\ref{tbl:runtimes}.
\begin{table}[htb]
	\centering
	\begin{tabular}{l|l}
		& Runtime (in seconds) \\ \hline
		Euclidean DT (\texttt{double}) &  1    \\
		Euclidean DT (\expr)           &  24 \\
		Bolza DT (\texttt{double})    &  16 \\
		Bolza DT (\expr)              &  55 \\
	\end{tabular}
	\caption{Runtimes for the computation of Delaunay triangulations of 1
		million random points in the half-open octagon $\Db$.}
	\label{tbl:runtimes}
\end{table}

The experiments confirm the influence of the algebraic demand for the
Bolza surface: almost two thirds of the runnning time is spent in predicate
evaluations. Also, it was observed that only 0.76\% calls to
predicates involve translations in $\Nb$, but these calls account for
36\% of the total time spent in predicates. 

Note also that the triangulation can quickly be cleared of dummy points:
in most runs, all dummy points are removed from the triangulation
after the insertion of~30 to~70 points. 


\section{Conclusion and open problems}
We have extended Bowyer's algorithm to the computation of Delaunay triangulations of point
sets on generalized Bolza surfaces, a particular type of hyperbolic surfaces.
A challenging open problem is the
generalization of our algorithm to arbitrary hyperbolic surfaces. 

One of the main ingredients of our extension of Bowyer's algorithm is the validity condition~\eqref{condition}, and to be able to say whether it holds or not we need to know the value of the systole of the hyperbolic surface. For general hyperbolic surfaces an explicit value, or a `reasonable' lower bound of
the systole, is not known, and there are no efficient
algorithms to compute or approximate it. The effective procedure presented
in~\cite{Akrout2006} is based on the construction of a pants decomposition of a hyperbolic
surface, and computes the systole from the Fenchel-Nielsen coordinates associated with this
decomposition. However, the complexity of this algorithm does not seem to be known, and it is not
clear how to turn this method into an efficient and robust algorithm.

If the systole is known, then it seems that we can use the refinement algorithm presented in Section~\ref{sec:simplealgorithm} to compute a dummy point set satisfying the validity condition. However, in the case of generalized Bolza surfaces it is sufficient to consider only the translates of vertices in $D_{\N}$ by Proposition~\ref{prop:dummy-points-inclusion}, whereas it is not clear how many translates are needed for an arbitrary hyperbolic surface.

A more modest attempt towards generalization could focus on hyperbolic surfaces
represented by a `nice' fundamental polygon. Hyperelliptic surfaces have a point-symmetric
fundamental polygon (See~\cite{SchmutzSchaller1999}), so these surfaces are obvious
candidates for future work.



\appendix

\section{Statement and proof of Lemma~\ref{lem:triangleareabound}}\label{sec:appendixtriangulations}

\begin{lemma}\label{lem:triangleareabound}
	Let $T$ be a hyperbolic triangle with a circumscribed disk of radius $R$. Then
	\[\area(T)\leq \pi-6\arcot(\sqrt{3}\cosh(R)).\]
\end{lemma}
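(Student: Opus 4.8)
The plan is to express $\area(T)$ through the interior angles of $T$ and reduce the statement to an extremal problem over triangles inscribed in a fixed circle. By the Gauss--Bonnet theorem $\area(T)=\pi-(A+B+C)$, where $A,B,C$ are the interior angles of $T$, so the claim is equivalent to $A+B+C\ge 6\arcot(\sqrt3\cosh R)$; equivalently, the maximal area of a hyperbolic triangle whose circumcircle has radius $R$ equals $\pi-6\arcot(\sqrt3\cosh R)$, attained by the equilateral triangle. (The case $R=0$ is the trivial equality $0=\pi-6\arcot\sqrt3$, so assume $R>0$.)

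First I would parametrize such triangles by central angles. Let $O$ be the circumcenter and let $\gamma_1,\gamma_2,\gamma_3\in(0,2\pi)$ be the angular measures of the three arcs cut off by the vertices, the $i$-th arc being the one not containing the $i$-th vertex, so that $\gamma_1+\gamma_2+\gamma_3=2\pi$ and at most one $\gamma_i$ exceeds $\pi$. The triangle $T$ then decomposes into the three isosceles triangles with apex $O$, legs of length $R$, and apex angles $\gamma_i$ (or, when some $\gamma_3>\pi$ so that $O\notin T$, into two such triangles minus the one of apex angle $2\pi-\gamma_3$). Dropping the altitude from $O$ in each and using the right-triangle identity $\cosh(\text{hypotenuse})=\cot(\alpha)\cot(\beta)$ relating the hypotenuse to the two acute angles, the base angle of the isosceles triangle of apex angle $2u$ is $\psi(u):=\arctan\!\bigl(\cot u/\cosh R\bigr)$, and a short computation collapses all three cases into the single formula
\[
\area(T)=\pi-2\bigl(\psi(\tfrac{\gamma_1}{2})+\psi(\tfrac{\gamma_2}{2})+\psi(\tfrac{\gamma_3}{2})\bigr),
\]
where $\psi$ is extended continuously to $[0,\pi]$ by $\psi(0)=\tfrac\pi2$, $\psi(\pi)=-\tfrac\pi2$, and satisfies $\psi(\pi-u)=-\psi(u)$. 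Maximizing $\area(T)$ thus amounts to minimizing $F(\gamma_1,\gamma_2,\gamma_3)=\sum_i\psi(\gamma_i/2)$ over the compact simplex $\Delta=\{\gamma_i\ge0,\ \sum_i\gamma_i=2\pi\}$.

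For the minimization I would compute $\psi'(u)=-\cosh R\,\bigl(1+\sinh^2 R\,\sin^2 u\bigr)^{-1}$, which is strictly monotone in $\sin^2 u$ when $R>0$. Hence at an interior critical point the Lagrange conditions $\psi'(\gamma_i/2)=2\lambda$ force $\sin^2(\gamma_1/2)=\sin^2(\gamma_2/2)=\sin^2(\gamma_3/2)$, so either the $\gamma_i$ are all equal or some pair sums to $2\pi$ (which makes the third vanish, i.e.\ lands on $\partial\Delta$); thus the unique interior critical point is $\gamma_1=\gamma_2=\gamma_3=2\pi/3$, where $F=3\psi(\pi/3)=3\arctan\!\bigl(1/(\sqrt3\cosh R)\bigr)=3\arcot(\sqrt3\cosh R)$. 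On $\partial\Delta$ one $\gamma_i$ vanishes and the other two are $\gamma,2\pi-\gamma$, so $\psi(\pi-u)=-\psi(u)$ gives $F\equiv\tfrac\pi2$ there. Since $F$ is continuous on the compact set $\Delta$ it attains its minimum, and since $3\arcot(\sqrt3\cosh R)<\tfrac\pi2$ for $R>0$ this minimum cannot lie on the boundary, hence it equals the interior critical value $3\arcot(\sqrt3\cosh R)$. Therefore $\area(T)=\pi-2F\le\pi-6\arcot(\sqrt3\cosh R)$, with equality for the equilateral triangle; and if one only knows the circumradius is at most $R$, the bound follows a fortiori since $\rho\mapsto\pi-6\arcot(\sqrt3\cosh\rho)$ is increasing.

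The two delicate steps are, first, the geometric bookkeeping behind the uniform area formula --- getting the signs right when the circumcenter lies outside $T$ (the obtuse case, where the decomposition becomes a signed sum) and handling the degenerate right-angled case where $O$ lies on a side; and second, the fact that $\psi$ is convex on $(0,\tfrac\pi2)$ but concave on $(\tfrac\pi2,\pi)$, which rules out a one-line Jensen argument. I expect the latter to be the real obstacle, circumvented here by combining the uniqueness of the interior critical point of $F$ with the constancy of $F$ on $\partial\Delta$.
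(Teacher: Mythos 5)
Your argument is correct and follows the same route as the paper's proof of Lemma~\ref{maxarea} (specialized to $m=3$): decompose $T$ from the circumcenter into isosceles pieces, use the hyperbolic right-triangle identity to express the angle sum of $T$ through the half-central-angles, and locate the extremum of that sum over the constraint simplex by calculus, landing on the equilateral configuration. You are, however, more careful in two places where the paper's write-up glosses over details. First, the paper derives its objective $f(\theta)=\sum_j\arcot(\cosh R\tan\theta_j)$ and the identity $\area(T)=\pi-2f$ from a literal decomposition of $T$ into six right triangles sharing the circumcenter as a vertex, which only makes sense when the circumcenter lies inside $T$ (all $\theta_j<\pi/2$); it then minimizes $f$ over the whole simplex $0\le\theta_j<\pi$ without addressing the obtuse case, where the decomposition becomes a signed one and the naive formula does not directly apply. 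Your $\psi(u)=\arctan(\cot u/\cosh R)$, valued in $(-\pi/2,\pi/2)$ with $\psi(\pi-u)=-\psi(u)$, is precisely the signed version for which $\area(T)=\pi-2\sum_i\psi(\gamma_i/2)$ holds uniformly, so your reduction is valid across the entire simplex. Second, the paper finds the unique interior critical point via an ad hoc $(s,t)$-parametrization but never compares with the boundary of the simplex; you close this by noting $F\equiv\tfrac{\pi}{2}$ on $\partial\Delta$ while $3\arcot(\sqrt3\cosh R)<\tfrac{\pi}{2}$ for $R>0$. Whether one uses Lagrange multipliers or the explicit $(s,t)$ coordinates is cosmetic: both hinge on the same observation that the derivative of the summand is a monotone function of $\sin^2$ of its argument, forcing equality of the central angles at an interior critical point.
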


Lemma~\ref{lem:triangleareabound} is the special case $m=3$ of the following lemma. A proof was given in Ebbens's master's thesis~\cite{ebbens2017}, but for completeness we have included it here as well. 

\begin{lemma}\label{maxarea}
	Let $P$ be a convex hyperbolic $m$-gon for $m\geq 3$ with all vertices on a circle with radius $R$. Then the area of $P$ attains its maximal value $A(R)$ if and only if $P$ is regular and in this case
	\[ \cosh R=\cot\left(\dfrac{\pi}{m} \right)\cot\left( \dfrac{(m-2)\pi-A(R)}{2m}\right).\]
\end{lemma}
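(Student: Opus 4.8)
The plan is to cut $P$ into isoceles triangles from the centre of its circumscribed circle, reduce the maximisation to a one‑variable concavity statement, and handle separately the configurations in which the centre does not lie in the interior of $P$. Write $O$ for the centre of the circle, and note first that a maximiser exists by compactness of the space of $m$‑tuples of points on the circle together with continuity of the area. Suppose first that $O\in\interior P$. Joining $O$ to the $m$ vertices cuts $P$ into $m$ isoceles triangles, the $i$‑th having two sides of length $R$ meeting at $O$ under an angle $\theta_i\in(0,\pi)$, with $\sum_{i=1}^m\theta_i=2\pi$; hence $\area(P)=\sum_{i=1}^m a(\theta_i)$, where $a(\theta)$ denotes the area of the isoceles triangle with apex angle $\theta$ and equal sides of length $R$. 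Dropping the perpendicular from $O$ onto the opposite side splits such a triangle into two right triangles with hypotenuse $R$ and acute angles $\theta/2$ and $\beta(\theta)$; the standard identity $\cosh R=\cot(\theta/2)\cot\beta(\theta)$ of hyperbolic trigonometry (see~\cite{Beardon1983}) gives $\beta(\theta)=\arcot\!\big(\cosh R\,\tan(\theta/2)\big)$, and Gauss--Bonnet gives
\[
a(\theta)=\pi-\theta-2\arcot\!\big(\cosh R\,\tan(\tfrac{\theta}{2})\big).
\]

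The key point is that $a$ is strictly concave on $(0,\pi)$. A direct differentiation yields
\[
a'(\theta)=-1+\frac{2\cosh R}{(\cosh^2 R+1)-(\cosh^2 R-1)\cos\theta},\qquad
a''(\theta)=-\,\frac{2\cosh R\,\sinh^2 R\,\sin\theta}{\big((\cosh^2 R+1)-(\cosh^2 R-1)\cos\theta\big)^2},
\]
so $a''(\theta)<0$ for every $R>0$ and $\theta\in(0,\pi)$. By Jensen's inequality, $\area(P)=\sum_i a(\theta_i)\le m\,a\!\big(\tfrac1m\sum_i\theta_i\big)=m\,a(\tfrac{2\pi}{m})$, with equality exactly when all $\theta_i$ coincide, i.e.\ exactly when $P$ is regular; set $A(R):=m\,a(2\pi/m)$. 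The displayed formula for $A(R)$ then follows by solving $A(R)=(m-2)\pi-2m\arcot(\cosh R\tan(\pi/m))$ for the arc‑cotangent, or, equivalently, by applying $\cosh R=\cot(\pi/m)\cot\beta$ to a half of one of the $m$ congruent triangles into which the regular $m$‑gon decomposes, whose interior angle $2\beta$ equals $\big((m-2)\pi-A(R)\big)/m$ by Gauss--Bonnet.

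It remains to rule out a maximiser with $O\notin\interior P$. Since any $m$ distinct points on a circle are in convex position, such a $P$ is still a convex $m$‑gon, but now exactly one arc, say $\theta_m$, exceeds $\pi$, all vertices lie on the complementary arc, and $\area(P)=\sum_{i=1}^{m-1}a(\theta_i)-a(\sigma)$ with $\sigma:=\sum_{i=1}^{m-1}\theta_i=2\pi-\theta_m\in(0,\pi)$. Applying Jensen to the $m-1$ terms bounds this above by $G(\sigma):=(m-1)\,a\!\big(\tfrac{\sigma}{m-1}\big)-a(\sigma)$; since $a'$ is strictly decreasing and $\sigma/(m-1)<\sigma$ for $m\ge 3$ we get $G'(\sigma)=a'\!\big(\tfrac{\sigma}{m-1}\big)-a'(\sigma)>0$, and since $a(\pi^-)=0$ this gives $\area(P)<(m-1)\,a\!\big(\tfrac{\pi}{m-1}\big)$. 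Hence it suffices to prove $(m-1)\,a\!\big(\tfrac{\pi}{m-1}\big)\le m\,a\!\big(\tfrac{2\pi}{m}\big)=A(R)$, equivalently
\[
(m-1)\arcot\!\Big(\cosh R\,\tan\tfrac{\pi}{2(m-1)}\Big)\ \ge\ m\arcot\!\Big(\cosh R\,\tan\tfrac{\pi}{m}\Big),
\]
which one checks holds with equality precisely at $R=0$ (both sides equal $\tfrac12(m-2)\pi$ there) and, for $R>0$, by observing that the difference of the two sides, as a function of $t=\cosh R$ on $[1,\infty)$, vanishes at $t=1$ and as $t\to\infty$ and has a single interior critical point, hence stays nonnegative. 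This last comparison — of the regular polygon with the degenerate ``all vertices on a half‑circle'' configuration — is the main obstacle; everything else is the concavity computation together with Jensen's inequality.
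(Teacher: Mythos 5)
Your proof is correct in its main lines and takes a genuinely different route from the paper's. The paper cuts $P$ into $2m$ right triangles with half-apex-angles $\theta_j$ (so $\sum\theta_j=\pi$) and performs a hands-on multivariate minimization of $\sum\arcot(\cosh R\tan\theta_j)$ by parametrizing the constraint simplex and taking partial derivatives two variables at a time. You instead cut $P$ into $m$ isoceles triangles with full apex angles summing to $2\pi$, write the single-triangle area $a(\theta)$ in closed form via Gauss--Bonnet, verify $a''<0$ explicitly, and invoke Jensen. This is a cleaner and more conceptual argument; it also yields equality exactly at the regular polygon without any separate reasoning about critical points versus boundary. Moreover, you explicitly treat the configuration $O\notin\interior P$, which the paper never addresses: the paper's relation $\cosh R=\cot\theta_j\cot\alpha_j$ and the area formula $A=\pi-2\sum\alpha_j$ are derived for $\theta_j<\pi/2$, yet the stated constraint region allows $\theta_j$ up to $\pi$, so the paper's proof is silent about precisely the case your last paragraph handles.

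There is, however, a genuine (if small) gap at the very end. After reducing the off-center case to the inequality
\[
(m-1)\arcot\!\Bigl(\cosh R\,\tan\tfrac{\pi}{2(m-1)}\Bigr)\ \ge\ m\arcot\!\Bigl(\cosh R\,\tan\tfrac{\pi}{m}\Bigr),
\]
you observe that the difference $\phi(t)$ of the two sides (with $t=\cosh R$) vanishes at $t=1$ and as $t\to\infty$ and has a single interior critical point, and you conclude ``hence stays nonnegative.'' That conclusion does not follow: a function vanishing at both ends with a single interior critical point is either nonnegative throughout (interior maximum) or nonpositive throughout (interior minimum), and you have not determined which. One way to close the gap is to check $\phi'(1)>0$, which after a short computation (using $\tan x/(1+\tan^2 x)=\tfrac12\sin 2x$) reduces to $m\sin(2\pi/m)>(m-1)\sin(\pi/(m-1))$; this can be verified directly for $m=3,4$ and, for $m\ge 5$, from $\sin x>\tfrac{2}{\pi}x$ on $(0,\pi/2)$, which gives $m\sin(2\pi/m)>4>\pi>(m-1)\sin(\pi/(m-1))$. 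A cleaner finish, though, avoids the whole calculation: since $a(0)=a(\pi)=0$ and $a$ is continuous on $[0,\pi]$ and strictly concave on $(0,\pi)$, the $m$-tuple $\bigl(\tfrac{\pi}{m-1},\dots,\tfrac{\pi}{m-1},\pi\bigr)$ sums to $2\pi$ and is not constant, so strict Jensen gives
\[
(m-1)\,a\!\Bigl(\tfrac{\pi}{m-1}\Bigr)\ =\ (m-1)\,a\!\Bigl(\tfrac{\pi}{m-1}\Bigr)+a(\pi)\ <\ m\,a\!\Bigl(\tfrac{2\pi}{m}\Bigr)\ =\ A(R)
\]
immediately, with no extra monotonicity argument needed. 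With either fix in place, your proof is complete.
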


\begin{proof}
	A lower bound for the circumradius of a polygon given the area of the polygon is given in the literature~\cite{Naatanen1982}. We use the same approach to prove Lemma~\ref{maxarea}. 
	
	Consider $m=3$. Divide $P$ into three pairs of right-angled triangles with angles $\theta_j$ at the center of the circumscribed circle, angles $\alpha_j$ at the vertices and right angles at the edges of $P$ (see Figure~\ref{trianglecirlce}). 
	
	\begin{figure}[ht]
		\centering
		\includegraphics[width=0.5\textwidth]{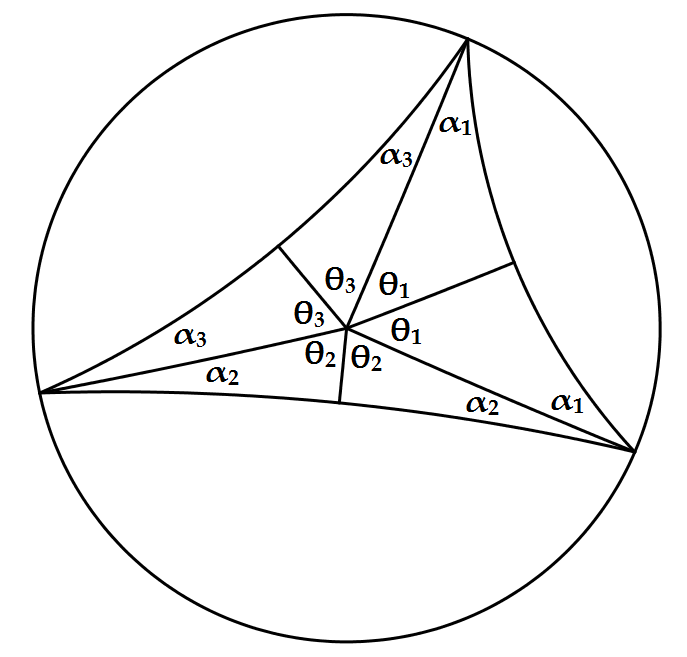}
		\caption{Division of $P$ into three pairs of right-angled triangles}
		\label{trianglecirlce}
	\end{figure}
	
	By the second hyperbolic cosine rule
	$$ \cosh R=\cot\theta_j\cot\alpha_j$$
	for $j=1,2,3$. Furthermore, $\sum_{j=1}^{3}\theta_j=\pi$ and $A=\pi-2\sum_{j=1}^{3}\alpha_j$. Therefore, maximizing $A$ reduces to minimizing
	\begin{equation}\label{minangles}
		f(\theta_1,\theta_2,\theta_3)=\sum_{j=1}^{3}\arcot(\cosh R\tan\theta_j)
	\end{equation}
	subject to the constraints $\sum_{j=1}^{3}\theta_j=\pi$ and $0\leq\theta_j<\pi$, i.e., minimizing~\eqref{minangles} over the triangle in $\R^3$ with vertices $(\pi,0,0),(0,\pi,0),(0,0,\pi)$. We parametrize this triangle as follows
	$$ \theta_1=s+t,\theta_2=s-t,\theta_3=\pi-2s$$
	for $0<s<\pi/2$ and $|t|\leq s$. By~\eqref{minangles}, we can view $f$ as a function of $s$ and $t$. First, we fix $s$ and minimize over $t$. Then
	\begin{align*}
		\dfrac{\partial}{\partial t}f(\theta_1(s,t),\theta_2(s,t),\theta_3(s,t))&=\sum_{j=1}^{3}\dfrac{-\sec^2\theta_j}{1+\cosh^2R\tan^2\theta_j}\dfrac{\partial \theta_j}{\partial t},\\
		&=\dfrac{\sec^2\theta_2}{1+\cosh^2R\tan^2\theta_2}-\dfrac{\sec^2\theta_1}{1+\cosh^2R\tan\theta_1},\\
		&=\dfrac{1}{1+(\cosh^2R-1)\sin^2\theta_2}-\dfrac{1}{1+(\cosh^2R-1)\sin^2\theta_1}.
	\end{align*} 
	Therefore, a minimum is obtained if and only if $\theta_1=\theta_2$, i.e., if and only if $t=0$. In a similar way, we minimize over $s$.
	\begin{align*}
		\dfrac{\partial}{\partial s}f(\theta_1(s,t),\theta_2(s,t),\theta_3(s,t))&=\sum_{j=1}^{3}\dfrac{-\sec^2\theta_j}{1+\cosh^2R\tan^2\theta_j}\dfrac{\partial \theta_j}{\partial s},\\
		&=\dfrac{2}{1+(\cosh^2R-1)\sin^2\theta_3}-\dfrac{2}{1+(\cosh^2R-1)\sin^2\theta_1},
	\end{align*}
	and it follows that a minimum is obtained for $\theta_1=\theta_3$. Therefore, the area of $P$ obtains its maximal value $A(R)$ if and only if $\theta_1=\theta_2=\theta_3=\pi/3$, i.e., if and only if $P$ is a regular triangle. In this case,
	$$\alpha_1=\alpha_2=\alpha_3=\dfrac{\pi-A(R)}{6},$$
	so 
	$$\cosh(R)=\cot\theta_j\cot\alpha_j=\cot\left(\dfrac{\pi}{3}\right)\cot\left(\dfrac{\pi-A(R)}{6}\right).$$
	For arbitrary $m\geq 3$, the proof that maximal area is obtained for a regular polygon is the same but with more parameters. In this case $\theta_j=\pi/m$ and 
	$$ A(R)=(m-2)\pi-2m\alpha_j,$$
	so the area $A(R)$ of the regular polygon is given by 
	$$ \cosh(R)=\cot\theta_j\cot\alpha_j=\cot\left(\dfrac{\pi}{m}\right)\cot\left(\dfrac{(m-2)\pi-A(R)}{2m}\right).$$
\end{proof}

\section{Proof of Lemma~\ref{lem:separationlowerbound}}
\label{sec:appendixsymmetrichyperbolicsurfaces}

We prove the different properties in the same order as the statement of the lemma.	
\begin{enumerate}
	\item Consider a segment $\gammah_j$ of separation $k\geq 4$. By symmetry of $D_g$, we can assume that $\gammah_j$ is a segment between $\sh_0$ and $\sh_k$. The length of $\gammah_j$ is greater than or equal to the distance between $\sh_0$ and $\sh_k$, which is given as the length of the common orthogonal line segment $\gammah_j^\perp$ between $\sh_0$ and $\sh_k$ (see Figure~\ref{fig:separationlowerbound}).    
	
	\begin{figure}[htbp]
		\centering
		\includegraphics[width=0.9\textwidth]{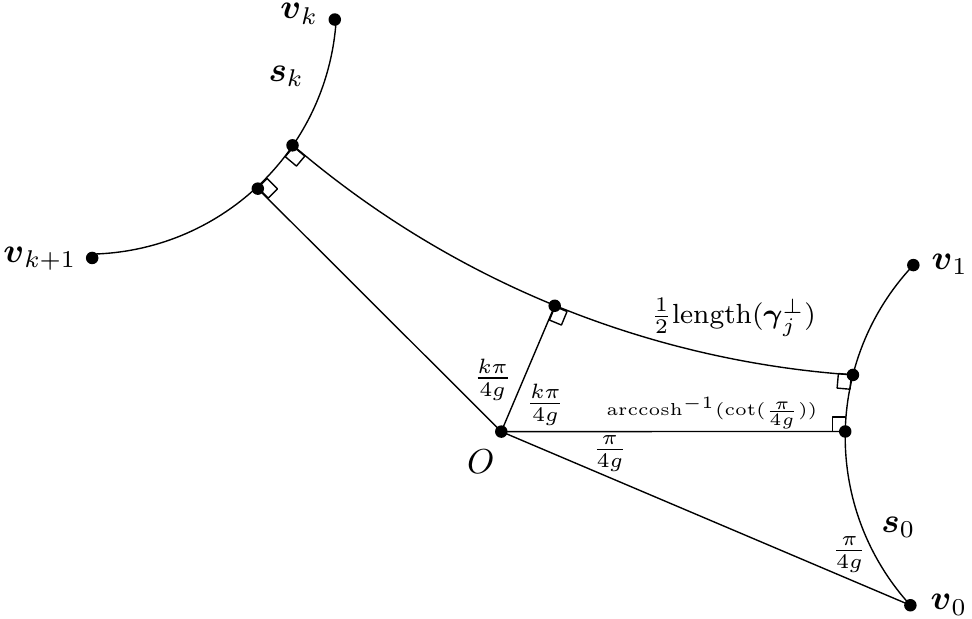}
		\caption{Computing the length of $\gammah_j^\perp$.}
		\label{fig:separationlowerbound}
	\end{figure}
	
	To find an expression for $\hlen{\gammah_j^\perp}$, we draw line segments between the origin $O$ and $\sh_0$ and between $O$ and $\sh_k$. In this way, we obtain a hyperbolic pentagon with four right-angles and remaining angle $\tfrac{k\pi}{2g}$. The line segment from $O$ to $\sh_0$ is a non-hypotenuse side of an isosceles triangle with angles $\tfrac{\pi}{4g},\tfrac{\pi}{4g},\tfrac{\pi}{2}$, as shown in Figure~\ref{fig:separationlowerbound}. Therefore, \cite[Theorem 7.11.3(i)]{Beardon1983} 
	$$ \cosh(d(O,\sh_0))=\dfrac{\cos(\tfrac{\pi}{4g})}{\sin(\tfrac{\pi}{4g})}=\cot(\tfrac{\pi}{4g}).$$
	Drawing a line segment from $O$ orthogonal to $\gammah_j^\perp$, we obtain two quadrilaterals, each of which has three right angles and remaining angle $\tfrac{k\pi}{4g}$. It follows that \cite[Theorem 7.17.1(ii)]{Beardon1983}
	\begin{equation}\label{eq:lengthperp}
		\cosh(\hlen{\gammah_j^\perp})=\cosh(d(O,\sh_0))\sin(\tfrac{k\pi}{4g})=\cot(\tfrac{\pi}{4g})\sin(\tfrac{k\pi}{4g})
	\end{equation}
	The lower bound for the length of a segment of separation at least 4 follows from $\sin(\tfrac{k\pi}{4g})\geq \sin(\tfrac{\pi}{g})$ and a direct computation using properties of trigonometric functions. 
	
	\item Now, consider a segment $\gammah_j$ of separation $k\geq 2$. Using the same argument as in Part~\ref{item:sep4}, wee see that formula~\eqref{eq:lengthperp} still holds. The lower bound for the length of $\gammah_j$ follows from $\sin(\tfrac{k\pi}{4g})\geq
	\sin(\tfrac{\pi}{2g})$ and a direct computation using properties of trigonometric functions.
	
	\item Consider a pair $\gammah_1,\gammah_2$ of consecutive 1-separated segments. By symmetry, we can assume without loss of generality that $\gammah_1$ is a 1-segment between $\sh_0$ and $\sh_1$ (Figure~\ref{fig:proof-1-segments-alternate}). 
	
	\begin{figure}[ht]
		\centering
		\includegraphics[width=.7\textwidth]{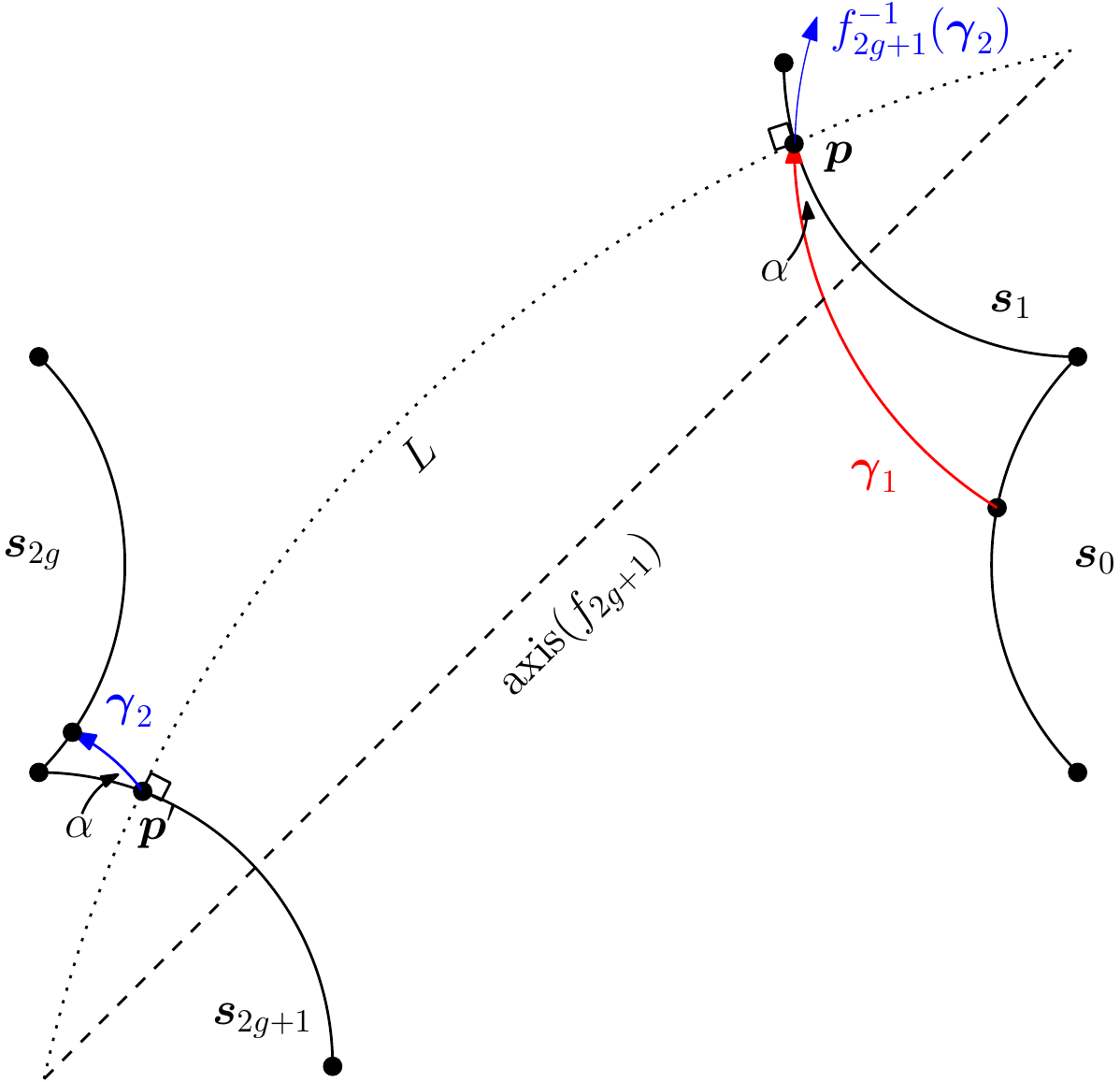}
		\caption{Construction in the proof of Part~\ref{item:sep1alternate} of Lemma~\ref{lem:separationlowerbound}.}
		\label{fig:proof-1-segments-alternate}
	\end{figure}
	
	The side-pairing transformation $f_{2g+1}$ maps the endpoint $\ph$ of $\gammah_1$ to the starting point $\ph'$ of $\gammah_2$. Both $\ph$ and $\ph'$ lie on the same equidistant curve $L$ of the axis of $f_{2g+1}$. The curve $L$ intersects all geodesics that are perpendicular to $\operatorname{axis}(f_{2g+1})$ orthogonally. It can be seen that the angle $\alpha$ between $\gammah_1$ and $\sh_1$ is acute and that $\gammah_1$ and $f_{2g+1}^{-1}(\gammah_2)$ lie on opposite sides of $L$. Moreover, the parts of $\mD$ separated by $L$ are $f_{2g+1}$-invariant (see also Figure~\ref{fig:hyp-transformation} - Right). Hence, $\gammah_1$ and $\gammah_2$ lie on opposite sides of $L$ as well. We conclude that the endpoint of $\gammah_2$ lies on $\sh_{2g}$ so $\gammah_2$ is a $(4g-1)$-segment. 
	
	\item As in Part~\ref{item:sep1alternate}, denote the two segments by $\gammah_1$ and $\gammah_2$. By Part~\ref{item:sep1alternate}, we know that one of $\gammah_1$ and $\gammah_2$ is a 1-segment and the other is a $(4g-1)$-segment. Hence, we can assume without loss of generality that $\gammah_1$ is a $1$-segment between $\sh_0$ and $\sh_1$ and $\gammah_2$ is a $(4g-1)$-segment between $\sh_{2g+1}$ and $\sh_{2g}$ (see Figure~\ref{fig:doublesegmentlowerbound}). Let $x$ be the distance between $\ph_1$ and $\vh_1$ and let $\alpha_1$ be the angle between $\gammah_1$ and $\sh_0$. The distance between $\ph'_1$ and $\vh_{2g+1}$ is $\ell-x$, where the length $\ell$ of the sides satisfies $\cosh(\tfrac{1}{2}\ell)=\cot(\tfrac{\pi}{4g})$. Let $\alpha_2$ be the angle between $\gammah_2$ and $\sh_{2g}$. 
	
	\begin{figure}[htbp]
		\centering
		\includegraphics[width=0.7\textwidth]{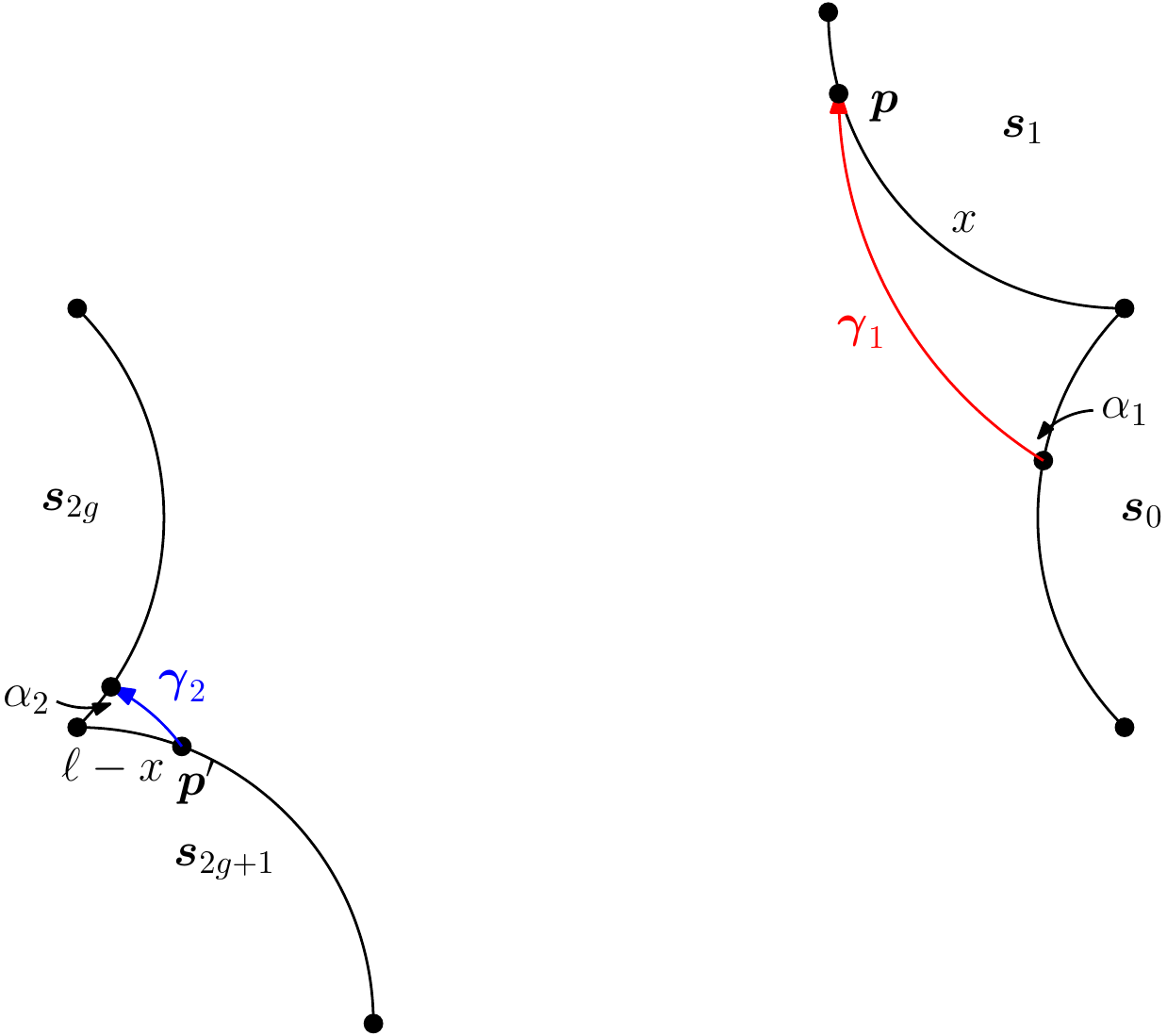}
		\caption{Construction in the proof of Part~\ref{item:sep11} of  Lemma~\ref{lem:separationlowerbound}.}
		\label{fig:doublesegmentlowerbound}
	\end{figure}
	
	By the hyperbolic sine rule,
	$$\dfrac{\sinh(\hlen{\gammah_1})}{\sin(\tfrac{\pi}{2g})}=\dfrac{\sinh(x)}{\sin(\alpha_1)}, $$
	so
	$$ \sinh(\hlen{\gammah_1})\geq \sinh(x)\sin(\tfrac{\pi}{2g}).$$
	In a similar way, we obtain
	$$ \sinh(\hlen{\gammah_2})\geq \sinh(\ell-x)\sin(\tfrac{\pi}{2g}).$$
	We minimize 
	$$ f(x):=\arsinh(\sinh(x)\sin(\tfrac{\pi}{2g}))+\arsinh(\sinh (\ell-x) \sin(\tfrac{\pi}{2g}))$$
	subject to $0< x< \ell$, as this provides a lower bound for $\hlen{\gammah_1\cup\gammah_2}$. Because
	$$ \dfrac{d^2}{dx^2}\;\arsinh(\sinh(x)\sin(\tfrac{\pi}{2g}))=\dfrac{\sin(\tfrac{\pi}{2g})\cos^2(\tfrac{\pi}{2g})\sinh(x)}{(\sin^2(\tfrac{\pi}{2g})\sinh^2(x)+1)^{3/2}}>0$$
	for all $x>0$, the function $x\mapsto \arsinh(\sinh(x)\sin(\tfrac{\pi}{2g}))$ is strictly convex. It follows that $f$ is also strictly convex, so it has a unique global minimum. The derivative of $f$ is given by
	$$ f'(x)=\dfrac{\sin(\tfrac{\pi}{2g})\cosh(x)}{(\sin^2(\tfrac{\pi}{2g})\sinh^2(x)+1)^{1/2}}-\dfrac{\sin(\tfrac{\pi}{2g})\cosh(\ell-x)}{(\sin^2(\tfrac{\pi}{2g})\sinh^2(\ell-x)+1)^{1/2}}.$$
	It is clear that $f'(\tfrac{1}{2}\ell)=0$, so $x=\tfrac{1}{2}\ell$ is the unique minimizer with minimum value $f(\tfrac{1}{2}\ell)=2\arsinh(\sinh(\tfrac{1}{2}\ell)\sin(\tfrac{\pi}{2g}))$. By the discussion above, this implies that
	$$\sinh (\tfrac{1}{2}\hlen{\gammah_1\cup\gammah_2})\geq \sinh(\tfrac{1}{2}\ell)\sin(\tfrac{\pi}{2g}).$$
	Then
	\begin{align*}
		\cosh(\hlen{\gammah_1\cup\gammah_2})&=2\sinh^2(\tfrac{1}{2}\hlen{\gammah_1\cup\gammah_2})+1,\\
		&\geq 2\sinh^2(\tfrac{1}{2}\ell)\sin^2(\tfrac{\pi}{2g})+1,\\
		&= 2(\cot^2(\tfrac{\pi}{4g})-1)\sin^2(\tfrac{\pi}{2g})+1,\\
		&=(1+2\cos(\tfrac{\pi}{2g}))^2,
	\end{align*}
	from which we conclude that $\hlen{\gammah_1\cup\gammah_2}> \tfrac{1}{2}\varsigma_g$.
	
	\item Now, using the notation from Part~\ref{item:sep11}, assume that $\gammah=\gammah_1\cup\gammah_2$. By the argument in Part~\ref{item:sep11}, the length of $\gammah$ is minimal when $\ph_1$ is the midpoint of $\sh_1$ and $\ph'_1$ is the midpoint of $\sh_{2g+1}$, given any location of the starting point of $\gammah_1$ and the endpoint of $\gammah_2$. By symmetry of the argument, it follows that $\hlen{\gammah}$ is minimal when the starting point of $\gammah_1$ is the midpoint of $\sh_0$ and the endpoint of $\gammah_2$ is the midpoint of $\sh_{2g}$. It can be seen that the resulting curve is the curve constructed in the proof of Lemma~\ref{lem:systole}, the length of which is $\varsigma_g$. This finishes the proof.
\end{enumerate}

\section{Proofs omitted in Section~\ref{sec:combinatoricsofsmallDT}}
\label{sec:appendixstructuredalgorithm}

First, we give the proof of Proposition~\ref{prop:dummy-points-inclusion}, which states that for any finite set of points $\dummy$ containing $\W$, each face in $\dth{\proj^{-1}(\dummy)}$ with at least one vertex in $\Do$ is contained in $D_{\N}$.

The proof will use the following lemma.

\begin{lemma}
	Let $C_{\qh}$ be a Euclidean disk centered at $O$ and passing through a point ${\qh}$ (Figure~\ref{fig:proof-gert-fig}). Let $H_1$ and $H_2$ denote the two open half-planes bounded by the Euclidean line through $O$ and ${\qh}$. Let $H_0$ be a half-plane that contains ${\qh}$, bounded by another Euclidean line passing through $O$ but not through ${\qh}$. Let $R_j = \br{H_0 \bigcap H_j}\setminus C_{\qh}, j = 1,2$, and let $\ph_j\in R_j, i=1,2$. The disk $C(\ph_1, \ph_2)$ through $O, \ph_1,$ and $\ph_2$ contains ${\qh}$.
	\label{thm:lemma-gert}
\end{lemma}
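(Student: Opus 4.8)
The plan is to introduce Euclidean coordinates adapted to the configuration, reduce the statement to a claim about the length of a single chord of the circle $C(\ph_1,\ph_2)$, and then dispatch that claim by locating the center of this circle.

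\emph{Coordinates and angular positions.} First I would place $O$ at the origin and $\qh=(r,0)$ with $r=|O\qh|>0$, so that the line $O\qh$ is the $x$-axis, $C_{\qh}$ is the Euclidean disk of radius $r$ about $O$, and, after relabelling if necessary, $H_1=\{y>0\}$ and $H_2=\{y<0\}$. The boundary line of $H_0$ passes through $O$ but not through $\qh$, hence it is not the $x$-axis; writing it as the line through $O$ with argument $\theta\in(0,\pi)$, the unique half-plane bounded by it that contains $\qh=(r,0)$ is the set of points with argument in $(\theta-\pi,\theta)$. Intersecting with $H_1$ and $H_2$, every point of $R_1$ then has argument in $(0,\theta)$ and every point of $R_2$ has argument in $(\theta-\pi,0)$. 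Thus, if $\phi_j$ denotes the argument of $\ph_j$ as seen from $O$, we get $\phi_2<0<\phi_1$ and $0<\phi_1-\phi_2<\pi$; equivalently, the ray $O\qh$ lies strictly inside the angular sector of opening less than $\pi$ spanned by the rays $O\ph_1$ and $O\ph_2$.

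\emph{Reduction to a chord, and the key computation.} Let $\Gamma=C(\ph_1,\ph_2)$ with center $Z$, and write $\hat e=(1,0)$. Since $O\in\Gamma$, the chord of $\Gamma$ issuing from $O$ in the direction $\hat e$ has its second endpoint at $(2\,Z\cdot\hat e)\,\hat e=(2Z_x,0)$. I would then reduce the lemma to the single inequality $Z_x>\tfrac12 r$: this shows at once that this chord runs along the positive $x$-axis and has length $2Z_x>r=|O\qh|$, so $\qh=(r,0)$ lies strictly between the two endpoints of a chord of $\Gamma$, hence strictly inside $\Gamma$. To prove $Z_x>\tfrac12 r$, note that $Z$ lies on the perpendicular bisector of $[O,\ph_j]$, so $Z\cdot\ph_j=\tfrac12|\ph_j|^2$ and therefore $Z\cdot\hat u_j=\tfrac12|\ph_j|$ with $\hat u_j=\ph_j/|\ph_j|$. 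Because $\hat e$ points into the open sector spanned by $O\ph_1$ and $O\ph_2$, we may write $\hat e=\lambda\hat u_1+\mu\hat u_2$ with $\lambda,\mu>0$, and since $\hat u_1\neq\hat u_2$ the triangle inequality gives $\lambda+\mu>|\lambda\hat u_1+\mu\hat u_2|=1$. Hence
\begin{equation*}
	Z_x=Z\cdot\hat e=\lambda\,(Z\cdot\hat u_1)+\mu\,(Z\cdot\hat u_2)=\tfrac12\bigl(\lambda|\ph_1|+\mu|\ph_2|\bigr)>\tfrac12\,r\,(\lambda+\mu)>\tfrac12\,r,
\end{equation*}
the first strict inequality being exactly where the hypothesis $\ph_1,\ph_2\notin C_{\qh}$ (i.e.\ $|\ph_1|,|\ph_2|>r$) is used. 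This would finish the proof.

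\emph{Main obstacle.} The only genuinely delicate point is the angular bookkeeping of the first step: I must verify carefully, from the fact that $H_0$ contains $\qh$ while its boundary line misses $\qh$, and that $H_1,H_2$ are the two sides of the line $O\qh$, that $R_1$ and $R_2$ together straddle the ray $O\qh$ within an angle strictly smaller than $\pi$ (the degenerate case where $O,\ph_1,\ph_2$ are collinear, in which $C(\ph_1,\ph_2)$ is undefined, being excluded automatically by this). Everything after that is elementary. As a cross-check one can also argue purely algebraically: writing $\Gamma$ as $x^2+y^2+Dx+Ey=0$, the conditions $\ph_j\in\Gamma$ determine $D$, and the claim becomes $D<-r$, which follows from the same two facts together with the inequality $\sin\phi_1-\sin\phi_2\geq\sin(\phi_1-\phi_2)$ on the relevant range of arguments; but the center-based argument above seems cleaner and more robust.
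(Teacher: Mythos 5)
Your proof is correct, but it takes a genuinely different route from the paper's. The paper proves the lemma by a topological argument: it first exhibits one pair $(\ph_1,\ph_2)\in R_1\times R_2$ for which $\qh$ is interior to $C(\ph_1,\ph_2)$, then observes that $C(\ph_1,\ph_2)$ depends continuously on $(\ph_1,\ph_2)$ over the connected set $R_1\times R_2$, so it suffices to rule out the degenerate case where $\qh$ lands on the boundary circle. That case is excluded by an inscribed-angle argument: if $O,\qh,\ph_1,\ph_2$ are concircular, then since $\ph_2\in H_2\setminus C_{\qh}$, the arc of the circle lying in $H_1$ must be contained in the disk on diameter $[O,\qh]$, and symmetrically for the $H_2$ arc, forcing $C(\ph_1,\ph_2)$ to be the disk on diameter $[O,\qh]$ and hence forcing $\ph_1,\ph_2\in C_{\qh}$, a contradiction. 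Your argument instead locates the circumcenter $Z$ directly: the relations $Z\cdot\hat u_j=\tfrac12|\ph_j|$, the expression of $\hat e$ as a strictly positive combination $\lambda\hat u_1+\mu\hat u_2$, and the strict triangle inequality $\lambda+\mu>1$ give $Z_x>\tfrac12 r$, which immediately places $\qh$ strictly inside the chord of $C(\ph_1,\ph_2)$ along the positive $x$-axis. This is more elementary and self-contained (no continuity or connectedness of $R_1\times R_2$ is needed), and it exposes precisely where each hypothesis enters: the sector condition yields $\lambda,\mu>0$ and $\lambda+\mu>1$, and $\ph_j\notin C_{\qh}$ yields $|\ph_j|>r$. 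The paper's argument, by contrast, avoids coordinates and is perhaps closer to the geometric picture (circles, inscribed angles) used throughout its subsequent applications. One small caveat on your write-up: the strict inequality $|\ph_j|>r$ requires reading $C_{\qh}$ as the closed disk; if the boundary circle were allowed you would only get $Z_x\geq\tfrac12 r$, with the degenerate equality case corresponding exactly to the configuration the paper rules out via the diameter-disk argument. It would be worth one sentence making that convention explicit.
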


\begin{figure}[htbp]
	\centering
	\includegraphics[width=0.46\textwidth]{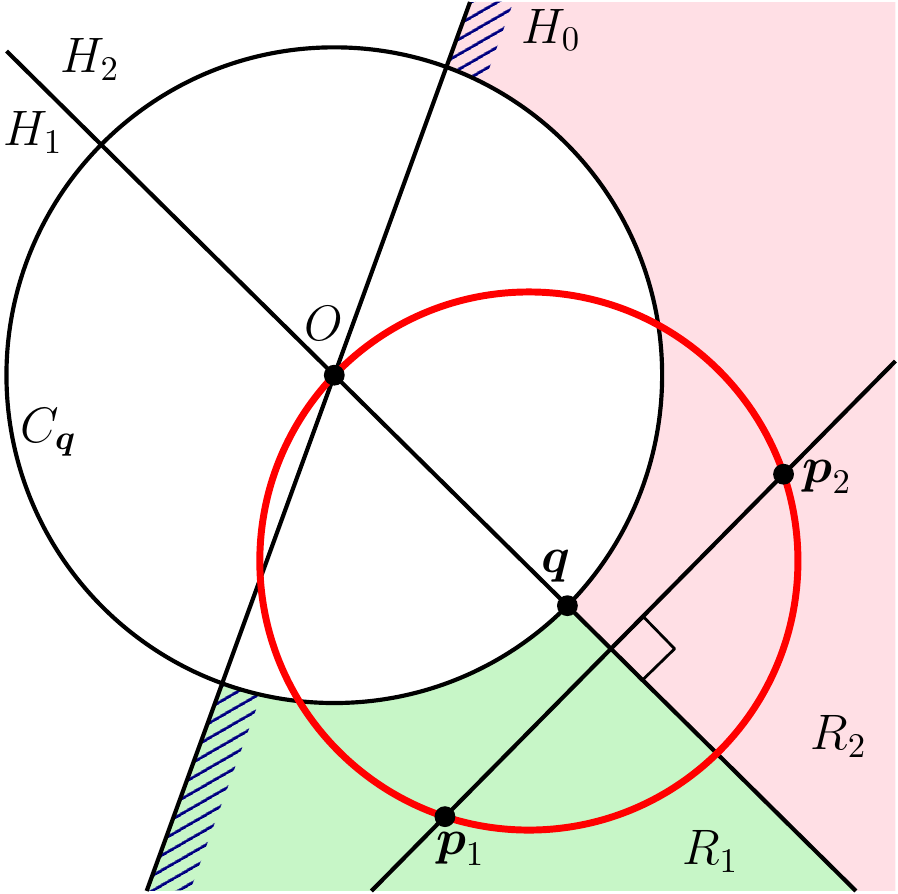} \qquad
	\includegraphics[width=0.46\textwidth]{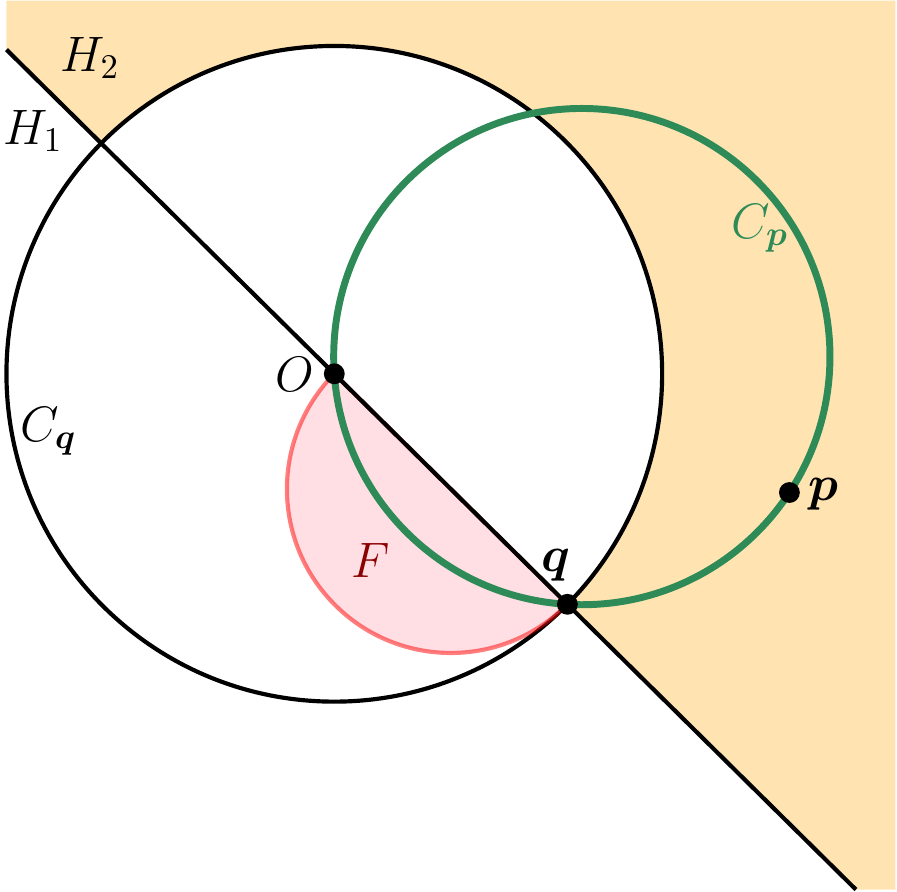}
	\caption[Illustrations for the proof of Lemma~\ref{thm:lemma-gert}]{Illustrations for the proof of Lemma~\ref{thm:lemma-gert}.}
	\label{fig:proof-gert-fig}
\end{figure}

\begin{proof}
	It is easy to verify that there exist pairs of points $(\ph_1, \ph_2)\in R_1\times R_2$ for which the point ${\qh}$ lies inside the disk $C(\ph_1, \ph_2)$. For instance, consider a line perpendicular to the line through $O$ and ${\qh}$ so that ${\qh}$ is closer to $O$ than to their intersection point, as shown in Figure~\ref{fig:proof-gert-fig} - Left. If $\ph_1$ lies on this perpendicular line and $\ph_2$ is the reflection of $\ph_1$ in the line through $O$ and ${\qh}$, then the disk $C(\ph_1, \ph_2)$ contains ${\qh}$. Since this disk varies continuously when $(\ph_1, \ph_2)$ ranges over $R_1\times R_2$, it is sufficient to prove that there are no pairs $(\ph_1^*, \ph_2^*)\in R_1\times R_2$ for which ${\qh}$ lies on the boundary of $C(\ph_1^*, \ph_2^*)$. 
	
	Suppose, for a contradiction, that there exists a pair $(\ph_1^*, \ph_2^*)\in R_1\times R_2$ for which $C(\ph_1^*, \ph_2^*)$ is a disk with ${\qh}$ and $O$ on its boundary. Consider the disk $C_{\qh}$ centered at $O$ and passing through ${\qh}$. Let $F$ be the intersection of the disk with diameter $\seg{O,{\qh}}$ with the half-plane $H_1$, as shown in Figure~\ref{fig:proof-gert-fig} - Right. For any point $\ph\in H_2\setminus C_{\qh}$, the circle $C_{\ph}$ through $O,{\qh},$ and $\ph$ has a non-empty intersection with $H_1$, which is completely included in $F$, so in particular $C(\ph_1^*, \ph_2^*)$ intersects $H_1$ inside the disk with diameter $\seg{O,{\qh}}$. By a symmetric observation, $C(\ph_1^*, \ph_2^*)$ also intersects $H_2$ inside the same disk. Therefore, $C(\ph_1^*, \ph_2^*)$ is the disk with diameter $\seg{O,{\qh}}$. This implies that both $\ph_1$ and $\ph_2$ lie in the disk $C_{\qh}$, which is a contradiction. Therefore, there exists no pair $(\ph_1^*, \ph_2^*)\in R_1\times R_2$ for which $C(\ph_1^*, \ph_2^*)$ has ${\qh}$ on its boundary. This finishes the proof. 
\end{proof}

Note that Lemma~\ref{thm:lemma-gert} can be directly used in the Poincar\'e disk because hyperbolic circles are represented as Euclidean circles, and hyperbolic geodesics through the origin $O$ are supported by Euclidean lines.

We can now proceed with the proof of Proposition~\ref{prop:dummy-points-inclusion}.

\begin{proof}
	\begin{figure}[htp]
		\centering
		\includegraphics[width=0.48\textwidth]{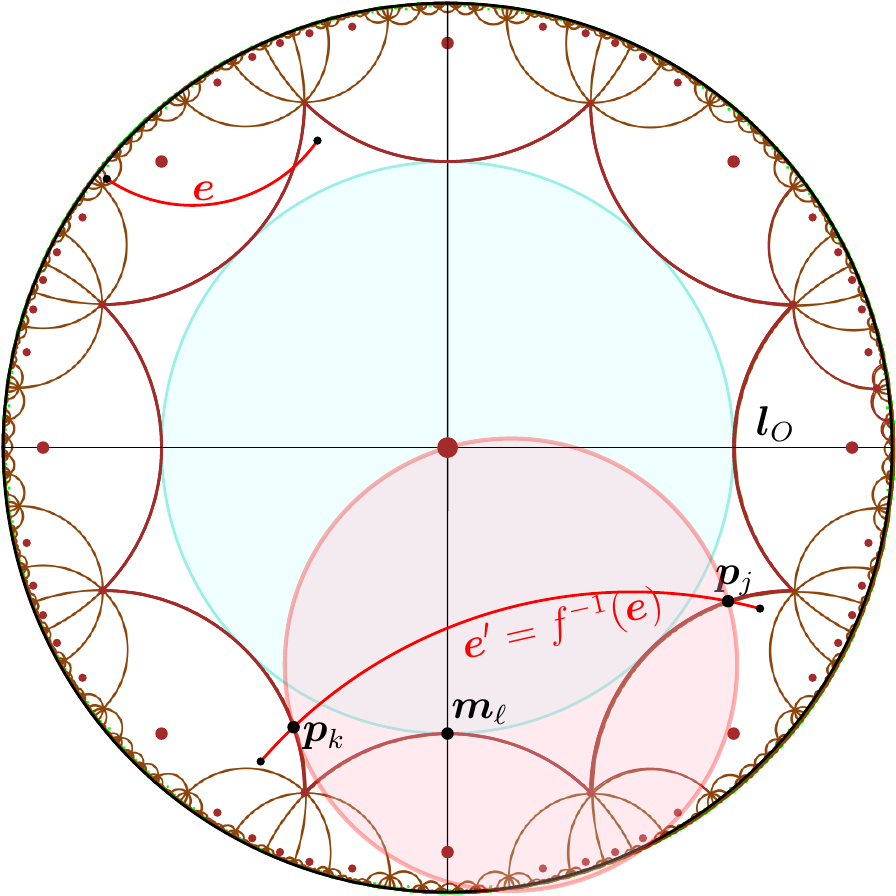}
		\includegraphics[width=0.48\textwidth]{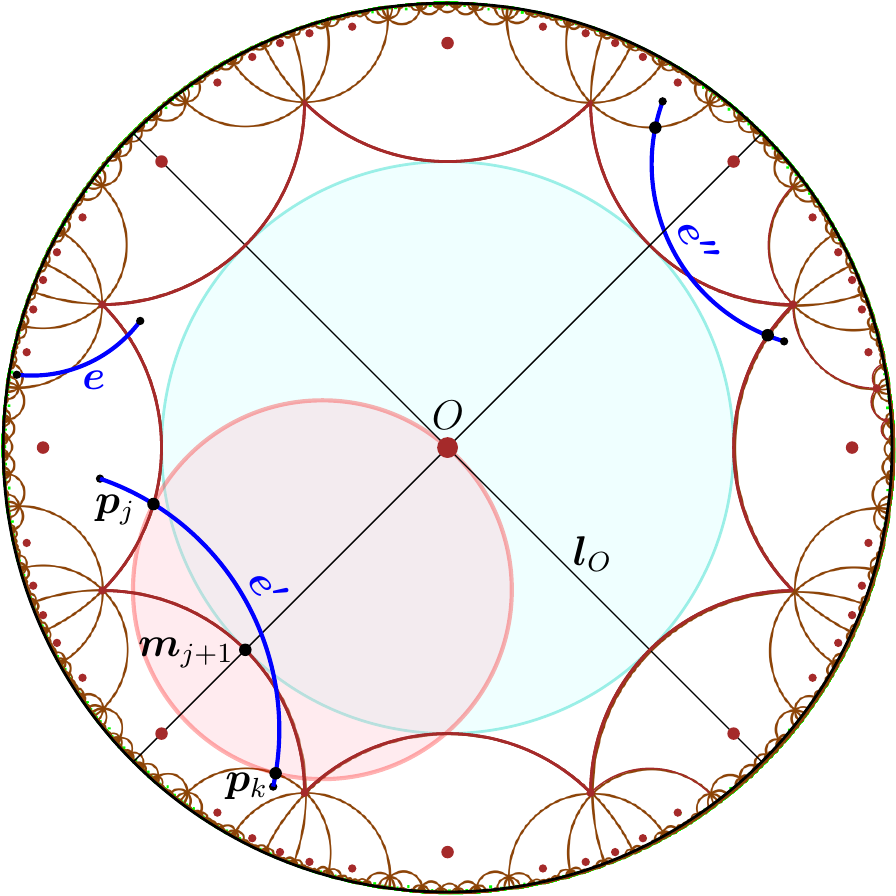}
		\caption[Illustration for the proof of Proposition~\ref{prop:dummy-points-inclusion}]{
			Illustration for the proof of Proposition~\ref{prop:dummy-points-inclusion} for $g = 2$.
		}
		\label{fig:proof-dummy-pts}
	\end{figure}
	
	We show that each edge in $\dth{\proj^{-1}(\dummy)}$ with one endpoint in $\Do$ has its other endpoint inside $D_{\N}$. 
	
	Let $\eh$ be a segment with an endpoint in $\Do$ and an endpoint outside $D_{\N}$. We will prove that every disk passing through the endpoints of $\eh$ contains a point in $\W$. There are two cases to consider: $\eh$ either crosses only one image of $\Do$ under an element of $\N\setminus\{\eg\}$, or it crosses several of its images. We examine each case separately.

	\paragraph*{Case A: The edge $\eh$ crosses only one image of $\Do$ before leaving $D_{\N}$.}
	This case is illustrated in Figure~\ref{fig:proof-dummy-pts} - Left. Let $f(\Do), f\in\N\setminus\{\eg\}$ be the Dirichlet region that $\eh$ crosses. The image $\eh' = \inv{f}(\eh)$ of $\eh$ then crosses $\Do$, intersecting two of its non-adjacent sides $\sh_j$ and $\sh_k$ in the points $\ph_j$ and $\ph_k$, respectively. We can assume without loss of generality that the hyperbolic segment $[\ph_j,\ph_k]$ does not contain the origin, since in that case any disk through $\ph_j$ and $\ph_k$ clearly contains the origin. Then, there exists a line $\lh_O$ through $O$ such that $\ph_j$ and $\ph_k$ are contained in the same half-space. Let $\mh_\ell$ be the midpoint of a side between $\sh_j$ and $\sh_k$ in the same half-space as $\ph_j$ and$\ph_k$. Consider the disk centered at $O$ that passes through $\mh_\ell$ (and, of course, through all the other midpoints $\mh_k$ as well), and consider also the line through $O$ and $\mh_\ell$. By Lemma~\ref{thm:lemma-gert}, the disk $C(\ph_j, \ph_k)$ passing through $O, \ph_j,$ and $\ph_k$ contains $\mh_\ell$. Since $O$ and $\mh_\ell$ are on both sides of the segment $\seg{\ph_j, \ph_k}$, any disk through $\ph_j$ and $\ph_k$ contains either $\mh_\ell$ or $O$, therefore there is no empty disk that passes through $\ph_j$ and $\ph_k$. Assume now that there is an empty disk that passes through the endpoints of $\eh'$. This empty disk can then be shrunk continuously so that it passes through $\ph_j$ and $\ph_k$. The shrunk version of the disk must be also empty, which is a contradiction. Therefore, there is no empty disk passing through the endpoints of $\eh'$, which implies that $\eh'$ (and, by consequence, $\eh$) cannot be an edge in $\dth{\proj^{-1}(\dummy)}$.
	
	\paragraph*{Case B: The edge $\eh$ crosses several images of $\Do$ before leaving $D_{\N}$.}
	This case is illustrated in Figure~\ref{fig:proof-dummy-pts} - Right. There exist multiple images of $\eh$ in $\dth{\proj^{-1}(\dummyq)}$ that intersect $\Do$, in fact as many as the number of Dirichlet regions it intersects. Each one of these images intersects two adjacent sides of $\Do$. Let $\eh'$ be an image of $\eh$ that intersects two adjacent sides $\sh_j$ and $\sh_{j+1}$ of $\Do$ so that the hyperbolic line supporting $\eh'$ separates $O$ and the midpoint $\mh_{j+1}$. Note that such an image of $\eh$ exists always: $\eh$ either separates $O$ and the midpoint $\mh_{j+1}$, or it separates an image of $O$ under some translation $f$ of $\Gg$ and $\mh_{j+1}$; in the second case, $\inv{f}(\eh)$ separates $O$ and the midpoint $\inv{f}(\mh_{j+1})$.  The edge $\eh'$ intersects also the side $\sh_k$ adjacent to $\sh_{j+1}$ in the Dirichlet region that shares the side $\sh_{j+1}$ with $\Do$ (see Figure~\ref{fig:proof-dummy-pts} - Right). Let $\ph_j$ and $\ph_k$ be the intersection points of $\eh'$ with $\sh_j$ and $\sh_k$, respectively. Consider the circle centered at the origin that passes through $\mh_{j+1}$. Consider also the line through $O$ and $\mh_{j+1}$ and the line $\lh_0$ through $O$ perpendicular to it. By Lemma~\ref{thm:lemma-gert}, the disk $C(\ph_j, \ph_k)$ passing through $O, \ph_j,$ and $\ph_k$ contains $\mh_{j+1}$. Since $O$ and $\mh_{j+1}$ are on both sides of the segment $\seg{\ph_j, \ph_k}$, any disk through $\ph_j$ and $\ph_k$ contains either $\mh_{j+1}$ or $O$, therefore there is no empty disk that passes through $\ph_j$ and $\ph_k$. By the same reasoning as in Case A, there is no empty disk passing through the endpoints of $\eh'$ either, which implies that $\eh'$ (and, by consequence, $\eh$) cannot be an edge in $\dth{\proj^{-1}(\dummy)}$.
	
	\medskip
	
	In conclusion, no edge of $\dth{\proj^{-1}(\dummy)}$ can have an endpoint in $\Do$ and an endpoint outside $D_{\N}$, therefore all faces with at least one vertex in $\Do$ are included in $D_{\N}$.
\end{proof}

Second, we compute the distance between any pair of Weierstrass points, as mentioned in the proof of Theorem~\ref{thm:simplealgorithm}.

\begin{lemma}\label{lem:distanceweierstrasspoints}
	The distance between any pair of distinct Weierstrass points of $\Mg$ is at least $\tfrac{1}{4}\sysg$.
\end{lemma}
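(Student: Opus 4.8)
The plan is to exploit the central symmetry of the regular $4g$-gon. Let $r\colon\mD\to\mD$ be the rotation $r(\zh)=-\zh$ by angle $\pi$ about the origin; it is an orientation-preserving isometry mapping $D_g$ onto itself and carrying the side $\sh_j$ to the opposite side $\sh_{j+2g}$ (indices mod $4g$). Since $f_{j+2g}$ is the unique side-pairing of $D_g$ sending $\sh_j$ to $\sh_{j+2g}$, one gets $r f_j r^{-1}=f_{j+2g}$; this can also be read off directly from~\eqref{eq:generator-matrix}, since replacing $j$ by $j+2g$ negates the off-diagonal entries of $A_j$, which is precisely conjugation by $\begin{pmatrix}i&0\\0&-i\end{pmatrix}$. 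Hence $r$ normalizes $\Gg$ and descends to an orientation-preserving isometric involution $J$ of $\Mg$. Because $\Gg$ acts freely on $\mD$ while $r$ fixes the origin, $r\notin\Gg$, so $J\neq\id$. Moreover $J$ fixes every Weierstrass point: $r$ fixes the origin; $r(\mh_j)=\mh_{j+2g}=f_j^{-1}(\mh_j)$, so $\proj(\mh_{j+2g})=\proj(\mh_j)$; and $r(\vh_0)=\vh_{2g}$, which lies in the $\Gg$-orbit of $\vh_0$ because all vertices of $D_g$ are identified on $\Mg$ (cf.~\eqref{eq:relation-with-inverses}).

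Next I would show that no shortest geodesic between two distinct Weierstrass points $p,q$ of $\Mg$ is unique. Let $\sigma$ be a shortest geodesic from $p$ to $q$. Then $J\sigma$ is a geodesic of the same length from $Jp=p$ to $Jq=q$, hence also shortest. If $J\sigma=\sigma$ as a set, then $J$ restricts to an isometry of the arc $\sigma$ fixing both endpoints — it cannot interchange them, since $Jp=p\neq q$ — and therefore fixes $\sigma$ pointwise. Lifting a subarc of $\sigma$ to $\mD$, this would force some element $\gamma r$ of $\langle\Gg,r\rangle$ (with $\gamma\in\Gg$) to be an orientation-preserving isometry of $\mD$ fixing infinitely many points, hence to be the identity, whence $r=\gamma^{-1}\in\Gg$, a contradiction. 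So $J\sigma\neq\sigma$, and $\sigma$ and $J\sigma$ are two distinct shortest geodesics from $p$ to $q$.

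Finally I would conclude by a systolic argument. If $\sigma_1\neq\sigma_2$ are two shortest geodesics from $p$ to $q$, the concatenation $\sigma_1\ast\overline{\sigma_2}$ is a closed loop of length $2\,d_{\Mg}(p,q)$, and it is non-contractible: otherwise the lifts of $\sigma_1$ and $\sigma_2$ starting at a common point of $\mD$ would share the same endpoint, and uniqueness of geodesics in $\mD$ would give $\sigma_1=\sigma_2$. Hence $2\,d_{\Mg}(p,q)\geq\sysg=\varsigma_g$ by Theorem~\ref{thm:systolevalue}, i.e.\ $d_{\Mg}(p,q)\geq\tfrac12\varsigma_g\geq\tfrac14\varsigma_g$, which is in fact stronger than the lemma. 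A more computational alternative would be to enumerate, up to the symmetries of $D_g$, the finitely many Weierstrass lifts lying within distance $\tfrac12\varsigma_g$ of $D_g$ (bounding the range of relevant lifts via Proposition~\ref{prop:inclusionproperty}) and to check each case using $\cosh d(O,\mh_j)=\cot(\tfrac{\pi}{4g})$, $\cosh d(O,\vh_0)=\cot^2(\tfrac{\pi}{4g})$, and $d(\mh_j,\mh_{j+1})=\tfrac12\varsigma_g$. The symmetry argument is cleaner, and the only delicate point in it is the bookkeeping with lifts in the middle step: one must make sure the relevant lift of $J$ is orientation-preserving, so that fixing two points forces it to be the identity.
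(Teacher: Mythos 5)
Your argument is correct, and it is genuinely different from the paper's. The paper's proof (Appendix~C) is purely computational: it reduces by symmetry to three quantities, establishes $\cosh d(O,\mh_0)=\cosh d(\mh_0,\vh_0)=\cot(\tfrac{\pi}{4g})$, $\cosh d(O,\vh_0)=\cot^2(\tfrac{\pi}{4g})$, and $\cosh\tfrac12 d(\mh_0,\mh_1)=\sqrt2\cos(\tfrac{\pi}{4g})$ via right-angled hyperbolic triangles, and then compares these with $\varsigma_g$. Your proof instead exploits the hyperelliptic involution $J$ induced by $\zh\mapsto -\zh$ (the conjugation identity $rA_jr^{-1}=A_{j+2g}$ readable from~\eqref{eq:generator-matrix} does show $r$ normalizes $\Gg$, and your lifting argument correctly handles the orientation-preservation subtlety when concluding that a lift of $J$ fixing an arc must be the identity). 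This route is cleaner in at least two respects: it actually delivers the stronger bound $d(p,q)\geq\tfrac12\varsigma_g$ — which is sharp, since one checks $\cosh^2(\tfrac14\varsigma_g)=1+\cos(\tfrac{\pi}{2g})=2\cos^2(\tfrac{\pi}{4g})$, so $d(\mh_0,\mh_1)=\tfrac12\varsigma_g$ exactly, and $d(O,\mh_0)=\tfrac12\varsigma_2$ when $g=2$ — and it is genuinely a statement about the surface metric, whereas the paper computes chord lengths inside one copy of $D_g$ and leaves implicit why no other pair of lifts can be closer. What the paper's computation buys in exchange is the explicit closed-form expressions, which are reused elsewhere in Appendix~D (e.g.\ in the circumradius estimates for the structured algorithm). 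Both are fine; yours is the more conceptual and arguably more robust proof.
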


\begin{proof}
	Recall that the Weierstrass points of $\Mg$ are represented by the origin, the vertices and the midpoints of the sides of $D_g$. The distance between midpoints of non-consecutive sides is clearly larger than the distance between midpoints of consecutive sides. Hence, by symmetry, it is sufficient to consider the region in Figure~\ref{fig:hyperbolictrigonometry}.
	
	\begin{figure}[htp]
		\centering
		\includegraphics[width=.5\textwidth]{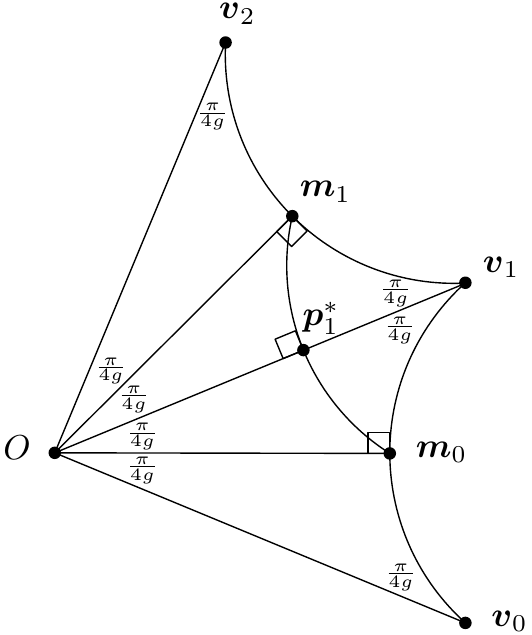}
		\caption{Computing the distances between Weierstrass points of $\Mg$.}
		\label{fig:hyperbolictrigonometry}
	\end{figure}
	
	Let $\mh_k$ be the midpoint of side $\sh_k$ and let $\ph_k^*$ be the midpoint of $\mh_{k-1}$ and $\mh_k$. First, in the right-angled triangle $[O,\mh_0,\vh_0]$ we know that~\cite[Theorem 7.11.3]{Beardon1983}
	\begin{align*}
		\cosh(d(O,\mh_0))=\cosh(d(\mh_0,\vh_0))&=\dfrac{\cos(\tfrac{\pi}{4g})}{\sin(\tfrac{\pi}{4g})}=\cot(\tfrac{\pi}{4g}),\\
		\cosh(d(O,\vh_0))&=\cot^2(\tfrac{\pi}{4g}).
	\end{align*}
	Second, because $d(O,\mh_0)=d(\mh_0,\vh_1)$, $[O,\mh_0,\ph_1^*]$ and $[\vh_1,\mh_0,\ph_1^*]$ are congruent triangles. In particular, $\angle(O\mh_0\ph_1^*)=\angle(\vh_1\mh_0\ph_1^*)=\tfrac{\pi}{4}$. It follows that
	\begin{equation}\label{eq:midpointsdistance}
		\cosh(\tfrac{1}{2}d(\mh_0,\mh_1))=\cosh(d(\mh_0,\ph_1^*))=\dfrac{\cos(\tfrac{\pi}{4g})}{\sin(\tfrac{\pi}{4})}= \sqrt{2}\cos(\tfrac{\pi}{4g}).
	\end{equation}
	The above formulas yield expressions for $d(O,\mh_0),d(O,\vh_0)$ and $d(\mh_0,\mh_1)$ and comparing these to the expression for $\sysg$ (see Theorem~\ref{thm:systolevalue}) yields the result.
\end{proof}

\section{Structured algorithm}\label{sec:structuredalgo}\label{sec:appendixstructuredalgorithmdescription}

Like the symmetric algorithm, this algorithm respects the $4g$-fold symmetry of the Dirichlet region of $\Mg$. Before we give the algorithm in pseudocode, we first explain the idea and the notation. See Figure \ref{fig:pizzaslice} for an illustration of the dummy points within one slice of the $4g$-gon.

\begin{figure}[htbp]
	\centering
	\includegraphics[width=\textwidth]{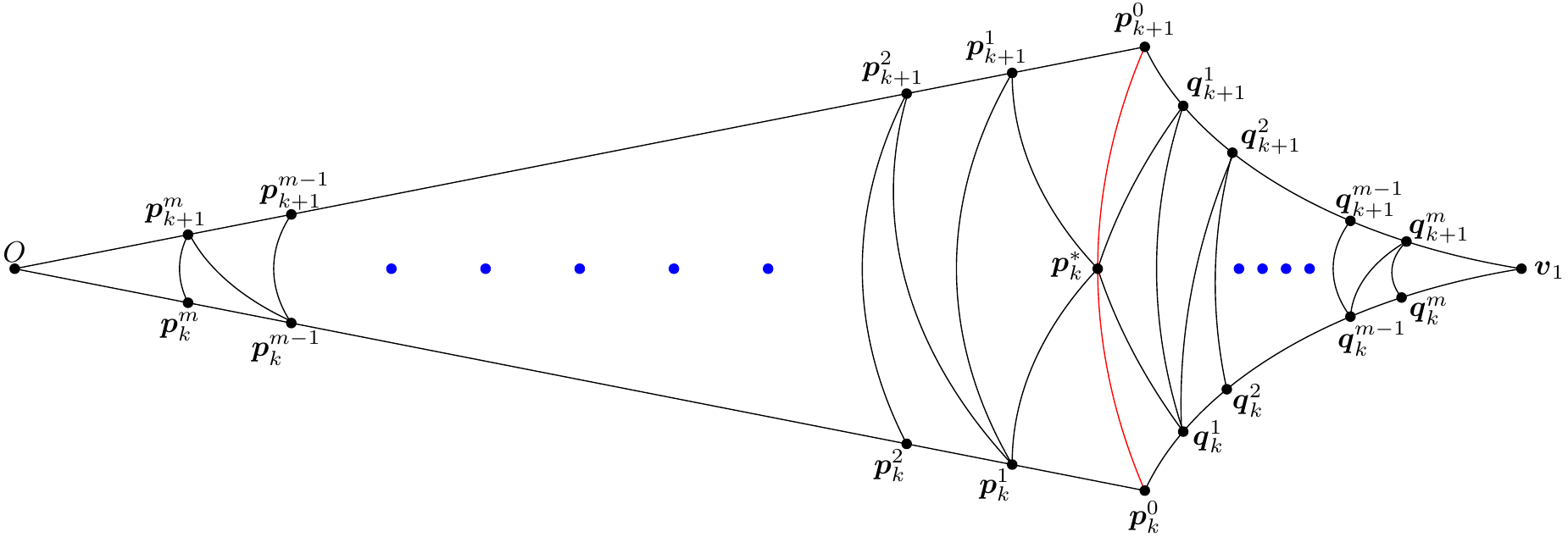}
	\caption{Dummy points within one slice of the $4g$-gon.}
	\label{fig:pizzaslice}
\end{figure}

\begin{enumerate}
	\item As in the other algorithms, initially $\dummy$ consists of the Weierstrass points of $\Mg$: the origin, the vertex and the midpoints of the sides. As usual, the origin and the vertices of $D_g$ are denoted by $O$ and $\vh_k$ respectively. The midpoint of side $\sh_k$ is denoted by $\ph^0_k$. Because the sides are paired to obtain $\Mg$, we only consider $k=0,\ldots,2g-1$ to avoid that several points are actually the same on the surface. Just as the side $\sh_k$ is obtained from $\sh_0$ by rotating it around the origin by angle $\tfrac{k\pi}{2g}$, so $\ph^0_k$ is obtained from $\ph^0_0$ by rotating it in the same way. Hence, we use the lower index as `rotation' index in the definition of the other points as well.
	\item Secondly, the projections $\proj(\ph^*_k)$ of the midpoints $\ph^*_k$ of the geodesic segment in $\H^2$ connecting consecutive pairs $(\ph^0_k,\ph^0_{k+1})$ of midpoints are added to $\dummy$. These points are unique in the sense that they will be the only points in $\dummy$ that have their pre-image in $\Do$ not on some line segment $[O,\ph^0_k]$. This is the reason why they have a star as superscript.
	\item Thirdly, points $\ph^j_0$ are consecutively added on $[O,\ph^{j-1}_0]$ in such a way that the distance between consecutive points $\ph^{j-1}_0$ and $\ph^j_0$ is given by $d(\ph^j_0,\ph^{j-1}_0)=\tfrac{1}{4}\sysg$, until $d(\ph^j_0,O)\leq \tfrac{1}{4}\sysg$. By rotating the points in the same way as before we obtain the points $\ph^j_k$. The projections $\proj(\ph^j_k)$ are added to $\dummy$. Here, the upper index denotes the `iteration' index. Notice that the midpoints of the sides were denoted $\ph^0_k$ to initialize this process. Since $d(O,\ph^0_0)=\arcosh(\cot(\tfrac{\pi}{4g}))$, this step consists of 
	$$ m=\left\lceil \dfrac{\arcosh(\cot(\tfrac{\pi}{4g}))}{\tfrac{1}{4}\sysg}\right\rceil-1$$ 
	iterations.
	\item Finally, observe that the triangles $[O,\ph^0_k,\ph^0_{k+1}]$ and $[\vh,\ph^0_k,\ph^0_{k+1}]$ are congruent under reflection in $[\ph^0_k,\ph^0_{k+1}]$. To establish the same congruence in $\dummy$, we want to apply this reflection to all points in $[O,\ph^0_k,\ph^0_{k+1}]$ that are currently in $\proj^{-1}(\dummy)\cap D_g$. However, if we do so directly, we obtain several pairs of points projecting to the same point on $\Mg$. To avoid this, we reflect the points $\ph^j_k$ in one half of the fundamental polygon across the line through $\ph^0_k$ and $\ph^0_{k+1}$ and the points $\ph^j_k$ in the other half of the fundamental polygon across the line through $\ph^0_{k-1}$ and $\ph^0_k$. In each case, the image of the $\ph^j_k$ after reflection is denoted by $\qh^j_k$ and its projection $\proj(\qh^j_k)$ is added to $\dummy$. 
\end{enumerate}

One of the major advantages of this algorithm is that the combinatorics of the resulting Delaunay triangulation can be explicitly described. Below we describe these combinatorics. The proof that this is the Delaunay triangulation of the dummy point set will be given in Lemma \ref{lem:structuredalgoemptydisks}. Again, see Figure \ref{fig:pizzaslice} for an illustration of the dummy points and the triangulation $\mathcal{T}$ within one slice of the $4g$-gon. 

\begin{definition}
	Define the infinite triangulation $\mathcal{T}$ of $\proj^{-1}(\dummy)$ as follows:
	\begin{itemize}
		\item As vertices take $\proj^{-1}(\dummy)$.
		\item The edges completely contained in the hyperbolic triangle $[O,\ph^0_0,\ph^0_1]$ are given by the following list.
		
		\begin{tabular}{l l l}
			$(\ph^j_k,\ph^{j+1}_k)$ & $j=0,\ldots,m-1,$ & $k=0,1,$ \\
			$(\ph^j_0,\ph^{j+1}_1)$ & $j=1,\ldots,m-1,$ & \\
			$(\ph^m_k,O)$ & & $k=0,1,$ \\ 
			$(\ph^j_0,\ph^j_1)$ & $j=1,\ldots,m,$ & \\
			$(\ph^j_k,\ph^*_0)$ & $j=0,1,$ & $k=0,1.$\\   
		\end{tabular}
		
		The other edges can be obtained as the images of the edges in the list above under the following maps:
		\begin{itemize}
			\item rotation around the origin by angle $\tfrac{k\pi}{2g}$,
			\item reflection in the line through $\ph^0_0,\ph^0_1$,
			\item reflection in the line through $\ph^0_0,\ph^0_1$, followed by rotation around the origin by angle $\tfrac{k\pi}{2g}$,
			\item any one of the above maps, followed by an element of $\Gamma_g$.
		\end{itemize}
	\end{itemize}
\end{definition}

Algorithm~\ref{structuredalgo2} shows the algorithm in pseudocode. We refer to this algorithm as the structured algorithm.  

\begin{algorithm}
	\SetKwInOut{Input}{Input}
	\SetKwInOut{Output}{Output}
	\DontPrintSemicolon
	\Input{hyperbolic surface $\Mg$}
	\Output{finite point set $\dummy\subset\Mg$ such that $\diam{\dummy}<\tfrac{1}{2}\sysg$}
	\BlankLine
	Initialize: let $\dummy$ be the set $\W$ of Weierstrass points of $\Mg$.\;
	Label the vertex by $v$ and the origin by $O$.\;
	For all $k=0,\ldots,4g-1$, label the midpoint of side $\sh_k$ by $\ph^0_k$.\;
	For all $k=0,\ldots,4g-1$, label the midpoint of $\ph^0_k$ and $\ph^0_{k+1}$ by $\ph^*_k$ and add $\proj(\ph_k^*)$ to $\dummy$.\;
	$m\leftarrow \lceil4\arcosh(\cot(\tfrac{\pi}{4g}))/\sysg\rceil-1$.\;
	\For {$i=1,\ldots,m$}{
		Let $\ph^j_0$ be the point on $[\ph^{j-1}_0,O]$ with $d(\ph^j_0,\ph^{j-1}_0)=\tfrac{1}{4}\sysg$ and add $\proj(\ph^j_0)$ to $\dummy$.\; 
		\For{$k=1,\ldots,2g-1$}{
			Let $\ph^j_k$ be $\ph^j_0$ rotated clockwise around the origin by angle $\tfrac{k\pi}{2g}$.\;
			Let $\qh^j_k$ be the reflection of $\ph^j_k$ in the line through $\ph^0_k,\ph^0_{k+1}$.\;
			Add $\proj(\ph^j_k)$ and $\proj(\qh^j_k)$ to $\dummy$.
		}
		\For{$k=2g,\ldots,4g-1$}{
			Let $\ph^j_k$ be $\ph^j_0$ rotated clockwise around the origin by angle $\tfrac{k\pi}{2g}$.\;
			Let $\qh^j_k$ be the reflection of $\ph^j_k$ in the line through $\ph^0_{k-1},\ph^0_{k}$.\;
			Add $\proj(\ph^j_k)$ and $\proj(\qh^j_k)$ to $\dummy$.
		}
	}
	\caption{Structured algorithm}\label{structuredalgo2}
\end{algorithm}

The main difference between the structured algorithm and the other two algorithms is that there is no {\bf while} loop in the structured algorithm, only {\bf for} loops. As a result, the cardinality of the resulting dummy point set is known precisely (see Theorem~\ref{thm:structuredalgorithm}). On the other hand, for the cardinality of the dummy point sets for the refinement or symmetric algorithm we can only give an estimate and the exact number of points depends on the implementation.

The following two lemmas show that the circumdiameters of triangles $\mathcal{T}$ are smaller than $\tfrac{1}{2}\sysg$ and that $\mathcal{T}$ is a Delaunay triangulation. As the proofs are rather technical, the reader may want to skip them at a first reading.

\begin{lemma}\label{lem:structuredalgoboundedcircumdiam}
	The circumdiameters of triangles in $\mathcal{T}$ are smaller than $\tfrac{1}{2}\sysg$.
\end{lemma}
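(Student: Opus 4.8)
The plan is to reduce the statement to a bounded catalogue of explicit triangle shapes and then estimate circumradii by hyperbolic trigonometry in a single sector. By construction the edges of $\mathcal{T}$ are the images, under isometries of $\mathbb{D}$ (rotations about $O$, the reflection in the geodesic through $\ph^0_0$ and $\ph^0_1$, and postcomposition with elements of $\Gg$), of the finitely many edges lying inside the triangle $[O,\ph^0_0,\ph^0_1]$; since isometries send circumscribed disks to circumscribed disks, it suffices to show that every triangle of $\mathcal{T}$ contained in $[O,\ph^0_0,\ph^0_1]$ has circumradius strictly less than $\tfrac14\sysg$. Reading off the edge list, these triangles are: the \emph{apex} triangle $[\ph^m_0,\ph^m_1,O]$; for each $j=1,\dots,m-1$ the two \emph{strip} triangles $[\ph^j_0,\ph^{j+1}_0,\ph^{j+1}_1]$ and $[\ph^j_0,\ph^j_1,\ph^{j+1}_1]$ obtained by cutting the geodesic quadrilateral on $\ph^j_0,\ph^{j+1}_0,\ph^{j+1}_1,\ph^j_1$ along the diagonal $(\ph^j_0,\ph^{j+1}_1)$; and the three \emph{base} triangles $[\ph^0_0,\ph^*_0,\ph^1_0]$, $[\ph^1_0,\ph^*_0,\ph^1_1]$, $[\ph^1_1,\ph^*_0,\ph^0_1]$, the two outer ones being mirror images. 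Everything lives in the angular sector of opening $\tfrac{\pi}{2g}$ between the radii $[O,\ph^0_0]$ and $[O,\ph^0_1]$; each $\ph^j_k$ sits on such a radius at distance $r_j:=\arcosh(\cot\tfrac{\pi}{4g})-j\cdot\tfrac14\sysg$ from $O$ (using $\hlen{[O,\mh_0]}=\arcosh(\cot\tfrac{\pi}{4g})$ from the proof of Lemma~\ref{lem:distanceweierstrasspoints}), so that $\tfrac14\sysg=r_0-r_1$, $0<r_m\le\tfrac14\sysg$, and $r_j\ge\tfrac14\sysg$ whenever $j<m$; moreover $[\ph^0_0,\ph^0_1]$ is the systolic segment of length $\tfrac12\varsigma_g$ (see the remark after Lemma~\ref{lem:systole}), so $[\ph^0_0,\ph^*_0]$ has length $\tfrac14\sysg$.

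The working tools are the hyperbolic law of cosines (to write every side length as an explicit function of the $r_j$ and $\tfrac{\pi}{2g}$), the right-triangle identity $\cosh(\text{hyp})=\cot A\cot B$ (as used in Appendix~\ref{sec:appendixsymmetrichyperbolicsurfaces}), and, for a hyperbolic triangle with circumradius $R$, the relation $\sinh\tfrac c2=\sinh R\,\sin\psi$, where $2\psi$ is the angle at the circumcenter in the isosceles triangle spanned by the circumcenter and the endpoints of a side of length $c$, together with the inscribed-angle inequality $\psi>\theta$ for the (acute) triangle angle $\theta$ opposite that side. The apex triangle is isosceles with legs $r_m\le\tfrac14\sysg$ and apex angle $\tfrac{\pi}{2g}$ at $O$; applying the relation to its short base $[\ph^m_0,\ph^m_1]$ (whose half-length equals $\operatorname{arsinh}(\sinh r_m\sin\tfrac{\pi}{4g})$) and using that the corresponding $\psi$ exceeds the apex angle $\tfrac{\pi}{2g}$ gives $\sinh R\le\sinh(r_m)/(2\cos\tfrac{\pi}{4g})<\sinh\tfrac14\sysg$. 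For the base triangles one proceeds by direct computation: the two outer ones have two sides of length exactly $\tfrac14\sysg$ (half the systolic segment and one radial step), the middle one is isosceles, one computes the remaining side(s), verifies acuteness, and the circumradius relation gives $R<\tfrac14\sysg$; the finitely many small genera where the asymptotic estimates are slack are cleared by a direct numerical check (recall $m=1$ for $g=2$, so strip triangles first occur for $g=3$).

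The crux, and the step I expect to be the main obstacle, is the \emph{uniform} bound over $j$ for the strip triangles: each is obtuse at its inner radial vertex, and the circumradius of an obtuse triangle can be much larger than half its longest side, so one must use the specific geometry to forbid a blow-up. For $[\ph^j_0,\ph^{j+1}_0,\ph^{j+1}_1]$ the obtuse angle is $\pi-\beta$ at $\ph^{j+1}_0$, with $\beta$ the base angle of the isosceles triangle $[O,\ph^{j+1}_0,\ph^{j+1}_1]$; the right-triangle identity gives $\cot\beta=\cosh(r_{j+1})\tan\tfrac{\pi}{4g}\le\cosh(r_0)\tan\tfrac{\pi}{4g}=1$, so $\beta\ge\tfrac{\pi}{4}$ \emph{uniformly in $j$ and $g$}. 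The argument then applies the relation $\sinh\tfrac c2=\sinh R\,\sin\psi$ to an acute vertex (whose angle one shows stays bounded below by a positive constant, via the law of cosines together with $r_j\ge\tfrac14\sysg$ and $\cosh e=\cosh\tfrac14\sysg+\sinh(r_j)\sinh(r_{j+1})(1-\cos\tfrac{\pi}{2g})$ for the diagonal $e$, whose excess over $\cosh\tfrac14\sysg$ is bounded because $r_j,r_{j+1}\le r_1$ and $(\cot^2\tfrac{\pi}{4g}-1)(1-\cos\tfrac{\pi}{2g})=2\cos\tfrac{\pi}{2g}$, with a further gain from $r_1<r_0$), yielding $R<\tfrac14\sysg$. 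The triangle $[\ph^j_0,\ph^j_1,\ph^{j+1}_1]$ has no auxiliary isosceles reduction, and here one splits on $r_{j+1}$: when $r_{j+1}$ is small the triangle degenerates toward the apex triangle $[\ph^j_0,\ph^j_1,O]$ (one even finds it is not obtuse at $\ph^{j+1}_1$), and the apex estimate applies with room to spare; when $r_{j+1}$ is bounded below, the obtuse angle at $\ph^{j+1}_1$ is $\pi-\gamma$ with $\cos\gamma=\big(\cosh(r_{j+1})\sinh(r_j)(1-\cos\tfrac{\pi}{2g})-\sinh\tfrac14\sysg\big)/\sinh e$, and the key estimate is that $\gamma$ stays bounded below (using $\cosh(r_{j+1})\sinh(r_j)(1-\cos\tfrac{\pi}{2g})\le 2\cos\tfrac{\pi}{4g}\sqrt{\cos\tfrac{\pi}{2g}}<2$ and that $\sinh e$ is comparably large), while the side $h:=\hlen{[\ph^j_0,\ph^j_1]}$ opposite $\ph^{j+1}_1$ (with $\cosh h=1+2\sinh^2(r_j)\sin^2\tfrac{\pi}{4g}$) shrinks in tandem with $\gamma$ — a balance which is exactly what the spacing $\tfrac14\sysg$ in the construction is tuned for — so that $\sinh R=\sinh\tfrac h2/\sin\gamma$ stays below $\sinh\tfrac14\sysg$. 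Assembling the three families gives circumradius $<\tfrac14\sysg$, hence circumdiameter $<\tfrac12\sysg=\tfrac12\varsigma_g$, for every triangle of $\mathcal{T}$.
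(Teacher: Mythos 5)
Your reduction to a catalogue of triangle types in the sector between $[O,\ph^0_0]$ and $[O,\ph^0_1]$ is the same as the paper's, and your identifications $r_j=\arcosh(\cot\tfrac{\pi}{4g})-j\cdot\tfrac14\sysg$, $\hlen{[\ph^0_0,\ph^0_1]}=\tfrac12\varsigma_g$, $r_0-r_1=\tfrac14\sysg$, $0<r_m\le\tfrac14\sysg$ are all correct. For the apex and base triangles your circumradius tools are essentially interchangeable with the paper's (the paper uses $\tanh R=\tanh(\tfrac12 b)/\cos(\tfrac12\alpha)$ for the isosceles triangle with leg $b$ and apex angle $\alpha$; you use $\sinh\tfrac c2=\sinh R\sin\psi$ plus the hyperbolic inscribed-angle inequality $\psi>\theta$, which does hold when the circumcenter lies in the triangle), though your plan for the base triangles falls back on an unspecified "direct numerical check at small genera" where the paper gets clean closed forms such as $\tanh R=\sqrt{4-2\sqrt{2}}\tanh\tfrac18\sysg$ and $\tanh R=(\sqrt 2-1)\sinh\tfrac14\sysg$.

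The genuine gap is exactly where you flag it: the strip triangles. You attempt a uniform-over-$j$ estimate by controlling an obtuse angle and the opposite side simultaneously, but the step where "$\gamma$ stays bounded below while the side $h$ shrinks in tandem — a balance which is exactly what the spacing $\tfrac14\sysg$ in the construction is tuned for" is precisely the assertion that needs proof, and you give no quantitative argument linking the rate at which $\sin\gamma$ decreases to the rate at which $\sinh\tfrac h2$ decreases. The case split on $r_{j+1}$ "small" versus "bounded below" also has no explicit threshold, so the two regimes don't demonstrably cover all $j$. What you are missing is the structural observation the paper uses to sidestep all of this: by the $\tfrac{\pi}{2g}$-rotational symmetry of the construction, the four points $\ph^j_0,\ph^j_1,\ph^{j+1}_0,\ph^{j+1}_1$ are \emph{concircular} (their common circumcenter lies on the axis $[O,\ph^*_0]$). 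This has two consequences: it makes your two "strip triangles" share a single circumdisk, and it shows the circumradius $R(\ph^j_0,\ph^j_1,\ph^{j+1}_1)$ is a decreasing function of $r_j=[O,\ph^j_0]$, hence maximal at $j=1$. The whole family then reduces to a single explicit computation for $j=1$, which the paper carries out by locating the circumcenter's foot on $[O,\ph^0_1]$ and bounding $\cosh R$. Without that monotonicity, you are left exactly with the uniform estimate you could not close.
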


\begin{proof}
	See again Figure~\ref{fig:pizzaslice}. By symmetry it is sufficient to consider the circumscribed disks of the triangles $$[\ph^0_0,\ph^1_0,\ph^*_0],[\ph^1_0,\ph^1_1,\ph^*_0],[\ph^m_0,\ph^m_1,O],[\ph^j_0,\ph^j_1,\ph^{j+1}_1],$$
	for $1\leq j\leq m-1$. For easy reference, the used lengths and angles satisfy the following relations. Here we denote $\hlen{[\ph,\qh]}$ by abuse of notation by $[\ph,\qh]$.
	\begin{align}
		[\ph^0_0,\ph^*_0]&=[\ph^0_0,\ph^1_0]=\tfrac{1}{4}\sysg,\label{eq:quartersystole}\\
		\cosh(\tfrac{1}{2}\sysg)&=1+2\cos(\tfrac{\pi}{2g}),\label{eq:halfsystole}\\
		\angle(O\ph^0_0\ph^*_0)&=\tfrac{\pi}{4},\label{eq:baseangle}\\
		\sinh(\tfrac{1}{2}[\ph^1_0,\ph^*_0])&=\sinh(\tfrac{1}{4}\sysg)\sin(\tfrac{1}{2}\angle(O\ph^0_0\ph^*_0)),\label{eq:firstdiagonal}\\
		\sin\angle(\ph^1_0\ph^*_0\ph^0_0)&=\dfrac{\sinh(\tfrac{1}{4}\sysg)\sin\angle(O\ph^0_0\ph^*_0)}{\sinh([\ph^1_0,\ph^*_0])},\label{eq:interiorangle}\\ 
		\angle(\ph^1_0\ph^*_0\ph^1_1)&=\pi-2\angle(\ph^1_0\ph^*_0\ph^0_0)\label{eq:topangle}.
	\end{align}
	Equations~\eqref{eq:quartersystole} and~\eqref{eq:halfsystole} follow from the construction in the proof of Lemma~\ref{lem:systole} and from Theorem~\ref{thm:systolevalue}, respectively. Equation~\eqref{eq:baseangle} holds because $\angle(O\ph_0^0\vh_1)=\tfrac{\pi}{2}$ and the triangles $[O,\ph_0^0,\ph_1^0]$ and $[\vh_1,\ph_0^0,\ph_1^0]$ are congruent. Equation~\eqref{eq:firstdiagonal} follows from the application of~\cite[Theorem 7.11.2(ii)]{Beardon1983} to the triangle with vertices $\ph_0^1,\ph_0^0$ and the midpoint of $\ph_0^1$ and $\ph_0^*$: this is a right-angled triangle, because $[\ph_0^1,\ph_0^0,\ph_0^*]$ is isosceles. Equation~\eqref{eq:interiorangle} is the sine rule~\cite[Chapter 7.12]{Beardon1983} in triangle $[\ph_0^1,\ph_0^0,\ph_0^*]$. Finally, Equation~\eqref{eq:topangle} holds by symmetry.
	
	The circumradius $R$ of an isosceles triangle with legs of length $b$ and vertex angle $\alpha$ (see Figure~\ref{fig:circumradiusisoscelestriangle} satisfies
	
	\begin{equation}\label{eq:circumradius}
		\tanh(R)=\dfrac{\tanh(\tfrac{1}{2}c)}{\cos(\tfrac{1}{2}\alpha)},
	\end{equation}
	
	which can be derived by applying~\cite[Theorem 7.1..2(iii)]{Beardon1983} to the interior triangle with edges of length $R$ and $\tfrac{1}{2}b$.
	
	\begin{figure}
		\centering
		\includegraphics[width=0.5\textwidth]{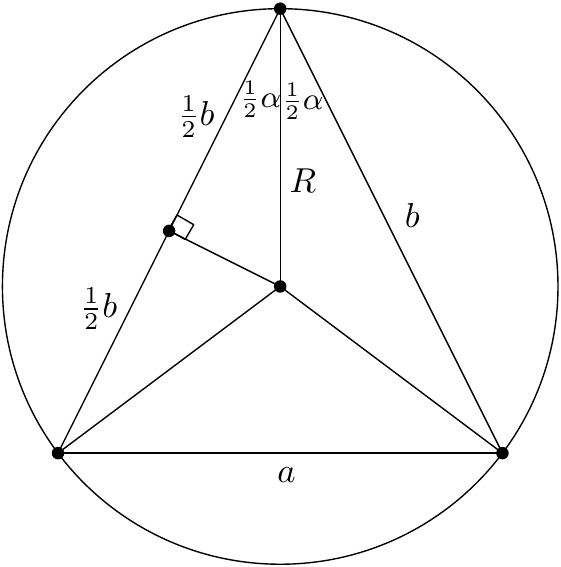}
		\caption{Deriving a formula for the circumradius of an isosceles triangle.}
		\label{fig:circumradiusisoscelestriangle}
	\end{figure}
	
	First, consider the circumradius $R(\ph^0_0,\ph^1_0,\ph^*_0)$ of $[\ph^0_0,\ph^1_0,\ph^*_0]$. By the above observation,
	\begin{align*}
		\tanh(R(\ph^0_0,\ph^1_0,\ph^*_0))&=\dfrac{\tanh(\tfrac{1}{8}\sysg)}{\cos(\tfrac{1}{2}\angle(O\ph^0_0\ph^*_0))},\\
		&=\sqrt{4-2\sqrt{2}}\tanh(\tfrac{1}{8}\sysg).
	\end{align*}
	We want to prove that $R(\ph^0_0,\ph^1_0,\ph^*_0)<\tfrac{1}{4}\sysg$, or, equivalently, that $$\tanh(R(\ph^0_0,\ph^1_0,\ph^*_0))<\tanh(\tfrac{1}{4}\sysg).$$ 
	Because $\tanh(\tfrac{1}{8}\sysg)$ is strictly increasing as function of $g$, we know that 
	$$ \sqrt{4-2\sqrt{2}}\tanh(\tfrac{1}{8}\sysg)<\sqrt{4-2\sqrt{2}}\lim_{g\rightarrow\infty}\tanh(\tfrac{1}{8}\sysg)\approx 0.448. $$
	For the same reason, 
	$$ \tanh(\tfrac{1}{4}\sysg)\geq \tanh(\tfrac{1}{4}\sysg)\big|_{g=2}\approx 0.643.$$
	It follows that $\tanh(R(\ph^0_0,\ph^1_0,\ph^*_0))<\tanh(\tfrac{1}{4}\sysg)$ holds. This proves that the circumdiameter of $[\ph^0_0,\ph^1_0,\ph^*_0]$ is smaller than $\tfrac{1}{2}\sysg$.\\
	Second, consider the circumradius $R(\ph^1_0,\ph^1_1,\ph^*_0)$ of $[\ph^1_0,\ph^1_1,\ph^*_0]$. By a similar computation as above, we see that
	\begin{align*}
		\tanh(R(\ph^1_0,\ph^1_1,\ph^*_0))&=\dfrac{\tanh(\tfrac{1}{2}[\ph^1_0,\ph^*_0])}{\cos(\tfrac{1}{2}\angle(\ph^1_0\ph^*_0\ph^1_1))},\\
		&\overset{\eqref{eq:topangle}}{=}\dfrac{\tanh(\tfrac{1}{2}[\ph^1_0,\ph^*_0])}{\sin\angle(\ph^1_0\ph^*_0\ph^0_0)}.
	\end{align*}
	By using
	\begin{align*}
		\sin\angle(\ph^1_0\ph^*_0\ph^0_0)&\overset{\eqref{eq:interiorangle}}{=}\dfrac{\sinh(\tfrac{1}{4}\sysg)\sin\angle(O\ph^0_0\ph^*_0)}{\sinh([\ph^1_0,\ph^*_0])},\\
		&=\dfrac{\sinh(\tfrac{1}{4}\sysg)\cdot 2\sin(\tfrac{1}{2}\angle(O\ph^0_0\ph^*_0))\cos(\tfrac{1}{2}\angle(O\ph^0_0\ph^*_0))}{\sinh([\ph^1_0,\ph^*_0])},\\
		&\overset{\eqref{eq:firstdiagonal}}{=}\dfrac{\sinh(\tfrac{1}{2}[\ph^1_0,\ph^*_0])\cdot 2\cos(\tfrac{1}{2}\angle(O\ph^0_0\ph^*_0))}{2\sinh(\tfrac{1}{2}[\ph^1_0,\ph^*_0])\cosh(\tfrac{1}{2}[\ph^1_0,\ph^*_0])},\\
		&=\dfrac{\cos(\tfrac{1}{2}\angle(O\ph^0_0\ph^*_0))}{\cosh(\tfrac{1}{2}[\ph^1_0,\ph^*_0])},
	\end{align*}
	we can rewrite this as
	\begin{align*}
		\tanh(R(\ph^1_0,\ph^1_1,\ph^*_0))&=\dfrac{\tanh(\tfrac{1}{2}[\ph^1_0,\ph^*_0])\cosh(\tfrac{1}{2}[\ph^1_0,\ph^*_0])}{\cos(\tfrac{1}{2}\angle(O\ph^0_0\ph^*_0))},\\
		&=\dfrac{\sinh(\tfrac{1}{2}[\ph^1_0,\ph^*_0])}{\cos(\tfrac{1}{2}\angle(O\ph^0_0\ph^*_0))},\\
		&\overset{\eqref{eq:firstdiagonal}}{=}\sinh(\tfrac{1}{4}\sysg)\tan(\tfrac{1}{2}\angle(O\ph^0_0\ph^*_0)),\\
		&=(\sqrt{2}-1)\sinh(\tfrac{1}{4}\sysg).
	\end{align*}
	By the same reasoning as before,
	$$ \tanh(R(\ph^1_0,\ph^1_1,\ph^*_0))<(\sqrt{2}-1)\lim_{g\rightarrow\infty}\sinh(\tfrac{1}{4}\sysg)\approx 0.414, $$
	which shows that $\tanh(R(\ph^1_0,\ph^1_1,\ph^*_0))<\tanh(\tfrac{1}{4}\sysg)$. This proves that the circumdiameter of $[\ph^1_0,\ph^1_1,\ph^*_0]$ is smaller than $\tfrac{1}{2}\sysg$.\\
	Third, consider the circumradius $R(\ph^m_0,\ph^m_1,O)$ of $[\ph^m_0,\ph^m_1,O]$. We know that $[O,\ph^m_0]=x\leq \tfrac{1}{4}\sysg$. By a similar computation as above, we see that
	\begin{align*}
		\tanh(R(\ph^m_0,\ph^m_1,O))&=\dfrac{\tanh(\tfrac{1}{2}x)}{\cos(\tfrac{\pi}{4g})},\\
		&\leq \dfrac{\tanh(\tfrac{1}{8}\sysg)}{\cos(\tfrac{\pi}{8})},\\
		&=\tanh(R(\ph^0_0,\ph^1_0,\ph^*_0)),\\
		&<\tanh(\tfrac{1}{4}\sysg),
	\end{align*}
	from which we conclude that $R(\ph^m_0,\ph^m_1,O)<\tfrac{1}{4}\sysg$. This proves that the circumdiameter of $[\ph^m_0,\ph^m_1,O]$ is smaller than $\tfrac{1}{2}\sysg$.\\
	Finally, let $1\leq j\leq m-1$ and consider the circumradius $R(\ph^j_0,\ph^j_1,\ph^{j+1}_1)$ of $[\ph^j_0,\ph^j_1,\ph^{j+1}_1]$. In fact, the points $\ph^j_0,\ph^j_1,\ph^{j+1}_0,\ph^{j+1}_1$ are concircular due to symmetry, so the center of the circumscribed circle of $[\ph^j_0,\ph^j_1,\ph^{j+1}_1]$ lies on the line segment $[O,\ph^*_0]$. It follows that the circumradius of this disk decreases when the distance of $\ph^j_0$ to $[O,\ph^*_0]$ decreases, or equivalently, when the distance $[O,\ph^j_0]$ decreases. Hence, for all $1\leq j\leq m-1$, $R(\ph^j_0,\ph^j_1,\ph^{j+1}_1)<R(\ph^1_0,\ph^1_1,\ph^{2}_1)$. Therefore, it is sufficient to show that $R(\ph^1_0,\ph^1_1,\ph^{2}_1)<\tfrac{1}{4}\sysg$.\\
	In this case, we cannot use equation~\eqref{eq:circumradius}, since $[\ph^1_0,\ph^1_1,\ph^{2}_1]$ is not an isosceles triangle. There exists a more general expression for the circumradius of an arbitrary (not necessarily isosceles) triangle, but this will lead to unnecessarily long expressions. Instead, consider the circumcenter $\ch$ of $[\ph^1_0,\ph^1_1,\ph^{2}_1]$. See also Figure~\ref{fig:center} for a more detailed view of the relevant triangles.
	\begin{figure}[htbp]
		\centering
		\includegraphics[width=0.8\textwidth]{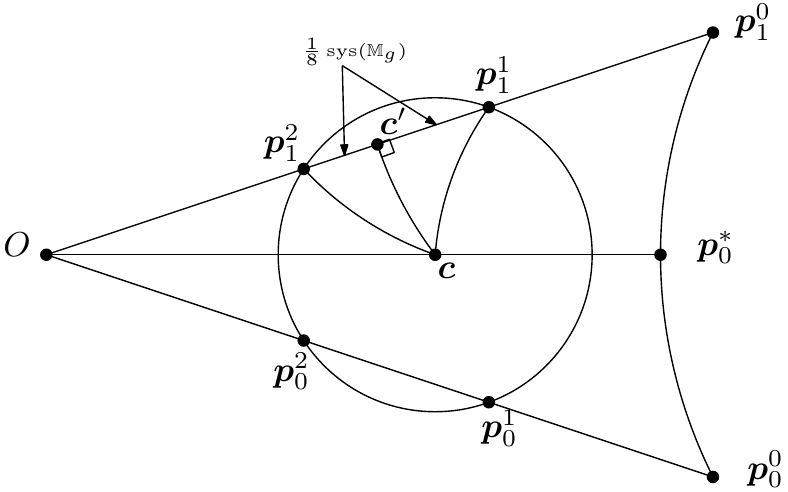}
		\caption{Close-up of situation at $[\ph^1_0,\ph^1_1,\ph^{2}_1]$}
		\label{fig:center}
	\end{figure}
	By the discussion above, we know that $\ch\in [O,\ph^*_0]$. Let $\ch'$ be the orthogonal projection of $\ch$ onto $[O,\ph^0_1]$. Since $[\ch,\ph^1_1,\ph^2_1]$ is isosceles, we know that 
	$$  [O,\ch']=[O,\ph^2_1]+\tfrac{1}{8}\sysg =[O,\ph^0_1]-\tfrac{3}{8}\sysg=\arcosh(\cot(\tfrac{\pi}{4g}))-\tfrac{3}{8}\sysg,$$
	from which it can be seen that $[O,\ch']$ is strictly increasing as function of $g$. 
	Furthermore,~\cite[Theorem 7.11.2(i)]{Beardon1983}
	$$ \tanh([\ch,\ch'])=\sinh([O,\ch'])\tan(\tfrac{\pi}{4g}),$$
	which after substitution of our expression for $[O,\ch']$ can be rewritten as
	$$  \tanh([\ch,\ch'])=\sqrt{1-\tan^2(\tfrac{\pi}{4g})}\cosh(\tfrac{3}{8}\sysg)-\sinh(\tfrac{3}{8}\sysg). $$
	By the Pythagorean law in $[\ch,\ch',\ph^2_1]$ we know 
	$$ \cosh(R(\ph^1_0,\ph^1_1,\ph^{2}_1))=\cosh(\tfrac{1}{8}\sysg)\cosh([\ch,\ch']).$$
	Using this expression of $R(\ph^1_0,\ph^1_1,\ph^{2}_1)$, it can be seen that $R(\ph^1_0,\ph^1_1,\ph^{2}_1)$ is strictly increasing as function of $g$. Therefore,
	$$ \cosh(R(\ph^1_0,\ph^1_1,\ph^{2}_1))<\lim_{g\rightarrow\infty}\cosh(R(\ph^1_0,\ph^1_1,\ph^{2}_1))\approx 1.140,$$
	which can be obtained by computing the corresponding limits of $\cosh(\tfrac{1}{8}\sysg)$ and $\cosh([\ch,\ch'])$. It follows that $\tanh(R(\ph^1_0,\ph^1_1,\ph^{2}_1))<0.480$, so $\tanh(R(\ph^1_0,\ph^1_1,\ph^{2}_1))<\tanh(\tfrac{1}{4}\sysg)$. This proves that the circumdiameter of $[\ph^1_0,\ph^1_1,\ph^{2}_1]$ is smaller than $\tfrac{1}{2}\sysg$. Since the circumdiameters of all four triangles are smaller than $\tfrac{1}{2}\sysg$ and since by symmetry every triangle is congruent to one of these four, this concludes the proof.   		
\end{proof}

\begin{lemma}\label{lem:structuredalgoemptydisks}
	The triangulation $\mathcal{T}$ is a Delaunay triangulation of $\proj^{-1}(\dummy)$. 
\end{lemma}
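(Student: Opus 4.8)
The plan is to establish the empty--disk property: the open disk circumscribing each triangle of $\mathcal{T}$ contains no point of $\proj^{-1}(\dummy)$. The first step is symmetry reduction. Both the vertex set $\proj^{-1}(\dummy)$ and the edge set of $\mathcal{T}$ are invariant under the group $G$ generated by the rotations about the origin by multiples of $\tfrac{\pi}{2g}$, by the reflections appearing in the definition of $\mathcal{T}$ (the reflection in the line through $\ph^0_0,\ph^0_1$ and its rotates), and by $\Gamma_g$; indeed, by construction the edges of $\mathcal{T}$ are exactly the $G$--images of the edges listed in the slice $[O,\ph^0_0,\ph^0_1]$. Hence it suffices to verify the empty--disk property for one triangle per $G$--orbit, and each orbit has a representative inside that slice. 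Up to the reflection that fixes $\ph^*_0$, these representatives form the four families already singled out in Lemma~\ref{lem:structuredalgoboundedcircumdiam}: $[\ph^0_0,\ph^1_0,\ph^*_0]$, $[\ph^1_0,\ph^1_1,\ph^*_0]$, $[\ph^m_0,\ph^m_1,O]$, and $[\ph^j_0,\ph^j_1,\ph^{j+1}_1]$ for $1\le j\le m-1$, together with the companion triangle $[\ph^j_0,\ph^{j+1}_1,\ph^{j+1}_0]$ of the quadrilateral $\ph^j_0\ph^j_1\ph^{j+1}_1\ph^{j+1}_0$, which shares its circumscribed disk because $\ph^j_0,\ph^j_1,\ph^{j+1}_0,\ph^{j+1}_1$ are concircular (this was observed in the proof of Lemma~\ref{lem:structuredalgoboundedcircumdiam}).

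The second step is to make the search for a conflicting point finite. By Lemma~\ref{lem:structuredalgoboundedcircumdiam} the circumscribed disk $B$ of each representative triangle has diameter strictly less than $\tfrac12\sysg$, so any point of $\proj^{-1}(\dummy)$ lying in the interior of $B$ is within hyperbolic distance $\tfrac12\sysg$ of each of the three vertices, all of which lie in $D_g$. Since $d(\partial D_g,\partial D_{\N})\ge\tfrac12\sysg$ --- the inequality established in the proof of Proposition~\ref{prop:inclusionproperty} --- it follows that every such point lies in $D_{\N}$, hence in the finite set $\h{\dummyq}_{\N}=\proj^{-1}(\dummy)\cap D_{\N}=\{\,f(\proj^{-1}(\dummy)\cap\Do)\mid f\in\N\,\}$. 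Using $G$--invariance once more, I may restrict the candidates to the $\N$--, rotation-- and reflection--images of the points lying in the single slice, leaving only finitely many explicit points to test against each representative circumcircle.

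The third step is the case check, carried out family by family with the circumcircles already computed in the proof of Lemma~\ref{lem:structuredalgoboundedcircumdiam}. Candidates coming from the slice and the neighboring slices are handled by elementary hyperbolic geometry: for fixed $j$ the points $\ph^j_k$ lie on a circle centered at the origin, the side--midpoints $\ph^0_k=\mh_k$ lie on another such circle, the quadruple $\ph^j_0,\ph^j_1,\ph^{j+1}_0,\ph^{j+1}_1$ is concircular with circumcenter on $[O,\ph^*_0]$, and each radial chain $\ph^0_k,\ph^1_k,\dots,\ph^m_k,O$ lies on a geodesic segment; comparing these configurations with the explicit circumcircles shows that the closest competing points lie on $\partial B$ --- precisely in the degenerate, concircular configurations, which is why $\mathcal{T}$ is merely \emph{a} Delaunay triangulation --- or strictly outside $B$. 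Candidates coming from a reflected point $\qh^j_k$ or from a translate $f(D_g)$ with $f\in\N\setminus\{\eg\}$ are excluded by a distance estimate: such a point is separated from the vertices of the representative triangle either by a segment of separation at least $2$ or by a pair of consecutive $1$--separated segments, so by Parts~\ref{item:sep2} and~\ref{item:sep11} of Lemma~\ref{lem:separationlowerbound} it is at distance at least $\tfrac12\sysg$ from those vertices and therefore cannot lie in $B$, whose diameter is strictly less than $\tfrac12\sysg$. That the faces of $\mathcal{T}$ genuinely tile $\mD$ without overlap is immediate from the explicit edge list together with the $G$--action, and I would dispatch it at the outset.

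I expect the main obstacle to be the bookkeeping in the third step: making the finite list of candidate conflicting points genuinely exhaustive --- in particular, keeping track of which reflected points $\qh^j_k$ and which $\Gamma_g$--translates of slice points can come close to each of the four families of circumscribed disks --- and, in each concircular configuration, confirming that the competing point lands exactly on the boundary circle of $B$ rather than strictly inside it. Once the candidate list is fixed, every remaining inequality reduces to a one--variable monotonicity--in--$g$ estimate of the same type as those already carried out in Lemma~\ref{lem:structuredalgoboundedcircumdiam}.
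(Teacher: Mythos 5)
Your high-level strategy coincides with the paper's: reduce by the $4g$-fold rotational and reflective symmetry (plus $\Gamma_g$) to the same four representative circumdisks, then verify the empty-disk property by a case check. Your Step~2 — using the circumdiameter bound together with $d(\partial D_g,\partial D_{\N})\ge\tfrac12\sysg$ to confine all candidate conflicting points to $D_{\N}$ — is a clean way to make the candidate set explicitly finite, and is a sensible addition that the paper's proof leaves implicit.

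However, there is a genuine gap in Step~3: you group the reflected points $\qh^j_k$ together with translates $f(D_g)$, $f\in\N\setminus\{\eg\}$, and claim that a segment from a vertex of a representative triangle to a $\qh^j_k$ has separation at least $2$, or consists of a pair of consecutive $1$-separated segments, so that Parts~\ref{item:sep2} and~\ref{item:sep11} of Lemma~\ref{lem:separationlowerbound} give a distance of at least $\tfrac12\sysg$. This cannot work: by construction $\qh^j_k$ is the reflection of $\ph^j_k$ across the chord through $\ph^0_k$ and $\ph^0_{k+1}$, which maps the inner triangle $[O,\ph^0_k,\ph^0_{k+1}]$ onto the outer triangle $[\vh_{k+1},\ph^0_k,\ph^0_{k+1}]$ — both inside $D_g$. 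Hence $\qh^j_k\in D_g$, the segment joining it to a vertex of a representative triangle never leaves $D_g$, and the notion of a $k$-separated segment (which, in Lemma~\ref{lem:separationlowerbound}, refers to segments with both endpoints on sides $\sh_j$ of $D_g$) simply does not apply to it. Indeed the distance from a $\qh^j_k$ with small $j$ to some $\ph^0_\ell$ is well under $\tfrac12\sysg$. The paper instead excludes $\qh^j_k$ by direct containment: it shows each representative circumdisk lies either inside the midpoint polygon $[\ph^0_0,\ldots,\ph^0_{4g-1}]$ (for the last three families) or, for $D(\ph^0_0,\ph^1_0,\ph^*_0)$, by comparing the distances of $\ph^j_k$ and $\qh^j_k$ to the circumcenter, which lies on the inner side of the reflection axis. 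An analogous containment-based (not separation-based) argument is what your Step~3 is missing; the separation lemma is the right tool only for genuine $\Gamma_g$-translates, not for the in-domain reflections. The rest of the case check, while only sketched, is of the same type as the paper's computations.
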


\begin{proof}
	We show that the circumdisks of triangles in $\mathcal{T}$ do not have vertices of $\mathcal{T}$ in their interior. Denote the circumscribed circle and \emph{open} circumscribed disk of a triangle $[\ph,\qh,\rh]$ by $C(\ph,\qh,\rh)$ and $D(\ph,\qh,\rh)$ respectively, and similarly when more than three points are concircular. Observe in particular that $C(\ph,\qh,\rh)\cap D(\ph,\qh,\rh)=\emptyset$. Denote the hyperbolic line through points $\ph,\qh$ by $L(\ph,\qh)$ and the open and closed line segments connecting $\ph,\qh$ by $(\ph,\qh),[\ph,\qh]$ respectively. By symmetry it is sufficient to consider only $D(\ph^0_0,\ph^1_0,\ph^*_0),D(\ph^1_0,\ph^1_1,\ph^*_0),D(\ph^m_0,\ph^m_1,O),D(\ph^j_0,\ph^j_1,\ph^{j+1}_1)$, for $1\leq j\leq m-1$. For convenience, we treat the cases for each circumdisk in a fixed order, namely
	\begin{enumerate}
		\item $O,\vh$
		\item $\ph^*_k,k=0,\ldots,4g-1$,
		\item $\ph^j_k,k=0,\ldots,4g-1,j=0,\ldots,m$,
		\item $\qh^j_k,k=0,\ldots,4g-1,j=1,\ldots,m$.
	\end{enumerate}
	First, consider $D(\ph^0_0,\ph^1_0,\ph^*_0)$.
	\begin{enumerate}
		\item Clearly, $O$ and $\vh$ are too far away from $\ph^*_0$ to be inside $D(\ph^0_0,\ph^1_0,\ph^*_0)$.
		\item Since $\ph^*_0\in C(\ph^0_0,\ph^1_0,\ph^*_0)$, we know that $\ph^*_0\not\in D(\ph^0_0,\ph^1_0,\ph^*_0)$. Since the center of $D(\ph^0_0,\ph^1_0,\ph^*_0)$ lies inside $D(\ph^0_0,\ph^1_0,\ph^*_0)$ on the bisector of angle $\angle(\ph^1_0\ph^0_0\ph^*_0)$ we see that $\ph^*_{4g-1}$ is farther away from this center than $\ph^*_0$. Since $\ph^*_0\not\in D(\ph^0_0,\ph^1_0,\ph^*_0)$, it follows that $\ph^*_{4g-1}\not\in D(\ph^0_0,\ph^1_0,\ph^*_0)$ as well. Since $D(\ph^0_0,\ph^1_0,\ph^*_0)\cap C(\ph^*_0,\ldots,\ph^*_{4g-1})$ is contained in the shortest open chord of $C(\ph^*_0,\ldots,\ph^*_{4g-1})$ between $\ph^*_0$ and $\ph^*_{4g-1}$, we see that $\ph^*_k\not\in D(\ph^0_0,\ph^1_0,\ph^*_0)$ for $k=0,\ldots,4g-1$. 
		\item Since the center of $D(\ph^0_0,\ph^1_0,\ph^*_0)$ is in the interior of $[\ph^0_0,\ph^1_0,\ph^*_0]\subset[O,\ph^0_0,\ph^*_0]$, we know that $\ph^j_0$ is closer to the center of $D(\ph^0_0,\ph^1_0,\ph^*_0)$ than $\ph^j_k$ for $j=0,\ldots,m$ and $k\neq 0$.
		Because $L(O,\ph^0_0)\cap D(\ph^0_0,\ph^1_0,\ph^*_0)=(\ph^0_0,\ph^1_0)$, we know that $\ph^j_0\not\in D(\ph^0_0,\ph^1_0,\ph^*_0)$ for all $j=0,\ldots,m$. Therefore, $\ph^j_k\not\in D(\ph^0_0,\ph^1_0,\ph^*_0)$ for $j=0,\ldots,m$ and $k=0,\ldots,4g-1$. 
		\item By a reasoning similar to above, $\ph^j_k$ is closer to the center of $D(\ph^0_0,\ph^1_0,\ph^*_0)$ than $\qh^j_k$. Since by the previous step $\ph^j_k\not\in D(\ph^0_0,\ph^1_0,\ph^*_0)$ for $j=1,\ldots,m$ and $k=0,\ldots,4g-1$, it follows that $\qh^j_k\not\in D(\ph^0_0,\ph^1_0,\ph^*_0)$ for $j=1,\ldots,m$ and $k=0,\ldots,4g-1$ as well.  
	\end{enumerate}
	Second, consider $D(\ph^1_0,\ph^1_1,\ph^*_0)$. 
	\begin{enumerate}
		\item Clearly, $O$ and $\vh$ are too far away from $\ph^*_0$ to be inside $D(\ph^1_0,\ph^1_1,\ph^*_0)$.
		\item The circle $C(\ph^1_0,\ph^1_1,\ph^*_0)$ is tangent to $C(\ph^*_0,\ldots,\ph^*_{4g-1})$, because both circles have their center on $[O,\ph_0^*]$ and pass through $\ph_0^*$. Since the radius of $C(\ph^1_0,\ph^1_1,\ph^*_0)$ is smaller than the radius of $C(\ph^*_0,\ldots,\ph^*_{4g-1})$, this means that $D(\ph^1_0,\ph^1_1,\ph^*_0)\subseteq D(\ph^*_0,\ldots,\ph^*_{4g-1})$. Therefore, $\ph^*_k\not\in D(\ph^1_0,\ph^1_1,\ph^*_0)$ for all $k=0,\ldots,4g-1$.
		\item First, to prove that $\ph^j_0\not\in D(\ph^1_0,\ph^1_1,\ph^*_0)$ for all $j=0,\ldots,m$, we now show that $D(\ph^1_0,\ph^1_1,\ph^*_0)\cap(O,\ph^1_0)=\emptyset$. First observe that the line $L(O,\ph^0_0)$ intersects $C(\ph^1_0,\ph^1_1,\ph^*_0)$ in one or two points. If the intersection consists of one point, then it has to be $\ph^1_0$ and we are done. If the intersection consists of two points, then it is sufficient to show that $\ph^1_0$ is the closest of these two to $O$. Let $\ph^M_0$ denote the midpoint of $\ph^1_0$ and $\ph^1_1$. It is sufficient to show that $[\ph^*_0,\ph^M_0]\geq R(\ph^1_0,\ph^1_1,\ph^*_0)$, since then the center of $D(\ph^1_0,\ph^1_1,\ph^*_0)$ is contained in $[\ph^1_0,\ph^1_1,\ph^*_0]$. It is known that~\cite[Theorem 7.11.2(iii)]{Beardon1983}
		$$\tanh([\ph^*_0,\ph^M_0])=\tanh([\ph^1_0,\ph^*_0])\cos\tfrac{1}{2}\angle(\ph^1_0\ph^*_0\ph^1_1)=\tanh([\ph^1_0,\ph^*_0])\sin\angle(\ph^1_0\ph^*_0\ph^0_0),$$
		where the second equality follows from Equation~\eqref{eq:topangle} in the proof of Lemma~\ref{lem:structuredalgoboundedcircumdiam}. Therefore, $[\ph^*_0,\ph^M_0]\geq R(\ph^1_0,\ph^1_1,\ph^*_0)$ is equivalent with the following sequence of inequalities:
		\begin{align*}
			\tanh([\ph^*_0,\ph^M_0])&\geq \tanh(R(\ph^1_0,\ph^1_1,\ph^*_0)),\\
			\tanh([\ph^1_0,\ph^*_0])\sin\angle(\ph^1_0\ph^*_0\ph^0_0)&\geq \dfrac{\tanh(\tfrac{1}{2}[\ph^1_0,\ph^*_0])}{\sin\angle(\ph^1_0\ph^*_0\ph^0_0)},\\
			\sin^2\angle(\ph^1_0\ph^*_0\ph^0_0)&\geq \dfrac{\tanh(\tfrac{1}{2}[\ph^1_0,\ph^*_0])}{\tanh([\ph^1_0,\ph^*_0])}.
		\end{align*}
		Since 
		$$ \dfrac{\tanh(\tfrac{1}{2}[\ph^1_0,\ph^*_0])}{\tanh([\ph^1_0,\ph^*_0])}=\tfrac{1}{2}\tanh^2(\tfrac{1}{2}[\ph^1_0,\ph^*_0])+\tfrac{1}{2},$$
		and since $[\ph^1_0,\ph^*_0]$ is strictly increasing by \eqref{eq:firstdiagonal} in the proof of Lemma~\ref{lem:structuredalgoboundedcircumdiam}, we find that
		$$ \dfrac{\tanh(\tfrac{1}{2}[\ph^1_0,\ph^*_0])}{\tanh([\ph^1_0,\ph^*_0])}\leq \lim_{g\rightarrow\infty}\dfrac{\tanh(\tfrac{1}{2}[\ph^1_0,\ph^*_0])}{\tanh([\ph^1_0,\ph^*_0])}\approx 0.542.$$
		Furthermore, since
		\begin{align*}
			\sin^2\angle(\ph^1_0\ph^*_0\ph^0_0)&\overset{\eqref{eq:interiorangle}}{=}\dfrac{\sinh^2(\tfrac{1}{4}\sysg)\sin^2\angle(O\ph^0_0\ph^*_0)}{\sinh^2([\ph^1_0,\ph^*_0])},\\
			&\overset{\eqref{eq:firstdiagonal}}{=}\dfrac{\sinh^2(\tfrac{1}{2}[\ph^1_0,\ph^*_0])\sin^2\angle(O\ph^0_0\ph^*_0)}{\sin^2(\tfrac{1}{2}\angle(O\ph^0_0\ph^*_0))\sinh^2([\ph^1_0,\ph^*_0])},\\
			&=\dfrac{\cos^2(\tfrac{1}{2}\angle(O\ph^0_0\ph^*_0))}{\cosh^2(\tfrac{1}{2}[\ph^1_0,\ph^*_0])},
		\end{align*}
		and since $\angle(O\ph^0_0\ph^*_0)$ is constant and $[\ph^1_0,\ph^*_0]$ strictly increasing, we see that $\sin^2\angle(\ph^1_0\ph^*_0\ph^0_0)$ is strictly decreasing, so
		$$ \sin^2\angle(\ph^1_0\ph^*_0\ph^0_0)\geq \lim_{g\rightarrow\infty} \sin^2\angle(\ph^1_0\ph^*_0\ph^0_0)\approx 0.744. $$
		From this we can conclude that 
		$$ \sin^2\angle(\ph^1_0\ph^*_0\ph^0_0)\geq \dfrac{\tanh(\tfrac{1}{2}[\ph^1_0,\ph^*_0])}{\tanh([\ph^1_0,\ph^*_0])}$$
		holds, which by the chain of equivalent inequalities means that $[\ph^*_0,\ph^M_0]\geq R(\ph^1_0,\ph^1_1,\ph^*_0)$. It follows that if $L(O,\ph^0_0)\cap C(\ph^1_0,\ph^1_1,\ph^*_0)$ consists of two points, then $\ph^1_0$ is the closest of these two. This implies that $D(\ph^1_0,\ph^1_1,\ph^*_0)\cap(O,\ph^1_0)=\emptyset$. We conclude that $\ph^j_0\not\in D(\ph^1_0,\ph^1_1,\ph^*_0)$ for all $j=0,\ldots,m$. By symmetry, we see that $\ph^j_1\not\in D(\ph^1_0,\ph^1_1,\ph^*_0)$ for all $j=0,\ldots,m$.\\
		Second, by the reasoning above we see that $D(\ph^1_0,\ph^1_1,\ph^*_0)$ is contained in the union of the triangle $[O,\ph^0_0,\ph^0_1]$ and the (open) annulus $$D(\ph^*_0,\ldots,\ph^*_{4g-1})\setminus (D(\ph^1_0,\ldots,\ph^1_{4g-1})\cup C(\ph^1_0,\ldots,\ph^1_{4g-1}))$$
		centered at $O$, with boundary passing through $\ph^*_0$ on one side and through $\ph^1_0$ on the other side. Combining this with $\ph^j_k\not\in D(\ph^1_0,\ph^1_1,\ph^*_0)$ for $j=0,\ldots,m$ and $k=0,1$, we can immediately conclude that $\ph^j_k\not\in D(\ph^1_0,\ph^1_1,\ph^*_0)$ for $j=0,\ldots,m$ and $k=0,\ldots,4g-1$.
		\item As we have seen before, $C(\ph^1_0,\ph^1_1,\ph^*_0)$ is tangent to $C(\ph^*_0,\ldots,\ph^*_{4g-1})$, which means that $D(\ph^1_0,\ph^1_1,\ph^*_0)$ is contained in the interior of the $4g$-gon $[\ph^0_0,\ldots,\ph^0_k,\ldots,\ph^0_{4g-1}]$. Therefore, $\qh^j_k\not\in D(\ph^1_0,\ph^1_1,\ph^*_0)$ for $j=1,\ldots,m$ and $k=0,\ldots,4g-1$. 
	\end{enumerate}
	Third, consider $D(\ph^m_0,\ph^m_1,O)$.
	\begin{enumerate}
		\item Since $O\in C(\ph^m_0,\ph^m_1,O)$, we know that $O\not\in D(\ph^m_0,\ph^m_1,O)$. Clearly, $\vh$ is too far away from $O$ to be inside $D(\ph^m_0,\ph^m_1,O)$. 
		\item Clearly, the points $\ph^*_k$ for $k=0,\ldots,4g-1$ are too far away from $O$ to be inside $D(\ph^m_0,\ph^m_1,O)$.
		\item Since $D(\ph^m_0,\ph^m_1,O)$ is contained in the union of the disk $D(\ph^m_0,\ph^m_1,\ldots,\ph^m_{4g-1})$ and the (closed) triangle $[O,\ph^0_0,\ph^0_1]$, we immediately see that $\ph^j_k\not\in D(\ph^m_0,\ph^m_1,O)$ for all $j=0,\ldots,m$ and $k\neq 0,1$. Since $L(O,\ph^m_0)\cap D(\ph^m_0,\ph^m_1,O)=(O,\ph^m_0)$, it follows that $\ph^j_0\not\in D(\ph^m_0,\ph^m_1,O)$ for $j=0,\ldots,m$. Similarly, $\ph^j_1\not\in D(\ph^m_0,\ph^m_1,O)$ for $j=0,\ldots,m$. Therefore, $\ph^j_k\not\in D(\ph^m_0,\ph^m_1,O)$ for all $j=0,\ldots,m$ and $k=0,\ldots,4g-1$. 
		\item Clearly, the points $\qh^j_k$ for $j=1,\ldots,m$ and $k=0,\ldots,4g-1$ are too far away from $O$ to be inside $D(\ph^m_0,\ph^m_1,O)$. 
	\end{enumerate}
	Finally, let $1\leq j\leq n-1$ and consider $D(\ph^j_0,\ph^j_1,\ph^{j+1}_1)$.
	\begin{enumerate}
		\item Clearly, $\vh$ is too far away from the center of $D(\ph^j_0,\ph^j_1,\ph^{j+1}_1)$ to be inside $D(\ph^j_0,\ph^j_1,\ph^{j+1}_1)$. Moreover, $O\not\in D(\ph^j_0,\ph^j_1,\ph^{j+1}_1)$, since $L(O,\ph^0_1)\cap D(\ph^j_0,\ph^j_1,\ph^{j+1}_1)=(\ph^j_1,\ph^{j+1}_1)$. 
		\item Of the set of disks $\{D(\ph^j_0,\ph^j_1,\ph^{j+1}_1),j=1,\ldots,m-1\}$, the one that is closest to $\ph^*_0$ is $D(\ph^1_0,\ph^1_1,\ph^{2}_1)$, i.e., if $\ph^*_0\not\in D(\ph^1_0,\ph^1_1,\ph^{2}_1)$, then $\ph^*_0\not\in D(\ph^j_0,\ph^j_1,\ph^{j+1}_1)$ for $j=1,\ldots,m-1$. Observe that $C(\ph^1_0,\ph^1_1,\ph^{2}_1)$ and $C(\ph^*_0,\ph^1_0,\ph^1_1)$ intersect in the points $\ph^1_0,\ph^1_1$. Since the center of $C(\ph^1_0,\ph^1_1,\ph^{2}_1)$ is closer to $O$ than the center of $C(\ph^*_0,\ph^1_0,\ph^1_1)$, we can conclude that $\ph^*_0\not\in D(\ph^1_0,\ph^1_1,\ph^{2}_1)$. Therefore, $\ph^*_0\not\in D(\ph^j_0,\ph^j_1,\ph^{j+1}_1)$ for $j=1,\ldots,m-1$. It follows that $D(\ph^j_0,\ph^j_1,\ph^{j+1}_1)\subseteq D(\ph^*_0,\ldots,\ph^*_{4g-1})$ for $j=1,\ldots,m-1$, which implies that $\ph^*_k\not\in D(\ph^j_0,\ph^j_1,\ph^{j+1}_1)$ for $j=1,\ldots,m-1$ and $k=0,\ldots,4g-1$. 
		\item Since $\ph^j_0,\ph^j_1,\ph^{j+1}_0,\ph^{j+1}_1$ are concircular, we see that $D(\ph^j_0,\ph^j_1,\ph^{j+1}_1)$ is contained in the union of the (closed) triangle $[O,\ph^0_0,\ph^0_1]$ and the (open) annulus
		$$ D(\ph^j_0,\ldots,\ph^j_{4g-1})\setminus (D(\ph^{j+1}_0,\ldots,\ph^{j+1}_{4g-1})\cup C(\ph^{j+1}_0,\ldots,\ph^{j+1}_{4g-1}))$$
		centered at $O$, with boundary passing through $\ph^j_0$ on one side and through $\ph^{j+1}_0$ on the other side. Therefore, $\ph^j_k\not\in D(\ph^j_0,\ph^j_1,\ph^{j+1}_1)$ for $j=0,\ldots,m$ and $k\neq 0,1$. Furthermore, since $L(O,\ph^0_0)\cap D(\ph^j_0,\ph^j_1,\ph^{j+1}_1)=(\ph^j_0,\ph^{j+1}_0)$, we see that $\ph^j_0\not\in D(\ph^j_0,\ph^j_1,\ph^{j+1}_1)$ for $j=0,\ldots,m$. Similarly, $\ph^j_1\not\in D(\ph^j_0,\ph^j_1,\ph^{j+1}_1)$ for $j=0,\ldots,m$. We conclude that $\ph^j_k\not\in D(\ph^j_0,\ph^j_1,\ph^{j+1}_1)$ for $j=0,\ldots,m$ and $k=0,\ldots,4g-1$. 
		\item Clearly, $D(\ph^j_0,\ph^j_1,\ph^{j+1}_1)$ is contained in the $4g$-gon $[\ph^0_0,\ldots,\ph^0_k,\ldots,\ph^0_{4g-1}]$, which means that $\qh^j_k\not\in D(\ph^j_0,\ph^j_1,\ph^{j+1}_1)$ for $j=1,\ldots,m$ and $k=0,\ldots,4g-1$.
	\end{enumerate}
	Since each triangle of the infinite triangulation $\mathcal{T}$ is congruent to one of the triangles above and since the circumdisk of each of the above triangles is empty, it follows that $\mathcal{T}$ is a Delaunay triangulation. 
\end{proof}

From these two lemmas the main statement of this subsection follows directly. 

\begin{theorem}\label{thm:structuredalgorithm}
	The structured algorithm terminates. The resulting dummy point $\dummy$ set satisfies $\diam{\dummy}<\tfrac{1}{2}\sysg$ and its cardinality $|\dummy|$ is equal to
	$$ 6g+2+8g(\lceil4\arcosh(\cot(\tfrac{\pi}{4g}))/\sysg\rceil-1).$$
	A Delaunay triangulation $\dth{\proj^{-1}(\dummy)}$ is given by $\mathcal{T}$. 
\end{theorem}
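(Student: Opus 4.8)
The plan is to obtain the statement by assembling the two preceding lemmas together with an elementary count, since by this point essentially all of the geometric work has been done. There are four assertions to check: termination, the diameter bound $\diam{\dummy}<\tfrac12\sysg$, the exact value of $|\dummy|$, and the claim that $\mathcal{T}$ is a Delaunay triangulation of $\proj^{-1}(\dummy)$. Termination is immediate: in contrast to the refinement and symmetric algorithms, Algorithm~\ref{structuredalgo2} contains no \textbf{while} loop, only the \textbf{for} loops over $j = 1,\ldots,m$ and over $k$, where $m = \lceil 4\arcosh(\cot(\tfrac{\pi}{4g}))/\sysg\rceil - 1$ is a finite constant depending only on $g$ and each step performs a bounded number of elementary hyperbolic constructions. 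That $\mathcal{T}$ is a Delaunay triangulation of $\proj^{-1}(\dummy)$ is exactly Lemma~\ref{lem:structuredalgoemptydisks}, which we may invoke directly.

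For the diameter bound I would argue as follows. The set $\proj^{-1}(\dummy)$ is $\Gg$-invariant, locally finite, and cocompact (its quotient is the compact surface $\Mg$), so $\mathcal{T}$ is a genuine triangulation of all of $\mD$ by bounded triangles. Consequently every open disk in $\mD$ that avoids $\proj^{-1}(\dummy)$ is contained in the open circumscribed disk of some triangle of $\mathcal{T}$, and therefore $\diam{\dummy}$ equals the maximum of the circumdiameters of the triangles of $\mathcal{T}$ — a maximum, not merely a supremum, because by construction every triangle of $\mathcal{T}$ is congruent to one of the four triangles $[\ph^0_0,\ph^1_0,\ph^*_0]$, $[\ph^1_0,\ph^1_1,\ph^*_0]$, $[\ph^m_0,\ph^m_1,O]$, $[\ph^j_0,\ph^j_1,\ph^{j+1}_1]$ examined in the proof of Lemma~\ref{lem:structuredalgoboundedcircumdiam}. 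That lemma bounds each of these circumdiameters by a quantity strictly smaller than $\tfrac12\sysg$, so $\diam{\dummy}<\tfrac12\sysg$. In particular $\dummy$ satisfies the validity condition~\eqref{condition}, so that, by Proposition~\ref{thm:condition}, $\proj(\mathcal{T})=\dtsg{\dummy}$ is a well-defined simplicial Delaunay triangulation of $\Mg$.

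The remaining task is the cardinality count, which is bookkeeping but is the one place requiring some care. The initial set $\W$ contributes $2g+2$ points of $\Mg$ (the origin, the vertex $\vh_0$, and the $2g$ midpoints of the closed sides of $D_g$); the $4g$ points $\proj(\ph^*_k)$, $k=0,\ldots,4g-1$, lie in the interior of $D_g$, are permuted freely by the rotation of $D_g$, and are therefore pairwise distinct and disjoint from $\W$, giving $6g+2$; and each iteration $j$ adds, on the surface, the $4g$ spoke points $\proj(\ph^j_k)$ lying on the $4g$ segments $[O,\ph^0_k]$ together with the $4g$ reflected points $\proj(\qh^j_k)$. To turn this into the announced formula one must verify that these $8g$ points are pairwise distinct and distinct from every point added earlier; here one uses that distinct spokes $[O,\ph^0_k]$ are related by the rotations of $D_g$, that the reflection lines through consecutive midpoints $\ph^0_k$ separate the $\ph^j_k$ from the corresponding $\qh^j_k$ so the two families never collide (in particular near the vertices of $D_g$, where both families accumulate), and that $D_g$ is a fundamental domain, so that no two chosen representatives are $\Gg$-equivalent. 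Summing over the $m$ iterations then yields $|\dummy| = 6g+2+8gm = 6g+2+8g\left(\lceil 4\arcosh(\cot(\tfrac{\pi}{4g}))/\sysg\rceil - 1\right)$.

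The main obstacle, to the extent there is one, is precisely this last distinctness verification along $\partial D_g$ and near the vertices: one must be sure that the reflected points $\qh^j_k$ of one iteration do not accidentally coincide with spoke points, reflected points, or Weierstrass points arising elsewhere. Everything else follows immediately from Lemmas~\ref{lem:structuredalgoboundedcircumdiam} and~\ref{lem:structuredalgoemptydisks}.
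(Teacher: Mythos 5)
Your proposal is correct and follows essentially the same route as the paper: termination is immediate from the loop structure, the two lemmas supply the Delaunay property and the circumdiameter bound, and the cardinality is read off by counting the points added at each step. You are slightly more careful than the paper's own one-line justification of $\diam{\dummy}<\tfrac12\sysg$: the paper attributes this directly to Lemma~\ref{lem:structuredalgoboundedcircumdiam}, whereas you correctly note that one must also invoke Lemma~\ref{lem:structuredalgoemptydisks} (or at least the emptiness of the circumdisks) to identify $\diam{\dummy}$ with the maximal circumdiameter of $\mathcal{T}$; likewise, your remark that the cardinality count implicitly requires verifying pairwise distinctness of the constructed points flags a step the paper takes for granted.
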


\begin{proof}
	Termination of the algorithm is trivial. By Lemma~\ref{lem:structuredalgoboundedcircumdiam}, the resulting dummy point set satisfies $\diam{\dummy}<\tfrac{1}{2}\sysg$. The cardinality of $\dummy$ can be computed as follows. In line 1, $\dummy$ contains the $2g+2$ Weierstrass points of $\Mg$. In line 4 we add $4g$ points to $\dummy$. There are 
	$$ m=\lceil4\arcosh(\cot(\tfrac{\pi}{4g}))/\sysg\rceil-1$$
	iterations of the {\bf for} loop in line 6, each adding $8g$ points $\dummy$. The cardinality of $\dummy$ is obtained by adding these expressions. By Lemma~\ref{lem:structuredalgoemptydisks}, $\mathcal{T}$ is a Delaunay triangulation. This finishes the proof.
\end{proof}


\bigskip 

\bibliography{refs}
\bibliographystyle{plainurl}

\end{document}